\title{Comparison-free Polyregular Functions} 
\author{{\fontencoding{T5}\selectfont Lê Thành Dũng (Tito) Nguyễn}}{Laboratoire d'informatique de Paris Nord,
  Villetaneuse, France \and \url{https://nguyentito.eu/}}{nltd@nguyentito.eu}{https://orcid.org/0000-0002-6900-5577}{}
\author{Camille Noûs}{Laboratoire Cogitamus \and
  \url{https://www.cogitamus.fr/camilleen.html}}{}{}{}
\author{Cécilia Pradic}{Department of Computer Science, University of Oxford,
  United Kingdom}{}{https://orcid.org/0000-0002-1600-8846}{}
\authorrunning{L.~T.~D.~{\fontencoding{T5}\selectfont{}Nguyễn}, C.~Noûs and C.~Pradic}
\keywords{pebble transducers, HDT0L systems, polyregular functions}
\newcommand{\rk}{\mathrm{rk}}
\newcommand{\powerset}{\mathcal{P}}
\newcommand{\naturalN}{\mathbb{N}}
\newcommand{\bbQ}{\mathbb{Q}}
\newcommand{\bbZ}{\mathbb{Z}}
\newcommand{\CbS}{\mathrm{CbS}}
\newcommand{\fmap}{\mathbf{map}}
\newcommand{\cT}{\mathcal{T}}
\newcommand{\cC}{\mathcal{C}}
\newcommand{\card}[1]{|#1|}
\newcommand{\len}[1]{\left|#1\right|}
\newcommand{\Hom}{\mathrm{Hom}}
\newcommand{\rmst}{\mathrm{st}}
\newcommand{\rmreg}{\mathrm{reg}}
\newcommand{\deltast}{\delta_\rmst}
\newcommand{\deltareg}{\delta_\rmreg}
\newcommand{\ttreverse}{\mathtt{reverse}}
\newcommand{\tterase}{\mathtt{erase}}
\newcommand{\cfpow}{\mathtt{cfpow}}
\newcommand{\cfsquaring}{\mathtt{cfsquaring}}
\newcommand{\transmono}{\mathcal{M}}
\newcommand{\transmonocl}{\mathcal{M}^{\mathrm{cl}}}
\newcommand{\transmonozo}{\mathcal{M}^{\mathrm{cl01}}}
\newcommand{\prodmono}{\mathcal{N}}
\newcommand{\prodmonocf}{\mathcal{N}'}
\newcommand\leftmarker{\triangleright}
\newcommand\rightmarker{\triangleleft}
\newcommand\markers[1]{\leftmarker #1 \rightmarker}
\newcommand\interp[1]{\llbracket #1 \rrbracket}
\newcommand\Stack[1]{\mathrm{Stack}_{#1}}
\newcommand\true{\mathtt{true}}
\newcommand\false{\mathtt{false}}
\newcommand\noincl{\mathrel{\supset\hspace{-0.5em}\subset}}
\newcommand\longto\longrightarrow
\newcommand\bnfalt{\; | \;}
\newcommand\Succ{{\sf S}}
\theoremstyle{claimstyle}
\newtheorem{problem}[theorem]{Problem}
\begin{document}

\maketitle

\begin{abstract}
  This paper introduces a new automata-theoretic class of string-to-string
  functions with polynomial growth. Several equivalent definitions are
  provided: a machine model which is a restricted variant of pebble transducers,
  and a few inductive definitions that close the class of regular functions
  under certain operations. Our motivation for studying this class comes from
  another characterization, which we merely mention here but prove elsewhere,
  based on a $\lambda$-calculus with a linear type system.

  As their name suggests, these \emph{comparison-free polyregular functions}
  form a subclass of polyregular functions; we prove that the inclusion is
  strict. We also show that they are incomparable with HDT0L transductions,
  closed under usual function composition -- but not under a certain
  \enquote{map} combinator -- and satisfy a comparison-free version of the
  pebble minimization theorem.

  On the broader topic of polynomial growth transductions, we also consider the
  recently introduced layered streaming string transducers (SSTs), or
  equivalently $k$-marble transducers. We prove that a function can be obtained
  by composing such transducers together if and only if it is polyregular, and
  that $k$-layered SSTs (or $k$-marble transducers) are closed under
  \enquote{map} and equivalent to a corresponding notion of $(k+1)$-layered
  HDT0L systems.
\end{abstract}

\section*{Addendum (2023)}

Our proof of the comparison-free pebble minimization theorem
(\Cref{thm:characterization-rank}) is heavily based on a 2020
paper~\cite{PebbleMinimization} that claimed to show pebble minimization for
general polyregular functions. While we could reuse many sound and useful ideas
from that paper, that central claim turned out to be wrong, as shown
in~\cite{PolyregularGrowth,PolyregularGrowthBis}. However, we are confident that
\Cref{thm:characterization-rank} is still valid; it has even been reproved and
generalized to a larger subclass of pebble transducers using different
techniques~\cite{LastPebble} (subsequent papers such
as~\cite{LastPebble,PolyregularGrowthBis} refer to the class of functions
introduced here by the shorter name \enquote{polyblind}).
Some typos have also been fixed after publication, thanks to the reviewers of
the first author's PhD thesis.

\section{Introduction}

The theory of transducers (as described in the
surveys~\cite{siglog,MuschollPuppis}) has traditionally dealt with devices that
take as input strings of length $n$ and output strings of length $O(n)$.
However, several recent works have investigated function classes going beyond
linear growth. We review three classes in this landscape below.
\begin{itemize}
\item \emph{Polyregular functions} (\S\ref{sec:polynomial-growth}) are thus named because they have (at most)
  polynomial growth and include regular functions (\S\ref{sec:reg}) (the most expressive of the
  traditional string-to-string transduction classes). They were defined in
  2018~\cite{polyregular} by four equivalent computational models, one of which
  -- the \emph{pebble transducers} -- is the specialization to strings of a tree
  transducer model that existed previously in the literature~\cite{Pebble}
  (this specialization had been investigated earlier
  in~\cite{PebbleString,PebbleComposition}). A
  subsequent work~\cite{polyregularMSO} gave a logical
  characterization based on Monadic Second-Order logic (MSO). They enjoy two nice properties:
  \begin{itemize}
  \item \emph{preservation of regular languages (by preimage)}: if $f : \Gamma^*
    \to \Sigma^*$ is polyregular and $L \subseteq \Sigma^*$ is regular, then
    $f^{-1}(L) \subseteq \Gamma^*$ is regular;
  \item \emph{closure under function composition}: if $f : \Gamma^* \to
    \Delta^*$ and $g : \Delta^* \to \Sigma^*$ are both polyregular, then so is
    $g \circ f : \Gamma^* \to \Sigma^*$.
  \end{itemize}
\item \emph{HDT0L transductions} (\S\ref{subsec:hdt0l}) form another superclass of regular functions,
  whose output size may be at most exponential in the input size. They are older
  than polyregular functions, and we shall discuss their history in
  \Cref{subsec:hdt0l}; suffice to say for now, they also admit various
  equivalent characterizations scattered in several
  papers~\cite{FerteMarinSenizergues,FiliotReynier,Marble}. These functions
  preserve regular languages by preimage, but are \emph{not} closed under
  composition (the growth rate of a composition of HDT0L transductions may be a
  tower of exponentials).
\item Very recently, the polynomially bounded HDT0L transductions (\S\ref{sec:polynomial-growth}) have been
  characterized using two transducer models~\cite{Marble}. One of them, the
  \emph{$k$-marble transducers} (where $k \in \naturalN$ depends on the function
  to be computed), is obtained by putting a syntactic constraint on the model of \emph{(unbounded) marble transducers}~\cite{Marble} which computes HDT0L transductions. But it can also be seen as
  a restricted variant of pebble transducers; it follows
  (although this is not explicitly stated in~\cite{Marble}) that a HDT0L
  transduction has polynomial growth if and only if it is polyregular. Moreover,
  as claimed in~\cite[Section~6]{Marble}, the functions computed by $k$-marble
  transducers are not closed under composition either, and thus form a strict
  subclass of polyregular functions.
\end{itemize}

\subparagraph{A new subclass of polyregular functions}

In this paper, we start by proving a few results on the above classes
(\Cref{sec:complements}). For instance, we supply a proof for the aforementioned
claim of~\cite[Section~6]{Marble}, and show that the polyregular functions are
exactly those computable by compositions of $k$-marble transducers. Those
complements are not particularly difficult nor surprising and are included
mostly for the sake of giving a complete picture.

But our main contribution is the introduction of a new class, giving its title
to the paper; as we show, it admits three equivalent definitions:
\begin{itemize}
  \item two ways to inductively generate the class (Sections~\ref{sec:cbs}
        and~\ref{sec:composition} respectively):
  \begin{itemize}
  \item by closing regular functions under a certain \enquote{composition by
      substitution} operation;
  \item by combining regular functions and a certain kind of \emph{squaring}
    functions (less powerful than the squaring plus underlining functions used
    to characterize general poyregular functions) with usual function
    composition;
  \end{itemize}
  \item a restriction on pebble transducers (\Cref{sec:cfpebble}) -- we
        disallow comparing the positions of a transducer's multiple reading
        heads, hence the name \emph{comparison-free polyregular functions}
        (henceforth abbreviated as \emph{cfp}).
\end{itemize}



\subparagraph{Properties}

By the third definition above, comparison-free polyregular functions are indeed
polyregular, while the second one implies that our new class contains the
regular functions and is closed under composition. (In fact, in the proof that
our first definition is equivalent to the second one, most of the work goes into
showing that the former enjoys closure under composition.) We rule out
inclusions involving the other classes that we mentioned by proving some
\emph{separation results} (\Cref{sec:separation}): there exist
\begin{itemize}
\item comparison-free polyregular functions that are not HDT0L (we take one
  example from~\cite{Marble}),
\item and polynomially bounded HDT0L transductions which are not comparison-free:
  \begin{itemize}
    \item one of our examples follows from a precise characterization of cfp
          functions over unary input alphabets (extending a known result for
          regular functions with unary inputs~\cite{Eugenia}), which we give in
          \Cref{sec:sequences};
  \item another example shows that unlike (poly)regular functions, cfp functions
    are \emph{not} closed under a certain counterpart of the \enquote{map}
    operation in functional programming.
  \end{itemize}
\end{itemize}
We summarize the inclusions and separations between classes that we get in
\Cref{fig:inclusions}.
\begin{figure}
\begin{center}
\includegraphics[scale=0.75]{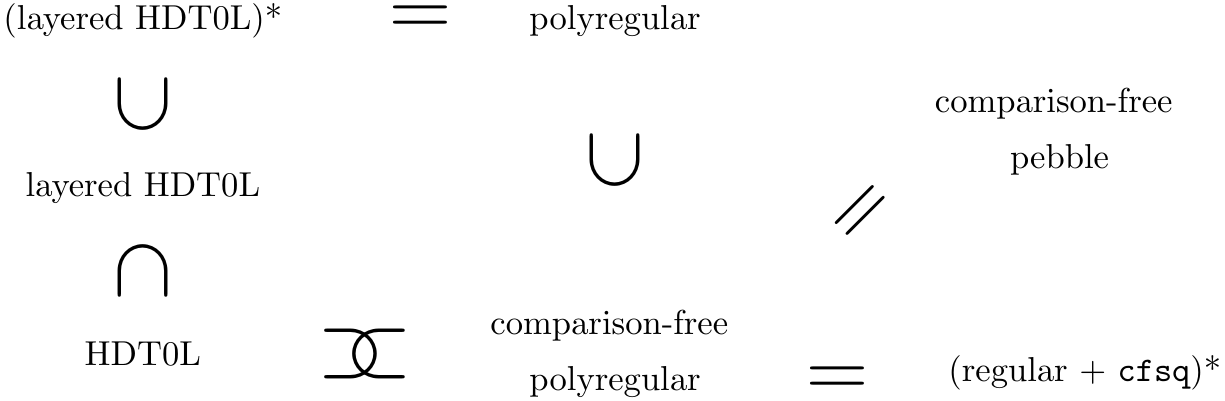}
\end{center}
\caption{Summary of the known relationships between superlinear transduction
  classes, taking our results into account. Inclusions $\subset$ are strict, and
  $\noincl$ means that there is no inclusion either way. Finally $C^*$ denotes
  the composition closure of the class $C$.}
\label{fig:inclusions}
\end{figure}

Finally, we show in \Cref{sec:minimization} that the number of pebbles required
to compute a function using a comparison-free transducer is related to its
growth rate. The analogous result for pebble transducers was proved recently,
with a whole paper dedicated to it~\cite{PebbleMinimization}; we adapt its
arguments to our setting, resulting in our longest and most technical proof.
There is a similar property for $k$-marble transducers~\cite{Marble}, but it is
proved using very different tools.

\subparagraph{Motivations}

Although this is the first proper paper to introduce comparison-free pebble
transducers, we were told that they had already been considered by several
colleagues (Miko{\l}aj Boja{\'n}czyk, personal communication). But in fact, the
starting point in our investigation was a characterization of regular functions
using a linear $\lambda$-calculus (in the sense of linear logic) that we had
previously obtained~\cite{freeadditives}; this was part of a research programme
relating automata and functional programming that we initiated
in~\cite{aperiodic}. As we reported in a previous version of the present paper,
by tweaking a parameter in this characterization, one gets the cfp functions
instead; we initially defined the latter using composition by substitution, and
only later realized the connection with pebble transducers. One interesting
feature of the $\lambda$-calculus characterization is that it is trivially
closed under composition, and this led us to take inspiration from the
category-theoretic machinery that we used in~\cite{freeadditives} for our
standalone composition proof in this paper.

\emph{Added in 2023: an \enquote{official} reference for this $\lambda$-calculus
  characterization can now be found in the first author's PhD
  thesis~\cite[Theorem~1.2.3]{titoPhD}.}




\section{Preliminaries}
\label{sec:preliminaries}

\subparagraph{Notations}

The set of natural numbers is $\naturalN = \{0,1,\dots\}$. We write $|w|$ for
the length of a string $w \in \Sigma^*$; for $\Pi \subseteq \Sigma$, we write
$|s|_\Pi$ for the number of occurrences of letters from $\Pi$ in $w$; and for
$c \in \Sigma$, we abbreviate $|w|_{\{c\}}$ as $|w|_{c}$. The $i$-th letter of
$w \in \Sigma^{*}$ is denoted by either $w_i$ or $w[i]$ (for
$i \in \{1,\ldots,|w|\}$). Given monoids $M$ and $N$, $\Hom(M,N)$ is the set of
monoid morphisms. We write $\varepsilon$ for the empty word and
$\underline\Sigma = \{\underline{a} \mid a \in \Sigma\}$ for a disjoint copy of
the alphabet~$\Sigma$ made of \enquote{underlined} letters.

\subsection{HDT0L transductions and streaming string transducers}
\label{subsec:hdt0l}

\emph{L-systems} were originally introduced by Lindenmayer~\cite{Lindenmayer} in
the 1960s as a way to generate formal languages, with motivations from biology.
While this language-centric view is still predominant, the idea of considering
variants of L-systems as specifications for string-to-string functions -- whose
range are the corresponding languages -- seems to be old. For instance, in a
paper from 1980~\cite{ERS}, one can find (multi-valued) string functions defined
by ET0L systems.

More recently, Ferté, Marin and Sénizergues~\cite{FerteMarinSenizergues}
provided alternative characterizations\footnote{Those characterizations had
  previously been announced in an invited paper by
  Sénizergues~\cite{Senizergues}. Some other results announced
  in~\cite{Senizergues} are proved in~\cite{PolyRecSeq}.} (by catenative
recurrent equations and higher-order pushdown transducers of level 2) of the
string-to-string functions that \emph{HDT0L systems} can express -- what we call
here \emph{HDT0L transductions}.
Later work by Filiot and Reynier~\cite{FiliotReynier} and then by
Douéneau-Tabot, Filiot and Gastin~\cite{Marble} -- that does not build
on~\cite{Senizergues,FerteMarinSenizergues} -- proved the equivalence with,
respectively, copyful SSTs (\Cref{def:sst}) and unbounded marble transducers
(not presented here).

\begin{definition}[following~{\cite{FiliotReynier}}]
  \label{def:hdt0l}
  A \emph{HDT0L system} consists of:
  \begin{itemize}
  \item an input alphabet $\Gamma$, an output alphabet $\Sigma$, and a working
    alphabet $\Delta$ (all \emph{finite});
  \item an initial word $d \in \Delta^*$;
  \item for each $c \in \Gamma$, a \emph{monoid morphism} $h_c \in \Hom(\Delta^*,
    \Delta^*)$;
  \item a final morphism $h' \in \Hom(\Delta^*, \Sigma^*)$.
  \end{itemize}
  It defines the transduction taking $w = w_1 \ldots w_n \in \Gamma^*$ to $h'
  \circ h_{w_1} \circ \ldots \circ h_{w_n}(d) \in \Sigma^*$.
\end{definition}

(The definition of HDT0L systems given
in~\cite{Senizergues,FerteMarinSenizergues} makes slightly different choices of
presentation\footnote{The family $(h_c)_{c \in \Gamma}$ is presented as a
  morphism $H : \Gamma^* \to \Hom(\Delta^*,\Delta^*)$ (whose codomain is indeed
  a monoid for function composition). And an initial \emph{letter} is used
  instead of an initial word; this is of no consequence regarding the functions
  that can be expressed (proof sketch: consider $\Delta' = \Delta \cup \{x\}$
  with a new letter $x \notin \Delta$, take $x$ as the initial letter and let
  $h_c(x) = h_c(w)$, $h'(x) = h'(w)$).}.) To define the equivalent model of
copyful streaming string transducers, we must first introduce the notion of
register assignment.

\begin{definition}
  \label{def:register-assignment}
  Fix a finite alphabet $\Sigma$. Let $R$ and $S$ be two finite sets
  \emph{disjoint from $\Sigma$}; we shall consider their elements to be
  \enquote{register variables}.

  For any word $\omega \in (\Sigma \cup R)^*$, we write
  $\omega^\dagger : (\Sigma^*)^R \to \Sigma^*$ for the map that sends
  $(u_r)_{r \in R}$ to $\omega$ in which every occurrence of a register variable
  $r \in R$ is replaced by $u_r$ -- formally, we apply to $\omega$ the morphism
  $(\Sigma \cup R)^* \to \Sigma^*$ that maps $c \in \Sigma$ to itself and
  $r \in R$ to $u_r$.

  A \emph{register assignment}\footnote{Some papers
    e.g.~\cite{AperiodicSST,Marble} call register assignments
    \emph{substitutions}. We avoid this name since it differs from its meaning
    in the context of our \enquote{composition by substitution} operation.
    } $\alpha$ from $R$ to $S$ (over $\Sigma$) is a map
  $\alpha : S \to (\Sigma \cup R)^*$. It induces the action $\alpha^\dagger :
  \vec{u} \in (\Sigma^*)^R \mapsto (\alpha(s)^\dagger(\vec{u}))_{s \in S} \in
  (\Sigma^*)^S$ (which indeed goes \enquote{from $R$ to $S$}).

\end{definition}

\begin{definition}[\cite{FiliotReynier}]
  \label{def:sst}
  A (deterministic copyful) \emph{streaming string transducer} (SST) with input
  alphabet $\Gamma$ and output alphabet $\Sigma$ is a tuple $\cT = (Q, q_0, R,
  \delta, \vec{u}_I, F)$ where
\begin{itemize}
\item $Q$ is a finite set of \emph{states} and $q_0 \in Q$ is the \emph{initial
    state};
\item $R$ is a finite set of \emph{register variables}, that we require to be
  \emph{disjoint from $\Sigma$};
\item $\delta : Q \times \Gamma \to Q \times (R \to (\Sigma \cup R)^*)$ is the
  \emph{transition function} -- we abbreviate $\deltast = \pi_1 \circ \delta$
  and $\deltareg = \pi_2 \circ \delta$, where $\pi_i$ is the projection from
  $X_1 \times X_2$ to its $i$-th component $X_i$;
\item $\vec{u}_I \in (\Sigma^*)^R$ describes the \emph{initial register values};
\item $F : Q \to (\Sigma \cup R)^*$ describes how to recombine the final values
  of the registers, depending on the final state, to produce the output.
\end{itemize}
The function $\Gamma^* \to \Sigma^*$ computed by $\cT$ is
\[ w_1 \ldots w_n ~~ \mapsto ~~ F(q_n)^\dagger \circ \deltareg(q_{n-1},
  w_n)^\dagger \circ \ldots \circ \deltareg(q_0, w_1)^\dagger(\vec{u}_I) \]
where the sequence of states $(q_i)_{0 \leq i \leq n}$ (sometimes called the
\emph{run} of the transducer over the input word) is inductively defined,
starting from the fixed initial state $q_0$, by $q_i = \deltast(q_{i-1},w_i)$.
\end{definition}

\begin{example}
  \label{ex:prefixes}
  Let $\Sigma = \Gamma \cup \underline{\Gamma}$. We consider a SST $\cT$
  with $Q = \{q\}$, $R = \{X,Y\}$ and
  \[ \vec{u}_I = (\varepsilon)_{r \in R} \qquad  F(q) = Y \qquad
    \forall c \in \Gamma,\; \delta(q,c) = (q,\, (X \mapsto cX,\,
    Y \mapsto \underline{c}XY))\]
  If we write $(v,w)$ for the family $(u_r)_{r \in R}$ with $u_X = v$ and $u_Y =
  w$, then the action of the register assignments may be described as
  $(X \mapsto cX,\, Y \mapsto \underline{c}XY)^\dagger(v,w) = (c\cdot v,\,
    \underline{c} \cdot v \cdot w)$.

    Let $\mathtt{1},\mathtt{2},\mathtt{3},\mathtt{4} \in \Gamma$. After reading
    $\mathtt{1234} \in \Gamma^*$, the values stored in the registers of $\cT$ are
  \[ (X \mapsto \mathtt{4}X,\, Y \mapsto \underline{\mathtt{4}}XY)^\dagger \circ
    \ldots \circ (X \mapsto \mathtt{1}X,\, Y \mapsto
    \underline{\mathtt{1}}XY)^\dagger(\varepsilon,\varepsilon) = (\mathtt{4321},
    \mathtt{\underline{4}321\underline{3}21\underline{2}1\underline{1}}) \]
  Since $F(q) = Y$, the function defined by $\cT$ maps $\mathtt{1234}$ to
  $\mathtt{\underline{4}321\underline{3}21\underline{2}1\underline{1}} \in
  (\Gamma \cup \underline\Gamma)^* = \Sigma^*$.
\end{example}

This gives us an example of HDT0L transduction $\Gamma^* \to (\Gamma \cup
\underline\Gamma)^*$, since:

\begin{theorem}[{\cite{FiliotReynier}}]
  \label{thm:equiv-hdt0l}
  A function $\Gamma^* \to \Sigma^*$ can be computed by a copyful SST if and
  only if it can be specified by a HDT0L system.
\end{theorem}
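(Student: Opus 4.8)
The plan is to prove the two directions separately, using throughout the dictionary between register assignments and monoid morphisms. A tuple $\vec{u} \in (\Sigma^*)^R$ is the same data as a morphism $(\Sigma \cup R)^* \to \Sigma^*$ fixing $\Sigma$ pointwise (namely $\vec{u}$ read through the $(-)^\dagger$ operation), and a register assignment $\alpha : R \to (\Sigma \cup R)^*$ extends uniquely to a morphism $\hat\alpha : (\Sigma \cup R)^* \to (\Sigma \cup R)^*$ fixing $\Sigma$; under this identification, applying $\alpha^\dagger$ to register values amounts to precomposing the corresponding morphism with $\hat\alpha$. Both models then have the same overall shape: iterate a letter-indexed family of morphisms, then read out with a final morphism.

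For the easier direction (HDT0L $\Rightarrow$ SST), I would build a single-state SST whose register set $R$ is a disjoint copy $\{r_a\}_{a \in \Delta}$ of the working alphabet. Writing $H_k = h_{w_1} \circ \cdots \circ h_{w_k}$ for the composite morphism after reading a prefix, the invariant to maintain is that register $r_a$ holds $h'(H_k(a)) \in \Sigma^*$; this is well-typed precisely because $h'$ lands in $\Sigma^*$. Since $H_{k+1} = H_k \circ h_{w_{k+1}}$ and $h'$ is a morphism, reading a letter $c$ should trigger the assignment replacing each $r_a$ by the word $h_c(a)$ in which every letter $b \in \Delta$ is rewritten to the register $r_b$; the initial values are $(\vec{u}_I)_{r_a} = h'(a)$, and the output function recombines the registers according to the initial word $d$. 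A straightforward induction on the prefix length then establishes the invariant, and hence that the SST computes the prescribed HDT0L transduction. Copyfulness is genuinely used here, as $h_c(a)$ may contain repeated letters.

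The real work is the converse (SST $\Rightarrow$ HDT0L), where two mismatches must be reconciled: an SST's register assignment $\deltareg(q_{i-1}, w_i)$ depends on the current state $q_{i-1}$, whereas a HDT0L morphism may depend only on the letter; and the two models compose in opposite orders, since in $h' \circ h_{w_1} \circ \cdots \circ h_{w_n}(d)$ the last letter $w_n$ acts first, while in an SST the first letter acts first. I would take the working alphabet to be $\Sigma \sqcup (R \times Q)$, a copy of each register tagged by a state, and encode the backward state flow by \emph{superposition}: the initial word $d$ lists, over all candidate final states $q$, the word $F(q)$ with each register $r$ retagged to $(r,q)$; the morphism $h_c$ fixes $\Sigma$ and sends a tagged register $(r,q)$ to the concatenation, over all predecessors $p$ with $\deltast(p,c) = q$, of $\deltareg(p,c)(r)$ with its registers tagged by $p$; and the final morphism $h'$ substitutes $\vec{u}_I$ into registers tagged by the initial state $q_0$ while erasing every register tagged by any other state. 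The key point, and the expected main obstacle, is to argue that exactly one of the many superposed state-paths survives: a complete path corresponds to a backward sequence of states ending, after $h'$, at the tag $q_0$, which is the same datum as a forward run of $\cT$ starting in $q_0$; by determinism there is a unique such run, and it carries precisely the true states $q_0, \ldots, q_n$. Every other path is erased to $\varepsilon$, so the surviving one reconstructs $F(q_n)^\dagger \circ \deltareg(q_{n-1},w_n)^\dagger \circ \cdots \circ \deltareg(q_0,w_1)^\dagger(\vec{u}_I)$, i.e.\ the value computed by $\cT$. I expect the bookkeeping in this survival argument, namely checking that the erased branches really vanish without disturbing the relative order of the surviving letters, to be the most delicate step.
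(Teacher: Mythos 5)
Your first direction (HDT0L $\Rightarrow$ SST) is correct, and it is exactly the paper's ``natural translation'' (\Cref{rem:hdt0l-single-state}, detailed in \Cref{sec:details-hdt0l-single-state}): a single state, registers indexed by the working alphabet, initial register values given by the final morphism, and output function given by the initial word. Your invariant $r_a \mapsto h'(H_k(a))$ is the right way to verify it.

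The converse, however, has a genuine gap, and it sits precisely at the step you flag as delicate: the dead branches do \emph{not} vanish. In your working alphabet $\Sigma \sqcup (R \times Q)$, every morphism $h_c$ fixes $\Sigma$ pointwise, so any output letter occurring in $\deltareg(p,c)(r)$ is emitted as a plain $\Sigma$-letter the moment the branch through $p$ is explored, and it remains in the working string forever, even if that branch is later killed by $h'$ (which only erases \emph{register} letters tagged by states other than $q_0$). Concretely, take $Q = \{q_0,q_1\}$ with $q_0$ initial, a single register $r$ initialized to $\varepsilon$, $F(q_0) = F(q_1) = r$, input alphabet $\{a\}$, output alphabet $\{x\}$, and transitions $\deltast(q_i,a) = q_i$, $\deltareg(q_0,a) : r \mapsto r$, $\deltareg(q_1,a) : r \mapsto xr$. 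This copyless SST computes $a^n \mapsto \varepsilon$, but your system has $d = (r,q_0)(r,q_1)$ and $h_a((r,q_1)) = x\,(r,q_1)$, so $h' \circ h_a^n(d) = h'\bigl((r,q_0)\,x^n\,(r,q_1)\bigr) = x^n \neq \varepsilon$: the branch tracking the never-reached candidate final state $q_1$ leaves its output behind. (The same problem arises from $\Sigma$-letters inside $F(q)$ for $q$ not the actual final state.) The fix is exactly the preprocessing the paper performs, following Filiot--Reynier, in the appendix proof of \Cref{thm:layered-hdt0l-equiv}: first make the SST ``without fresh letters'' by adding a register $\underline{c}$ holding the constant letter $c$ for each $c \in \Sigma$ and replacing every occurrence of an output letter in the assignments and in $F$ by the corresponding register; equivalently, tag the $\Sigma$-letters by states too, so that $h_c$ propagates (and duplicates) them across predecessors and $h'$ erases every copy not tagged $q_0$. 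With that modification, your uniqueness-of-the-surviving-run argument (which is correct as stated, by forward determinism) does yield the theorem.
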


\begin{remark}
  \label{rem:hdt0l-single-state}
  As observed in~\cite[Lemma~3.3]{FiliotReynier}, there is a natural translation
  from HDT0L systems to SSTs whose range is composed precisely of the
  \emph{single-state} SSTs whose transitions and final output function \emph{do
    not access the letters of their output alphabet} -- those are called
  \emph{simple SSTs} in~\cite[\S5.1]{Marble}. This involves a kind of reversal:
  the initial register values correspond to the final morphisms, while the final
  output function corresponds to the initial word. Thus, \Cref{thm:equiv-hdt0l}
  is essentially a state elimination result; a direct translation from SSTs to
  single-state SSTs has also been given by Benedikt et
  al.~\cite[Proposition~8]{PolynomialAutomata}. However, it does \emph{not}
  preserve the subclass of \emph{copyless} SSTs (this would contradict
  \Cref{prop:single-state-copyless-weak}).

  The lookahead elimination theorem for macro tree
  transducers~\cite[Theorem~4.21]{Macro} arguably generalizes this to trees.
  Indeed, while those transducers are generally presented as a top-down model,
  their formal definition can also be read as bottom-up register tree
  transducers in the style of~\cite[\S4]{FOTree}, and top-down lookahead
  corresponds to bottom-up states.
\end{remark}

\subsection{Regular functions}
\label{sec:reg}

\begin{definition}[Alur and {\v{C}}ern{\'{y}}~{\cite{SST}}]
  A register assignment $\alpha : S \to (\Sigma \cup R)^*$ from $R$ to $S$ is
  said to be \emph{copyless} when each $r \in R$ \emph{occurs at most once among
    all the strings $\alpha(s)$ for $s \in S$}, i.e.\ it does not occur at least
  twice in some $\alpha(s)$, nor at least once in $\alpha(s)$ and at least once
  in $\alpha(s')$ for some $s \neq s'$. (This restriction does not apply to the
  letters in $\Sigma$.)
  
  A streaming string transducer is \emph{copyless} if all the assignments in the
  image of its transition function are copyless. In this paper, we take
  computability by copyless SSTs as the definition of \emph{regular functions}
  (but see \Cref{thm:reg-2dft} for another standard definition).
\end{definition}
\begin{remark}
  Thanks to \Cref{thm:equiv-hdt0l}, every regular function is a HDT0L
  transduction.
\end{remark}
\begin{remark}
  \label{rem:prefixes-quadratic}
The SST of \Cref{ex:prefixes} is \emph{not} copyless: in a transition $\alpha =
\deltareg(q,c)$, the register $X$ appears twice, once in $\alpha(X) = cX$
and once in $\alpha(Y) = \underline{c}XY$; in other words, its value is
\emph{duplicated} by the action $\alpha^\dagger$. In fact, it computes a
function whose output size is quadratic in the input size, while regular
functions have linearly bounded output.
\end{remark}
\begin{example}[Iterated reverse~{\cite[p.~1]{polyregular}}]
  \label{ex:iterated-reverse}
  The following single-state SST is copyless:
  \[ \Gamma = \Sigma\ \text{with}\ \# \in \Sigma \qquad Q = \{q\} \qquad R =
    \{X,Y\} \qquad \vec{u}_I = (\varepsilon)_{r \in R} \qquad F(q) = XY\]
  \[ \delta(q,\#) = (q,\, (X \mapsto XY\#,\, Y \mapsto \varepsilon)) \qquad
    \forall c \in \Sigma \setminus \{\#\},\; \delta(q,c) = (q,\, (X \mapsto X,\,
    Y \mapsto cY)) \]
  For $u_1,\ldots,u_n \in (\Sigma \setminus \{\#\})^*$, it maps
  $u_1\#\ldots\#u_n$ to $\ttreverse(u_1)\#\ldots\#\ttreverse(u_n)$.
\end{example}

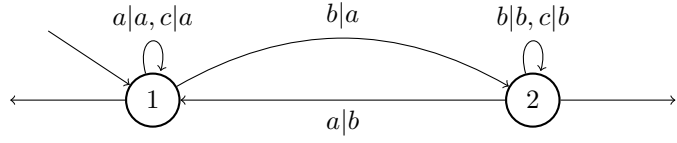
\begin{figure}
  \centering
    \begin{tikzpicture}
    \node (init) at (-0.5,2) {};
    \node (outa) at (-1,1) {};
    \node (outb) at (8,1) {};
    \node[circle,draw,thick,minimum size=20pt] (qa) at (1,1) {$1$};
    \node[circle,draw,thick,minimum size=20pt] (qb) at (6,1) {$2$};
    
    \draw[->] (init) -- (qa);
    \draw[->] (qa) edge [loop above] node {$a|a, c|a$} ();
    \draw[->] (qa) edge  [bend left] node[above] {$b|a$} (qb);
    \draw[->] (qb) edge [loop above] node {$b|b, c|b$} ();
    \draw[->] (qb) edge node[below] {$a|b$} (qa);
    \draw[->] (qa) edge node[below] {} (outa);
    \draw[->] (qb) edge node[below] {} (outb);
  \end{tikzpicture}
  \caption{An example of sequential transducer.}
  \label{fig:mealy}
\end{figure}

The concrete SSTs (copyless or not) that we have seen for now are all
single-state. As a source of stateful copyless SSTs, one can consider the
translations of \emph{sequential transducers}. These are usual finite automata,
whose transitions additionally produce a word catenated to the end of
the would-be output function. For instance, the one in \Cref{fig:mealy}
computes the function $\{a,b,c\}^* \to \{a,b\}^*$ that replaces each $c$ in its
input by the closest non-$c$ letter on its left (or $a$ if no such letter
exists). We do not give a detailed definition
(which can be found e.g. in~\cite[Chapter~V]{Sakarovitch})
here, but for our purpose, it suffices to observe any sequential transducer can
be translated into a copyless SST with the same set of states and a single register.

\subsection{Polynomial growth transductions}
\label{sec:polynomial-growth}

Next, we recall one way to define Bojańczyk's \emph{polyregular
  functions}~\cite{polyregular}.
\begin{definition}[\cite{polyregular}]
  \label{def:polyreg}
  The class of \emph{polyregular functions} is the smallest class of
  string-to-string functions closed under composition containing:
  \begin{itemize}
  \item the functions computed by sequential transducers (for instance, the one
    of \Cref{fig:mealy});
  \item the \emph{iterated reverse} function of \Cref{ex:iterated-reverse}, over any
    finite alphabet containing $\#$;
  \item the \emph{squaring with underlining} functions $\mathtt{squaring}_\Gamma
    : \Gamma^* \to (\Gamma \cup \underline\Gamma)^*$, for any finite~$\Gamma$,
    illustrated by $\mathtt{squaring}_\Gamma(\mathtt{1234}) =
    \mathtt{\underline12341\underline23412\underline34123\underline4}$.
  \end{itemize}
\end{definition}

As mentioned in the introduction, the intersection between the above class and
HDT0L transductions has been recently characterized by Douéneau-Tabot et
al.~\cite{Marble}.
\begin{theorem}[{\cite{Marble}}]
  Let $f : \Gamma^* \to \Sigma^*$. The following conditions are equivalent:
  \begin{itemize}
  \item $f$ is both a polyregular function and a HDT0L transduction;
  \item $f$ is a HDT0L transduction and has at most polynomial growth: $|f(w)| =
    |w|^{O(1)}$;
  \item there exists $k \in \naturalN$ such that $f$ is computed by some
    \emph{$k$-layered SST}, defined below.
  \end{itemize}
\end{theorem}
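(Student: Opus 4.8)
The plan is to prove the cycle of implications $(1) \Rightarrow (2) \Rightarrow (3) \Rightarrow (1)$, where the only substantial work lies in $(2) \Rightarrow (3)$. For $(1) \Rightarrow (2)$, the HDT0L half of condition $(2)$ is already granted by $(1)$, so only the growth bound remains. I would establish that \emph{every} polyregular function has at most polynomial output length by structural induction on \Cref{def:polyreg}: sequential transducers produce linearly bounded output, iterated reverse is length-preserving, squaring with underlining is quadratic, and a composition $g \circ f$ of two polynomially bounded functions is again polynomially bounded (the degrees multiply). Hence $|f(w)| = |w|^{O(1)}$.

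For $(3) \Rightarrow (1)$, I would use the fact that a $k$-layered SST is, by definition, a particular copyful SST; \Cref{thm:equiv-hdt0l} then immediately yields that $f$ is a HDT0L transduction. For polyregularity, I would invoke the correspondence recalled in the introduction between $k$-layered SSTs and $k$-marble transducers, the latter being a syntactic restriction of pebble transducers. Since pebble transducers compute exactly the polyregular functions, $f$ is polyregular, and together with the HDT0L membership this establishes all of condition $(1)$.

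The crux is $(2) \Rightarrow (3)$, which turns the semantic hypotheses ``HDT0L plus polynomial growth'' into the syntactic layered model. I would start from a copyful SST $\cT$ computing $f$, available via \Cref{thm:equiv-hdt0l}, and analyze how register values grow along runs. The source of super-polynomial blow-up in a copyful SST is self-duplication inside a state loop: if, over some cycle in the state graph, the accumulated register assignment places two or more copies of a register $r$ inside the new value of $r$ (directly or through a chain of other registers), then iterating that cycle makes $|u_r|$ grow exponentially in the number of iterations, hence exponentially in the input length. The polynomial-growth hypothesis forbids this, so along every loop each register may be copied into itself at most once, while still being freely duplicated into strictly ``higher'' registers. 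Making this precise requires defining a copy/dependency relation on the register set $R$ induced by the transition assignments, and showing that polynomial growth forces this relation to be stratifiable into finitely many layers $R_1, \dots, R_k$ such that each assignment copies a register of layer $i$ at most once into layer $i$ and arbitrarily many times into layers $j > i$; the resulting stratified transducer is exactly a $k$-layered SST, with $k \leq |R|$.

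The main obstacle is this final stratification argument. One must track how register contents propagate not through a single transition but across arbitrary products of transition assignments along loops, so the relevant relation is a transitive closure over the monoid of register assignments, and duplication can be indirect (register $X$ copies $Y$, which on a later transition copies $X$ back). Converting the quantitative bound $|f(w)| = |w|^{O(1)}$ into the combinatorial existence of a valid layering is where the real content lies: I would argue contrapositively, showing by a pumping construction that the absence of any $k$-layering yields an input family witnessing a register whose length outgrows every polynomial, contradicting $(2)$. The delicate bookkeeping of copies across composed assignments, rather than the individual implications, is the genuinely hard part.
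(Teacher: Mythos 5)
First, a point of calibration: the paper does not prove this theorem at all --- it is imported as a citation from~\cite{Marble}, so there is no internal proof to compare against; your attempt has to be judged against the argument in that cited work. Your implications $(1)\Rightarrow(2)$ and $(3)\Rightarrow(1)$ are fine as far as they go, though note that $(3)\Rightarrow(1)$ leans on the equivalence between $k$-layered SSTs and $k$-marble transducers, which is itself a nontrivial theorem of the very paper whose result you are reconstructing; a cleaner route to polyregularity of layered SSTs would still have to be argued.

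The genuine gap is in $(2)\Rightarrow(3)$, and it is not merely that the \enquote{delicate bookkeeping} is left undone: the strategy you describe --- extract a copy/dependency relation on the register set $R$ of the given copyful SST and stratify \emph{that same} $R$ into layers --- is unsound, because the layering constraints are state-dependent, and a polynomial-growth SST need not admit any layering of its own registers. Concretely, take $Q = \{q_1,q_2\}$, $R = \{X,Y\}$ with initial contents $X = x$, $Y = y$, self-loops on the letter $a$ given by $X \mapsto X,\; Y \mapsto XY$ in $q_1$ and by $X \mapsto YX,\; Y \mapsto Y$ in $q_2$, a transition on $c$ from $q_1$ to $q_2$ carrying the identity assignment, and output function $F(q_2) = X$. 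On input $a^n c a^m$ the output is $(x^n y)^m x$, so the growth is quadratic and every register is useful; yet the loop at $q_1$ forces the layer of $X$ to be strictly below that of $Y$ (since $X$ occurs twice among the assignments), while the loop at $q_2$ forces the opposite, so \emph{no} partition $R = R_0 \sqcup \dots \sqcup R_k$ layers this machine. Hence your contrapositive claim --- \enquote{absence of any $k$-layering yields an input family with super-polynomial growth} --- is false for a fixed machine. The theorem asserts the existence of an \emph{equivalent} layered SST, and the entire substance of the proof in~\cite{Marble} is the construction of that new machine: trimming to useful/co-accessible registers (without which your pumping step does not even contradict output growth, since an exploding register may never reach the output), passing to a finite abstraction of the copy-count matrices along the run, and building layers by induction on the polynomial degree with registers split according to state and transition-monoid information. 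None of that construction appears in your proposal, so the key implication remains unestablished.
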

(Another equivalent model, the \emph{$k$-marble transducers}, was mentioned in
the introduction, but we will not use it in the rest of the paper.) Those
$k$-layered SST propose a compromise between copyful and copyless SSTs:
duplication is controlled, but not outright forbidden.
\begin{definition}[\cite{Marble}]
  A register assignment $\alpha : R \to (\Sigma \cup R)^*$ is \emph{$k$-layered}
  (for $k \in \naturalN$) with respect to a partition $R = R_0 \sqcup \ldots
  \sqcup R_k$ when for $0 \leq i \leq k$,
  \begin{itemize}
  \item for $r \in R_i$, we have $\alpha(r) \in (\Sigma \cup R_0 \cup \ldots
    \cup R_i)^*$;
  \item each register variable in $R_i$ appears at most once among all the
    $\alpha(r)$ for $r \in R_i$ (however, those from $R_0 \sqcup \ldots \sqcup
    R_{i-1}$ may appear an arbitrary number of times).
  \end{itemize}
  A SST is \emph{$k$-layered} if its registers can be partitioned in such a way
  that all assignments in the transitions of the SST are $k$-layered w.r.t.\
  that partition.
\end{definition}
Beware: with this definition, the registers of a $k$-layered SST are actually
divided into $k+1$ layers, not $k$. In particular, a SST is copyless if and only
if it is 0-layered. (We chose this convention for backwards compatibility
with~\cite{Marble}; see also \Cref{rem:numbering-convention}.)

For instance, the transducer of \Cref{ex:prefixes} is 1-layered with $R_0 =
\{X\}$ and $R_1 = \{Y\}$. There also exist register assignments that cannot be
made $k$-layered no matter the choice of partition, such as $X \mapsto XX$.
Using such assignments, one can indeed build SSTs that compute functions $f$
such that e.g.\ $|f(w)| = 2^{|w|}$.


\begin{remark}
  \label{rem:rational-series}
  There is arguably an old precursor to this recent characterization of HDT0L
  transductions with polynomial growth by a syntactic \enquote{layering}
  condition: Schützenberger's theorem on polynomially bounded $\bbZ$-rational
  series, which dates back to the 1960s (see for instance~\cite[Chapter~9,
  Section~2]{BerstelReutenauer} -- the preface of the same book describes this
  theorem as \enquote{one of the most difficult results in the area}). Let us
  give a brief exposition.

  A \emph{$\bbZ$-rational series} $f : \Sigma^{*} \to \bbZ$ is a function of the
  form $f : w \in \Sigma^{*} \mapsto X^{T} \cdot \Phi(w) \cdot Y$ where
  $X,Y \in \bbZ^{R}$ and $\Phi$ is a morphism from $\Sigma^{*}$ to the
  multiplicative monoid of $R$-indexed square matrices over $\bbZ$, where $R$ is
  a finite set. This data $(X,\Phi,Y)$ has a clear interpretation as a
  \enquote{simple SST} (cf.~\Cref{rem:hdt0l-single-state}) with register set
  $R$, whose register values are integers rather than strings. Schützenberger's
  theorem says that any $\bbZ$-rational series $f$ with polynomial growth (i.e.\
  $|f(w)| = |w|^{O(1)}$ where $|\,\cdot\,|$ on the left is the absolute value)
  can be written as $f : w \mapsto X^{T} \cdot \Phi(w) \cdot Y$ where
  \begin{enumerate}[(i)]
    \item the image of $\Phi$ has a block triangular structure;
    \item the projection of this image on each diagonal block is a finite
          monoid.
  \end{enumerate}
  The first item gives us a partition of the register into layers where each
  layer \enquote{depends} only on the ones below them. The finiteness condition
  in the second item is equivalent to having bounded coefficients, which means
  that the register assignments within each layer are \emph{bounded-copy}, while
  in a layered SST, they would be \emph{copyless} instead -- but bounded-copy
  SSTs are known to be equivalent to copyless SSTs (see
  e.g.~\cite{AperiodicSST}). The theorem also states a relationship between the
  number of blocks and the growth rate; compare this to
  \Cref{rem:marble-minimization}.

  Via the canonical isomorphism $\{a\}^{*} \cong \naturalN$, HDT0L transductions
  with unary output alphabet are the same thing as \emph{$\naturalN$-rational
    series}. The counterpart of Schützenberger's theorem over $\naturalN$ is
  thus a corollary of the results of~\cite{Marble} on layered SSTs.
\end{remark}

\subsection{Transition monoids for streaming string transducers}
\label{sec:transition-monoids}

To wrap up the preliminaries, let us recall some algebraic tools for working
with SSTs (this technical section can be safely skipped on a first reading). Let
us start by putting a monoid structure on register assignments
(\Cref{def:register-assignment}).
\begin{definition}
  Let $\transmono_{R,\Sigma} = R \to (\Sigma \cup R)^*$ for $R \cap \Sigma =
  \varnothing$. We endow it with the following composition operation, that makes
  it into a monoid:
  \[ \alpha \bullet \beta = \alpha^\odot \circ \beta\quad\text{where}\
    \alpha^\odot \in \Hom((\Sigma \cup R)^*,(\Sigma \cup R)^*),\;
    \alpha^\odot(x) = \begin{cases}
      \alpha(x) & \text{for}\ x \in R\\
      x & \text{for}\ x \in \Sigma
    \end{cases}
  \]
\end{definition}
The monoid $\transmono_{R,\Sigma}$ thus defined is isomorphic to a submonoid of
$\Hom((\Sigma \cup R)^*,(\Sigma \cup R)^*)$ with function composition. It admits
a submonoid of \emph{copyless} assignments.
\begin{definition}
  \label{def:transmonocl}
  We write $\transmonocl_{R,\Sigma}$ for the set of all $\alpha \in
  \transmono_{R,\Sigma}$ such that each letter $r \in R$ occurs at most once
  among all the $\alpha(r')$ for $r' \in R$.
\end{definition}
\begin{proposition}
  $\transmonocl_{R,\Sigma}$ is a submonoid of $\transmono_{R,\Sigma}$. In other
  words, copylessness is preserved by composition (and the identity assignment
  is copyless).
\end{proposition}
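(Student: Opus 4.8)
The plan is to verify the two defining properties of a submonoid: that the identity assignment $\id$ (given by $\id(r) = r$) lies in $\transmonocl_{R,\Sigma}$, and that $\transmonocl_{R,\Sigma}$ is closed under the operation $\bullet$. The first is immediate, since each $r \in R$ occurs exactly once among all the words $\id(r')$ for $r' \in R$ — namely in $\id(r) = r$ and nowhere else — so $\id$ is copyless. The substance is the closure property, which I would establish by a double-counting computation.

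So fix copyless $\alpha, \beta \in \transmono_{R,\Sigma}$, set $\gamma = \alpha \bullet \beta$, and fix a register $s \in R$; the goal is to bound the total number $N_s = \sum_{r \in R} |\gamma(r)|_s$ of occurrences of $s$ across all the $\gamma(r)$. The key observation is that occurrence counts of a register are additive under the morphism $\alpha^\odot$: since $\alpha^\odot$ fixes every letter of $\Sigma$ and replaces each $t \in R$ by $\alpha(t)$, and since $s \in R$ cannot be a $\Sigma$-letter (as $R \cap \Sigma = \varnothing$), for any word $w \in (\Sigma \cup R)^*$ we have $|\alpha^\odot(w)|_s = \sum_{t \in R} |w|_t \cdot |\alpha(t)|_s$. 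Applying this with $w = \beta(r)$ — recall $\gamma(r) = \alpha^\odot(\beta(r))$ — and summing over $r$, then swapping the two summations, yields
\[ N_s = \sum_{r \in R} \sum_{t \in R} |\beta(r)|_t \cdot |\alpha(t)|_s = \sum_{t \in R} \Bigl( \sum_{r \in R} |\beta(r)|_t \Bigr) \cdot |\alpha(t)|_s. \]
Now the two copylessness hypotheses enter in turn. Because $\beta$ is copyless, each inner factor $\sum_{r \in R} |\beta(r)|_t$ is a nonnegative integer bounded by $1$, hence lies in $\{0,1\}$; discarding the factors equal to $0$ gives $N_s \leq \sum_{t \in R} |\alpha(t)|_s$. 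Because $\alpha$ is copyless, this remaining sum — the total number of occurrences of $s$ among all the $\alpha(t)$ — is itself at most $1$. Hence $N_s \leq 1$ for every $s \in R$, which is precisely the assertion that $\gamma$ is copyless.

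I expect no genuine obstacle here: the argument is entirely a bookkeeping calculation, and the only step requiring a moment's care is the additivity identity $|\alpha^\odot(w)|_s = \sum_{t} |w|_t \cdot |\alpha(t)|_s$, which holds because applying $\alpha^\odot$ amounts to substituting the word $\alpha(t)$ for each occurrence of each register $t$. It is worth flagging that the two copylessness assumptions are used asymmetrically and are both genuinely needed — $\beta$'s copylessness collapses the inner sums to indicator values, while $\alpha$'s copylessness bounds the resulting outer sum — and that this asymmetry is exactly what makes the product of indicators telescope down to a single global bound of $1$.
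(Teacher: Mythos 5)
Your proof is correct. The paper states this proposition without any proof, treating it as routine; your verification—the additivity identity $|\alpha^\odot(w)|_s = \sum_{t \in R} |w|_t \cdot |\alpha(t)|_s$, the exchange of summations, and the use of $\beta$'s copylessness to collapse the inner sums followed by $\alpha$'s copylessness to bound the outer one—is exactly the bookkeeping argument the paper leaves implicit, and it is sound.
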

The following proposition ensures that this composition does what we expect.
Recall from \Cref{def:register-assignment} that $(-)^\dagger$ sends
$\transmono_{R,\Sigma}$ to $(\Sigma^*)^R \to (\Sigma^*)^R$.
\begin{proposition}
  For all $\alpha,\beta \in \transmono_{R,\Sigma}$, we have $(\alpha \bullet
  \beta)^\dagger = \beta^\dagger \circ \alpha^\dagger$.
\end{proposition}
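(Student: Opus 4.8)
The plan is to verify the identity pointwise as a statement about maps $(\Sigma^*)^R \to (\Sigma^*)^R$. Fixing an arbitrary $\vec{u} = (u_r)_{r \in R} \in (\Sigma^*)^R$, I would show that the two families $(\alpha \bullet \beta)^\dagger(\vec{u})$ and $\beta^\dagger(\alpha^\dagger(\vec{u}))$ agree in each component $r \in R$. Writing $\vec{v} := \alpha^\dagger(\vec{u})$, so that $v_r = \alpha(r)^\dagger(\vec{u})$ by definition of $(-)^\dagger$, the $r$-th component of $\beta^\dagger \circ \alpha^\dagger$ is $\beta(r)^\dagger(\vec{v})$, while the $r$-th component of $(\alpha \bullet \beta)^\dagger$ is $\bigl(\alpha^\odot(\beta(r))\bigr)^\dagger(\vec{u})$, since $(\alpha \bullet \beta)(r) = \alpha^\odot(\beta(r))$. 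Thus the whole proposition reduces to the single identity
\[ \bigl(\alpha^\odot(w)\bigr)^\dagger(\vec{u}) = w^\dagger(\vec{v}) \qquad \text{for every } w \in (\Sigma \cup R)^*, \]
applied to $w = \beta(r)$.

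The key idea for this identity is to read each evaluation $w^\dagger(\vec{w})$ as the application of a monoid morphism. Concretely, $(-)^\dagger(\vec{u})$ is by definition the morphism $\phi_{\vec{u}} \in \Hom((\Sigma \cup R)^*, \Sigma^*)$ sending each $c \in \Sigma$ to itself and each $r \in R$ to $u_r$; likewise $(-)^\dagger(\vec{v})$ is $\phi_{\vec{v}}$. Since $\alpha^\odot$ is also a monoid morphism, both $\phi_{\vec{u}} \circ \alpha^\odot$ and $\phi_{\vec{v}}$ are morphisms $(\Sigma \cup R)^* \to \Sigma^*$, and the displayed identity asserts exactly that these two morphisms coincide. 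By the universal property of the free monoid $(\Sigma \cup R)^*$, it then suffices to check that they agree on the generators $\Sigma \cup R$.

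The generator check is immediate: for $c \in \Sigma$ both morphisms return $c$, because $\alpha^\odot$ fixes $\Sigma$ and so do $\phi_{\vec{u}}$ and $\phi_{\vec{v}}$; for $r \in R$ we compute $\phi_{\vec{u}}(\alpha^\odot(r)) = \phi_{\vec{u}}(\alpha(r)) = \alpha(r)^\dagger(\vec{u}) = v_r = \phi_{\vec{v}}(r)$, the penultimate step being the definition of $\vec{v}$. I do not expect any real obstacle here — the content is purely the contravariance built into the definitions, namely that $(-)^\dagger$ turns the homomorphic-extension composition $\bullet$ into ordinary, reverse-order composition of evaluation maps. The only thing to be careful about is the bookkeeping of which alphabet each morphism acts on, together with the order reversal between $\alpha \bullet \beta$ and $\beta^\dagger \circ \alpha^\dagger$; recasting everything as morphisms out of the free monoid makes both points transparent.
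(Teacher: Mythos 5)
Your proof is correct. The paper states this proposition without giving any proof, and your argument — recasting both sides as monoid morphisms $\phi_{\vec u} \circ \alpha^\odot$ and $\phi_{\vec v}$ out of the free monoid $(\Sigma \cup R)^*$ and checking equality on the generators $\Sigma \cup R$ — is precisely the routine verification the authors implicitly rely on, with the order reversal between $\bullet$ and $\circ$ falling out of the contravariance exactly as you describe.
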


To incorporate information concerning the states of an SST, we define below a
special case of the \emph{wreath product} of transformation monoids.
\begin{definition}
  \label{def:wreath}
  Let $M$ be a monoid whose multiplication is denoted by $m,m' \mapsto m \cdot
  m'$. We define $M \wr Q$ as the monoid whose set of elements is $Q \to Q
  \times M$ and whose monoid multiplication is, for $\mu,\mu' : Q \to Q
  \times M$,
  \[ (\mu \bullet \mu') : q \mapsto (\pi_1 \circ \mu' \circ \pi_1 \circ
    \mu(q),\; (\pi_2 \circ \mu(q)) \cdot (\pi_2 \circ \mu' \circ \pi_1 \circ
    \mu(q))) \]
  where $\pi_1 : Q \times M \to Q$ and $\pi_2 : Q \times M \to M$ are the
  projections.
\end{definition}
For instance, if $M$ is the trivial monoid with one element, $Q \wr M$ is
isomorphic to $Q \to Q$ with \emph{reverse} composition as the monoid
multiplication: $f \bullet g = g \circ f$.

\begin{proposition}
  Let $(Q, q_0, R, \delta, \vec{u}_I, F)$ be an SST that computes $f : \Gamma^*
  \to \Sigma^*$ (using the notations of \Cref{def:sst}). For all $c \in \Gamma$,
  we have $\delta(-,c) \in \transmono_{R,\Sigma} \wr Q$, and the SST is copyless
  if and only if $\{\delta(-,c) \mid c \in \Gamma\} \subseteq
  \transmonocl_{R,\Sigma} \wr Q$. Furthermore, for all $w_1 \dots w_n \in \Gamma^*$,
  \[ f(w_1 \dots w_n) \quad=\quad
    F(g(q_0))^\dagger(\alpha^\dagger(\vec{v}))\quad\text{where}\quad (g,\alpha)
    = \delta(-,w_1) \bullet \dots \bullet \delta(-,w_n)\]
\end{proposition}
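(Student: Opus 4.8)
The plan is to dispatch the three assertions in increasing order of difficulty. The first two are pure bookkeeping. By \Cref{def:sst}, the map $\delta(-,c)$ has type $Q \to Q \times (R \to (\Sigma \cup R)^*) = Q \to Q \times \transmono_{R,\Sigma}$, which is exactly the underlying set of $\transmono_{R,\Sigma} \wr Q$ from \Cref{def:wreath}; this settles the first claim by type-checking. For the copylessness equivalence, since $\transmonocl_{R,\Sigma}$ is a submonoid of $\transmono_{R,\Sigma}$, the submonoid $\transmonocl_{R,\Sigma} \wr Q$ of $\transmono_{R,\Sigma} \wr Q$ consists precisely of those $\mu : Q \to Q \times \transmono_{R,\Sigma}$ whose second projection $\pi_2 \circ \mu$ lands in $\transmonocl_{R,\Sigma}$. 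Hence $\delta(-,c) \in \transmonocl_{R,\Sigma} \wr Q$ says exactly that $\deltareg(q,c) \in \transmonocl_{R,\Sigma}$ for every $q \in Q$, i.e.\ that each assignment $\deltareg(q,c)$ is copyless; quantifying over $c \in \Gamma$ recovers the definition of a copyless SST verbatim.

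The substance is the final formula, which I would prove by induction on $n$. Write $(g^{(k)}, \alpha^{(k)}) = \delta(-,w_1) \bullet \dots \bullet \delta(-,w_k)$ and let $(q_i)_{0 \le i \le n}$ be the run from \Cref{def:sst}. The statement to prove by induction is the conjunction
\[ g^{(k)}(q_0) = q_k \qquad\text{and}\qquad \alpha^{(k)}(q_0) = \deltareg(q_0,w_1) \bullet \dots \bullet \deltareg(q_{k-1},w_k). \]
Both components must be carried together, and this is the one delicate point: unfolding the wreath multiplication of \Cref{def:wreath} for $(g^{(k)}, \alpha^{(k)}) = (g^{(k-1)}, \alpha^{(k-1)}) \bullet \delta(-,w_k)$ at the argument $q_0$ yields $g^{(k)}(q_0) = \deltast(g^{(k-1)}(q_0), w_k)$ and $\alpha^{(k)}(q_0) = \alpha^{(k-1)}(q_0) \bullet \deltareg(g^{(k-1)}(q_0), w_k)$, so the monoid part refers to $\deltareg$ indexed by the state $g^{(k-1)}(q_0)$. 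It is exactly the induction hypothesis $g^{(k-1)}(q_0) = q_{k-1}$ that turns $\deltareg(g^{(k-1)}(q_0), w_k)$ into $\deltareg(q_{k-1}, w_k)$ and makes the recurrence close; the coupling between the state and monoid components is therefore essential, whereas the rest of the step is routine unfolding.

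Finally, to conclude I would apply $(-)^\dagger$ to $\alpha^{(n)}(q_0)$ and invoke the earlier proposition $(\alpha \bullet \beta)^\dagger = \beta^\dagger \circ \alpha^\dagger$, which transforms the $\bullet$-product of the induction into the reverse composition $\deltareg(q_{n-1},w_n)^\dagger \circ \dots \circ \deltareg(q_0,w_1)^\dagger$. Evaluating this at $\vec{u}_I$ and applying $F(q_n)^\dagger = F(g^{(n)}(q_0))^\dagger$ on the outside reproduces exactly the defining expression of $f(w_1 \dots w_n)$ in \Cref{def:sst}, which is the desired identity --- reading $\alpha$ in the statement as $\alpha^{(n)}(q_0)$, i.e.\ the $\transmono_{R,\Sigma}$-component of the wreath element $\delta(-,w_1) \bullet \dots \bullet \delta(-,w_n)$ evaluated at the initial state $q_0$, and $\vec{u}_I$ in place of the initial register values.
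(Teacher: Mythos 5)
Your proof is correct. The paper itself gives no proof of this proposition (it sits in the preliminaries on transition monoids, flagged as safely skippable), and your argument --- type-checking for the first claim, unwinding the submonoid condition for the copylessness equivalence, then a length induction that carries the state component $g^{(k)}(q_0)=q_k$ and the assignment component together, finished off with $(\alpha\bullet\beta)^\dagger=\beta^\dagger\circ\alpha^\dagger$ --- is exactly the routine verification the authors leave implicit, including the key observation that the two components of the induction cannot be separated because the wreath multiplication feeds the current state into $\deltareg$. Your reading of the statement's notational slips ($\alpha$ as the $\transmono_{R,\Sigma}$-component of the wreath element evaluated at $q_0$, and $\vec{v}$ as $\vec{u}_I$) is also the intended one.
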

Finally, it will sometimes be useful to consider monoids of assignments over an
\emph{empty} output alphabet. This allows us to keep track of how the registers
are shuffled around by transitions.
\begin{proposition}
  \label{prop:erase-morphism}
  Let $R$ and $\Sigma$ be disjoint finite sets. There is a monoid morphism
  $\transmono_{R,\Sigma} \to \transmono_{R,\varnothing}$, that sends the
  submonoid $\transmonocl_{R,\Sigma}$ to $\transmonocl_{R,\varnothing}$. For any
  $Q$, this extends to a morphism $\transmono_{R,\Sigma} \wr Q \to
  \transmono_{R,\varnothing} \wr Q $ that sends
  $\transmonocl_{R,\Sigma} \wr Q $ to $\transmonocl_{R,\varnothing} \wr Q$. We
  shall use the name $\tterase_\Sigma$ for both morphisms ($R$ and $Q$
  being inferred from the context).
\end{proposition}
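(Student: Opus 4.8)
The plan is to build $\tterase_\Sigma$ by postcomposing each register assignment with the letter-erasing morphism. Let $e_\Sigma \in \Hom((\Sigma \cup R)^*, R^*)$ be the unique monoid morphism with $e_\Sigma(c) = \varepsilon$ for $c \in \Sigma$ and $e_\Sigma(r) = r$ for $r \in R$; since $R^* = (\varnothing \cup R)^*$, setting $\tterase_\Sigma(\alpha) = e_\Sigma \circ \alpha$ for $\alpha \in \transmono_{R,\Sigma}$ yields an element of $\transmono_{R,\varnothing}$ (concretely, $(\tterase_\Sigma \alpha)(r)$ is just $\alpha(r)$ with its $\Sigma$-letters deleted).

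The only real content is checking that this is a monoid morphism, which I would isolate in a commutation lemma: for every $\alpha \in \transmono_{R,\Sigma}$,
\[ e_\Sigma \circ \alpha^\odot \;=\; (\tterase_\Sigma \alpha)^\odot \circ e_\Sigma \]
as morphisms $(\Sigma \cup R)^* \to R^*$, where the right-hand $(-)^\odot$ denotes the homomorphic extension over the \emph{empty} output alphabet. Both sides are monoid morphisms, so it suffices to compare them on generators $x \in \Sigma \cup R$: if $x = c \in \Sigma$ then $\alpha^\odot(c) = c$ and $e_\Sigma(c) = \varepsilon$, so both sides send $c$ to $\varepsilon$; if $x = r \in R$ then both sides send $r$ to $e_\Sigma(\alpha(r)) = (\tterase_\Sigma \alpha)(r)$. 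Granting the lemma, for all $\alpha,\beta$ and $r \in R$ one computes $\tterase_\Sigma(\alpha \bullet \beta)(r) = e_\Sigma(\alpha^\odot(\beta(r))) = (\tterase_\Sigma \alpha)^\odot(e_\Sigma(\beta(r))) = (\tterase_\Sigma \alpha \bullet \tterase_\Sigma \beta)(r)$, and the identity assignment is preserved because $e_\Sigma$ fixes each $r \in R$.

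Preservation of copylessness is then immediate: $e_\Sigma$ deletes $\Sigma$-letters but leaves every occurrence of an $R$-letter in place, so for each $r \in R$ the total number of occurrences of $r$ among the words $(\tterase_\Sigma \alpha)(r')$, $r' \in R$, equals that among the $\alpha(r')$. Hence the \enquote{at most once} condition is inherited, so $\tterase_\Sigma$ restricts to a map $\transmonocl_{R,\Sigma} \to \transmonocl_{R,\varnothing}$.

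For the wreath product I would invoke the general principle that any monoid morphism $\phi : M \to N$ lifts componentwise to $M \wr Q \to N \wr Q$ via $\mu \mapsto (q \mapsto (\pi_1(\mu(q)), \phi(\pi_2(\mu(q)))))$, instantiated at $\phi = \tterase_\Sigma$. That this lift is a morphism is a direct unfolding of \Cref{def:wreath}: the state component of a product depends only on the first projections $\pi_1 \circ \mu$, which $\phi$ leaves untouched, while the monoid component is a product in $M$ that $\phi$ preserves. Since the first part shows $\tterase_\Sigma$ maps copyless assignments to copyless ones, applying it fiberwise sends $\transmonocl_{R,\Sigma} \wr Q$ into $\transmonocl_{R,\varnothing} \wr Q$. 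The whole argument is routine; the one place to be careful is the commutation lemma, where the two distinct meanings of $(-)^\odot$ — taken over $\Sigma$ on the left and over $\varnothing$ on the right — must not be conflated.
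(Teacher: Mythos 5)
Your proof is correct, and it coincides with what the paper intends: the paper states \Cref{prop:erase-morphism} without proof as a routine fact, and its later usage (e.g.\ \Cref{prop:shape-plus-labels}, where $\tterase_\Delta$ is described as \enquote{simply removing the labels}) makes clear that the intended definition is exactly your postcomposition with the letter-erasing morphism $e_\Sigma$. Your two key steps — the commutation identity $e_\Sigma \circ \alpha^\odot = (\tterase_\Sigma\alpha)^\odot \circ e_\Sigma$ checked on generators, and the componentwise lift of a monoid morphism through the wreath product $(-) \wr Q$ of \Cref{def:wreath} — are precisely the details the paper leaves to the reader, and both are carried out correctly.
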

\begin{remark}
  \label{rem:substitution-transition-monoid}
  Consider an SST with a transition function $\delta$. Let $\varphi_\delta \in
  \Hom(\Gamma^*,\,\transmonocl_{R,\varnothing} \wr Q)$ be defined by
  $\varphi_\delta(c) = \tterase_\Sigma(\delta(-,c))$ for $c \in \Gamma$.
  The range $\varphi_\delta(\Gamma^*)$ is precisely the \emph{substitution
    transition monoid (STM)} defined in~\cite[Section~3]{AperiodicSST}.
\end{remark}
\begin{proposition}
  \label{prop:transmonocl-finite}
  For any finite $R$, the monoid $\transmonocl_{R,\varnothing}$ is \emph{finite}.
  As a consequence, the substitution transition monoid of any \emph{copyless}
  SST is finite.
\end{proposition}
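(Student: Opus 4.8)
The plan is to prove the first statement by a direct counting argument and then derive the second from the finiteness of wreath products over finite data. Both halves are short, so the main work is simply making explicit why copylessness bounds lengths once the output alphabet is emptied.

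First I would unpack what an element of $\transmonocl_{R,\varnothing}$ looks like. Since the output alphabet is empty, every $\alpha \in \transmono_{R,\varnothing}$ maps each $r' \in R$ to a word $\alpha(r') \in R^*$ built purely from register variables. The copylessness condition (\Cref{def:transmonocl}) says that each $r \in R$ occurs at most once among all the $\alpha(r')$; summing over $r' \in R$, the total number of letter occurrences across all images is at most $|R|$. In particular every individual word $\alpha(r')$ has length at most $|R|$, so there are at most $\sum_{i=0}^{|R|} |R|^i$ possibilities for each of them, and hence only finitely many functions $\alpha : R \to R^*$ meeting the constraint. This already bounds $\card{\transmonocl_{R,\varnothing}}$ by a finite number (crudely, by $\bigl(\sum_{i=0}^{|R|} |R|^i\bigr)^{|R|}$, though a far sharper bound is available if desired). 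The only subtlety to flag is that it is the emptiness of $\Sigma$, combined with the at-most-once occurrence condition, that forces the total length to be finite: without $\Sigma = \varnothing$ the images could carry arbitrarily long output letters.

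For the \enquote{as a consequence} clause, I would invoke \Cref{rem:substitution-transition-monoid}: the substitution transition monoid of a copyless SST with transition function $\delta$ is the range $\varphi_\delta(\Gamma^*)$, where $\varphi_\delta(c) = \tterase_\Sigma(\delta(-,c))$. By \Cref{prop:erase-morphism}, copylessness of the SST guarantees that each $\tterase_\Sigma(\delta(-,c))$ lands in $\transmonocl_{R,\varnothing} \wr Q$, so $\varphi_\delta$ is a morphism into this monoid. Now the underlying set of a wreath product $M \wr Q$ is $Q \to Q \times M$ (\Cref{def:wreath}), which is finite whenever $Q$ and $M$ are finite. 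Here $Q$ is finite by the definition of an SST and $M = \transmonocl_{R,\varnothing}$ is finite by the first part, so $\transmonocl_{R,\varnothing} \wr Q$ is a finite monoid. The substitution transition monoid, being the image of a morphism and hence a submonoid of this finite monoid, is therefore finite as well.

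I expect no genuine obstacle: the argument is essentially bookkeeping. The two points requiring care are (i) stating clearly that emptiness of $\Sigma$ turns the \enquote{each register appears at most once} condition into a uniform length bound, and (ii) correctly routing through \Cref{prop:erase-morphism} to confirm that the erased assignments really do sit inside the copyless submonoid $\transmonocl_{R,\varnothing}$ rather than merely in $\transmono_{R,\varnothing}$.
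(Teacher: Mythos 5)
Your proof is correct and follows essentially the same route as the paper: the paper's own proof is precisely the observation that copylessness over an empty output alphabet forces $|\alpha(r)| \leq |R|$ for every $r \in R$, which is your counting argument. Your second paragraph merely spells out the routine bookkeeping (via \Cref{rem:substitution-transition-monoid} and \Cref{prop:erase-morphism}) that the paper leaves implicit for the \enquote{as a consequence} clause, and it is sound.
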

\begin{proof}[Proof idea]
  For all $\alpha \in \transmonocl_{R,\varnothing}$ and $r \in R$,
  observe that $|\alpha(r)| \leq |R|$.
\end{proof}

\section{Complements on HDT0L systems, SSTs and polyregular functions}
\label{sec:complements}

Before embarking on the study of our new comparison-free polyregular functions,
we state some minor results that consolidate our understanding of pre-existing
classes.

\subparagraph{Layered HDT0L systems}

Let us transpose the layering condition from SSTs to HDT0L systems. The hierarchy
of models that we get corresponds \emph{with an offset} to layered SSTs.
\begin{definition}
  A HDT0L system $(\Gamma,\Sigma,\Delta,d,(h_c)_{c\in\Gamma},h')$ is
  \emph{$k$-layered} if its working alphabet can be partitioned as $\Delta =
  \Delta_0 \sqcup \dots \sqcup \Delta_k$ such that, for all $c \in \Gamma$ and
  $i \in \{0,\ldots,k\}$:
  \begin{itemize}
  \item for $r \in \Delta_i$, we have $h_c(r) \in (\Delta_0 \sqcup \dots \sqcup
    \Delta_i)^*$;
  \item each letter in $\Delta_i$ appears at most once among all the $h_c(r)$
    for $r \in \Delta_i$ (but those in $\Delta_0 \sqcup \dots \sqcup
    \Delta_{i-1}$ may appear an arbitrary number of times).
  \end{itemize}
\end{definition}

\begin{restatable}{theorem}{layeredhdtOlequiv}
  \label{thm:layered-hdt0l-equiv}
  For $k\in\naturalN$, a function can be computed by a $k$-layered SST if and
  only if it can be specified by a $(k+1)$-layered HDT0L system.

  In particular, regular functions correspond to 1-layered HDT0L systems.
\end{restatable}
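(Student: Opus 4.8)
The plan is to prove the two directions separately by explicit constructions, organized around an intermediate normal form: the single-state \emph{simple} SSTs of \Cref{rem:hdt0l-single-state} (single state, with neither the transitions nor the final output function reading output letters). The conceptual point I want to exploit is that a simple single-state SST is, layer for layer, the same object as a HDT0L system, whereas a general SST has two extra features — it may append output letters inside its transitions, and it may have several states — of which the first accounts for exactly one additional layer and the second for none.

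First I would upgrade \Cref{rem:hdt0l-single-state} into a layered statement: the dictionary $R = \Delta$, $\alpha_c = h_c$, $\vec{u}_I = h'$, $F(q) = d$ (and its inverse) identifies $m$-layered simple single-state SSTs with $m$-layered HDT0L systems, because the register-shuffle $\alpha_c$ and the morphism $h_c$ are literally the same datum, and the two layering conditions are word-for-word identical. This reduces the theorem to relating $k$-layered general SSTs with $(k+1)$-layered simple single-state SSTs. For the \enquote{$+1$}, from a $k$-layered SST I would build a simple SST by adding a fresh copyless bottom layer of constant registers $\{[a] \mid a \in \Sigma\}$, with $\alpha_c([a]) = [a]$ and initial value $a$, and replacing every output letter $a$ occurring in a transition or in $F$ by the register $[a]$; relabelling $R_0,\dots,R_k$ as $R_1,\dots,R_{k+1}$ turns the old layering into a $(k+1)$-layering, since the new registers may be copied freely by higher layers, as is allowed for a lower layer. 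Conversely, from a simple SST I would delete the bottom layer $R_0$: its registers evolve copylessly over $R_0^*$, so the composed shuffle lives in the finite monoid $\transmonocl_{R_0,\varnothing}$ (\Cref{prop:transmonocl-finite}); tracking this shuffle in the finite control makes each $R_0$-register equal, at each step, to a determined word over the fixed initial values, which can then be inlined as genuine output letters in the transitions of the surviving layers, lowering the layer count by one.

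It then remains to eliminate states from a simple SST without changing its layering. I would take register set $Q \times Q \times R$, where $(p,q,r)$ is meant to hold the value of $r$ along the partial run from $p$ to $q$ (and $\varepsilon$ when no such run exists); reading $c$, I update $(p,q',r)$ as the concatenation, over all $q$ with $\deltast(q,c)=q'$, of $\deltareg(q,c)(r)$ with each $r'$ renamed to $(p,q,r')$, and I reconstruct the output at the end by concatenating $F(q)$ over $q$ with registers renamed to $(q_0,q,\cdot)$.

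The main obstacle — and the part that really needs the simplicity hypothesis — lies in this last step: because the transitions of a simple SST contain no output letters, the \enquote{garbage} summaries that do not correspond to the actual run stay empty, so the above concatenations select exactly the real run rather than polluting it. Checking that the $(p,q,r)$-indexing leaves both the copyless-within-a-layer condition and the depends-only-on-lower-layers condition intact, so that the number of layers is genuinely preserved, is the most delicate verification; the constructions of the previous paragraph, by contrast, are routine once their effect on the partition is spelled out. Combining the three steps in both directions yields the equivalence, and the case $k=0$, together with the fact that regular functions are exactly the $0$-layered (i.e.\ copyless) SSTs, specializes to the stated correspondence between regular functions and $1$-layered HDT0L systems.
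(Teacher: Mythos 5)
Your proposal is correct, and its overall architecture matches the paper's: the two constructions carrying the $+1$ offset are the same in both proofs (a fresh copyless bottom layer of constant registers, one per output letter, to make the machine letter-free; and, conversely, absorbing the bottom layer into the finite control using the finiteness of $\transmonocl_{R_0,\varnothing}$ from \Cref{prop:transmonocl-finite}), and your \enquote{layered dictionary} is exactly \Cref{rem:hdt0l-single-state} together with the observation that the two layering conditions coincide verbatim for letter-free machines. The one step where you genuinely diverge is state elimination. The paper never eliminates states on the SST side: it reuses the Filiot--Reynier translation, which builds the HDT0L system directly over the working alphabet $R' \times Q$; there the initial word concatenates the output templates $F'(q)$ over \emph{all} states, the morphisms trace predecessor states backwards, and garbage (chains not anchored at the initial state) is erased only at the very end, by the final morphism. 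Your construction runs forwards instead, with registers $(p,q,r) \in Q \times Q \times R$ summarizing partial runs from every start state and garbage kept equal to $\varepsilon$ at every step. Both rest on exactly the insight you isolate (letter-free transitions keep garbage inert), and your layering verification does go through: a register $(p,q,r')$ is referenced only in the updates of registers $(p,\deltast(q,c),\cdot)$, so copylessness within a layer reduces to that of the single assignment $\deltareg(q,c)$. Your version is self-contained and arguably more intuitive, at the cost of a larger alphabet ($|Q|^2|R|$ versus $|Q||R|$; your first component is in fact redundant, since only $p = q_0$ is read by the output function). One point you should make explicit in the converse direction: when deleting the bottom layer $R_0$, the determined $R_0$-values must be inlined not only in the transitions but also in the final output function, which may mention $R_0$-registers (in the dictionary image of an HDT0L system, the output function is the initial word $d$, which typically does); since the output function is indexed by the state, which carries the shuffle, this is immediate, but it belongs in the proof.
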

The obvious translation from HDT0L systems to SSTs preserves
1-layeredness and produces a single-state machine, so one
may sacrifice copylessness to eliminate states for SSTs.

\begin{restatable}{corollary}{regularsinglestateonelayered}
  \label{cor:regular-single-state-1-layered}
  Every regular function can be computed by a single-state 1-layered SST.
\end{restatable}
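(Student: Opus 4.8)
The plan is to route through the HDT0L characterization of regular functions and then apply the naive HDT0L-to-SST translation, which is single-state by construction and carries the layering structure across unchanged. Let $f$ be a regular function. By the last clause of \Cref{thm:layered-hdt0l-equiv}, $f$ is specified by some $1$-layered HDT0L system $H = (\Gamma,\Sigma,\Delta,d,(h_c)_{c\in\Gamma},h')$, witnessed by a partition $\Delta = \Delta_0 \sqcup \Delta_1$. I would then feed $H$ to the natural translation of \Cref{rem:hdt0l-single-state}, obtaining a single-state \enquote{simple} SST $\cT$ computing $f$. The point to keep in mind is that in this translation the register set is a copy of the working alphabet $\Delta$, the reversal only affects the boundary data (the initial register values encode the final morphism $h'$ and the final output function encodes the initial word $d$), and the transition assignment for an input letter $c$ is nothing but the morphism $h_c$, read as a map $R \to R^*$.

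The crux is then a verbatim comparison of the two layering conditions. Taking as register partition $R = R_0 \sqcup R_1$ the image of $\Delta = \Delta_0 \sqcup \Delta_1$, the defining requirements that $H$ be $1$-layered -- that $h_c(r)$ lie in $(\Delta_0 \sqcup \dots \sqcup \Delta_i)^*$ for $r \in \Delta_i$, and that each letter of $\Delta_i$ occur at most once among the $h_c(r)$ with $r \in \Delta_i$ -- are, once $\Delta$ is identified with $R$, exactly the requirements that each transition assignment $\deltareg(q,c) = h_c$ be $1$-layered with respect to $R_0 \sqcup R_1$ (the output-alphabet clause being vacuous, since $h_c$ uses no letter of $\Sigma$). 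As $\cT$ has a single state, it is therefore a single-state $1$-layered SST computing $f$, which is the claim. If one wants to be self-contained rather than invoking \Cref{rem:hdt0l-single-state} as a black box, the underlying semantic identity is the routine recurrence $V_i = h_{w_i}^\dagger(V_{i-1})$ on the register contents $V_i(r) = h'(h_{w_1}\circ\dots\circ h_{w_i}(r))$, which I would not spell out in detail.

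The only genuinely delicate point is the bookkeeping around the reversal: one must confirm that it rearranges only the initial values and the final output rule, leaving the transition assignments equal to the $h_c$, and that these assignments live over $R$ alone (the \enquote{simple SST} property), so that the SST and HDT0L layering conditions become literally the same statement. I would also flag the contrast with \Cref{thm:layered-hdt0l-equiv}: there a $1$-layered HDT0L system is compiled into a genuinely copyless ($0$-layered) SST, whereas here no copylessness is recovered -- as \Cref{rem:hdt0l-single-state} warns, this naive translation does not preserve copylessness. So the trade-off is precisely the one the statement permits: we gain a single state at the cost of one extra layer.
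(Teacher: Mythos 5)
Your proof is correct and follows essentially the same route as the paper's: pass from the regular function to a 1-layered HDT0L system via \Cref{thm:layered-hdt0l-equiv}, then apply the natural translation of \Cref{rem:hdt0l-single-state} and observe that it carries the layering partition across unchanged. The paper dismisses that last preservation step as \enquote{readily seen from the definitions}, whereas you spell it out; this is a presentational difference only.
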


The converse to this corollary does not hold: the single-state 1-layered SST of
\Cref{ex:prefixes} computes a function which is not regular (cf.\
\Cref{rem:prefixes-quadratic}).

\subparagraph{Polyregular functions vs layered SSTs}

By applying some results from~\cite{polyregular}, we can state a variant of
\Cref{def:polyreg} which is a bit more convenient for us.
\begin{restatable}{proposition}{polyregregsquaring}
  \label{prop:polyreg-reg-squaring}
  Polyregular functions are the smallest class closed under composition that
  contains the regular functions and the squaring with underlining functions
  $\mathtt{squaring}_\Gamma$.
\end{restatable}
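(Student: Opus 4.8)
The plan is to prove the two inclusions between the class $P$ of polyregular functions (\Cref{def:polyreg}) and the class $P'$ described in the proposition — the smallest composition-closed class containing the regular functions and the $\mathtt{squaring}_\Gamma$. Since both classes are by definition closed under composition, it suffices to check that each generator of one class lies in the other.

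First I would show $P' \subseteq P$. The squaring-with-underlining functions are literally among the generators of $P$, so they are trivially polyregular. For the regular functions, I would use the characterization via copyless SSTs taken as the definition in \Cref{sec:reg}: one must exhibit every regular function as a composition of the generators of $P$ (sequential transducers, iterated reverse, and squaring). This direction amounts to the known fact that regular functions are polyregular; I would either cite this or sketch it by recalling that regular (copyless SST) functions coincide with the two-way transducer functions and can be simulated by a single pebble, hence fall under the polyregular generators. Because $P$ is closed under composition, closure then gives $P' \subseteq P$.

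For the reverse inclusion $P \subseteq P'$, I need each generator of $P$ to be expressible in $P'$. The squaring functions are generators of $P'$ as well, so nothing to do there. The sequential transducers compute regular functions — indeed the excerpt already notes that any sequential transducer translates into a single-register copyless SST — so they lie in $P'$ by definition. The only genuine content is the \emph{iterated reverse} function of \Cref{ex:iterated-reverse}: I must show it is regular. But this is exactly what \Cref{ex:iterated-reverse} provides, since it exhibits a \emph{copyless} single-state SST computing iterated reverse; hence iterated reverse is regular and therefore in $P'$. With all three generators of $P$ landing in the composition-closed class $P'$, we conclude $P \subseteq P'$, and the two inclusions give equality.

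The main obstacle I expect is the direction $P' \subseteq P$, specifically the claim that every regular (copyless SST) function is polyregular. This is the only step where the two generating sets do not line up by inspection, and it relies on the nontrivial equivalence between copyless SSTs and two-way transducers together with the simulation of a one-pebble (equivalently two-way) transducer inside the polyregular generators. The cleanest route is to appeal to results of~\cite{polyregular}, as the proposition's statement hints by saying it follows ``by applying some results from~\cite{polyregular}''; concretely, $\mathtt{squaring}_\Gamma$ encodes a linear position-pair, and combined with regular postprocessing this yields the comparison/iteration mechanism underlying iterated reverse, so once regular functions are shown to be polyregular the rest is bookkeeping.
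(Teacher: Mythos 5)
Your proposal is correct and follows essentially the same route as the paper's proof: cite (via the two-way transducer/pebble connection) that regular functions are polyregular for one inclusion, and for the other observe that sequential functions are regular and that iterated reverse is regular because the SST of \Cref{ex:iterated-reverse} is copyless. The muddled speculation in your final paragraph about squaring encoding position-pairs is unnecessary, but it does not affect the argument.
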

This allows us to show that composing HDT0L transductions with at most
polynomial growth yields the polyregular functions. One direction of
this equivalence is proved by encoding $\mathtt{squaring}_\Gamma$ as a
composition of two SSTs, one of which is \Cref{ex:prefixes}. More precisely:
\begin{restatable}{theorem}{polyreglayered}
  \label{thm:polyreg-layered}
  Let $f : \Gamma^* \to \Sigma^*$. The following are equivalent:
  \begin{enumerate}[(i)]
  \item\label{thm-item:polyreg-layered-polyreg} $f$ is polyregular;
  \item\label{thm-item:polyreg-layered-k-layered} $f$ can be obtained as a
    composition of layered SSTs;
  \item\label{thm-item:polyreg-layered-1-layered} $f$ can be obtained as a
    composition of single-state 1-layered SSTs.
  \end{enumerate}
\end{restatable}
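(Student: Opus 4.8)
The plan is to prove the equivalences cyclically, as (\ref{thm-item:polyreg-layered-1-layered}) $\Rightarrow$ (\ref{thm-item:polyreg-layered-k-layered}) $\Rightarrow$ (\ref{thm-item:polyreg-layered-polyreg}) $\Rightarrow$ (\ref{thm-item:polyreg-layered-1-layered}), with only the last implication carrying any real content. The implication (\ref{thm-item:polyreg-layered-1-layered}) $\Rightarrow$ (\ref{thm-item:polyreg-layered-k-layered}) is immediate, since a single-state $1$-layered SST is in particular a layered SST, so a composition of the former is a composition of the latter. For (\ref{thm-item:polyreg-layered-k-layered}) $\Rightarrow$ (\ref{thm-item:polyreg-layered-polyreg}), I would invoke the characterization of~\cite{Marble} recalled above: every function computed by a $k$-layered SST is a HDT0L transduction of polynomial growth, hence polyregular. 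Since the polyregular functions are closed under composition, any composition of layered SSTs is polyregular.

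The substance is in (\ref{thm-item:polyreg-layered-polyreg}) $\Rightarrow$ (\ref{thm-item:polyreg-layered-1-layered}). Here I would first apply \Cref{prop:polyreg-reg-squaring} to write an arbitrary polyregular $f$ as a composition of regular functions and squaring functions $\mathtt{squaring}_\Gamma$. Each regular factor is already a single-state $1$-layered SST by \Cref{cor:regular-single-state-1-layered}, so it remains to exhibit each $\mathtt{squaring}_\Gamma$ as a composition of single-state $1$-layered SSTs. I would use two factors, the first being \emph{exactly} the SST of \Cref{ex:prefixes}, which maps $w = w_1\ldots w_n$ to the intermediate word $\underline{w_n}\,w_{n-1}\ldots w_1\,\underline{w_{n-1}}\,w_{n-2}\ldots w_1 \cdots \underline{w_1}$, a concatenation of blocks indexed by $i = n, n-1, \ldots, 1$, where block $i$ is the marked reversed prefix $\underline{w_i}\,w_{i-1}\ldots w_1$. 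The key observation is that the block boundaries are detectable from the input alone (each block begins at an underlined letter), and that the suffix $w_{i+1}\ldots w_n$ needed to complete the $i$-th squaring block coincides with the list of markers of the blocks read \emph{before} block $i$, precisely because the blocks arrive in decreasing order of $i$.

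The second factor is then a single-state SST that, reading this intermediate word left to right, reconstructs each block's prefix-with-marker $w_1\ldots w_{i-1}\underline{w_i}$ by prepending incoming letters into a register $C$, maintains in a register $S$ the accumulated suffix $w_{i+1}\ldots w_n$ (updated by $S \mapsto m\,S$ whenever a new marker $m$ is read), and, upon detecting the next block, prepends the completed squaring block $C\,S$ to an output register $O$ via $O \mapsto C\,S\,O$; the final function $F$ flushes the last block. The main obstacle, and the only delicate point, is checking that this second transducer is $1$-layered: the only register duplicated by the transition reading a marker is $S$ (it occurs both in the new $O$ and in the new $S$), so placing $C$, $S$ and the marker register in layer $R_0$ and $O$ in layer $R_1$ satisfies the layering condition, since within layer $0$ each register still occurs at most once and the higher-layer assignment $O \mapsto C\,S\,O$ may use the lower-layer registers freely. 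Verifying that a single state suffices (block boundaries being read off the input, with the empty leading block handled harmlessly) and that the output is exactly $\mathtt{squaring}_\Gamma(w)$ is then routine bookkeeping, which I would not spell out in full.
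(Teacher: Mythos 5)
Your overall route is exactly the paper's: the same cycle of implications, with (\ref{thm-item:polyreg-layered-k-layered}) $\Rightarrow$ (\ref{thm-item:polyreg-layered-polyreg}) by citing~\cite{Marble} plus closure under composition, (\ref{thm-item:polyreg-layered-1-layered}) $\Rightarrow$ (\ref{thm-item:polyreg-layered-k-layered}) trivial, and (\ref{thm-item:polyreg-layered-polyreg}) $\Rightarrow$ (\ref{thm-item:polyreg-layered-1-layered}) via \Cref{prop:polyreg-reg-squaring}, \Cref{cor:regular-single-state-1-layered}, and a two-step decomposition of $\mathtt{squaring}_\Gamma$ whose first factor is the SST of \Cref{ex:prefixes}. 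The only divergence is your second factor, and there your construction, as literally written, computes the wrong function. You state the invariant that $S$ holds the suffix $w_{i+1}\ldots w_n$ while block $i$ is being processed, but the update rule you give ($S \mapsto m\,S$ when a new marker $m$ is read) breaks it: after reading $\underline{w_i}$ the register holds $w_i w_{i+1}\ldots w_n$. Consequently, when the next marker triggers the flush $O \mapsto C\,S\,O$, the old value of $S$ is $w_{i+1}w_{i+2}\ldots w_n$ rather than $w_{i+2}\ldots w_n$, and each emitted block is $w_1\ldots w_i\,\underline{w_{i+1}}\,w_{i+1}w_{i+2}\ldots w_n$, with the marked letter doubled. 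Concretely, on input $\mathtt{1234}$ your transducer outputs $\mathtt{\underline{1}12341\underline{2}23412\underline{3}34123\underline{4}4}$ instead of $\mathtt{squaring}_\Gamma(\mathtt{1234})$.

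The gap is local and fixable with an ingredient you already name but never actually use: the marker register. Let $M$ store the \emph{previous} block's marker (non-underlined); on reading a marker $\underline{c}$, perform $O \mapsto C\,S\,O$, $S \mapsto M\,S$, $M \mapsto c$, $C \mapsto \underline{c}$ (and on reading $c \in \Gamma$, just $C \mapsto cC$), with final output $C\,S\,O$. Since all assignments in one transition read the old register values simultaneously, the flush now sees the correct suffix, and your layering analysis ($C,S,M$ in $R_0$, $O$ in $R_1$, only $S$ duplicated) goes through verbatim; I checked the trace on $\mathtt{1234}$. It is worth comparing with the paper's second factor, which sidesteps the timing issue with only two registers by flushing at the \emph{start} of each block and completing the prefix directly on the output register: $\delta(c) = (X \mapsto X,\, Y \mapsto cY)$ and $\delta(\underline{c}) = (X \mapsto cX,\, Y \mapsto \underline{c}XY)$ with $F(q) = Y$. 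There the flushed fragment $\underline{c}XY$ and the update $cX$ both use the \emph{old} $X$, so updating with the currently read marker is sound --- precisely the simultaneity that your "flush the previous block when the next one begins" design lacks, and the reason it needs the extra register $M$.
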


But layered SSTs by themselves are \emph{strictly less expressive} than
polyregular functions, as we shall see later in \Cref{thm:cf-not-hdt0l}.
Therefore, as promised in the introduction:
\begin{corollary}[claimed in {\cite[Section~6]{Marble}}]
  Layered SSTs are \emph{not} closed under composition.
\end{corollary}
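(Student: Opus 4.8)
The plan is to obtain this as an immediate corollary of \Cref{thm:polyreg-layered} combined with a separation result, rather than by directly analysing compositions of layered machines. The key structural fact is that \Cref{thm:polyreg-layered} identifies the \emph{composition closure} of the layered SSTs with the whole class of polyregular functions. So if the class of layered-SST functions were itself closed under composition, it would coincide with its own composition closure, hence with the polyregular functions; the task is then just to exhibit a polyregular function that no single layered SST can compute.

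First I would recall, from the cited theorem of~\cite{Marble} stated above, that the functions computable by some $k$-layered SST are exactly the HDT0L transductions of polynomial growth; in particular, every layered-SST function is a HDT0L transduction. Second, by \Cref{thm:polyreg-layered}, every polyregular function arises as a composition of layered SSTs.

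Now I would argue by contradiction. Suppose the class of layered-SST functions were closed under composition. Then it would equal its composition closure, which by \Cref{thm:polyreg-layered} is the class of polyregular functions, so \emph{every} polyregular function would be a HDT0L transduction. But this contradicts \Cref{thm:cf-not-hdt0l}, which produces a comparison-free polyregular function -- hence, by the pebble-transducer definition, a polyregular function -- that is not HDT0L. Therefore the layered SSTs cannot be closed under composition.

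I expect there to be essentially no obstacle here: the entire difficulty is offloaded onto the separation result \Cref{thm:cf-not-hdt0l}, whose witness is an explicit example (borrowed from~\cite{Marble}) whose failure to be HDT0L is established independently of any closure property of layered SSTs. The only point requiring care is the non-circularity of this forward reference, and it holds precisely because \Cref{thm:cf-not-hdt0l} does not itself invoke the present corollary. Given that result, the argument above is a two-line syllogism.
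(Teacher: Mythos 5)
Your proof is correct and follows essentially the same route as the paper: the paper also deduces the corollary from \Cref{thm:polyreg-layered} (compositions of layered SSTs $=$ polyregular) together with the separation in \Cref{thm:cf-not-hdt0l} (a cfp, hence polyregular, function that is not HDT0L, while every layered-SST function is an HDT0L transduction). The non-circularity you point out indeed holds, since \Cref{thm:cf-not-hdt0l} is proved by independent pumping arguments on HDT0L systems.
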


\subparagraph{The importance of being stateful}

One interesting aspect of \Cref{thm:layered-hdt0l-equiv} is that 1-layered HDT0L
systems can be seen, through \Cref{rem:hdt0l-single-state}, as a kind of one-way
transducer model for regular functions that does not use an explicit control
state. This is in contrast with copyless SSTs, whose expressivity critically
depends on the states (unlike copyful SSTs).

\begin{restatable}{proposition}{singlestatecopylessweak}
  \label{prop:single-state-copyless-weak}
  The sequential (and therefore regular) function defined by the transducer of
  \Cref{fig:mealy} (\Cref{sec:reg}) \emph{cannot} be computed by a
  \emph{single-state copyless SST}.
\end{restatable}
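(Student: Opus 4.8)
The plan is to argue by contradiction: assume that $f$, the function of \Cref{fig:mealy}, is computed by a single-state copyless SST $(\{q\}, q, R, \delta, \vec{u}_I, F)$, and then exhibit a growth phenomenon that copylessness forbids. Write $\alpha_c = \deltareg(q,c)$ for the (copyless) assignment attached to each input letter $c \in \{a,b,c\}$. Since there is a single state, the semantics of \Cref{def:sst} specialize to $f(w_1\dots w_n) = F^\dagger \circ \alpha_{w_n}^\dagger \circ \dots \circ \alpha_{w_1}^\dagger(\vec{u}_I)$. First I would consider the two input families $ac^n$ and $bc^n$ and observe that, whatever the precise reading of \Cref{fig:mealy}, $f(ac^n)$ contains no letter $b$ while $f(bc^n)$ contains $n$ of them (for instance $f(ac^n) = a^{n+1}$ and $f(bc^n) = ab^n$), so that $\bigl| |f(bc^n)|_b - |f(ac^n)|_b \bigr| = n$. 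Setting $\vec{v}_a = \alpha_a^\dagger(\vec{u}_I)$ and $\vec{v}_b = \alpha_b^\dagger(\vec{u}_I)$ — two \emph{fixed} register vectors, independent of $n$ — both outputs are obtained by applying the \emph{same} map $g_n = F^\dagger \circ (\alpha_c^\dagger)^n$ to these vectors, i.e.\ $f(xc^n) = g_n(\vec{v}_x)$.

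The crux is a \enquote{template} estimate bounding how differently $g_n$ can act on two fixed inputs. Since reading $c$ does not depend on the register contents, running $g_n$ symbolically — treating each initial value $(\vec{v})_r$ as a placeholder — produces one fixed word $W_n$ over $\Sigma$ together with some placeholder slots, and $g_n(\vec{v})$ is the result of filling the slots with the actual strings $(\vec{v})_r$. The number of slots is controlled by copylessness: because $\transmonocl_{R,\varnothing}$ is closed under composition and $\tterase_\Sigma$ is a morphism (\Cref{prop:erase-morphism}), the composite $\tterase_\Sigma(\alpha_c^n)$ is still copyless, so each initial register contributes its content at most once into $(\alpha_c^\dagger)^n(\vec{v}) = (\alpha_c^n)^\dagger(\vec{v})$; applying the fixed output word $F$, in which each register occurs at most $M$ times for some constant $M$, each initial register then occupies at most $M$ slots of $W_n$. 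Consequently $|g_n(\vec{v})|_b = |W_n|_b + \sum_{\text{slots}} |(\vec{v})_{r(\text{slot})}|_b$ with at most $M|R|$ slots, and — this is the key point — the literal part $|W_n|_b$ and the slot structure are identical for $\vec{v}_a$ and $\vec{v}_b$.

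Subtracting the two instances, the identical literal contributions cancel, leaving $\bigl| |g_n(\vec{v}_a)|_b - |g_n(\vec{v}_b)|_b \bigr| \le M|R| \cdot \max_r\bigl(|(\vec{v}_a)_r|_b + |(\vec{v}_b)_r|_b\bigr)$, a constant independent of $n$. This contradicts the difference of $n$ computed above as soon as $n$ is large enough. I expect the main obstacle to be making the template argument precise: one must carefully separate the $\Sigma$-letters inserted along the run (which are oblivious to the initial registers, hence identical in both computations) from the copies of the initial register contents, and justify via the closure of $\transmonocl_{R,\varnothing}$ under composition that copylessness caps the latter at a bounded number of occurrences regardless of $n$. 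The non-copyless final function $F$ is harmless here precisely because it is a single fixed word of bounded register-multiplicity $M$.
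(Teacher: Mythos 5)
Your proof is correct and follows essentially the same route as the paper's: both fix the input families $xc^n$ for $x\in\{a,b\}$, exploit single-statehood so that one and the same map $F^\dagger\circ(\alpha_c^\dagger)^n$ is applied to the two fixed register vectors $\alpha_a^\dagger(\vec u_I)$ and $\alpha_b^\dagger(\vec u_I)$, and use copylessness of the iterated assignment to bound, uniformly in $n$, how many times the initial register contents can occur in the output, contradicting the unbounded gap in letter counts. Your \enquote{template word with slots} is exactly the paper's word $\omega_n=\left(\alpha_c^\square\right)^n(F(q))$ in different notation, with the slot count bounded by the register multiplicities of $F(q)$.
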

In fact, the knowledgeable reader can verify that this counterexample belongs to
the \emph{first-order letter-to-letter sequential functions}, one of the weakest
classical transduction classes.

\subparagraph{Closure under map}

The pattern of \Cref{ex:iterated-reverse} (iterated reverse) can be generalized:

\begin{definition}
  \label{def:fmap} 
  Let $f : \Gamma^* \to \Sigma^*$ and suppose that $\#\notin\Gamma\cup\Sigma$.
  We define the function $\fmap(f) : w_1\#\dots\#w_n \in (\Gamma\cup\{\#\})^*
  \mapsto f(w_1)\#\dots\#f(w_n) \in (\Sigma\cup\{\#\})^*$.
\end{definition}

\begin{restatable}{proposition}{fmaphdtOl}
  \label{prop:fmap-hdt0l}
  If $f$ is an HDT0L transduction, then so is $\fmap(f)$. For each $k \geq 1$,
  the functions that can be computed by $k$-layered HDT0L systems are also
  closed under $\fmap$.
\end{restatable}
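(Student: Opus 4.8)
The plan is to work entirely with streaming string transducers and transfer the result back through \Cref{thm:equiv-hdt0l} and \Cref{thm:layered-hdt0l-equiv}. Starting from a machine for $f$, I build one for $\fmap(f)$ that reads the input $w_1\#\dots\#w_n$ block by block: it runs the original transducer on the current block, and whenever it reads a separator $\#$ it \emph{finalizes} the block by appending its output, followed by $\#$, to a dedicated accumulator register $Z$, and then resets the working registers to their initial values so as to restart the original machine on the next block. The state component merely tracks the run of the original SST inside the current block and jumps back to $q_0$ on each $\#$. Concretely, on $c \in \Gamma$ the new transition acts as $\deltareg(q,c)$ on the working registers and fixes $Z$; on $\#$ it performs $Z \mapsto Z\cdot F(q)\cdot\#$ while resetting each working register $r$ to the constant $\vec{u}_I(r)$; and the final output function is $Z\cdot F(q)$, which emits the last block (not followed by any $\#$). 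A routine induction on the number of blocks shows that this computes $\fmap(f)$.

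For the first claim, $f$ being an HDT0L transduction, \Cref{thm:equiv-hdt0l} yields a copyful SST for $f$; the construction above produces a copyful SST, so \Cref{thm:equiv-hdt0l} applied again shows that $\fmap(f)$ is an HDT0L transduction. Since copyfulness imposes no constraint, nothing more is needed here.

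For the layered claim I use \Cref{thm:layered-hdt0l-equiv} to replace "$k$-layered HDT0L" (with $k \geq 1$) by "$(k-1)$-layered SST", and I aim to show that this class is closed under $\fmap$. Now the layering must be preserved, and this is the main obstacle: the accumulator $Z$ must sit in the top layer $R_m$, since its update mentions registers of every layer through $F(q)$, but the within-layer copylessness condition then forbids $F(q)$ from referring to any top-layer register more than once — which it may well do, as the final output function of a layered SST is unconstrained.

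The fix is to maintain $d$ synchronised copies of the \emph{top-layer} registers only, where $d$ bounds the number of occurrences of any single top-layer register in any $F(q)$; the lower-layer registers are kept as a single shared copy, which is harmless because a top-layer assignment may freely duplicate lower-layer registers without violating the layering conditions at those lower levels. Within a block all copies evolve in parallel under the original (copyless-within-layer) assignments, so copy $j$ of a register always holds the value that that register would have in the original run, and since distinct copies never share registers, within-layer copylessness is preserved. On reading $\#$, I assemble the block output by replacing the $t$-th occurrence of each top-layer register in $F(q)$ with its $t$-th copy, so that each top-layer copy is used at most once in the update of $Z$, while all working registers are reset to constants; this makes the $\#$-transition $(k-1)$-layered with $Z$ in the top layer. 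Verifying conditions (a) and (b) of the layering definition at the $\#$-transition — in particular that lower-layer registers occurring in $Z$'s update do not count against the copylessness of their own layer — is the crux of the argument. A final appeal to \Cref{thm:layered-hdt0l-equiv} converts the resulting $(k-1)$-layered SST back into a $k$-layered HDT0L system for $\fmap(f)$.
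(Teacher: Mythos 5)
Your proof is correct, but it takes a genuinely different (in fact, dual) route from the paper's. The paper never passes through SSTs: it builds the HDT0L system for $\fmap(f)$ directly, adding a fresh letter $X$ to the working alphabet, taking $Xd$ as the new initial word, letting $\widehat{h}_\#$ act as the final morphism $h'$ on $\Delta$ while mapping $X \mapsto Xd\#$, and erasing $X$ in the new final morphism. In that setting the obstacle for the layered claim is that $\widehat{h}_\#(X) = Xd\#$ re-emits the initial word $d$, whose repeated letters would break within-layer copylessness and cost one extra layer; the paper's fix is to preprocess the system so that $d$ contains each letter at most once (replace $\Delta$ by $\Delta\times\{1,\dots,|d|\}$ and tag the positions of $d$). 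Your obstacle -- the final output function $F(q)$ becoming copyful once inlined into the accumulator update on $\#$ -- and your fix -- maintaining parallel copies of the top-layer registers -- are precisely the mirror image of this under the correspondence of \Cref{rem:hdt0l-single-state}, which exchanges initial words with output functions and final morphisms with initial register values. What the paper's version buys is economy: HDT0L morphisms compose without state bookkeeping, the relabeling fix is a one-line verification, and the proof depends on neither \Cref{thm:equiv-hdt0l} nor \Cref{thm:layered-hdt0l-equiv}. Your version needs both equivalence theorems (in both directions) and the synchronised-copies invariant, but in exchange the copyful case is essentially trivial and the block-by-block simulation is operationally transparent; your placement of $Z$ in the top layer and your observation that occurrences of lower-layer registers in $\alpha(Z)$ do not count against lower layers' copylessness are exactly the right readings of the layering definition, so the argument goes through.
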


As an immediate corollary, closure under $\fmap$ holds for both regular and
polyregular functions, but this was already known. In fact,
$\mathtt{map}(f, [x_{1},\dots,x_{n}]) = [f(x_{1}),\dots,f(x_{n})]$ is an
essential primitive in the \emph{regular list functions}~\cite{Daviaud} and
\emph{polynomial list functions}~\cite[\S4]{polyregular}, two list-processing
programming languages that characterize regular and polyregular functions
respectively. We will come back to this point in \Cref{cor:fmap-cfpolyreg} and
the subsequent remark.

\section{Composition by substitution}
\label{sec:cbs}

At last, we now introduce the class of \emph{comparison-free polyregular
  functions}. The simplest way to define them is to start from the regular
functions.

\begin{definition}
  \label{def:cbs}
  Let $f : \Gamma^* \to I^*$, and for each $i \in I$, let $g_i : \Gamma^* \to
  \Sigma^*$. The \emph{composition by substitution} of $f$ with the family
  $(g_i)_{i\in I}$ is the function
  \[\CbS(f,(g_i)_{i\in I}) ~~:~~ w ~\mapsto~
  g_{i_1}(w) \ldots g_{i_k}(w)\ ~~\text{where}\ i_1 \ldots i_k = f(w)\]
  That is, we first apply $f$ to the input, then every letter $i$ in the result
  of $f$ is substituted by the image of the original input by $g_i$.
  Thus, $\CbS(f,(g_i)_{i\in I})$ is a function $\Gamma^* \to \Sigma^*$.
\end{definition}

\begin{definition}
  \label{def:cfpolyreg-cbs}
  The smallest class of string-to-string functions closed under $\CbS$ and
  containing all regular functions is called the class of \emph{comparison-free
    polyregular functions}.
\end{definition}

\begin{example}
  \label{ex:squaring}
  The following variant of \enquote{squaring with underlining} (cf.\
  \Cref{def:polyreg}) is comparison-free polyregular:
  $\cfsquaring_\Gamma : \mathtt{123} \in \Gamma^* \mapsto
  \mathtt{\underline{1}123\underline{2}123\underline{3}123} \in
  (\Gamma\cup\underline\Gamma)^*$.

  Indeed, it can be expressed as $\cfsquaring_\Gamma = \CbS(f,(g_i)_{i \in I})$
  where $I = \Gamma\cup\{\#\}$, the function $f : w_1 \dots w_n \mapsto w_1 \#
  \dots w_n \#$ is regular (more than that, a morphism between free monoids) and
  $g_\# = \mathrm{id}$, $g_c : w \mapsto \underline{c}$ for $c \in \Gamma$ are
  also regular. Its growth rate is quadratic, while regular functions have at
  most linear growth. Other examples that also require a single composition by
  substitution are given in \Cref{thm:cf-not-hdt0l}.
\end{example}

We can already justify the latter half of the name of our new class. Using the
\enquote{polynomial list functions} mentioned at the end of the previous
section, we prove:
\begin{restatable}{theorem}{polyregcbs}
  \label{thm:polyreg-cbs}
  Polyregular functions are closed under composition by substitution.
\end{restatable}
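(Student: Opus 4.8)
The plan is to present $\CbS(f,(g_i)_{i\in I})$ as a composition of three polyregular functions, exploiting the characterization of \Cref{prop:polyreg-reg-squaring} (polyregular $=$ composition-closure of regular functions and $\mathtt{squaring}$) together with the already-noted closure of polyregular functions under $\fmap$ (the corollary following \Cref{prop:fmap-hdt0l}). Writing $f(w)=i_1\ldots i_k$, the target output is $g_{i_1}(w)\ldots g_{i_k}(w)$; I would factor the computation as (1) a \emph{broadcast} step $P$ that produces, for each position $j$ of $f(w)$, a delimited block carrying both the tag $i_j$ and a fresh copy of the original input $w$; (2) a \emph{map} of a function $G$ that turns a block $\langle i\rangle\, w$ into $g_i(w)$; and (3) a regular flattening that erases the delimiters. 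Steps (2) and (3) are routine once (1) is available, so the heart of the proof is establishing that $P$ is polyregular.

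For step (2), note that $G$ reads a bounded tag $i\in I$ and then applies $g_i$ to the remaining word $w$. Since $I$ is finite, this is a finite case distinction guarded by a regular test on the tag; composing each guarded branch (a regular ``strip the tag'' map followed by the polyregular $g_i$) and concatenating the branches --- using closure of polyregular functions under regular-guarded restriction and under output concatenation, both standard and primitive in the list-function calculus --- shows $G$ is polyregular. Here the ``polynomial list functions'' alluded to before the statement are convenient: pattern-matching on the finite tag $i$ and mapping are built-in combinators, so $\fmap(G)$ is immediately polyregular.

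For the broadcast $P$, I would first compute the word $z = f(w)\,\$\,w$ (with a fresh separator $\$$), which is polyregular as the concatenation of the polyregular $f$, the constant $\$$, and the regular identity. Applying $\mathtt{squaring}$ to $z$ yields the list of blocks whose $p$-th block is $z$ with its $p$-th letter underlined; for $p\le k$ this block is $i_1\ldots\underline{i_p}\ldots i_k\,\$\,w$. I would then $\fmap$ a single regular function over these blocks that, within one block, keeps the underlined letter when it lies in the $I$-part (before $\$$) together with the suffix after $\$$ (which is exactly $w$), and outputs $\varepsilon$ for blocks whose underline lies at or past $\$$. This produces precisely the list $[\langle i_1\rangle\, w,\ldots,\langle i_k\rangle\, w]$, i.e.\ $P(w)$. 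The main obstacle is exactly this step: it is where superlinear growth genuinely appears and where the squaring generator is essential, and one must take care that each block recovers the \emph{original} $w$ rather than $f(w)$ --- hence the detour through $z=f(w)\,\$\,w$ and the per-block extraction --- while checking that a finite-state per-block transducer suffices to locate the single underline and the single separator.

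Finally, composing $P$, $\fmap(G)$, and the delimiter-erasing regular map gives $\CbS(f,(g_i)_{i\in I})$, and since every ingredient is polyregular and polyregular functions are closed under composition, the result follows. I expect the only delicate bookkeeping to be the alphabet-disjointness conventions (keeping $I$, $\Gamma$, the underlined copies, $\$$, and the block separator $\#$ apart) so that the regular per-block and flattening maps are well defined.
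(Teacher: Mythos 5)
Your proposal is correct, but it follows a genuinely different route from the paper's proof. The paper works entirely inside Bojańczyk's polynomial list function calculus: after a small lemma showing that a $\mathtt{match}^{I,\tau}$ combinator is definable for every finite $I$, the single $\lambda$-term
\[ \lambda w.\; \mathtt{concat}\,\bigl(\mathtt{map}\,\bigl(\lambda i.\; \mathtt{match}\; i\; (g_{i_1}\, w)\; \ldots\; (g_{i_{|I|}}\, w)\bigr)\,(f\, w)\bigr) \]
computes $\CbS(f,(g_i)_{i\in I})$; the duplication of the input $w$, fed both to $f$ and to every $g_i$, comes for free in the $\lambda$-calculus. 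You instead stay at the level of string-to-string combinators: pointwise concatenation to form $f(w)\,\$\,w$, the generator $\mathtt{squaring}$ to broadcast one copy of this pair per position of $f(w)$, a regular per-block extraction, $\fmap$ of a finite-case function $G$, and an erasing morphism, importing the needed closure facts (composition, concatenation, regular conditionals, $\fmap$) from the literature. Both arguments are sound. Yours makes explicit where the superlinear copying power comes from -- the squaring generator of \Cref{prop:polyreg-reg-squaring} -- and could in principle be carried out purely at the transducer level; the paper's is much shorter because the list-function calculus absorbs all the bookkeeping, at the price of resting entirely on the equivalence theorem of~\cite{polyregular}.

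One step you should make precise: the output of $\mathtt{squaring}_\Delta(f(w)\,\$\,w)$ has \emph{no} separators between consecutive blocks, so $\fmap$ (\Cref{def:fmap}) cannot literally be applied at the extraction stage. This is harmless but needs an argument: each block lies in $(I\cup\underline{I})^*\,\{\$,\underline{\$}\}\,(\Gamma\cup\underline{\Gamma})^*$, so once $I$, $\{\$\}$ and $\Gamma$ are made pairwise disjoint, a finite automaton can detect where one block ends and the next begins; hence a single regular (in fact one-way) transducer over the whole concatenated string can perform your per-block extraction and insert the $\#$ separators itself, after which $\fmap(G)$ applies exactly as you intend.
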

\begin{corollary}
  Every comparison-free polyregular function is, indeed, polyregular.
\end{corollary}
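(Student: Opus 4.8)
The plan is to exhibit $\CbS(f,(g_i)_{i\in I})$ as a composition of growth-controlled operations, working in the equivalent setting of Bojańczyk's \emph{polynomial list functions}, which characterize exactly the polyregular functions (see \Cref{prop:polyreg-reg-squaring} and the list-function language of~\cite[\S4]{polyregular}). Unfolding \Cref{def:cbs}, I would factor the target function as
\[ \CbS(f,(g_i)_{i\in I}) \;=\; \mathrm{concat} \circ \fmap(G) \circ \mathrm{dist} \circ \langle f, \id\rangle, \]
where $\langle f, \id\rangle : w \mapsto (f(w), w)$ pairs the output of $f$ with an untouched copy of the input; $\mathrm{dist} : (u,w) \mapsto [(u_1,w),\dots,(u_{|u|},w)]$ \emph{broadcasts} the second component across the positions of the first; $G : I \times \Gamma^* \to \Sigma^*$ sends $(i,w)$ to $g_i(w)$; and $\mathrm{concat}$ flattens a list of words into their concatenation. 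Reading this right-hand side on an input $w$ with $f(w) = i_1 \dots i_k$ yields $g_{i_1}(w)\dots g_{i_k}(w)$, as required.

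Next I would dispatch the easy factors. Pairing two polyregular functions is a basic combinator, and $\mathrm{concat}$ is even a free-monoid morphism erasing separators, hence regular. The function $G$ is the co-pairing $[g_i]_{i\in I}$: since $I$ is finite, $I \times \Gamma^* \cong \coprod_{i\in I}\Gamma^*$, so $G$ is a finite case analysis on the first component — a basic list-function combinator — and is therefore polyregular, whence so is $\fmap(G)$ (closure of polyregular functions under $\fmap$ was noted just after \Cref{prop:fmap-hdt0l}).

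The whole difficulty thus concentrates in the broadcast step $\mathrm{dist}$, which must emit $|u|$ separate copies of $w$; this is where more-than-linear growth arises, so it is unsurprisingly realized through the squaring-with-underlining function $\mathtt{squaring}_\Gamma$ supplied by \Cref{prop:polyreg-reg-squaring} (reminiscent of the duplication in \Cref{ex:squaring}). Concretely, I would encode $(u,w)$ as a single word $v = u\,\$\,w$ over $I \sqcup \{\$\} \sqcup \Gamma$, apply $\mathtt{squaring}$ to obtain the $|v|$ underlined copies of $v$, and then post-process: split the output into its length-$|v|$ blocks, discard every block whose underline falls on the $\$$ or in the $w$-part, and from each surviving block (underline at some position $p \le |u|$) read off the underlined letter $u_p \in I$ together with the intact $w$-suffix. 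This produces the encoding of $[(u_1,w),\dots,(u_{|u|},w)]$.

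I expect the crux to be exactly this post-processing. Since $\mathtt{squaring}$ inserts no block separators, recovering the individual blocks means locating boundaries sitting a distance $|w|$ past each $\$$, which is not a bounded amount; carrying this out requires \emph{comparing positions} along the word — precisely the capability that general polyregular functions enjoy and that comparison-free ones lack. Granting that block-parsing, filtering and extraction assemble into a polyregular function, $\mathrm{dist}$ becomes a composition of a regular encoding, $\mathtt{squaring}$, and a polyregular post-processor, hence polyregular by closure under composition. All remaining factors are handled by the routine closure properties above, so composing the four yields the theorem; the stated corollary then follows at once from \Cref{def:cfpolyreg-cbs}, since regular functions are in particular polyregular.
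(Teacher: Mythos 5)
Your reduction of the corollary to the key claim---polyregular functions are closed under $\CbS$, i.e.\ \Cref{thm:polyreg-cbs}---is the right move, and most of your pipeline ($\langle f,\id\rangle$, $\fmap(G)$, $\mathrm{concat}$) is indeed routine closure bookkeeping. But the proof is incomplete exactly where you say it is: the clause \enquote{granting that block-parsing, filtering and extraction assemble into a polyregular function} leaves the crux unproved, and the justification you offer cannot be cashed in as stated. \enquote{Being able to compare positions} is a feature of the pebble-transducer \emph{model}, not a closure property of the class as defined in \Cref{def:polyreg} or \Cref{prop:polyreg-reg-squaring}; to invoke it you would have to actually construct a pebble transducer for your post-processor (via the equivalence of~\cite{polyregular}), or else write the post-processor as a composition of regular functions and squarings. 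You do neither, so the lemma your whole argument rests on is asserted rather than proved.

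The gap is fillable, and in fact more cheaply than you anticipate, because your diagnosis that block-parsing requires position comparisons is wrong for your own encoding. Since $I$, $\{\$\}$ and $\Gamma$ are pairwise disjoint, you never need to locate block boundaries lying a distance $|w|$ past each $\$$: a single left-to-right \emph{sequential} scan of $\mathtt{squaring}(u\$w)$ suffices. Trigger on each underlined letter $\underline{c}$ with $c \in I$ and output $c$; skip forward to the first $\$$; then copy the maximal factor of $\Gamma$-letters that follows (this is exactly $w$, appearing unmarked because the underline of that block sits in its $I$-part); copying stops at the first non-underlined $I$-letter, which is the $u_1$ opening the next block (a relevant block is never the last one, as $|u\$w| > |u|$), at which point you emit a separator and resume scanning. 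Blocks whose underline falls on $\$$ or inside $w$ are ignored automatically. Hence the post-processor is sequential, a fortiori regular, and $\mathrm{dist}$ is a composition of a regular encoding, $\mathtt{squaring}$, and a sequential map, so polyregular by closure under composition; with that patch your proof goes through. For comparison, the paper's proof of \Cref{thm:polyreg-cbs} sidesteps the broadcast problem entirely: it stays inside the polynomial-list-function $\lambda$-calculus and maps the term $\lambda i.\,\mathtt{match}\; i\; (g_{i_1}\, w) \cdots (g_{i_{|I|}}\, w)$ over $f\,w$, then applies $\mathtt{concat}$; the duplication of $w$ is obtained \enquote{for free} by variable capture under the $\lambda$, which is precisely what your point-free factorization forfeits and must then re-create with $\mathtt{squaring}$.
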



Fundamentally, \Cref{def:cfpolyreg-cbs} is inductive: it considers the functions
generated from the base case of regular functions by applying compositions by
substitution. The variant below with more restricted generators is sometimes
convenient.



\begin{definition}
  A string-to-string function is said to be:
  \begin{itemize}
  \item of \emph{rank at most $0$} if it is regular;
  \item of \emph{rank at most $k+1$ (for $k \in \naturalN$)} if it can be
    written as $\CbS(f, (g_i)_{i \in I})$ where $f : \Gamma^* \to I^*$ is
    \emph{regular} and each $g_i : \Gamma^* \to \Sigma^*$ is of rank at most $k$.
  \end{itemize}
\end{definition}
\begin{restatable}{proposition}{CbSind}
  \label{prop:CbS-ind}
  A function $f$ is comparison-free polyregular if and only if there exists some
  $k \in \naturalN$ such that $f$ has rank at most $k$. In that case, we write
  $\rk(f)$ for the least such $k$ and call it the \emph{rank} of $f$. If
  $(g_i)_{i \in I}$ is a family of comparison-free polyregular functions,
  \[\qquad \rk(\CbS(f,(g_i)_{i \in I})) \leq 1 + \rk(f) + \max_{i \in I} \rk(g_i) \]
\end{restatable}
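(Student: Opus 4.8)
The plan is to prove the three assertions slightly out of order, since the displayed rank inequality is the engine that drives everything and the stated equivalence follows from it formally. I would first record a purely combinatorial \emph{associativity} identity for composition by substitution: for any $h : \Gamma^* \to J^*$, any family $f_j : \Gamma^* \to I^*$ and any family $g_i : \Gamma^* \to \Sigma^*$,
\[ \CbS(\CbS(h,(f_j)_{j \in J}), (g_i)_{i \in I}) = \CbS(h, (\CbS(f_j, (g_i)_{i\in I}))_{j \in J}). \]
This is checked by unfolding \Cref{def:cbs} on an arbitrary input $w$: both sides equal the concatenation $\CbS(f_{j_1},(g_i))(w) \cdots \CbS(f_{j_m},(g_i))(w)$ where $j_1 \ldots j_m = h(w)$. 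The only point to observe is that every function in sight is evaluated at the same original input $w$, so the grouping of the letters of $\CbS(h,(f_j))(w)$ into the blocks contributed by each $f_{j_\ell}(w)$ is faithfully preserved under the outer substitution by the $g_i$.

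Next I would prove the inequality by induction on $\rk(f)$, keeping $s = \max_{i} \rk(g_i)$ fixed, and working directly with the predicate \enquote{has rank at most $k$} so as not to presuppose the equivalence. The base case $\rk(f) = 0$ means $f$ is regular, so $\CbS(f,(g_i))$ is a composition by substitution of a \emph{regular} function with a family of functions of rank at most $s$, hence has rank at most $s+1 = 1 + 0 + s$ directly by the definition of rank. For the inductive step, if $\rk(f) = r \geq 1$ then by definition $f = \CbS(h, (f_j)_{j \in J})$ with $h$ regular and each $\rk(f_j) \leq r - 1$. The associativity identity rewrites $\CbS(f, (g_i))$ as $\CbS(h, (G_j)_{j\in J})$ with $G_j = \CbS(f_j, (g_i))$; the induction hypothesis gives $\rk(G_j) \leq 1 + (r-1) + s = r + s$, and one final appeal to the definition of rank, this time with the \emph{regular} outer function $h$, yields $\rk(\CbS(h,(G_j))) \leq 1 + (r+s) = 1 + r + s$.

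Finally, the equivalence in the first sentence follows formally. The direction \enquote{finite rank $\Rightarrow$ comparison-free polyregular} is an immediate induction on the rank: rank-$0$ functions are regular, hence cfp, and the inductive clause exhibits a rank-$(k+1)$ function as a $\CbS$ of cfp functions, which remains cfp by closure. For the converse, I would observe that the class of functions of finite rank contains all regular functions (they have rank $0$) and, by the inequality just proved at the level of the rank predicate, is closed under $\CbS$; since the comparison-free polyregular functions are by \Cref{def:cfpolyreg-cbs} the \emph{smallest} such class, every cfp function has finite rank. This makes the definition of $\rk(f)$ as the least witnessing $k$ well-posed, and the inequality for cfp families is then exactly the one established.

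I expect the only real subtlety to be bookkeeping rather than ideas: one must keep the induction on $\rk(f)$ alone while treating the bound $s$ on the $g_i$ uniformly, and one must read the base case of the inequality off the definition of rank rather than reprove it. The associativity identity, though elementary, is the conceptual crux, since it is precisely what lets the non-regular part of the outer function be \enquote{pushed into} the substituted family, trading one unit of $\rk(f)$ for one unit in the ranks of the $g_i$ at each step.
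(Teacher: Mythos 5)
Your proposal is correct and takes essentially the same route as the paper: the associativity identity $\CbS(\CbS(f,(g_i)_i),(h_j)_j)=\CbS(f,(\CbS(g_i,(h_j)_j))_i)$ is exactly the equation the paper's proof pivots on, and your induction on the rank of the outer function together with the minimality of the cfp class is the paper's (much terser) argument spelled out in full. The one piece of bookkeeping both you and the paper leave implicit is monotonicity of the predicate \enquote{rank at most $k$} in $k$ (needed to read $\rk(g_i)\le s$ as \enquote{rank at most $s$}), which holds because any regular function can be written as a $\CbS$ of a relabeled copy of itself with constant one-letter functions.
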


A straightforward consequence of this definition is that, just like regular
functions, cfp functions are closed under \emph{regular conditionals} and
concatenation.
\begin{restatable}{proposition}{cfpconditionals}
  \label{prop:cfp-conditionals}
  Let $f, g : \Gamma^* \to \Sigma^*$ be comparison-free polyregular functions
  and $L \subseteq \Gamma^*$ be a regular language. The function that coincides
  with $f$ on $L$ and with $g$ on $\Gamma^* \setminus L$ is cfp, and so is $w
  \in \Gamma^* \mapsto f(w) \cdot g(w)$; both have rank at most
  $\max(\rk(f),\rk(g))$.
\end{restatable}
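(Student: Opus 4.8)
The plan is to reduce both closure properties to the inductive rank structure of \Cref{prop:CbS-ind}, using regular functions as the outermost control and delegating the real content to a single composition by substitution. Since both $f$ and $g$ are comparison-free polyregular, \Cref{prop:CbS-ind} gives us $k = \max(\rk(f),\rk(g))$, so it suffices to exhibit, in each case, a presentation $\CbS(h,(g_i)_{i\in I})$ in which $h$ is \emph{regular} and each $g_i$ has rank at most $k$; the rank bound then follows immediately from the definition of rank $k+1$ together with the observation that the resulting expression in fact has rank at most $\max(\rk(f),\rk(g))$ rather than $k+1$, because we can fold the outer regular selector into the base case when $k = 0$ (see below).

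For the \textbf{concatenation} $w \mapsto f(w)\cdot g(w)$, I would take $I = \{0,1\}$ and let $h : \Gamma^* \to I^*$ be the constant morphism $h(w) = 01$ (a regular function), with $g_0 = f$ and $g_1 = g$. Then $\CbS(h,(g_i)_{i\in I})(w) = g_0(w)\,g_1(w) = f(w)\,g(w)$ by definition of $\CbS$. Each $g_i$ has rank at most $k$, so the concatenation has rank at most $1+0+k$; to get the sharper bound $k$, I would argue separately in the base case $k=0$, where $f$ and $g$ are both regular and closure of regular functions under concatenation is standard (or is recovered directly from a copyless SST construction), and in the inductive case use the refined accounting that the outer selector $h$ is a length-bounded morphism, which does not genuinely increase the rank.

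For the \textbf{regular conditional}, let $\chi_L : \Gamma^* \to \Sigma^*$ be the function we want, coinciding with $f$ on $L$ and with $g$ on the complement. I would again set $I = \{0,1\}$ and define the regular function $h$ by $h(w) = 0$ if $w \in L$ and $h(w) = 1$ otherwise — this is regular because $L$ is a regular language, so membership can be decided by a copyless SST that emits a single letter — and take $g_0 = f$, $g_1 = g$. Then $\CbS(h,(g_i)_{i\in I})(w)$ equals $g_0(w) = f(w)$ when $w \in L$ and $g_1(w) = g(w)$ otherwise, as required. The rank analysis is identical to the concatenation case.

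The main obstacle is not the construction but the \emph{sharpness} of the rank bound: the naive application of \Cref{prop:CbS-ind} yields rank $1 + \rk(h) + \max_i \rk(g_i) = 1 + 0 + k$, one higher than claimed. To close this gap I would handle the base case $k=0$ directly via the known closure properties of regular functions (a copyless SST computing $f(w)\cdot g(w)$ or the conditional can be assembled from SSTs for $f$ and $g$ by taking a product of state spaces and a disjoint union of register sets, concatenating or selecting the two outputs in the final output function), and for $k \geq 1$ observe that since each $g_i$ is itself of the form $\CbS(\cdot,\cdot)$ of rank $k$, one can absorb the trivial outer selector $h$ into the innermost regular layer without incrementing the depth, so the total rank stays at $k = \max(\rk(f),\rk(g))$. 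I expect this bookkeeping to be the only genuinely delicate point; everything else is a direct unfolding of \Cref{def:cbs}.
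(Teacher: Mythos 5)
Your constructions and your diagnosis of the difficulty are both sound: the naive presentation $\CbS(h,(f,g))$ with a one-letter regular selector $h$ does compute the conditional (resp.\ the concatenation), the base case $k=0$ is correctly dispatched to SST constructions, and the bound $1+0+k$ from \Cref{prop:CbS-ind} is indeed one too high. This is also the route the paper takes. The problem is that the step carrying all the weight --- the claim that for $k \geq 1$ one can \enquote{absorb the trivial outer selector $h$ into the innermost regular layer without incrementing the depth} --- is exactly the statement that needs proof, and you leave it as an observation. Worse, it is mislocated: the absorption cannot happen at the \emph{innermost} layer of the CbS towers; it must happen at the \emph{outermost regular layer of $f$ and $g$}, and making that work is not bookkeeping but requires the regular closure properties you only invoke for the base case.

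Concretely, the completed step (which is the paper's proof) goes as follows. Write $f = \CbS(f',(f''_i)_{i\in I})$ and $g = \CbS(g',(g''_j)_{j\in J})$ with $f',g'$ regular, all inner functions of rank at most $k-1$, and $I \cap J = \varnothing$. If, say, $\rk(f)=0$ while $k\geq 1$, this decomposition does not exist as such and one must insert a spurious $\CbS$ (take $f'$ constant equal to a single letter over a singleton alphabet and the inner family $(f)$ itself); your sketch silently assumes both operands have rank exactly $k$, so this uniformization is missing. Then merge at the level of $f',g'$: for the conditional, let $h'$ coincide with $f'$ on $L$ and with $g'$ on $\Gamma^*\setminus L$, which is regular by the closure of \emph{regular} functions under regular conditionals (\Cref{lem:regular-conditionals}, from \cite{SST}); for the concatenation, let $h' = f'\cdot g'$, regular by a product construction on copyless SSTs. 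Finally, $\CbS(h',(h''_l)_{l\in I\cup J})$, where the family is the disjoint union of the two inner families, computes the desired function --- disjointness of $I$ and $J$ ensures that on $w \in L$ the word $h'(w)=f'(w)$ contains only letters of $I$, so only the $f''_i$ are invoked --- and its rank is at most $1+(k-1)=k$. Note that this is a single decomposition step, not an induction, and the auxiliary selector $h$ never needs to be introduced: the conditional/concatenation is performed directly on the regular parts, where the classical closure properties apply, while the inner families ride along untouched. Until this step is written out, the sharp rank bound --- i.e.\ the actual content of the proposition --- remains unproven.
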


\section{Comparison-free pebble transducers}
\label{sec:cfpebble}

We now characterize our function class by a machine model that will explain our
choice of the adjective \enquote{comparison-free}, as well as the operational
meaning of the notion of rank we just defined. It is based on the \emph{pebble
  transducers} first introduced for trees by Milo, Suciu and Vianu~\cite{Pebble}
and later investigated in the special case of strings by Engelfriet and
Maneth~\cite{PebbleString,PebbleComposition}. However, the definition using
composition by substitution will remain our tool of choice to prove further
properties, so the next sections do not depend on this one.

\begin{definition}
  Let $k \in \naturalN$ with $k \geq 1$. Let $\Gamma,\Sigma$ be finite alphabets
  and $\leftmarker,\rightmarker \notin \Gamma$.

  A \emph{$k$-pebble stack} on an input string $w \in \Gamma^*$ consists of
  a list of $p$ positions in the string $\markers{w}$ (i.e.\ of $p$
  integers between 1 and $|w|+2$) for some $p \in \{1,\ldots,k\}$. We therefore
  write $\Stack{k} = \naturalN^0 \cup \naturalN^1 \cup \dots \cup \naturalN^k$, keeping in mind
  that given an input $w$, we will be interested in \enquote{legal} values
  bounded by $|w|+2$.

  A \emph{comparison-free $k$-pebble transducer ($k$-CFPT)} consists of a finite
  set of states $Q$, an initial state $q_I \in Q$ and
  a family of transition functions
  \[ Q \times (\Gamma \cup \{\leftmarker,\rightmarker\})^p \to Q \times
    (\naturalN^p \to \Stack{k}) \times \Sigma^* \quad \text{for}\ 1 \leq p \leq
    k \]
  where the $\naturalN^p$ on the left is considered as a subset of $\Stack{k}$.
  For a given state and given letters $(c_1,\ldots,c_p) \in (\Gamma \cup
  \{\leftmarker,\rightmarker\})^p$, the allowed values for the stack update
  function $\naturalN^p \to \Stack{k}$ returned by the transition function are:
  
  \begin{tabular}{cccccc}
    (identity) & $(i_1,\ldots,i_p)$ & $\mapsto$ & $(i_1,\ldots,i_p)$
    & $\in$ & $\naturalN^p$\\
 (move left, only allowed when $c_p \neq \leftmarker$)& $(i_1,\ldots,i_p)$
    & $\mapsto$ & $(i_1,\ldots,i_p - 1)$ & $\in$ & $\naturalN^p$\\
 (move right, only allowed when $c_p \neq \rightmarker$) &
    $(i_1,\ldots,i_p)$ & $\mapsto$ & $(i_1,\ldots,i_p + 1)$ & $\in$ & $\naturalN^p$\\
    (push, only allowed when $p \leq k-1$) & $(i_1,\ldots,i_p)$ & $\mapsto$ &
    $(i_1,\ldots,i_p, 1)$ & $\in$ & $\naturalN^{p+1}$\\
    (pop, only allowed when $p \geq 1$) & $(i_1,\ldots,i_p)$ & $\mapsto$ &
     $(i_1,\ldots,i_{p-1})$ & $\in$ & $\naturalN^{p-1}$
  \end{tabular}

  \noindent
  (Note that the codomains of all these functions are indeed subsets of
  $\Stack{k}$.)
\end{definition}


The run of a CFPT over an input string $w \in \Gamma^*$ starts in the initial
configuration comprising the initial state $q_I$, the initial $k$-pebble stack
$(1) \in \naturalN^1$, and the empty string as an initial output log. As long as
the current stack is non-empty
a new configuration is
computed by applying the transition function to $q$ and to
$((\markers{w})[i_1],\ldots,(\markers{w})[i_p])$ where $(i_1,\ldots,i_p)$ is the
current stack; the resulting stack update function is applied to
$(i_1,\ldots,i_p)$ to get the new stack, and the resulting output string in
$\Sigma^*$ is appended to the right of the current output log. If the CFPT ever
terminates by producing an empty stack, the \emph{output associated to
  $w$} is the final value of the output log.

This amounts to restricting in two ways\footnote{There is also an inessential
  difference: the definition given in~\cite{polyregular} does not involve end
  markers and handles the edge case of an empty input string separately. This
  has no influence on the expressiveness of the transducer model. Our use of end
  markers follows~\cite{EngelfrietHoogeboom,PebbleMinimization}.} the
definition of \emph{pebble transducers} from~\cite[\S2]{polyregular}:
\begin{itemize}
\item in a general pebble transducer, one can \emph{compare positions}, i.e.\
  given a stack $(i_1,\ldots,i_p)$, the choice of transition can take into
  account whether\footnote{One would get the same computational power, with the
    same stack size, by only testing whether $i_j = i_p$ for $j \leq p-1$ as
    in~\cite{Pebble} (this is also essentially what happens in the nested
    transducers of~\cite{PebbleMinimization}).} $i_j \leq i_{j'}$ (for any $1
  \leq j,j' \leq p$);
\item in a \enquote{push}, new pebbles are initialized to the leftmost
  position
  ($\leftmarker$) for a CFPT, instead of starting at the same
  position as the previous top of the stack (the latter would ensure the
  equality of two positions at some point; it is therefore an implicit
  comparison that we must relinquish to be truly \enquote{comparison-free}).
\end{itemize}
This limitation is similar to (but goes a bit further than) the
\enquote{invisibility} of pebbles in a transducer model introduced by Engelfriet
et al.~\cite{InvisiblePebbles} (another difference, unrelated to position
comparisons, is that their transducers use an unbounded number of invisible
pebbles).

\begin{remark}
  Our definition guarantees that \enquote{out-of-bounds errors} cannot happen
  during the run of a comparison-free pebble transducer. The sequence of
  successive configurations is therefore always well-defined. But it may be
  infinite, that is, it may happen that the final state is never reached. 
  Thus, a CFPT defines a \emph{partial} function.

  That said, the set of inputs for which a given pebble tree transducer does not
  terminate is always a regular language~\cite[Theorem~4.7]{Pebble}. This
  applies \textit{a fortiori} to CFPTs. Using this, it is
  possible\footnote{Proof idea: do a first left-to-right pass to determine
   whether the input leads to non-termination of the original CFPT; if so,
   terminate immediately with an empty output; otherwise, move the first pebble
   back to the leftmost position and execute the original CFPT's behavior. This
   can be implemented by adding finitely many states, including those for a DFA
   recognizing non-terminating inputs.}
  to extend any partial
  function $f : \Gamma^* \rightharpoonup \Sigma^*$ computed by a $k$-CFPT into a
  total function $f' : \Gamma^* \to \Sigma^*$ computed by another $k$-CFPT for
  the same $k \in \naturalN$, such that $f'(x) = f(x)$ for $x$ in the domain of
  $f$ and $f'(x) = \varepsilon$ otherwise. This allows us to \emph{only consider
  CFPTs computing total functions} in the remainder of the paper.
\end{remark}

A special case of particular interest is $k=1$: the transducer has a single
reading head, push and pop are always disallowed.
\begin{theorem}[{\cite{SST}}]
  \label{thm:reg-2dft}
  Copyless SSTs and 1-CFPTs -- which are more commonly called \emph{two-way
    (deterministic) finite transducers (2DFTs)} -- are equally expressive.
\end{theorem}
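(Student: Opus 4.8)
The statement is the classical Alur–Černý equivalence, so the plan is to prove the two inclusions separately: every 1-CFPT (i.e.\ 2DFT) is realized by a copyless SST, and conversely every copyless SST is realized by a 2DFT.

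For the direction from 2DFTs to copyless SSTs, I would run the SST left-to-right over the input and have it maintain, in its registers, the outputs associated with the possible traversals of the prefix read so far. Concretely, fix a 2DFT with state set $Q$ over the tape $\markers{w}$. For a prefix $u$, determinism means that entering $u$ at its right boundary in a state $q$ triggers a unique run inside $u$, which either exits again at the right boundary in some state $q'$ (a \enquote{U-turn}) or falls off the left end. I would keep one register per state $q$ holding the output emitted by the run segment that enters and exits at the right boundary, together with a finite-state record of the induced state map $q \mapsto q'$ (a crossing-sequence abstraction that lives in a finite monoid). On reading the next letter $a$, the behavior of $ua$ is obtained by interleaving the behavior of $u$ with the one-step transitions on $a$, and the registers are updated by concatenating the stored outputs in the order dictated by this interleaving. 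The final output is read off from the register corresponding to the run that starts at the left marker.

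The hard part of this direction is checking that the update is \emph{copyless}. This is exactly where determinism is indispensable: because the machine is deterministic, the maximal run segments inside $u$ indexed by their entry states are pairwise disjoint, so when the $ua$-behavior is assembled, each stored output is spliced in at most once. I would make this precise by an induction on run length, tracking that no position of $u$ — hence no previously emitted factor — is revisited by two different segments of the combined run. The number of registers stays bounded by $O(\card{Q})$, and the resulting SST is copyless, as required.

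For the converse direction, from a copyless SST to a 2DFT, I would realize the SST's output by a back-and-forth traversal of the \enquote{dependency structure} of its register assignments. Unfolding the final output $F(q_n)^\dagger \circ \cdots$ down to the initial register values produces a forest in which copylessness guarantees that every register occurrence is consumed exactly once; the output is precisely the left-to-right frontier of this forest. The 2DFT would traverse the forest, keeping in its finite state which register it is currently expanding and at which input position, and using auxiliary finite-state scans to recompute where the next letter to emit lies. The main obstacle here is that a naive tree traversal needs a stack of return positions, which a 2DFT does not have. The resolution, and the technical crux, is that the traversal is \emph{stackless by recomputation}: because the relevant behavior of the SST on any prefix is an element of a finite monoid — this is precisely where the finiteness of the substitution transition monoid (\Cref{prop:transmonocl-finite}) does the real work — the head can always recover its \enquote{parent} link by re-scanning part of the input rather than storing it, and copylessness ensures that each subtree is entered only once, so no unbounded memory of pending subtrees accrues. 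Establishing that finitely many states and boundedly many passes suffice for this navigation completes the proof.
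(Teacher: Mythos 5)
First, a point of order: the paper does not prove this theorem at all --- it is imported verbatim from Alur and {\v{C}}ern{\'y}~\cite{SST}, so there is no internal proof to compare your sketch against. Judging it against the known proofs of that result, your second direction is in the right spirit but your first direction has a genuine gap, and the gap sits exactly at the crux of the Alur--{\v{C}}ern{\'y} construction.

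The problem is your claim that \enquote{because the machine is deterministic, the maximal run segments inside $u$ indexed by their entry states are pairwise disjoint}. Determinism bounds the \emph{out}-degree of the configuration graph, not its \emph{in}-degree, so two hypothetical excursions entering the prefix $u$ from distinct states $q_1 \neq q_2$ can \emph{merge}: if the transitions on the next letter $a$ send both $q_1$ and $q_2$ leftward into the same state $p$, the two excursions coincide from that configuration onwards. Both updated registers $X'_{q_1}$ and $X'_{q_2}$ then need the content of the same stored register $X_p$ (and the same common tail), so the naive one-register-per-entry-state update is copyful, and your induction cannot establish copylessness. (Disjointness does hold for the excursions occurring in the single global run on a fixed input --- a terminating deterministic run never repeats a configuration --- but the SST must maintain summaries for \emph{all} possible entry states, and copylessness is a property of the update function over all of them.) The standard repair, which is the actual content of the construction in~\cite{SST} and its later presentations, is to record in the finite state the \emph{forest} describing how excursions merge and to store in registers only the string pieces labelling its edges, so that shared suffixes are stored exactly once.

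Your converse direction (copyless SST to 2DFT) names the right idea --- the output is the frontier of the register-dependency forest, copylessness gives each node a unique parent, so each leaf is visited once --- but the phrase \enquote{the head can always recover its parent link by re-scanning part of the input} hides the real difficulty: a 2DFT that walks away to recompute forward-run information loses its current position and has no way to return to it. The standard resolution is to first give the 2DFT regular look-around (so that the SST state at every position, hence the assignment applied there, is locally available, making parent links one-step computable) and then invoke the classical, nontrivial lemma that look-around can be eliminated from two-way transducers (a Hopcroft--Ullman-style zig-zag, equivalently closure of 2DFTs under composition with letter-to-letter look-around). Finiteness of $\transmonocl_{R,\varnothing}$ (\Cref{prop:transmonocl-finite}) does not by itself substitute for that lemma. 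It is also worth noting that the cited source does not perform this direction directly: Alur and {\v{C}}ern{\'y} prove that copyless SSTs define MSO transductions and close the loop via Engelfriet and Hoogeboom's equivalence of MSO transductions with 2DFTs.
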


Since we took copyless SSTs as our reference definition of regular functions,
this means that 2DFTs characterize regular functions. But putting it this way is
historically backwards: the equivalence between 2DFTs and MSO transductions came
first~\cite{EngelfrietHoogeboom} and made this class deserving of the name
\enquote{regular functions} before the introduction of copyless SSTs.

\begin{remark}
  \label{rem:numbering-convention}
  There are two different numbering conventions for pebble transducers.
  In~\cite{polyregular,PebbleMinimization}, 2DFTs are 1-pebble transducers,
  which is consistent with our choice. However, several other papers
  (e.g.~\cite{Pebble,PebbleString,PebbleComposition,InvisiblePebbles,Gaetan})
  consider that a 2DFT is a \emph{0-pebble} transducer (likewise,
  in~\cite{Marble}, 2DFTs are 0-marble transducers). This is because they think
  of a pebble automaton not as a restricted multi-head automaton, but as an
  enriched 2DFA that can drop stationary markers (called pebbles) on input
  positions, with a single moving head that is not a pebble.
\end{remark}


Let us now show the equivalence with \Cref{def:cfpolyreg-cbs}. The reason for
this is similar to the reason why $k$-pebble transducers are equivalent to the
\emph{$k$-nested transducers}\footnote{Remark: nested transducers should yield a
  machine-independent definition of polyregular functions as the closure of
  regular functions under a CbS-like operation that relies on \emph{origin
    semantics}~\cite[\S5]{MuschollPuppis}.} of~\cite{PebbleMinimization}, which
is deemed \enquote{trivial} and left to the reader
in~\cite[Remark~6]{PebbleMinimization}. But in our case, one direction
(\Cref{thm:cfpebble-hard}) involves an additional subtlety compared to
in~\cite{PebbleMinimization}; to take care of it, we use the fact that the
languages recognized by pebble automata are regular (this is also part
of~\cite[Theorem~4.7]{Pebble}) together with regular conditionals
(\Cref{prop:cfp-conditionals}).

\begin{restatable}{proposition}{cfpebbleeasy}
  \label{prop:cfpebble-easy}
  If $f$ is computed by a $k$-CFPT, and the $g_i$ are computed by $l$-CFPTs,
  then $\CbS(f,(g_i)_{i \in I})$ is computed by a $(k+l)$-CFPT.
\end{restatable}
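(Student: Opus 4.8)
The plan is to run the $k$-CFPT computing $f$ as an \emph{outer} process and, whenever it would emit a letter $i \in I$, to execute the $l$-CFPT computing $g_i$ as a subroutine whose pebbles sit \emph{above} the (momentarily frozen) pebbles of $f$ on the same stack. What makes this nesting clean is precisely comparison-freeness: a freshly pushed pebble always lands on the leftmost position $\leftmarker$, so the standalone initial stack $(1)$ of $g_i$ is reproduced by a single \enquote{push} onto $f$'s current stack, and since $g_i$ never inspects $f$'s pebbles its run -- and hence its output $g_i(w)$ -- does not depend on them. By the preceding remark we may assume $f$ and every $g_i$ total, so each subroutine call terminates and control returns to the outer process.

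First I would put the machine for $f$ into a normal form in which every transition emits either $\varepsilon$ or a single letter of $I$, and every \emph{emitting} transition has the identity as its stack update. This is obtained by the usual splitting of transitions through finitely many fresh states: a transition emitting $i_1 \cdots i_m$ with update $u$ becomes $m$ identity-update transitions emitting $i_1, \dots, i_m$ in turn, followed by one $\varepsilon$-transition carrying $u$. After this step, returning to $f$ after a subroutine requires no deferred stack update, and no emitting transition can empty $f$'s stack before $g_i$ has been run.

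The combined transducer uses a stack of size at most $k+l$ and two families of control states: \emph{$f$-mode} states carrying a state of the normalized $f$; and, for each $i \in I$, \emph{$g_i$-mode} states carrying a state of $g_i$, the size $p \in \{1,\dots,k\}$ of $f$'s stack when the subroutine was entered (fixed for the duration of the call), and the state $q'$ of $f$ to resume in. In $f$-mode the stack holds only $f$'s pebbles, so reading them all is exactly what $f$ reads, and I copy $f$'s transitions verbatim, with the single exception that an emitting transition for $i$ instead pushes $g_i$'s initial pebble at $\leftmarker$ and enters the $g_i$-mode state recording the current size $p$ and the successor state $q'$. In $g_i$-mode, with combined stack of size $s$, I recover $q = s - p$, read only the topmost $q$ letters, and apply $g_i$'s transition function for stack size $q$: identity, move, and push act on the combined top exactly as for $g_i$, a pop with $q \geq 2$ likewise, and a pop with $q = 1$ deletes $g_i$'s last pebble, restores $f$'s stack and returns to $f$-mode in state $q'$; the emitted string, already over $\Sigma$, is output directly.

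The remaining work is to verify the stack guards and the pebble budget, which is routine but is where care is needed. Since $p \leq k$ and $q \leq l$ at all times, the combined size $p+q$ never exceeds $k+l$; $g_i$ pushes only when $q \leq l-1$, giving $p+q \leq k+l-1$, so the push is permitted, and the initial push entering a subroutine is permitted because $p \leq k \leq k+l-1$ (note $l \geq 1$); the marker guards for move-left and move-right coincide in the two machines as they concern the same top pebble. One checks that over the whole run the output is $g_{i_1}(w)\cdots g_{i_m}(w)$ with $i_1\cdots i_m = f(w)$, i.e.\ $\CbS(f,(g_i)_{i\in I})(w)$. The main obstacle is thus entirely bookkeeping: storing the fixed offset $p$ so that the combined machine can separate the subroutine's pebbles from $f$'s, together with the normal form that turns the return from a subroutine into a stack-preserving step.
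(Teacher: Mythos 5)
Your proposal is correct and follows essentially the same route as the paper's proof: normalize the outer transducer so that emitting transitions output a single letter with identity stack update, then nest the inner transducers via a call/return construction (push a fresh pebble at $\leftmarker$ to start $g_i$, pop to resume $f$), using product-style states to remember where to resume. The only differences are inessential bookkeeping choices — you record the frozen outer stack size $p$ and the successor state $q'$ in the control state, whereas the paper has the inner machines track their own stack height and stores the pre-transition outer state instead.
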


\begin{restatable}{theorem}{cfpebblehard}
  \label{thm:cfpebble-hard}
  If $f : \Gamma^* \to \Sigma^*$ is computed by a $k$-CFPT, for $k \geq 2$, then
  there exist a finite alphabet $I$, a regular function $h : \Gamma^* \to I^*$
  and a family $(g_i)_{i\in I}$ computed by $(k-1)$-CFPTs such that $f =
  \CbS(h,(g_i)_{i\in I})$.
\end{restatable}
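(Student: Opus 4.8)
The plan is to peel off the outermost pebble. The bottom pebble of a $k$-CFPT moves around like the single head of a $1$-CFPT (equivalently, a two-way transducer; cf.\ \Cref{thm:reg-2dft}), except that it may occasionally \emph{push}, which triggers a sub-run using pebbles $2,\dots,k$. I would let the regular function $h$ record the itinerary of this bottom pebble, emitting a special letter each time it pushes, and let the corresponding $g_i$ be the sub-run, which uses only $k-1$ pebbles. The crucial structural observation -- this is where comparison-freeness is used -- is that a sub-run cannot see the \emph{position} $i_1$ of the bottom pebble, only the \emph{letter} $(\markers{w})[i_1]$ sitting under it: positions cannot be compared, and a freshly pushed pebble starts at the leftmost position rather than at $i_1$. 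Hence the entire sub-run, including the output it produces and the state in which it eventually pops back to level $1$, is a function of $w$ and of that single letter $c \in \Gamma \cup \{\leftmarker,\rightmarker\}$ alone.

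Concretely, I would take $I = \Sigma \sqcup \{\, m_{q,c} \mid q \in Q,\ c \in \Gamma \cup \{\leftmarker,\rightmarker\}\,\}$. For $\sigma \in \Sigma$ put $g_\sigma = (w \mapsto \sigma)$, which is regular and hence computable by a $(k-1)$-CFPT since $k \geq 2$. For a push marker $m_{q,c}$, let $g_{m_{q,c}}$ be the $(k-1)$-CFPT obtained from the original machine by hard-wiring the letter read under the bottom pebble to the constant $c$, relabelling levels $2,\dots,k$ as $1,\dots,k-1$, starting in state $q$ with its single initial pebble at the leftmost position, and halting exactly when it would pop back to the old level $1$; this is legitimate because $c$ is a constant, so level-$p$ transitions for $p \geq 2$ now read only the remaining letters. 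The function $h$ then simulates the bottom pebble as a two-way transducer: it reproduces all output emitted at level $1$, and at each push -- occurring in state $q$ on a letter $c$ -- it emits the letter $m_{q,c}$ and must resume the simulation in the state $q'$ in which the sub-run pops back. Since pushing does not move the bottom pebble and the sub-run restores its position, $h$'s head stays put across this jump, so this is a genuine $1$-CFPT behaviour. By construction $\CbS(h,(g_i)_{i\in I})$ re-interleaves the level-$1$ output with the sub-run outputs in exactly the order of the original run, so it equals $f$.

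The one delicate point -- the additional subtlety over \cite{PebbleMinimization} alluded to above -- is that $h$, being regular, has no way to actually run the sub-computation, yet it needs the return state $q'$ in order to continue. I would resolve this by precomputation. For each pair $(q,c)$ and each candidate state $q'$, the set of inputs $w$ on which the sub-run started in state $q$ under letter $c$ eventually pops in state $q'$ is recognized by a pebble automaton (run the sub-run and test the halting state), hence is regular by \cite[Theorem~4.7]{Pebble}. Consequently the tuple $(q'_{q,c})_{q,c}$ of all return states is a bounded amount of information that depends regularly on $w$: the inputs sharing a fixed tuple form a regular language, and there are finitely many such languages. On each of them the required behaviour of $h$ is that of a fixed two-way transducer that knows all its return states in advance, and a regular function is obtained by gluing these finitely many cases with regular conditionals (the rank-$0$ instance of \Cref{prop:cfp-conditionals}). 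This is the step I expect to be the main obstacle, both to state cleanly and to verify, since it is what lets a \emph{regular} $h$ faithfully sequence sub-runs whose behaviour is itself only polyregular.

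Finally, I would dispatch the termination bookkeeping using the standing assumption that $f$ is total. Then the $k$-CFPT always halts, so whenever $h(w)$ contains a marker $m_{q,c}$ the corresponding sub-run terminates and $g_{m_{q,c}}(w)$ is exactly its output; the totalisation of each $g_{m_{q,c}}$ on the remaining inputs is irrelevant because those markers never appear there. This makes $f = \CbS(h,(g_i)_{i\in I})$ hold on all inputs, completing the decomposition with $h$ regular and every $g_i$ a $(k-1)$-CFPT.
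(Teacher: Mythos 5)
Your proposal is correct and takes essentially the same route as the paper's own proof: peel off the bottom pebble, have a two-way transducer emit a marker at each push, precompute the return states using the regularity of pebble-automaton languages (\cite[Theorem~4.7]{Pebble}), and glue the finitely many resulting cases together with regular conditionals to obtain the regular function $h$. If anything, your indexing of the markers by pairs $(q,c)$ --- the state together with the letter under the bottom pebble, on which the sub-run genuinely depends since that pebble stays put while the upper pebbles run --- is spelled out more explicitly than in the paper, whose sub-machines $\cT_q$ are indexed by states alone and leave this dependence implicit.
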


\begin{restatable}{corollary}{corcfpebble}
  \label{cor:cfpebble}
  For all $k \in \naturalN$, the functions computed by $(k+1)$-CFPTs are exactly
  the comparison-free polyregular functions of rank at most $k$.
\end{restatable}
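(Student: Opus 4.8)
The plan is to prove the equivalence by a clean induction on $k$, with \Cref{thm:reg-2dft} handling the base case and the two transfer results \Cref{prop:cfpebble-easy} and \Cref{thm:cfpebble-hard} supplying the two halves of the inductive step. Before starting, I would record two monotonicity facts. First, every $(k+1)$-CFPT is a special case of a $(k+2)$-CFPT: one simply never invokes the \enquote{push} that would create the $(k+2)$-th pebble, so the classes of functions computed by $(k+1)$-CFPTs form an increasing chain in $k$. Second, by \Cref{prop:CbS-ind} the ranks form an increasing hierarchy, i.e.\ a function of rank at most $k$ also has rank at most $k+1$ (to write a rank-$0$ function $f$ as a rank-$1$ one, take $\CbS(f,(g_c)_{c\in\Sigma})$ with the constant regular functions $g_c : w \mapsto c$). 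These let me move freely between \enquote{$\leq k$} and \enquote{$\leq k+1$} where the definition only provides one inclusion directly.

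For the base case $k = 0$, a $1$-CFPT has a single reading head that can never push or pop, hence is exactly a 2DFT; by \Cref{thm:reg-2dft} these compute precisely the regular functions, which by definition are the comparison-free polyregular functions of rank at most $0$.

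For the inductive step I assume that $(k+1)$-CFPTs compute exactly the cfp functions of rank at most $k$, and show the same for $(k+2)$-CFPTs and rank at most $k+1$. For the inclusion $\supseteq$, take $f$ of rank at most $k+1$; by the definition of rank it can be written $f = \CbS(h,(g_i)_{i\in I})$ with $h$ regular and each $g_i$ of rank at most $k$. By \Cref{thm:reg-2dft} the regular $h$ is computed by a $1$-CFPT, and by the induction hypothesis each $g_i$ is computed by a $(k+1)$-CFPT; applying \Cref{prop:cfpebble-easy} with the roles $1$ and $k+1$ yields a $(1+(k+1)) = (k+2)$-CFPT computing $f$. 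For the converse inclusion $\subseteq$, take $f$ computed by a $(k+2)$-CFPT; since $k+2 \geq 2$, \Cref{thm:cfpebble-hard} decomposes $f = \CbS(h,(g_i)_{i\in I})$ with $h$ regular and each $g_i$ computed by a $(k+1)$-CFPT. By the induction hypothesis each $g_i$ has rank at most $k$, so by the definition of rank $f$ has rank at most $k+1$. This closes the induction.

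Honestly there is no deep obstacle in the corollary itself: it is a matter of packaging the two transfer results and keeping the pebble-count bookkeeping straight (this offset is precisely why the statement reads \enquote{$(k+1)$-CFPTs $=$ rank $\leq k$}). All the genuine difficulty has been front-loaded into \Cref{thm:cfpebble-hard}, the nontrivial decomposition direction, where comparison-freeness and the regularity of pebble-automaton languages are actually used. The only points demanding a little care are aligning the indices uniformly throughout the induction and invoking the two monotonicity inclusions so that \enquote{rank $\leq k$} and \enquote{rank $\leq k+1$} may be used interchangeably.
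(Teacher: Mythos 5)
Your proposal is correct and takes essentially the same route as the paper: an induction on $k$ whose base case is \Cref{thm:reg-2dft}, whose machine-to-rank direction is \Cref{thm:cfpebble-hard} plus the induction hypothesis, and whose rank-to-machine direction unfolds the definition of rank and applies \Cref{thm:reg-2dft} together with \Cref{prop:cfpebble-easy}. Your explicit monotonicity observations (that both the CFPT classes and the rank classes increase with $k$) spell out a bookkeeping point that the paper's terser proof leaves implicit, but they do not change the argument.
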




\section{Composition of basic functions}
\label{sec:composition}

Another possible definition of cfp functions consists in swapping out
$\mathtt{squaring}_\Gamma$ for some other function in
\Cref{prop:polyreg-reg-squaring}:
\begin{restatable}{theorem}{compositionsquaring}
  \label{thm:composition-squaring}
  The class of comparison-free polyregular functions is the smallest class
  closed under usual function composition and containing both all regular
  functions and the functions $\cfsquaring_\Gamma$ (cf.\
  \Cref{ex:squaring}) for all finite alphabets $\Gamma$.
\end{restatable}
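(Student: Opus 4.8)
Write $\mathcal{C}$ for the class in the statement. I would prove the inclusions $\mathrm{cfp} \subseteq \mathcal{C}$ and $\mathcal{C} \subseteq \mathrm{cfp}$ separately. For $\mathrm{cfp} \subseteq \mathcal{C}$ I would induct on the rank (\Cref{prop:CbS-ind}), showing that every cfp function of rank at most $k$ lies in $\mathcal{C}$. The base case $k=0$ holds since regular functions generate $\mathcal{C}$. For the step, a function of rank $k+1$ is $\CbS(g,(h_j)_{j \in I})$ with $g$ regular and each $h_j$ of rank at most $k$, so $h_j \in \mathcal{C}$ by induction. Writing $I = \{a_1,\dots,a_N\}$ and fixing fresh separators $\#,\$$, the map $Z : w \mapsto h_{a_1}(w)\#\dots\#h_{a_N}(w)\,\$\,g(w)$ is cfp of rank at most $k$ by \Cref{prop:cfp-conditionals} (concatenation preserves the rank bound), hence lies in $\mathcal{C}$ by induction. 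It then suffices to produce a single \emph{dispatcher} $D \in \mathcal{C}$, independent of the $h_j$, with $D(Z(w)) = \CbS(g,(h_j))(w)$; since $\mathcal{C}$ is closed under composition by definition, $D \circ Z \in \mathcal{C}$ closes the step.

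The dispatcher reads an input $(\beta_1 \#\dots\# \beta_N)\,\$\, j_1 \dots j_m$ with $j_t \in I$ and must output $\beta_{\iota(j_1)}\dots\beta_{\iota(j_m)}$, where $\iota(a_i)=i$. The key point is to realize $D$ \emph{without} closure under $\fmap$ (which cfp lacks): apply a single $\cfsquaring$ to the input $s$, obtaining one tagged copy $\underline{s_p}\,s$ per position $p$, then postprocess by a regular function that keeps only the copies whose tag lies in $\underline{I}$ — exactly those coming from $j_1\dots j_m$, recognizable because $I$ is kept disjoint from the alphabet of the $\beta$'s and from $\{\#,\$\}$ — and, in a kept copy tagged $\underline{j_t}$, extracts the $\iota(j_t)$-th $\#$-block $\beta_{\iota(j_t)}$ by a bounded two-way scan. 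Thus $D$ is a composition of regular functions with one $\cfsquaring$, so $D \in \mathcal{C}$, and it is $\cfsquaring$ that supplies the unbounded copying.

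For $\mathcal{C} \subseteq \mathrm{cfp}$ it is enough to know that cfp functions are closed under composition, since $\mathcal{C}$ is then generated under composition by cfp functions (regular functions, and the $\cfsquaring_\Gamma$ by \Cref{ex:squaring}). I would derive this closure from the single lemma: \emph{if $r$ is regular and $h$ is cfp, then $r \circ h$ is cfp}. Indeed, for a composite $f \circ g$ of cfp functions, induct on $\rk(f)$: if $f$ is regular this is the lemma, and if $f = \CbS(f',(e_i))$ with $f'$ regular then $f \circ g = \CbS(f'\circ g,\,(e_i\circ g))$, where $f'\circ g$ is cfp by the lemma and each $e_i \circ g$ is cfp by induction, so the composite is cfp because cfp is closed under $\CbS$ by definition. (Post-composition with $\cfsquaring$ is likewise free, as $\cfsquaring \circ h = \CbS(h,\,(w \mapsto \underline{c}\,h(w))_c)$.)

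The main obstacle is therefore this lemma, i.e. computing $r\bigl(h_{j_1}(w)\dots h_{j_m}(w)\bigr)$ when $h = \CbS(g,(h_j))$: a regular function applied to a $\CbS$-style concatenation of unboundedly many blocks. Unlike $\CbS$, the two-way machine for $r$ may cross block boundaries, so one cannot argue blockwise naively. The plan is to run the transition-monoid analysis of $r$ over the concatenation: the behaviour $\mu_t$ of $r$ on block $h_{j_t}(w)$ is the value of a finite-valued regular function of $w$ (its fibres are regular, since cfp preserves regular languages by preimage), so the sequence $\mu_1\dots\mu_m$ is itself a low-rank cfp function; the induced entry/exit states form a regular function of that sequence; and each output fragment emitted while traversing block $t$ is a regular fragment-producer post-composed with $h_{j_t}$, hence cfp by induction on $\rk(h)$. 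Reassembling the boundedly many fragments per block, in the order $r$ emits them, as a composition by substitution yields $r \circ h \in \mathrm{cfp}$. This transition-monoid bookkeeping is expected to be by far the most technical part, and is presumably where the category-theoretic machinery mentioned in the introduction streamlines the argument.
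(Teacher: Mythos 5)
Your proposal is correct, and it splits into two halves of different character relative to the paper. For the inclusion of cfp functions into the class $\mathcal{C}$ generated by regular functions and the $\cfsquaring_\Gamma$, you take a genuinely different route: the paper goes through the intermediate functions $\cfpow^{(k)}_\Gamma$, proving by induction on rank (\Cref{lem:app-cfp-cfpow}, via an explicit 2DFT simulation) that every cfp function of rank at most $k$ factors as a regular function after $\cfpow^{(k+1)}_\Gamma$, and then decomposing $\cfpow^{(k)}_\Gamma$ itself into sequential functions and squarings (\Cref{lem:app-cfpow-cfsq}); your concatenate-and-dispatch argument --- packaging $h_{a_1}(w)\#\cdots\#h_{a_N}(w)\,\$\,g(w)$ into a single rank-$k$ function $Z$ via closure under concatenation (\Cref{prop:cfp-conditionals}) and recovering $\CbS(g,(h_j)_{j})$ as $D\circ Z$ with $D$ a sequential post-processing of one $\cfsquaring$ --- is lighter and avoids the 2DFT construction entirely, its only delicate points being the alphabet disjointness you note and the fact that block extraction only needs counters bounded by $|I|$, so $D$'s post-processing is indeed sequential. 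What the paper's heavier route buys is the normal form $f = g\circ\cfpow^{(\rk(f)+1)}_\Gamma$, which is item~(iii) of \Cref{thm:characterization-rank}; your iterated dispatchers do not directly produce that normal form, so the paper's lemma does double duty that yours does not. For the converse inclusion your route coincides with the paper's: the identity $\CbS(f',(e_i)_i)\circ g = \CbS(f'\circ g,(e_i\circ g)_i)$ reduces everything to the single lemma \enquote{regular $\circ$ cfp is cfp}, which is exactly \Cref{lem:composition-inductive-case} (proved in \Cref{sec:proof-composition} as the heart of \Cref{thm:composition}), and your sketch of it --- finite behaviour abstraction of $r$ on each block, regularity of the behaviour-sequence map via preimage preservation plus regular conditionals, output fragments of the form $(\text{regular})\circ h_{j_t}$ handled by the rank induction, and reassembly as a $\CbS$ over a regular skeleton --- is precisely the paper's strategy. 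Be aware that your one-line claim that \enquote{each output fragment \ldots is a regular fragment-producer post-composed with $h_{j_t}$} is where the paper invests most of its technical effort: in its copyless-SST presentation this is \Cref{lem:shape-plus-labels}, the extraction of individual labels of the cumulative register assignment by a state-dependent-register construction; in the 2DFT reading your wording suggests, the fragment producers (single traversals from a given end in a given state) are easy, but the work migrates into the block-level crossing-sequence simulation that determines which visits occur in which order, so the total bookkeeping is comparable, and your plan does go through.
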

The hard part is to show that cfp functions are closed under composition. We
exploit the following combinatorial phenomenon, often applied to the study of
copyless SSTs: a copyless register assignment, i.e.\ an element of
$\transmonocl_{R,\Delta}$ (cf.~\Cref{sec:transition-monoids}), can be specified
by
\begin{itemize}
  \item a \enquote{shape} described by an element of the \emph{finite} monoid
        $\transmonocl_{R,\varnothing}$ (\Cref{prop:transmonocl-finite}),
\item plus finitely many \enquote{labels} in $\Sigma^*$ (where $\Sigma$ is the
  output alphabet) describing the constant factors that will be concatenated
  with the old register contents to give the new ones.
\end{itemize}
\begin{proposition}
  \label{prop:shape-plus-labels}
  There is a bijection
  \[ \transmonocl_{R,\Delta} ~~\cong~~ \left\{\left(\alpha,\vec\ell\right) \;\middle|\; \alpha \in \transmonocl_{R,\varnothing},\; \vec\ell \in \prod_{r \in R} (\Delta^*)^{|\alpha(r)|+1} \right\} \]
  through which
  $\tterase_\Delta : \transmonocl_{R,\Delta} \to \transmonocl_{R,\varnothing}$
  can be seen as simply removing the \enquote{labels} $\vec\ell$.
\end{proposition}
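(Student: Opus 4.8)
The plan is to exhibit the claimed bijection explicitly and componentwise over the registers $r \in R$, the whole content being a canonical factorization in the free monoid $(\Delta \cup R)^*$.

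First I would record the elementary fact that, because $R$ and $\Delta$ are disjoint, every word $w \in (\Delta \cup R)^*$ containing exactly $m$ occurrences of letters from $R$ factors \emph{uniquely} as
\[ w = v_0\, r_1\, v_1\, r_2 \cdots r_m\, v_m \qquad \text{with } r_1,\dots,r_m \in R \text{ and } v_0,\dots,v_m \in \Delta^*. \]
Indeed, the $r_j$ are exactly the $R$-letters of $w$ read from left to right, and the $v_j$ are the (possibly empty) maximal $\Delta$-blocks surrounding them; uniqueness holds precisely because $\Delta \cap R = \varnothing$, which is assumed throughout.

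Given $\beta \in \transmonocl_{R,\Delta}$, I would then apply this factorization to each $\beta(r)$, writing $\beta(r) = v_0^r\, r_1^r \cdots r_{m_r}^r\, v_{m_r}^r$, and set $\alpha(r) = r_1^r \cdots r_{m_r}^r \in R^*$ together with $\vec\ell_r = (v_0^r,\dots,v_{m_r}^r) \in (\Delta^*)^{m_r+1}$. By construction $\alpha(r)$ is obtained from $\beta(r)$ by deleting all $\Delta$-letters, so $\alpha = \tterase_\Delta(\beta)$ and $m_r = |\alpha(r)|$, whence $\vec\ell = (\vec\ell_r)_{r\in R} \in \prod_{r\in R}(\Delta^*)^{|\alpha(r)|+1}$. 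The inverse map sends $(\alpha,\vec\ell)$ to the assignment re-interleaving each $\alpha(r)$ with the blocks recorded in $\vec\ell_r$; the uniqueness of the factorization makes these two maps mutually inverse. In particular the first projection is literally $\tterase_\Delta$, i.e.\ \enquote{removing the labels}.

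Finally I would check that the bijection restricts to the correct copyless sets, which is the only point requiring (minimal) care since the product in the statement is indexed by $\alpha \in \transmonocl_{R,\varnothing}$ rather than by all of $\transmono_{R,\varnothing}$. Deleting $\Delta$-letters does not change the number of occurrences of any $r \in R$, so the total number of occurrences of $r$ among the $\beta(r')$ equals that among the $\alpha(r')$; hence $\beta$ is copyless exactly when $\alpha = \tterase_\Delta(\beta)$ is, and $\beta \in \transmonocl_{R,\Delta}$ corresponds precisely to $\alpha \in \transmonocl_{R,\varnothing}$. This occurrence-counting is immediate and is also what underlies the fact (\Cref{prop:erase-morphism}) that $\tterase_\Delta$ maps copyless assignments to copyless ones; everything else is the bookkeeping of the canonical factorization.
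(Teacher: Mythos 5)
Your proposal is correct and follows essentially the same route as the paper, whose (brief) proof idea is precisely the canonical factorization $\beta(r) = w_0 r'_1 w_1 \dots r'_n w_n$ with $r'_1 \dots r'_n = \tterase_\Delta(\beta)(r)$; you simply spell out the uniqueness of this factorization, the inverse (re-interleaving) map, and the fact that copylessness is preserved both ways since erasing $\Delta$-letters does not affect occurrence counts of registers.
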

\begin{proof}[Proof idea]
  Let $\beta \in \transmonocl_{R,\Delta}$. For each $r \in R$, one can write
  $\beta(r) = w_0 r'_1 w_1 \dots r'_n w_n$ with $w_0,\ldots,w_n \in \Delta^*$
  and $r'_1,\dots,r'_n \in R$ such that
  $r'_1 \dots r'_n = \tterase_\Delta(\beta)(r) \in R^*$.
\end{proof}

This provides a clear way to represent a copyless register assignment inside the
working memory of an SST: store the shape in the state and the labels in
registers. Another important fact for us is that given two assignments
$\beta,\beta' \in \transmonocl_{R,\Delta}$ the labels of $\beta\bullet\beta'$
can be obtained as a \emph{copyless} recombination of the labels of $\beta$ and
$\beta'$.

(There is a subtlety worth mentioning here: while the set of stateful
transitions $\transmonocl_{R,\Delta} \wr Q$ also admits a \enquote{shape +
  labels} representation, its monoid multiplication does \emph{not} have this
copylessness property. This prevents a naive proof of the closure under
composition of copyless SSTs from working. Nevertheless, the composition of two
regular functions \emph{is} always regular, and we rely on this fact to prove
\Cref{thm:composition-squaring}.)

The rest of the proof of \Cref{thm:composition-squaring} is relegated to the
technical appendix.

\section{Rank vs asymptotic growth}
\label{sec:minimization}

Our next result is the comparison-free counterpart to recent work on polyregular
functions by Lhote~\cite{PebbleMinimization}, whose proof techniques (in
particular the use of Ramsey's theorem) we reuse. Compare
item~(\ref{thm-item:rank-growth}) below to the main theorem
of~\cite{PebbleMinimization} and item~(\ref{thm-item:rank-pow})
-- which provides yet another definition of cfp functions --
to~\cite[Appendix~A]{PebbleMinimization}.

\begin{restatable}{theorem}{characterizationrank}
  \label{thm:characterization-rank}
  Let $f : \Gamma^* \to \Sigma^*$ and $k \in \naturalN$. The following are
  equivalent:
  \begin{enumerate}[(i)]
  \item\label{thm-item:rank-def} $f$ is comparison-free polyregular with rank at
    most $k$;
  \item\label{thm-item:rank-growth} $f$ is comparison-free polyregular and
    $|f(w)| = O(|w|^{k+1})$;
  \item\label{thm-item:rank-pow} there exists a regular function $g :
    (\{0,\dots,k\} \times \Gamma)^* \to \Sigma^*$ such that $f = g \circ
    \cfpow^{(k+1)}_\Gamma$,\\ with the following inductive definition:
    $\cfpow^{(0)}_\Gamma :~ w \in \Gamma^* \mapsto \varepsilon \in
    (\varnothing\times\Gamma)^*$ and
    \[ \cfpow^{(n+1)}_\Gamma ~:~~ w \mapsto
      (n,w_1) \cdot \cfpow^{(n)}_\Gamma(w) \cdot\ldots\cdot
      (n,w_{|w|}) \cdot  \cfpow^{(n)}_\Gamma(w)
    \]
  \end{enumerate}
  To make $(\ref{thm-item:rank-growth}) \implies (\ref{thm-item:rank-def})$ more
  precise, if $f$ is cfp with $\rk(f) \geq 1$, then it admits a sequence of
  inputs $w_0,w_1,\ldots\in\Gamma^*$ such that $|w_n|\to+\infty$ and $|f(w_n)| =
  \Omega(|w_n|^{\rk(f)+1})$.
\end{restatable}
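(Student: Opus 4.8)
The plan is to prove the cycle of implications $(\ref{thm-item:rank-def}) \implies (\ref{thm-item:rank-pow}) \implies (\ref{thm-item:rank-growth}) \implies (\ref{thm-item:rank-def})$, with the quantitative lower bound on $|f(w_n)|$ being the genuinely hard part (following the strategy of~\cite{PebbleMinimization}).

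The plan is to prove the three conditions equivalent by establishing $(\ref{thm-item:rank-def}) \Rightarrow (\ref{thm-item:rank-growth})$, $(\ref{thm-item:rank-def}) \Rightarrow (\ref{thm-item:rank-pow})$, $(\ref{thm-item:rank-pow}) \Rightarrow (\ref{thm-item:rank-growth})$, and finally $(\ref{thm-item:rank-growth}) \Rightarrow (\ref{thm-item:rank-def})$. These four implications close the cycle: the first and last give $(\ref{thm-item:rank-def}) \Leftrightarrow (\ref{thm-item:rank-growth})$, while $(\ref{thm-item:rank-pow}) \Rightarrow (\ref{thm-item:rank-growth}) \Rightarrow (\ref{thm-item:rank-def}) \Rightarrow (\ref{thm-item:rank-pow})$ handles the rest. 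Only the last implication is hard, and it is exactly where the quantitative refinement at the end of the statement is proved.

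Among the easy directions, $(\ref{thm-item:rank-def}) \Rightarrow (\ref{thm-item:rank-growth})$ goes by induction on $\rk(f)$: regular functions have linear growth, and if $f = \CbS(h,(g_i)_{i\in I})$ with $h$ regular and $\max_i \rk(g_i) \le k-1$, then $|f(w)| = \sum_j |g_{i_j}(w)| \le |h(w)|\cdot\max_i|g_i(w)| = O(|w|)\cdot O(|w|^{k}) = O(|w|^{k+1})$. For condition $(\ref{thm-item:rank-pow})$, the basic fact is that $\cfpow^{(k+1)}_\Gamma$ is cfp of rank at most $k$ and of growth $\Theta(|w|^{k+1})$: writing $\cfpow^{(k+1)}_\Gamma = \CbS(f,(g_i))$ with $f : w_1\dots w_m \mapsto w_1\ast\dots w_m\ast$ a morphism, $g_c : w \mapsto (k,c)$ constant and $g_\ast = \cfpow^{(k)}_\Gamma$, \Cref{prop:CbS-ind} gives $\rk(\cfpow^{(k+1)}) \le 1+\rk(\cfpow^{(k)})$ with $\cfpow^{(1)}$ regular, while the recursion $|\cfpow^{(n+1)}(w)| = |w|\,(1+|\cfpow^{(n)}(w)|)$ controls the length. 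Hence $(\ref{thm-item:rank-pow}) \Rightarrow (\ref{thm-item:rank-growth})$ follows, since $g$ is regular (linear growth) and cfp functions are closed under composition (\Cref{thm:composition-squaring}).

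For $(\ref{thm-item:rank-def}) \Rightarrow (\ref{thm-item:rank-pow})$ I would pass through the comparison-free pebble model (\Cref{cor:cfpebble}) and read $\cfpow^{(k+1)}_\Gamma(w)$ as the lexicographic linearisation of all legal $(k+1)$-pebble stacks over $w$, each tagged with the letter it scans. A $(k+1)$-CFPT computing $f$ can then be simulated by a two-way postprocessing that walks this linearisation while following the transducer's head movements; since two-way transducers compute exactly the regular functions (\Cref{thm:reg-2dft}), this exhibits a regular $g$ with $f = g\circ\cfpow^{(k+1)}_\Gamma$. This is the comparison-free analogue of the nested-transducer normal form and carries essentially the same content as~\cite[Appendix~A]{PebbleMinimization}.

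The crux is $(\ref{thm-item:rank-growth}) \Rightarrow (\ref{thm-item:rank-def})$, which I would prove in the sharp form stated: if $\rk(f) = r \ge 1$ then $|f(w_n)| = \Omega(|w_n|^{r+1})$ along some sequence with $|w_n|\to\infty$; contrapositively, growth $O(|w|^{k+1})$ forces $\rk(f)\le k$. Following Lhote~\cite{PebbleMinimization}, this is a pumping argument on a minimal CFPT (equivalently on the factorisation through $\cfpow^{(r+1)}_\Gamma$): since the rank is exactly $r$, the outermost pebble is not eliminable, and one must produce inputs on which all $r+1$ nested loops simultaneously contribute an independent factor of $|w|$. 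The main obstacle---and the reason Ramsey's theorem enters---is that an inductive witness $w$ on which an inner function $g_{i^\ast}$ of rank $r-1$ grows like $|w|^{r}$ need not make the regular selector $h$ emit the index $i^\ast$ a linear number of times, and vice versa; these two requirements live on the same input and must be reconciled. I would do so by applying Ramsey's theorem to the transition monoid acting along a pumpable factorisation of such a witness, extracting a monochromatic block whose iteration raises at once the multiplicity of $i^\ast$ in the outer output and the length of the inner contribution, and then checking that no cancellation collapses the product of the two linear gains. The absence of position comparisons makes this bookkeeping lighter than in the general polyregular setting, but this non-collapse verification is the delicate point.
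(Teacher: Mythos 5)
Your cycle of implications and the easy directions are essentially sound and close to the paper's own structure: (\ref{thm-item:rank-pow}) $\Rightarrow$ (\ref{thm-item:rank-growth}) is the same argument, your elementary derivation that $\rk(\cfpow^{(k+1)}_\Gamma)\leq k$ via \Cref{prop:CbS-ind} is exactly the route the paper alludes to, and your plan for (\ref{thm-item:rank-def}) $\Rightarrow$ (\ref{thm-item:rank-pow}) -- reading $\cfpow^{(k+1)}_\Gamma(w)$ as a linearisation of the tree of pebble stacks and simulating a $(k+1)$-CFPT by a 2DFT walking it -- is a legitimate variant of the paper's proof of \Cref{lem:app-cfp-cfpow}, which builds essentially the same kind of 2DFT but organised as an induction on rank over the $\CbS$ decomposition (you would still need to say how the 2DFT tracks the letters under the non-top pebbles, which can be kept in its state, and how it handles end-marker positions, which $\cfpow$ does not enumerate).

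The genuine gap is in (\ref{thm-item:rank-growth}) $\Rightarrow$ (\ref{thm-item:rank-def}), where your sketch is missing the two load-bearing ingredients of the paper's proof of \Cref{thm:minimization}. First, Ramsey's theorem cannot play the role you assign to it: in the paper, Ramsey (\Cref{prop:ramsey}) only shows that every sufficiently long string admits an $r$-split according to a given morphism; it gives pumpable factors, but nothing guarantees that iterating them \emph{grows} the output. Producing-ness is obtained from a completely different mechanism, namely the dichotomy of \Cref{lem:dichotomy} (the inputs admitting a producing $r$-split form a regular language outside of which the relevant letter count is uniformly bounded) combined with the rank-collapse lemma (\Cref{lem:inlining}): if the outer regular function emitted the maximal-rank indices only boundedly often over \emph{all} inputs, the whole composition by substitution would have rank strictly smaller than $\rk(f)$, a contradiction. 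This is what turns ``the rank is exactly $r$'' into usable growth, and your sketch has no substitute for it. Second, the reconciliation of the outer requirement (many occurrences of $i^\ast$) with the inner one (fast growth of $h_{i^\ast}$) is achieved in the paper not by a monochromatic-block extraction but by working throughout with a single morphism that simultaneously refines $\nu_g$, the monoids of all the $h_i$, and the ambient $\varphi$, so that one split is a split for every function at once; the induction of \Cref{lem:dichotomy-redux} then produces $\rk(f)+1$ such $r$-splits on the \emph{same} input, from which pairwise non-overlapping factors are selected by a greedy interval argument (this is why one needs $r=\rk(f)+1$ parts per split), and the monotonicity results (\Cref{lem:cf-triples}, \Cref{prop:monotone-multipumping}, resting on the producing-triple analysis of copyless SSTs in \Cref{lem:producing-triples}) guarantee that pumping one factor never decreases any other contribution. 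What you defer as ``checking that no cancellation collapses the product'' is precisely this machinery; without it the claimed $\Omega(|w_n|^{\rk(f)+1})$ bound does not follow.
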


Note that $\cfpow^{(2)}_\Gamma$ and $\cfsquaring_\Gamma$ are the
same up to a bijection $\{0,1\}\times\Gamma \cong \Gamma\cup\underline\Gamma$.
\begin{remark}
  \label{rem:marble-minimization}
  The growth of an HDT0L transduction is also related, in a very similar way to
  item~(\ref{thm-item:rank-growth}) above, to the number of layers required in
  any SST that computes it~\cite[\S5]{Marble}.
\end{remark}

\subparagraph*{Some proof elements} Let us present a few definitions and lemmas
to give an idea of the ingredients that go into the proof. Those technical
details take up the rest of this section.

Lhote's paper~\cite{PebbleMinimization} makes a heavy use of
\emph{factorizations} of strings that depend on a morphism to a finite monoid.
This is also the case for our proof, but we have found that a slightly different
definition of the kind of factorization that we want works better for us.

\begin{definition}[similar but not equivalent to
  {\cite[Definition~19]{PebbleMinimization}}]
  \label{def:r-split}
  An \emph{$r$-split} of a string $s \in \Gamma^*$ according to a morphism
  $\varphi : \Gamma^* \to M$ is a tuple $(u,v_1,\dots,v_r,w) \in
  (\Gamma^*)^{r+2}$ such that:
  \begin{itemize}
  \item $s = u v_1 \dots v_r w$ with $v_i$ non-empty for all $i \in \{1,\dots,r\}$;
  \item $\varphi(u) = \varphi(u v_1) = \dots = \varphi(u v_1 \dots v_r)$;
  \item $\varphi(w) = \varphi(v_r w) = \dots = \varphi(v_1 \dots v_r w)$.
  \end{itemize}
\end{definition}
\begin{proposition}[immediate from the definition]
  $(u,v_1,\dots,v_r,w)$ is an $r$-split if and only if, for all
  $i \in \{1,\dots,r\}$, $(u v_1 \dots v_{i-1}, v_i, v_{i+1} \dots v_r w)$ is a
  1-split.
\end{proposition}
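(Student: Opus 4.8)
The plan is to unwind \Cref{def:r-split} directly, in both directions; the statement is essentially bookkeeping, since the two strings of monoid-equalities defining an $r$-split are telescoping chains, and a $1$-split is exactly the two-term instance of each chain. The key observation is that a chain of equalities of length $r$ is equivalent to the conjunction of its $r$ consecutive two-term equalities, via transitivity in one direction and specialization in the other.

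For the forward direction, assume $(u,v_1,\dots,v_r,w)$ is an $r$-split and fix $i \in \{1,\dots,r\}$. I would verify the three $1$-split conditions for the tuple $(uv_1\dots v_{i-1},\, v_i,\, v_{i+1}\dots v_r w)$. The decomposition $s = (uv_1\dots v_{i-1})\, v_i\, (v_{i+1}\dots v_r w)$ holds by associativity, and $v_i$ is non-empty by hypothesis. The left condition $\varphi(uv_1\dots v_{i-1}) = \varphi(uv_1\dots v_i)$ is precisely the equality of the two adjacent prefixes appearing in the prefix chain $\varphi(u) = \varphi(uv_1) = \dots = \varphi(uv_1\dots v_r)$, and is therefore one of its links. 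Symmetrically, the right condition $\varphi(v_{i+1}\dots v_r w) = \varphi(v_i\dots v_r w)$ is one link of the suffix chain $\varphi(w) = \varphi(v_r w) = \dots = \varphi(v_1\dots v_r w)$.

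For the converse, assume that each $(uv_1\dots v_{i-1},\, v_i,\, v_{i+1}\dots v_r w)$ is a $1$-split. Each such hypothesis supplies exactly one link of each chain: the left conditions give $\varphi(uv_1\dots v_{i-1}) = \varphi(uv_1\dots v_i)$ for every $i$, and as $i$ ranges over $\{1,\dots,r\}$ these are all the consecutive equalities of the prefix chain, so by transitivity the full chain $\varphi(u) = \dots = \varphi(uv_1\dots v_r)$ holds. The right conditions give $\varphi(v_{i+1}\dots v_r w) = \varphi(v_i\dots v_r w)$ for every $i$, which are all the consecutive equalities of the suffix chain, yielding $\varphi(w) = \dots = \varphi(v_1\dots v_r w)$. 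Non-emptiness of each $v_i$ transfers directly from the middle component of the $i$-th $1$-split, and $s = uv_1\dots v_r w$ is immediate. Hence $(u,v_1,\dots,v_r,w)$ is an $r$-split.

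There is no real obstacle here, as the label \enquote{immediate from the definition} suggests; the only care needed is in the indexing, namely matching the left condition of the $i$-th $1$-split to the link joining the prefixes of $uv_1\dots v_{i-1}$ and $uv_1\dots v_i$, and the right condition to the corresponding link between the suffixes $v_{i+1}\dots v_r w$ and $v_i\dots v_r w$, and then checking that ranging over $i \in \{1,\dots,r\}$ exhausts every consecutive pair in both chains.
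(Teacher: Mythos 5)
Your proof is correct and matches the paper's approach exactly: the paper offers no written proof (the proposition is labeled \enquote{immediate from the definition}), and your argument is precisely the intended unwinding, identifying each $1$-split condition with one consecutive link of the prefix and suffix equality chains and using transitivity for the converse. Nothing is missing; the indexing checks you flag (including the empty-product edge cases at $i=1$ and $i=r$) are handled correctly.
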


The difference with the \emph{$(1,r)$-factorizations}
of~\cite[Definition~19]{PebbleMinimization} is that we have replaced the
equality and idempotency requirements on $\varphi(v_1),\dots,\varphi(v_n)$ by
the \enquote{boundary conditions} involving $\varphi(u)$ and $\varphi(w)$
(actually, $(1,r+2)$-factorizations induce $r$-splits). This change allows us to
establish a subclaim used in the proof of \Cref{lem:dichotomy} in an elementary
way.

The point of $r$-splits is that given a split of an input string according to
the morphism that sends it to the corresponding transition in a SST, we have
some control over what happens to the output of the SST if we pump a middle
factor in the split. Furthermore, it suffices to consider a quotient of the
transition monoid which is finite when the SST is copyless (this is similar to
\Cref{prop:transmonocl-finite}). More precisely, we have the  key
lemma below, which is used pervasively throughout our proof of
\Cref{thm:characterization-rank}:
\begin{restatable}{lemma}{producingtriples}
  \label{lem:producing-triples}
  Let $f : \Gamma^* \to \Sigma^*$ be a regular function. There exist a morphism
  to a \emph{finite} monoid $\nu_{f} : \Gamma^* \to \prodmono(f)$ and, for each $c \in
  \Sigma$, a set of \emph{producing triples} $P(f,c) \subseteq \prodmono(f)^3$ such
  that, for any \emph{1-split} according to $\nu_f$ composed of $u,v,w \in
  \Gamma^*$ -- i.e.\ $\nu_f(uv) = \nu_f(u)$ and $\nu_f(vw) = \nu_f(w)$ -- we
  have:
  \begin{itemize}
  \item if $(\nu_f(u),\nu_f(v),\nu_f(w))\in P(f,c)$, then $|f(uvw)|_c >
    |f(uw)|_c$;
  \item otherwise (when the triple is not producing), $|f(uvw)|_c = |f(uw)|_c$.
  \end{itemize}
  Furthermore, in the producing case, we get as a consequence that $\forall n
  \in \naturalN,\; |f(u v^n w)|_c \geq n$.
\end{restatable}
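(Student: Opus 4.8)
The plan is to fix a copyless SST $\cT = (Q,q_0,R,\delta,\vec u_I,F)$ computing $f$ (such a machine exists by \Cref{thm:reg-2dft}) and to take for $\nu_f$ a refinement of its substitution transition monoid. Recall from \Cref{prop:shape-plus-labels} that a copyless assignment in $\transmonocl_{R,\Sigma}$ is determined by its \emph{shape} in the finite monoid $\transmonocl_{R,\varnothing}$ (\Cref{prop:transmonocl-finite}) together with labels in $\Sigma^*$. I would let $\prodmono(f)$ be the monoid $\transmonocl_{R,\varnothing}\wr Q$ enriched, for each register, with a single boolean recording \enquote{while reading this factor, a literal $c$ was written into this register's slot}. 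The composition of booleans is the expected finite-state update (a slot of the composite is marked iff it was already marked on the left factor, or one of the registers it pulls in on the right was marked), so $\prodmono(f)$ is a \emph{finite} monoid, and $c \mapsto \tterase_\Sigma(\delta(-,c))$ together with this boolean data defines a morphism $\nu_f$ that factors through the usual STM (\Cref{rem:substitution-transition-monoid}) and hence still controls states and shapes.

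Now fix a 1-split $u,v,w$ for $\nu_f$. Since $\nu_f(uv)=\nu_f(u)$ yields $\nu_f(uv^n)=\nu_f(u)$ and $\nu_f(v^nw)=\nu_f(w)$ for all $n$, the run on $uv^nw$ passes through a fixed junction state $q$ after $uv^n$, and reading one copy of $v$ from $q$ back to $q$ always applies the \emph{same} copyless assignment $\beta\in\transmonocl_{R,\Sigma}$; moreover $\nu_f(uv^nw)=\nu_f(uw)$, so the whole shape of the computation is independent of $n$. The structural heart of the argument is that the two boundary conditions pin down $\beta$ on the \emph{backbone}, i.e.\ the registers whose content at the junction both survives from the left and reaches the output on the right: the condition $\nu_f(uv)=\nu_f(u)$ says that $\tterase_\Sigma(\beta)$ fixes, by substitution, the shape accumulated by $u$, while $\nu_f(vw)=\nu_f(w)$ says that $\tterase_\Sigma(\beta)$ composes with the shape of $w$ to give that same shape; since each backbone register occurs exactly once on the right-hand side, a copyless-uniqueness argument leaves the substitution no room to move it and forces $\tterase_\Sigma(\beta)(r)=r$ for every backbone register $r$. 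Consequently each copy of $v$ merely \emph{appends} its constant contribution around these fixed registers, overwriting only non-backbone registers that never reach the output.

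With this in hand I would define $P(f,c)$ to consist of those triples of $\nu_f$-images for which some backbone register receives a literal $c$ — a condition visible on the shapes together with the deposit booleans, hence genuinely a property of $(\nu_f(u),\nu_f(v),\nu_f(w))$. The count then becomes affine in $n$: since the shape is constant and $\beta$ fixes the backbone, $|f(uv^nw)|_c = |f(uw)|_c + n\cdot s$, where $s\geq 0$ is the total number of literal $c$'s that one copy of $v$ deposits into output-reaching backbone slots. If the triple is not producing, then $s=0$ and $|f(uvw)|_c=|f(uw)|_c$; if it is producing, then $s\geq 1$, so $|f(uvw)|_c>|f(uw)|_c$ and $|f(uv^nw)|_c\geq n$, as claimed.

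The step I expect to be the main obstacle is exactly the \enquote{no transient deposit} part of the backbone analysis: a priori, inserting a single $v$ could route a fresh $c$ into an output-reaching register that the next copy of $v$ then overwrites, which would make $|f(uvw)|_c>|f(uw)|_c$ while the growth stays bounded — contradicting the claimed $\geq n$. Ruling this out is precisely where both boundary conditions of the 1-split and copylessness must be used jointly; a mere \enquote{presence of $c$} boolean does not suffice, as the instructive case of iterated reverse (\Cref{ex:iterated-reverse}) shows, where deposits land in a register already saturated with $c$'s yet keep accumulating. The elementary, Ramsey-free treatment of this subclaim is, as remarked after \Cref{def:r-split}, the reason for adopting $r$-splits with boundary conditions rather than the $(1,r)$-factorizations of \cite{PebbleMinimization}.
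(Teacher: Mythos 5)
Your finite monoid is essentially the paper's: the paper likewise takes the wreath product of the state set with a finite quotient of $\transmonocl_{R,\Sigma}$ remembering each assignment's shape together with, for every register, the set of output letters deposited into it, and your reduction of a $1$-split to a fixed junction state with a fixed assignment $\beta$ for each copy of $v$ is also how the paper's proof begins. The genuine gap is the step you call the \enquote{structural heart}: it is \emph{false} that the two boundary conditions force $\tterase_\Sigma(\beta)(r)=r$ on the registers whose content reaches the output, and no copyless-uniqueness argument can repair this, because $\beta$ may \emph{redistribute} contents while preserving composite shapes. Concretely, consider the single-state copyless SST with registers $R=\{s,r,t\}$, output alphabet $\Sigma=\{c\}$, empty initial values, output function $F(q)=t$, and input letters $a,b,d$ acting by $\alpha=\delta(-,a)\colon s\mapsto cs,\ r\mapsto c,\ t\mapsto\varepsilon$, $\beta=\delta(-,b)\colon s\mapsto sr,\ r\mapsto c,\ t\mapsto\varepsilon$, $\gamma=\delta(-,d)\colon s\mapsto\varepsilon,\ r\mapsto\varepsilon,\ t\mapsto csr$. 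Then $\alpha\bullet\beta$ and $\alpha$, as well as $\beta\bullet\gamma$ and $\gamma$, have the same shapes and the same per-register deposit data, so $(u,v,w)=(a,b,d)$ is a $1$-split for $\nu_f$ (yours or the paper's; the shape of $\beta$ is even idempotent). Both $s$ and $r$ reach the output, since $\gamma^\odot(F(q))=csr$; yet $\tterase_\Sigma(\beta)$ fixes neither of them, mapping $s\mapsto sr$ and $r\mapsto\varepsilon$. The shape equation $\tterase_\Sigma(\beta)^\odot(sr)=sr$ holds because the old content of $r$ migrates into $s$ while $r$ is overwritten by a fresh deposit --- exactly the scenario you say you must exclude. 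Your accounting then gives the wrong answer: no register is fixed by $\beta$, so none is \enquote{backbone} in your sense, your $P(f,c)$ does not contain this triple, and your dichotomy predicts $|f(abd)|_c=|f(ad)|_c$; but $f(ab^nd)=c^{n+3}$, so $|f(abd)|_c=4>3=|f(ad)|_c$. (Even reading \enquote{backbone} semantically as \enquote{output-reaching}, which would make your $P(f,c)$ coincide with the paper's, the dichotomy remains unproven, since your counting argument rests entirely on the fixedness picture.)

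The paper's proof makes no fixedness claim at all. Let $q''$ be the final state, $\gamma$ the assignment performed by $w$ from the junction state, $\omega'=\gamma^\odot(F(q''))$ and $\omega=(\beta\bullet\gamma)^\odot(F(q''))=\beta^\odot(\omega')$. The right boundary condition yields only that $\omega$ and $\omega'$ contain each register with the same \emph{multiplicity}; since $|f(uvw)|_c$ and $|f(uw)|_c$ each equal $|\cdot|_c$ of the corresponding $\omega$ plus $\sum_{r}|\cdot|_r\,|u'_r|_c$ (where $u'_r$ is the content of $r$ at the junction), these junction contributions cancel, leaving
\[ |f(uvw)|_c-|f(uw)|_c \;=\; |\omega|_c-|\omega'|_c \;=\; \sum_{r\in R}|\omega'|_r\cdot|\beta(r)|_c . \]
Hence the producing condition is simply \enquote{some $r$ occurs in $\omega'$ and $\beta$ deposits a $c$ into $r$}, with no requirement that $\beta$ fix $r$; in the counterexample this sum equals $1$, matching the true increment. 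This condition is indeed determined by $(\nu_f(u),\nu_f(v),\nu_f(w))$: whether $r$ occurs in $\omega'$ depends on the junction state (from $\nu_f(u)$) and the shape data of $\nu_f(w)$, and the deposit condition on the presence data of $\nu_f(v)$. Finally, the bound $|f(uv^nw)|_c\ge n$ also does not come from any structural invariance: it follows because $(uv^m,v,w)$ has the same triple of $\nu_f$-images as $(u,v,w)$, hence is again a producing $1$-split, so the count strictly increases at every step.
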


\begin{definition}
  We fix once and for all a choice of $\prodmono(f)$, $\nu_{f}$ and $P(f,c)$ for
  each $c \in \Sigma$ and regular $f : \Gamma^* \to \Sigma^*$. We say that a
  1-split $(u,v,w)$ is \emph{producing with respect to} $(f,c)$ when
  $(\nu_{f}(u),\nu_{f}(v),\nu_{f}(w))\in P(f,c)$. For $\Pi \subseteq \Sigma$, we
  also set $P(f,\Pi) = \bigcup_{c \in \Pi} P(f,c)$.
\end{definition}

Something like \Cref{lem:producing-triples} (but not exactly) appears in the
proof of {\cite[Lemma~18]{PebbleMinimization}}. We first apply it to prove the
following lemma, which is morally a counterpart to the \enquote{$k=1$ case} of
the central Dichotomy Lemma from~\cite{PebbleMinimization}, with $r$-splits instead of $(k,r)$-factorizations.
\begin{restatable}{lemma}{dichotomy}
  \label{lem:dichotomy}
  Let $f : \Gamma^* \to \Sigma^*$ be regular and $\varphi : \Gamma^* \to M$ be a
  morphism with $M$ finite. Suppose that $\pi\circ\varphi = \nu_{f}$ for some
  other morphism $\pi : M \to \prodmono(f)$. Let $r \geq 1$ and
  $\Pi \subseteq \Sigma$.

  We define $L(f,\Pi,\varphi,r)$ to be the set of strings that admit an $r$-split
  $s = u v_1 \dots v_r w$ according to $\varphi$ such that all the triples $(u
  v_1 \dots v_{i-1}, v_i, v_{i+1} \dots v_r w)$ are producing with respect to
  $(f,\Pi)$ -- let us call this a producing $r$-split with respect to
  $(f,\Pi,\varphi)$.

  Then $L(f,\Pi,\varphi,r)$ is a regular language, and $\sup\{|f(s)|_\Pi \mid s
  \in \Gamma^* \setminus L(f,\Pi,\varphi,r) \} < \infty$.
\end{restatable}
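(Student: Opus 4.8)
The plan is to prove the two assertions separately, both resting on one elementary observation about $r$-splits. By \Cref{def:r-split}, in any $r$-split $s = u v_1 \dots v_r w$ according to $\varphi$ one has $\varphi(u v_1 \dots v_{i-1}) = \varphi(u) =: m_0$ and $\varphi(v_{i+1} \dots v_r w) = \varphi(w) =: m_1$ for every $i$; applying $\pi$ and using $\pi \circ \varphi = \nu_{f}$, all left contexts of the constituent 1-splits have $\nu_{f}$-value $\pi(m_0)$ and all right contexts have $\nu_{f}$-value $\pi(m_1)$. Hence the $i$-th 1-split is producing with respect to $(f,\Pi)$ iff $(\pi(m_0),\pi(\varphi(v_i)),\pi(m_1)) \in P(f,\Pi)$, which depends only on $\varphi(v_i)$ once $m_0,m_1$ are fixed. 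Combined with the stability equations, which amount to $m_0\,\varphi(v_i) = m_0$ and $\varphi(v_i)\,m_1 = m_1$, this shows that a producing $r$-split is exactly a factorization $s = u v_1 \dots v_r w$ with $\varphi(u) = m_0$, $\varphi(w) = m_1$ and each $v_i$ nonempty satisfying $\varphi(v_i) \in T_{m_0,m_1}$, where $T_{m_0,m_1} = \{n \in M \mid m_0 n = m_0,\ n m_1 = m_1,\ (\pi(m_0),\pi(n),\pi(m_1)) \in P(f,\Pi)\}$. This is precisely the subclaim that the boundary conditions in \Cref{def:r-split} are designed to make elementary, and it immediately settles regularity: setting $V_{m_0,m_1} = \varphi^{-1}(T_{m_0,m_1}) \setminus \{\varepsilon\}$, we get $L(f,\Pi,\varphi,r) = \bigcup_{m_0,m_1 \in M} \varphi^{-1}(m_0)\,(V_{m_0,m_1})^r\,\varphi^{-1}(m_1)$, a finite union of concatenations of preimages under a morphism to the finite monoid $M$, hence regular.

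For the boundedness I would first reduce to a single output letter. Since $|f(s)|_\Pi = \sum_{c\in\Pi}|f(s)|_c$ and a producing $r$-split with respect to $(f,\{c\})$ is in particular one with respect to $(f,\Pi)$ (because $P(f,c)\subseteq P(f,\Pi)$), every $s \notin L(f,\Pi,\varphi,r)$ admits no producing $r$-split with respect to any $(f,\{c\})$; it thus suffices to bound $|f(s)|_c$ uniformly over strings with no producing $r$-split with respect to a fixed $(f,\{c\})$. The engine is the count dichotomy of \Cref{lem:producing-triples}: deleting a nonempty middle factor of a \emph{non}-producing 1-split leaves $|f(\cdot)|_c$ unchanged. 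I use this twice. The key \textbf{stability lemma} is that inserting a non-producing 1-split factor preserves membership in $L(f,\{c\},\varphi,r)$ (equivalently, deleting one preserves non-membership): wherever the inserted factor lands inside a producing $r$-split, the enlarged middle block still satisfies the stability equations, and it is still producing because, applying \Cref{lem:producing-triples} to the inserted non-producing factor, its $c$-count is unchanged and therefore still strictly exceeds the count obtained by deleting the whole enlarged block. Then, given $s$ with no producing $r$-split, I repeatedly delete nonempty non-producing 1-split factors until none remain, reaching an \emph{irreducible} $\hat s$ with $|f(\hat s)|_c = |f(s)|_c$ and, by the stability lemma, still no producing $r$-split.

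It remains to bound the length of an irreducible $\hat s$ with no producing $r$-split, and here I would invoke Ramsey's theorem through idempotent factorizations for $\varphi$. If $\hat s$ were long enough it would contain $K \geq r+2$ consecutive nonempty factors $v_1,\dots,v_K$ of a common idempotent value $e$, whose interior 1-splits all share the contexts $m_0,m_1$ and middle value $e$. If the triple $(\pi(m_0),\pi(e),\pi(m_1))$ is non-producing, each interior $v_a$ is a deletable non-producing factor, contradicting irreducibility; if it is producing, then $v_2,\dots,v_{K-1}$ form a producing $(K-2)$-split with $K-2 \geq r$, contradicting the absence of a producing $r$-split. Hence $|\hat s|$ is bounded by a constant depending only on $|M|$ and $r$, so $|f(s)|_c = |f(\hat s)|_c \leq \max\{|f(t)|_c : |t| \leq \text{that bound}\} < \infty$; summing over $c \in \Pi$ gives the claim. (Termination of the reduction is clear since each step shortens the string, and the passage from "$|f(\hat s)|_c$ large" to "$|\hat s|$ large" uses only that regular functions have linear growth.)

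The main obstacle is the stability lemma, i.e.\ closure of $L(f,\{c\},\varphi,r)$ under insertion of non-producing factors. A naive re-splitting fails because an inserted factor may fall in the interior of a middle block $v_i$ and alter its $\varphi$-value, so one cannot argue producing-ness of the enlarged block through its $\varphi$-value. The crucial point is instead to certify producing-ness through the count equality of \Cref{lem:producing-triples}: the inserted non-producing factor contributes no occurrences of $c$, hence cannot destroy the strict inequality that witnesses the block as producing.
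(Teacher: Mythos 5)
Your proposal is correct, and its crux coincides with the paper's: everything reduces to the stability claim that deleting (equivalently, inserting) the middle factor of a non-producing 1-split cannot create (resp.\ destroy) a producing $r$-split, and you prove it by exactly the paper's decisive trick --- certifying that the enlarged block is producing via the counting dichotomy of \Cref{lem:producing-triples} (the inserted non-producing factor leaves the count unchanged, so the strict inequality witnessing producing-ness survives) rather than via its $\varphi$-image, which may indeed change. The surrounding scaffolding differs in two ways worth noting. For regularity, the paper merely observes that a nondeterministic automaton can guess the split; your explicit description $L(f,\Pi,\varphi,r)=\bigcup_{m_0,m_1\in M}\varphi^{-1}(m_0)\,(V_{m_0,m_1})^r\,\varphi^{-1}(m_1)$ spells out the same fact, correctly using that the boundary conditions of \Cref{def:r-split} reduce to $m_0\varphi(v_i)=m_0$ and $\varphi(v_i)m_1=m_1$. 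For boundedness, the paper applies \Cref{prop:ramsey} to any sufficiently long string of the complement, deletes one non-producing factor of the resulting $r$-split, and closes by strong induction on length; you instead normalize to an \enquote{irreducible} string by deleting arbitrary non-producing 1-split factors and then bound irreducible strings by a single Ramsey application with a common \emph{idempotent} value, your two-case dichotomy (all interior factors deletable vs.\ a producing $r$-split among them) replacing the induction. Both routes carry the same logical content; yours trades the length induction for one extra idempotency argument. Two caveats, neither fatal: first, your final step needs $K$ consecutive factors with \emph{equal idempotent} $\varphi$-image, which is what the clique construction in the proof of \Cref{prop:ramsey} actually produces but strictly more than its statement promises, so you must redo or restate that argument rather than cite it as a black box; second, in the stability lemma you only treat insertions landing inside a middle block --- insertions falling inside $u'$ or $w'$, and the verification that the \emph{other} triples of the lifted split stay producing (their contexts absorb the inserted factor but keep their $\varphi$-values by the 1-split equations), should be stated, though both are easy and the paper likewise leaves the first to the reader. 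Finally, your per-letter reduction is sound but unnecessary: the dichotomy of \Cref{lem:producing-triples} already gives $|f(uvw)|_\Pi=|f(uw)|_\Pi$ for triples outside $P(f,\Pi)$, so the whole argument runs with $\Pi$ directly.
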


Our proof of the above lemma uses the proposition below, analogous to
{\cite[Claim~20]{PebbleMinimization}}. Its statement is a bit stronger than
necessary for this purpose, but it will be reused in the proof of
\Cref{thm:polyreg-not-cf}; as for its proof, this is where a standard Ramsey
argument occurs.

\begin{restatable}{proposition}{framsey}
  \label{prop:ramsey}
  Let $\Gamma$ be an alphabet, $M$ be a finite monoid, $\varphi :
  \Gamma^* \to M$ be a morphism and $r\geq1$. There exists $N \in \naturalN$ such that any
  string $s = uvw \in \Gamma^*$ such that $|v| \geq N$ admits an $r$-split $s =
  u' v'_1 \dots v'_r w'$ according to $\varphi$ in which $u$ is a prefix of
  $u'$ and $w$ is a suffix of $w'$.
\end{restatable}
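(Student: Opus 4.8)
The plan is to prove \Cref{prop:ramsey}, a Ramsey-type statement guaranteeing that any sufficiently long factor of a string contains an $r$-split according to a morphism $\varphi:\Gamma^*\to M$ into a finite monoid, with the split confined to that factor.

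\textbf{Approach.} First I would recall what an $r$-split demands (\Cref{def:r-split}): we must exhibit $s = u'v'_1\dots v'_r w'$ with each $v'_i$ nonempty, with $\varphi(u')=\varphi(u'v'_1)=\dots=\varphi(u'v'_1\dots v'_r)$ (so appending any prefix of the $v'$-block leaves the $\varphi$-image of the left part fixed), and symmetrically $\varphi(w')=\varphi(v'_rw')=\dots=\varphi(v'_1\dots v'_rw')$ (the right boundary condition). The natural tool is the classical Ramsey argument on the ordered factorization of $v$: for a long enough $v$, we find many cut points at which the relevant $\varphi$-values stabilize and coincide.

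\textbf{Key steps.} I would set $N$ to be a Ramsey number depending only on $|M|$ and $r$, large enough to force the structure below, and fix an input $s=uvw$ with $|v|\ge N$. Write $v = a_1a_2\dots a_m$ with $m=|v|\ge N$, and consider the positions $0,1,\dots,m$ inside $v$, i.e.\ the cut points $p_j$ with $v = \text{(prefix of length }j)\cdot(\text{rest})$. To each \emph{pair} of cut points $j<j'$ associate the color $\bigl(\varphi(u\,v_{[1,j]}),\,\varphi(v_{[j+1,j']}),\,\varphi(v_{[j'+1,m]}\,w)\bigr)\in M^3$, where $v_{[a,b]}$ denotes the factor of $v$ between those positions. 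Since $M$ is finite there are at most $|M|^3$ colors, so by the finite Ramsey theorem, taking $N$ large (as a function of $|M|^3$ and $r$) forces a monochromatic increasing sequence of cut points $j_0<j_1<\dots<j_r$: all pairs among them share one color $(x,e,y)\in M^3$. Monochromaticity on pairs forces $e$ to be idempotent (since $\varphi(v_{[j_0+1,j_2]})=\varphi(v_{[j_0+1,j_1]})\varphi(v_{[j_1+1,j_2]})=e\cdot e$ must also equal $e$), and forces the left-values $\varphi(u\,v_{[1,j_i]})$ and right-values $\varphi(v_{[j_i+1,m]}w)$ to be constant in $i$. Now set $u' = u\,v_{[1,j_0]}$, $v'_i = v_{[j_{i-1}+1,\,j_i]}$ for $i=1,\dots,r$, and $w' = v_{[j_r+1,m]}\,w$. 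Each $v'_i$ is nonempty because the cut points are strictly increasing, $u$ is a prefix of $u'$ and $w$ a suffix of $w'$ by construction, and the two boundary conditions of \Cref{def:r-split} follow directly from the constancy of the left- and right-values. I would verify, e.g., $\varphi(u'v'_1\dots v'_i) = \varphi(u\,v_{[1,j_i]})$, which is independent of $i$, and symmetrically on the right.

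\textbf{Main obstacle.} The only real subtlety is bookkeeping: making sure the coloring is on \emph{pairs} of cut points (not single points), so that a monochromatic $(r+1)$-element set yields simultaneously the idempotency of the middle value and the \emph{two-sided} stabilization required by my variant of factorization — the left-boundary and right-boundary conditions are not automatic from a one-sided Ramsey argument and must both be read off from the single color $(x,e,y)$. I would also need to choose $N$ as the Ramsey number $R(|M|^3; r+1)$ (coloring pairs, seeking an $(r+1)$-clique) rather than anything smaller, and confirm this $N$ depends only on $M$ and $r$, uniformly in $u,v,w$, which it does since the color set has size $\le |M|^3$ regardless of the actual string. No appeal to regularity or to the producing-triples machinery is needed here; this is a purely combinatorial lemma that the later proofs (of \Cref{lem:dichotomy} and \Cref{thm:polyreg-not-cf}) consume.
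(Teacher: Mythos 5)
Your overall strategy---the finite Ramsey theorem applied to pairs of cut points lying inside $v$, with $u$ and $w$ absorbed into the extremal pieces---is the same as the paper's, but your bookkeeping has a genuine gap at $r=1$, which is precisely a case the paper relies on (\Cref{prop:ramsey} is invoked with $r=1$ in the proof of item (ii) of \Cref{thm:polyreg-not-cf}). In your coloring, the color of a pair $j<j'$ records the left value $\varphi(u\,v_{[1,j]})$ at its \emph{smaller} endpoint and the right value $\varphi(v_{[j'+1,m]}\,w)$ at its \emph{larger} endpoint. Hence a monochromatic set $j_0<\dots<j_r$ pins the left values only at $j_0,\dots,j_{r-1}$ and the right values only at $j_1,\dots,j_r$; the left value at $j_r$ and the right value at $j_0$ are not components of any edge color, yet both are needed for the split conditions, since $\varphi(u'v'_1\dots v'_r)$ \emph{is} the left value at $j_r$ and $\varphi(v'_1\dots v'_r w')$ \emph{is} the right value at $j_0$. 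When $r\ge 2$ these can be recovered: the left value at $j_1$ equals both $x$ (as a color component of the pair $(j_1,j_2)$) and $xe$ (by splitting at $j_0$), so $xe=x$, and then the left value at $j_r$ is $xe=x$; symmetrically $ey=y$. But when $r=1$, a monochromatic $(r+1)$-clique is a single edge, monochromaticity is vacuous, and nothing forces $xe=x$. Concretely: take $\Gamma=\{c\}$, $M=\{1,a\}$ with $aa=a$, $\varphi(c)=a$, $u=w=\varepsilon$, $v=c^N$; the pair $(j_0,j_1)=(0,1)$ is a legitimate monochromatic $2$-clique, yet $\varphi(u')=1\neq a=\varphi(u'v'_1)$, so your construction does not produce a $1$-split. (Your idempotency remark suffers from the same defect: it needs three points $j_0<j_1<j_2$, which do not exist when $r=1$.)

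The repair is cheap: ask Ramsey for a monochromatic clique of size $r+2$ (or $r+3$) and use only the $r+1$ smallest (resp.\ the middle $r+1$) points as cut points; then every cut point is the smaller endpoint of some edge, and the remaining right value at $j_0$ follows from $ey=y$ as above (resp.\ every cut point is both a smaller and a larger endpoint of some edge, so nothing needs to be derived). This is essentially how the paper avoids the issue, with a different trade-off between coloring and clique size: it colors the edge $(i,j)$ with the single element $\varphi(s[i\dots j])$---so $|M|$ colors instead of your $|M|^3$---requests a clique of size $r+3$, takes the middle $r+1$ points as cut points, and uses the two extremal clique points as sentinels through which both boundary conditions are verified, uniformly for all $r\ge 1$. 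Your richer coloring is a perfectly viable alternative (and for $r\ge 2$ your proof is correct as written), but as stated it does not establish the proposition in the range claimed.
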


To leverage \Cref{lem:dichotomy}, we combine it with an elementary property of
composition by substitution that does not depend on the previous technical
development. (Compare the assumptions of the lemma below with the conclusion of
\Cref{lem:dichotomy}.)

\begin{restatable}{lemma}{inlining}
  \label{lem:inlining}
  Let $g : \Gamma^* \to I^*$ be a regular function and, for each $i \in I$, let
  $h_i : \Gamma^* \to \Sigma^*$ be comparison-free polyregular of rank at most
  $k$. Suppose that $\displaystyle\sup_{s \in \Gamma^*} |g(s)|_J < \infty$ where
  \[ J = \begin{cases}
      \{i \in I \mid \rk(h_i) = k\} & \text{when}\ k \geq 1\\
      \{i \in I \mid |h_i(\Gamma^*)| = \infty \} & \text{when}\ k = 0
    \end{cases}\]
  (Morally, regular functions with finite range play the role of
  \enquote{comparison-free polyregular functions of rank $-1$}.) Then
  $\rk(\CbS(g,(h_i)_{i\in I}))\leq k$.
\end{restatable}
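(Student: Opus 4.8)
The plan is to beat the naive bound $\rk(\CbS(g,(h_i)_{i\in I})) \le 1 + \rk(g) + \max_{i} \rk(h_i) = 1+k$ coming from \Cref{prop:CbS-ind} by exploiting the hypothesis that the \enquote{expensive} letters of $J$ occur only boundedly often in the outputs of $g$. Write $N = \sup_{s} |g(s)|_J < \infty$. The at most $N$ occurrences of $J$-letters cut $g(w)$ into at most $N+1$ maximal blocks of non-$J$-letters; I would handle the (boundedly many) $J$-contributions and these blocks as a \emph{fixed-length} concatenation, each summand of which has rank at most $k$, and then conclude by closure of cfp functions under concatenation (\Cref{prop:cfp-conditionals}).

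Concretely, for each $t \in \{0,\dots,N\}$ I introduce the \enquote{block extractor} $\beta_t : I^* \to (I \setminus J)^*$ outputting the $t$-th maximal run of non-$J$-letters of its argument (or $\varepsilon$), and the \enquote{selector} $\alpha_t : I^* \to J \cup \{\bot\}$ returning the $t$-th $J$-letter (or $\bot$). Both are sequential, since they only need to count $J$-letters up to $N+1$; composing with $g$ (regular functions being closed under composition) yields regular $g_t = \beta_t \circ g$ and $a_t = \alpha_t \circ g$. By construction $g(w)$ factors as $g_0(w)\,a_1(w)\,g_1(w)\cdots a_N(w)\,g_N(w)$, with $\varepsilon$'s filling the missing pieces, so that
\[ \CbS(g,(h_i)_{i\in I})(w) = G_0(w) \cdot C_1(w) \cdot G_1(w) \cdots C_N(w) \cdot G_N(w), \]
where $G_t = \CbS(g_t,(h_i)_{i \in I \setminus J})$ collects the block contributions and $C_t(w) = h_{a_t(w)}(w)$ (read as $\varepsilon$ when $a_t(w) = \bot$) collects the $J$-contributions.

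It then remains to bound the ranks of the $2N+1$ summands. Each $C_t$ is given by a regular conditional (\Cref{prop:cfp-conditionals} iterated over the finitely many regular languages $a_t^{-1}(i)$, $i \in J$, whose regularity uses preservation of regular languages by preimage) selecting among the $h_i$ for $i \in J$ and the constant $\varepsilon$, so $\rk(C_t) \le k$ (the selected $h_i$ have rank $k$ when $k \ge 1$, and rank $0 = k$ when $k = 0$). For $G_t$ I split on $k$. When $k \ge 1$, every $i \in I \setminus J$ satisfies $\rk(h_i) \le k-1$, so \Cref{prop:CbS-ind} gives $\rk(G_t) \le 1 + 0 + (k-1) = k$. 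When $k = 0$, every $i \in I \setminus J$ has finite range, and I would establish the auxiliary fact that \emph{composition by substitution of a regular function with finitely many finite-range regular functions is regular}: the tuple of memberships in the languages $h_i^{-1}(v)$ colours $\Gamma^*$ into finitely many regular classes on each of which all $h_i(w)$ are constant, so on a class $G_t$ coincides with $m \circ g_t$ for a free-monoid morphism $m$, and a regular conditional over these classes makes $G_t$ regular, i.e.\ $\rk(G_t) \le 0$. In both cases all summands have rank at most $k$, and concatenating the fixed number $2N+1$ of them preserves rank at most $k$ by \Cref{prop:cfp-conditionals}.

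The main obstacle is the base case $k = 0$: there the bound of \Cref{prop:CbS-ind} is vacuous and one genuinely needs the observation that finite-range inner functions make a composition by substitution collapse to a regular function (via the finite colouring into constancy classes). The other delicate point—though really just bookkeeping—is that the block/selector decomposition must have \emph{fixed} length, which is exactly what the hypothesis $\sup_s |g(s)|_J < \infty$ supplies; without it the concatenation would have unboundedly many summands and the argument would revert to the weaker $1+k$ bound.
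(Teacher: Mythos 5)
Your proof is correct, but it recombines the pieces differently from the paper, and the difference is worth spelling out. Both proofs start the same way: the hypothesis $\sup_s |g(s)|_J \le N < \infty$ yields a fixed-length decomposition of $g$'s output into at most $N+1$ maximal $(I\setminus J)$-blocks separated by at most $N$ letters of $J$, extracted by sequential transducers composed with $g$; and at $k=0$ both proofs use the identical finite-range collapse (colouring $\Gamma^*$ into the finitely many regular constancy classes $\bigcap_i h_i^{-1}(v_i)$ and applying regular conditionals, cf.\ \Cref{lem:regular-conditionals}). The divergence is in the inductive case $k \ge 1$. The paper \emph{unfolds} each rank-$k$ function $h_i$ ($i \in J$) as $\CbS(g'_i,(h'_{i,x})_{x\in X_i})$ with $g'_i$ regular and $\rk(h'_{i,x}) \le k-1$, inlines the regular heads $g'_i$ into $g$ at the boundedly many $J$-positions to obtain a single regular function $f'$, and then exhibits $f$ as \emph{one} top-level $\CbS(f',\cdot)$ whose substituted family consists entirely of functions of rank $\le k-1$; the rank bound $\le 1+0+(k-1)=k$ is then read off directly from \Cref{prop:CbS-ind}. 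You instead keep the $h_i$ ($i\in J$) atomic: you select them via regular conditionals into the terms $C_t$ (rank $\le k$ by \Cref{prop:cfp-conditionals}), bound each block term $G_t = \CbS(g_t,(h_i)_{i\in I\setminus J})$ by $1+0+(k-1)=k$ via \Cref{prop:CbS-ind}, and glue the fixed number $2N+1$ of factors together using the closure of rank-$\le k$ functions under concatenation, which \Cref{prop:cfp-conditionals} indeed provides with the bound $\max(\rk f,\rk g)$. What each approach buys: the paper's construction produces an explicit single-CbS witness of the rank, which is the form its later arguments (e.g.\ in the proof of \Cref{lem:dichotomy-redux}) want to manipulate; yours is more modular, never needs to peel apart the CbS structure of the inner functions, and delegates all recombination to closure properties already established -- at the price of an iterated use of concatenation-with-rank-preservation rather than a one-shot structural witness. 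Both are complete proofs; your identification of the $k=0$ case as the genuinely new ingredient (where the $1+k$ bound of \Cref{prop:CbS-ind} is useless) matches where the paper's own effort is concentrated.
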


The above lemma can be compared to~\cite[Claim~22]{PebbleMinimization}, but it
also seems to be related to the way the \enquote{nested transducer} $R_{k+1}$ is
defined in the proof of the Dichotomy Lemma in~\cite{PebbleMinimization}:
indeed, $R_{k+1}$ can call either a $k$-nested subroutine or a $(k-1)$-nested
one.

The remainder of the proof of \Cref{thm:characterization-rank} consists mainly
of a rather technical induction on the rank, which we present in the appendix.

\section{Separation results}
\label{sec:separation}

Let us now demonstrate that the class of cfp functions
is incomparable with the class of HDT0L transductions and is a strict subclass
of polyregular functions.
\begin{restatable}{theorem}{cfnothdtOl}
\label{thm:cf-not-hdt0l}
There exist comparison-free polyregular functions which are not HDT0L:
\begin{enumerate}[(i)]
\item the function $a^n \in \{a\}^* \mapsto (a^n b)^{n+1}
  \in \{a,b\}^*$ for $a \neq b$;
\item the function $w \in \Sigma^* \mapsto
  w^{|w|}$ for $|\Sigma| \geq 2$ (a simplification of \Cref{ex:squaring});
\item \emph{(from~\cite[\S6]{Marble})} the cfp functions that map $a^n \# w
  \in \Sigma^*$ to $(w\#)^n$ for $a,\# \in \Sigma$, $a \neq \#$.
\end{enumerate}
\end{restatable}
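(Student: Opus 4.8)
The plan is to split every item into its two halves — membership in the comparison-free polyregular class, and non-membership in the HDT0L class — and to handle the three functions uniformly on each side.

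For membership I would write each function as a \emph{single} composition by substitution $\CbS(f,(g_i)_{i\in I})$ with $f$ and all $g_i$ regular, so that each function has rank at most $1$ by \Cref{def:cfpolyreg-cbs}. Taking $I=\{c\}$ a singleton in every case: for (i) take $f : a^n \mapsto c^{n+1}$ and $g_c : a^n \mapsto a^n b$; for (ii) take $f : w \mapsto c^{|w|}$ and $g_c = \id$; for (iii) take $f : a^n \# w \mapsto c^n$ and $g_c : a^n \# w \mapsto w\#$. Each of these $f$ and $g_c$ is plainly one-way (sequential or a letter-by-letter copy followed by appending a constant), hence regular, which settles this direction.

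For non-membership I argue by contradiction. The common idea is to \emph{freeze parameters} so that the output becomes an iterate of a single morphism, i.e.\ an \enquote{HD0L sequence} $g_k = h'(H^k(d))$ for a fixed morphism $H$, a fixed final morphism $h'$ and a fixed seed $d$, and then to exploit the rigid structure of such sequences. The engine is a classical fact about morphic words of polynomial growth (the degree/growth classification of D0L systems, after discarding the letters erased by $h'$): if $|h'(H^k(d))|$ is polynomially bounded but unbounded, there is a constant $C$, depending only on the system, such that for every $B$ and all large $k$ the word $h'(H^k(d))$ contains a factor $r^m$ with $1 \le |r| \le C$ and $|r^m|\ge B$. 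Items (ii) and (iii) then both reduce to one lemma: an HD0L sequence of polynomial growth that equals $u^{e_k}$ for a \emph{fixed} primitive word $u$ with exponents $e_k\to\infty$ must have $|u|\le C$. Indeed the long bounded-period factor $r^m$ is a factor of $u^{e_k}$; as soon as $|r^m|\ge |u|+|r|$ it carries both period $|r|$ and period $|u|$, so by Fine and Wilf it has period $\gcd(|r|,|u|)<|u|$, forcing $u$ to be a proper power, contradicting primitivity once $|u|>C$. For (iii) I fix any $w$ with $w\#$ primitive and $|w\#|>C$ and vary $n$, obtaining $h'(h_a^n(D_w))=(w\#)^n$ with $D_w = h_\#(h_{w}(d))$; for (ii) I feed inputs $u^k$ for a fixed primitive $u$ with $|u|>C$, obtaining $u^{k^2|u|}=h'(h_u^k(d))$ where $h_u$ is the composite morphism associated to $u$. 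Both contradict the lemma.

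Item (i) is the hard case, and I expect it to be the main obstacle: its output $(a^nb)^{n+1}$ is a power of the primitive word $a^nb$ whose length $n+1$ \emph{grows} with $n$, so the long bounded-period factors above are merely the $a$-runs (period $1$, length $\le n$), which are too short relative to the root to trigger Fine and Wilf. Here I would instead run a finer \enquote{staircase} argument. Since $(a^nb)^{n+1}$ has quadratic length, after replacing $h_a$ by a suitable power the system has a reachable, self-reproducing degree-$2$ letter $q$ with $h_a(q)=A\,q\,B$ where $q$ is the \emph{unique} top-degree letter and $\max(\deg A,\deg B)=1$; expanding $h_a^n(q)$ exhibits, as a factor of the output, the ordered concatenation of the generation blocks $h'(h_a^{j}(A))$ (and $h'(h_a^{j}(B))$), whose lengths grow like $\Theta(j)$. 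The output forces every maximal $a$-run to have the \emph{same} length $n$, whereas the number of $b$'s per block, being eventually affine in $j$, leads to a contradiction in every case: if some fixed finite block $h'(h_a^{j_0}(A))$ already contains two $b$'s it encloses a bounded internal run of length $\ne n$; if blocks eventually carry no $b$, consecutive pure-$a$ blocks merge into a run of length $\Theta(n^2)\gg n$; and otherwise boundary runs of consecutive blocks merge into a run of length $\Theta(j)$ that cannot be constant in $j$. The delicate points — establishing the $h_a(q)=AqB$ normal form with a unique top-degree letter, controlling the eventually-affine and eventually-periodic behaviour of block lengths and $b$-counts, and bookkeeping the merging of runs across block boundaries and past the central letter — are exactly where the real work lies.
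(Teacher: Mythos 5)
Your membership half is correct and is essentially the paper's: each of the three functions is written as a single $\CbS$ of sequential (hence regular) functions, giving rank at most $1$. Your non-membership half takes a genuinely different route from the paper (degree/growth structure theory of polynomially growing D0L systems plus Fine--Wilf, where the paper uses only pigeonhole and letter counting), and for item (iii) the skeleton is plausible; but as written there is a genuine gap in item (ii), and the hardest ingredients elsewhere are deferred rather than proved. The gap in (ii) is a circularity: your engine lemma yields a constant $C$ \emph{depending on the system being iterated}, and for inputs $u^k$ that system has morphism $h_u = h_{u_1}\circ\dots\circ h_{u_{|u|}}$, which depends on $u$. So the instruction \enquote{fix primitive $u$ with $|u|>C$} asks for $u$ longer than a quantity that is only defined once $u$ is chosen. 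To run the argument you would need $C$ bounded uniformly over all composites $h_u$, which your lemma does not assert and which is not to be expected: image lengths of $h_u$, hence the natural bound on the period of the long repetitions the lemma produces, grow with $|u|$. (A milder version of the same issue affects (iii): there the seed $D_w$ depends on $w$, so you need $C$ uniform in the seed; this is true for any reasonable proof of the engine, but it must be stated and used.) The standard repair for (ii) is exactly the paper's reduction: since $(w\mapsto w^{|w|})\circ(a^n\mapsto a^nb) = (a^n\mapsto(a^nb)^{n+1})$, an HDT0L system for (ii) gives one for (i) by replacing the seed $d$ with $h_b(d)$, so (ii) follows from (i) in one line.

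The second problem is that the parts you defer are most of the proof. The engine lemma (arbitrarily long factors of uniformly bounded period in polynomially growing HD0L sequences) is a real theorem requiring the rank structure of D0L systems, and your treatment of item (i) --- the self-reproducing degree-$2$ letter $q$ with $h_a(q)=AqB$ and $\max(\deg A,\deg B)=1$, the eventually-affine block statistics, the run bookkeeping --- is precisely the content you acknowledge is missing. It is worth knowing that none of this machinery is needed. The paper's proof of (i) (and, analogously, (iii)) is short and elementary: by pigeonhole there is an infinite $X\subseteq\naturalN$ on which the set of letters occurring in $h_a^n(d)$ is a constant $\Delta'$. If some $h'(h_a^m(r))$ with $r\in\Delta'$ contains two $b$'s, hence a fixed factor $ba^kb$, then this fixed factor occurs in the output $(a^{m+n}b)^{m+n+1}$ for every $n\in X$, which is impossible once $m+n>k$; otherwise every such image contains at most one $b$, so for every $m$ the number of $b$'s in the output on input $a^{m+\min X}$ is at most $|h_a^{\min X}(d)|$, whereas it should be $m+\min X+1$. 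So: your plan for (iii) is completable modulo a seed-uniform engine lemma, your plan for (i) is an outline whose load-bearing steps are unproven, and your argument for (ii) fails as stated, though it is rescued by the reduction above.
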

\begin{remark}
  \label{rem:unary-square-hdt0l}
  The first example in \cite[\S5]{Marble} shows that
  $a^n \mapsto a^{n \times n}$ is HDT0L (via the equivalent model of marble
  transducers), hence the necessity of $|\Sigma| \geq 2$ above. More generally,
  Douéneau-Tabot has shown very recently that \emph{every polyregular function
    with unary output alphabet is HDT0L}~\cite{Gaetan}. So polyregular functions
  with unary output coincide with \emph{polynomial growth $\naturalN$-rational
    series} (cf.~\Cref{rem:rational-series}), and the latter admit several
  algebraic characterizations in the literature (see~\cite{Reutenauer}
  and~\cite[Chapter~9, Exercise~1.2]{BerstelReutenauer}).
\end{remark}
\begin{restatable}{theorem}{polyregnotcf}
  \label{thm:polyreg-not-cf}
  Some HDT0L transductions are polyregular but not comparison-free:
  \begin{enumerate}[(i)]
  \item $f : a^n \in \{a\}^* \mapsto ba^{n-1}b\ldots{}baabab$ (with
    $f(\varepsilon) = \varepsilon$ and $f(a) = b$);
  \item $\fmap(a^n\mapsto a^{n\times{}n}) : a^{n_1}\#\dots\#a^{n_k} \mapsto
    a^{n_1\times n_1}\#\dots\#a^{n_k\times n_k}$ (cf.\ \Cref{def:fmap}).
  \end{enumerate}
\end{restatable}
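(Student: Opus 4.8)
The plan is to dispatch the two membership claims quickly and concentrate the effort on non-membership in the cfp class. For the positive parts: function (i) is computed by the single-state copyful SST with $X \mapsto aX$, $Y \mapsto bXY$ and output $Y$ (which is $1$-layered for $R_0=\{X\}$, $R_1=\{Y\}$), so it is HDT0L by \Cref{thm:equiv-hdt0l} and polyregular by \Cref{thm:polyreg-layered}; for (ii) the inner map $a^n\mapsto a^{n\times n}$ is HDT0L by \Cref{rem:unary-square-hdt0l} and polyregular, and both properties are preserved by $\fmap$ (\Cref{prop:fmap-hdt0l} and the remark following it). It then remains to show that neither function is cfp. In both cases the output has growth $O(N^2)$ in the input length $N$ while being superlinear, so by \Cref{thm:characterization-rank} it suffices to derive a contradiction from the assumption that the function is cfp of rank exactly $1$; writing $f = \CbS(h,(g_i)_{i\in I})$ with $h$ and all $g_i$ regular, we then have $f(w)=g_{i_1}(w)\cdots g_{i_m}(w)$ where $h(w)=i_1\cdots i_m$.

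For (i) the input is unary, and I would exploit the rigid shape of unary regular functions: iterating a single transition shows that any regular $\rho:\{a\}^*\to\Delta^*$ satisfies, for $n$ large within each residue class modulo some period, $\rho(a^n)=p_0\,q_1^{t_1(n)}p_1\cdots q_r^{t_r(n)}p_r$ with $r=O(1)$, fixed $p_\bullet,q_\bullet$, and each $t_j$ affine in $n$ (this follows from the finiteness of $\transmonocl_{R,\varnothing}$, \Cref{prop:transmonocl-finite}, through the shape--labels description of \Cref{prop:shape-plus-labels}; it is also the content of the unary characterisation of \Cref{sec:sequences}, extending~\cite{Eugenia}). Applying this to $h$ splits $h(a^n)$ into $O(1)$ phases, so that $f(a^n)$ is, up to $O(1)$ fixed pieces of glue, a concatenation of $O(1)$ factors of the form $W_j(a^n)^{t_j(n)}$, each $W_j(a^n)$ being a fixed concatenation of finitely many $g_i(a^n)$. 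The maximal runs of $a$ inside a block $W^{t}$ all have one of the finitely many lengths already occurring in $W$ (its internal runs, plus one wrap-around run), independently of the repetition count $t$; hence $f(a^n)$ can contain only $O(1)$ distinct $a$-run-lengths. This contradicts the fact that $f(a^n)=b a^{n-1}b a^{n-2}\cdots bab$ exhibits the $n-1$ pairwise distinct run-lengths $1,2,\dots,n-1$.

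For (ii) I would instead use the equivalent form $f = g\circ\cfpow^{(2)}_{\{a,\#\}}$ with $g$ regular (\Cref{thm:characterization-rank}). Here $\cfpow^{(2)}_{\{a,\#\}}(w)$ enumerates, for every outer position $p$, the marker $(1,w_p)$ followed by a full inner scan $(0,w_1)\cdots(0,w_{|w|})$; on input $w=a^{n_1}\#\cdots\#a^{n_k}$ the machine $g$ must emit exactly one $a$ for each pair $(p,q)$ of positions lying in the \emph{same} $a$-block, and nothing else, between the correct pair of output $\#$'s. The obstruction is that deciding whether the inner position $q$ lies in the same block as the current outer position $p$ amounts to comparing two block indices ranging in $\{1,\dots,k\}$, which a fixed finite-state two-way machine cannot do once $k$ exceeds its resources. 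To turn this into a contradiction I would take inputs $w_k=(a^2\#)^{k-1}a^2$ with $k\to\infty$, fix a morphism $\varphi$ refining the producing-triple morphism $\nu_g$ of \Cref{lem:producing-triples}, and apply the Ramsey-type \Cref{prop:ramsey} to locate, among the $k$ blocks, two distinct blocks whose boundaries are visited by $g$ in the same idempotent $\varphi$-class; \Cref{lem:producing-triples} then forces identical producing behaviour at both, so the number of output $a$'s landing between the corresponding $\#$'s cannot simultaneously equal the two prescribed quadratic values while the total number of output $\#$'s stays equal to $k-1$.

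The routine part is the membership together with the unary structure lemma; the genuinely delicate step is the two-way counting argument for (ii). The main obstacle is precisely that $g$ is a two-way transducer, so a naive pigeonhole on block indices fails and one must route the \enquote{cannot compare two unbounded block indices} intuition through the idempotent $r$-split factorisations furnished by \Cref{prop:ramsey} together with the producing/non-producing dichotomy of \Cref{lem:producing-triples}, carefully tracking both the $a$-count and the $\#$-count of the output as the number of blocks and the block sizes are pumped independently.
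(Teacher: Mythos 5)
Your membership arguments and your treatment of item (i) are sound. For (i) you take a genuinely different (and legitimate) route from the paper: you first invoke \Cref{thm:characterization-rank} to force rank exactly $1$, then apply the structure of regular functions on unary inputs (the paper's \Cref{lem:reg-seq}) to both the outer function $h$ and the inner functions $g_i$ of a decomposition $\CbS(h,(g_i)_{i\in I})$, and count distinct $a$-run lengths. The paper instead performs no rank reduction: it proves via \Cref{thm:cfp-seq} that \emph{every} cfp sequence is poly-uniform (\Cref{def:poly-uniformity}), i.e.\ has boundedly many maximal-run lengths, by induction over polynomial word expressions of arbitrary star-height. Both yield the same contradiction with the $n-1$ distinct run lengths of $ba^{n-1}b\cdots bab$. (Two minor slips in your write-up: the \enquote{glue} pieces also depend on $n$, and if some repeated block lies in $a^*$ then the runs of its $t$-th power do \emph{not} have lengths occurring in the block itself -- but in both cases the number of distinct lengths stays $O(1)$, which is all you need.)

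Item (ii), however, has a genuine gap, located exactly where you place the difficulty. By choosing the factorization $f=g\circ\cfpow^{(2)}_{\{a,\#\}}$, you make $g$'s input the expanded word $\cfpow^{(2)}_{\{a,\#\}}(w)$, whereas all the tools you cite (\Cref{lem:producing-triples}, \Cref{prop:ramsey}, the morphism $\nu_g$) speak about 1-splits and pumping of $g$'s \emph{own} input. Pumping a factor of $\cfpow^{(2)}_{\{a,\#\}}(w)$ produces strings outside the image of $\cfpow^{(2)}_{\{a,\#\}}$, on which the equation $f=g\circ\cfpow^{(2)}_{\{a,\#\}}$ imposes no constraint on $g$ at all; conversely, modifying $w$ (growing a block, adding a block) does not act on $\cfpow^{(2)}_{\{a,\#\}}(w)$ by factor insertion but changes every inner scan, so the split machinery does not transport along $\cfpow^{(2)}_{\{a,\#\}}$. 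Hence \enquote{identical producing behaviour at two blocks} cannot be converted into any statement about what $f$ requires. Two further problems: on your test family $w_k=(a^2\#)^{k-1}a^2$ all blocks are equal, so \enquote{the two prescribed quadratic values} are both $4$ and identical behaviour at two blocks is perfectly consistent with the required output $(a^4\#)^{k-1}a^4$; and \Cref{lem:producing-triples} controls how letter counts \emph{change under pumping}, not the absolute number of $a$'s landing between two specific output $\#$'s -- your premise that $g$ must emit \enquote{one $a$ per pair of positions in the same block, between the correct pair of output $\#$'s} is an origin-semantics assumption that a regular function need not satisfy.

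For contrast, the paper keeps the rank-1 decomposition $f=\CbS(g,(h_i)_{i\in I})$, in which every component reads the original word, so all pumping happens on genuine inputs. It introduces auxiliary regular functions $\rho_{J,k}$ extracting the $k$-th block of $g$'s output, combines the morphisms of all functions involved into a single $\varphi$, applies \Cref{prop:ramsey} to obtain a 1-split inside each of the $|I|+1$ blocks of $s=a^N(\#a^N)^{|I|}$, pumps these blocks independently and monotonically (\Cref{prop:monotone-multipumping}), and concludes by counting: constancy of the number of output $\#$'s forces each product $\len{\widetilde{g}(x)}_i\cdot\len{\widetilde{h}_i(x)}_\#$ to be constant, the quadratic growth of each output block forces, for each $k$, an index $i_k$ with both $\len{\widetilde\rho_k}_{i_k}$ and $\len{\widetilde{h}_{i_k}}$ unbounded, and the pigeonhole $i_k=i_l$ contradicts the constancy of the $k$-th block length. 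Salvaging your variant would require machinery tracking how $g$ behaves on $\cfpow^{(2)}_{\{a,\#\}}(w)$ as $w$ itself is pumped -- which is essentially what the CbS form gives for free.
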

\begin{remark}
  \label{rem:not-cfp-unary}
  The function
  $a^{n_1}\#\dots\#a^{n_k} \mapsto a^{n_1 \times n_1 + \dots + n_k \times n_k}$
  obtained by erasing the $\#$s in the output of
  $\fmap(a^n\mapsto a^{n\times{}n})$ is also \emph{not} comparison-free. This
  result implies the second item of \Cref{thm:polyreg-not-cf} by composition
  with the erasing morphism; we do not prove it here, but it appears in
  Douéneau-Tabot's aforementioned paper~\cite{Gaetan}. Therefore, according
  to~\cite{Gaetan}, \emph{not} every polyregular function with unary output is
  comparison-free.
\end{remark}

To see why the first of the two functions in \Cref{thm:polyreg-not-cf} is HDT0L,
observe that it is \Cref{ex:prefixes} for $\Gamma = \{a\}$ (taking
$b = \underline{a}$). As for the second one, combine \Cref{prop:fmap-hdt0l} and
the first observation in \Cref{rem:unary-square-hdt0l}.


The non-membership parts of Theorems~\ref{thm:cf-not-hdt0l}
and~\ref{thm:polyreg-not-cf} require more work. For the former, we
use pumping arguments on HDT0L systems. Item (ii) of \Cref{thm:polyreg-not-cf}
is handled by first appealing to \Cref{thm:characterization-rank} to reduce to
showing that $\fmap(a^n \mapsto a^{n\times n}) \neq \CbS(g, (h_i)_{i \in I})$
when $g$ and all the $h_i$ are \emph{regular} functions; a combination of
pumping and of a combinatorial argument then shows that inputs with $\card{I}$
occurrences of $\#$ suffice to discriminate the two sides of the inequality.
This result also has the following consequence:
\begin{corollary}
  \label{cor:fmap-cfpolyreg}
  Comparison-free polyregular functions are not closed under $\fmap$.
\end{corollary}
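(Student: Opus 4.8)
The plan is to derive \Cref{cor:fmap-cfpolyreg} directly from \Cref{thm:polyreg-not-cf}(ii). The key observation is that the second function in that theorem is literally an instance of $\fmap$: we have
\[
  \fmap(a^n \mapsto a^{n\times n}) : a^{n_1}\#\dots\#a^{n_k} \mapsto
  a^{n_1\times n_1}\#\dots\#a^{n_k\times n_k},
\]
so it is exactly $\fmap(g)$ where $g$ is the inner function $a^n \mapsto a^{n\times n}$ over the unary alphabet $\{a\}$.

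So the first step is to establish that the inner function $g : a^n \mapsto a^{n\times n}$ is itself comparison-free polyregular. This should be easy: $g$ is a polyregular function with quadratic growth over a unary alphabet, and it can be obtained from $\cfsquaring_{\{a\}}$ (or a direct composition-by-substitution) followed by an erasing morphism that collapses $\underline a \mapsto a$. Concretely, $\cfsquaring_{\{a\}}(a^n) = (\underline a a^n)^n$ up to the output length bookkeeping; composing with the morphism that sends both $a$ and $\underline a$ to $a$ yields a word of length $n(n{+}1)$ or $n^2$, and since both $\cfsquaring_\Gamma$ and morphisms are cfp and cfp functions are closed under composition (\Cref{thm:composition-squaring}), the inner function is cfp. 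Alternatively one invokes \Cref{ex:squaring} together with \Cref{prop:cfp-conditionals} to adjust the exact growth, but any such $g$ with $|g(a^n)| = \Theta(n^2)$ and unary output suffices.

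The second step is the contrapositive argument. Suppose for contradiction that cfp functions were closed under $\fmap$. Then $\fmap(g)$ would be comparison-free polyregular. But \Cref{thm:polyreg-not-cf}(ii) asserts precisely that $\fmap(a^n \mapsto a^{n\times n})$ is \emph{not} comparison-free. This is an immediate contradiction, so the closure must fail. The entire argument is therefore a one-line deduction once the two ingredients are in place.

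I do not expect any genuine obstacle here, since the corollary is explicitly flagged in the excerpt as a consequence of \Cref{thm:polyreg-not-cf}. The only point requiring a sentence of justification is that the inner unary-squaring function is indeed cfp (otherwise $\fmap(g)$ would lie outside the scope of the putative closure property and the contradiction would not bite). That verification is routine given \Cref{ex:squaring} and \Cref{thm:composition-squaring}. Thus the proof is essentially: exhibit a cfp function $g$ whose $\fmap$-image is known to be non-cfp, and conclude.
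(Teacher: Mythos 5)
Your route is the paper's own: exhibit a cfp function $g$ whose image under $\fmap$ is the function that \Cref{thm:polyreg-not-cf}(ii) proves non-comparison-free, and conclude by contradiction with the putative closure. The logic of that deduction is sound, but the step you dismiss as routine -- that $g = (a^n \mapsto a^{n\times n})$ is cfp -- is exactly where your write-up fails. Composing $\cfsquaring_{\{a\}}$ with the morphism sending both $a$ and $\underline{a}$ to $a$ yields $a^n \mapsto a^{n(n+1)}$, since $\cfsquaring_{\{a\}}(a^n) = (\underline{a}\,a^n)^n$ has length $n(n+1)$; this is \emph{not} the function in the theorem. Neither of your fallbacks closes the distance. \Cref{prop:cfp-conditionals} provides only regular conditionals and pointwise concatenation, and neither operation can trim $a^{n^2+n}$ down to $a^{n^2}$: concatenation only lengthens outputs, and conditionals only switch between functions on regular pieces of the domain. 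More seriously, the claim that \enquote{any such $g$ with $|g(a^n)| = \Theta(n^2)$ and unary output suffices} misapplies the theorem: \Cref{thm:polyreg-not-cf}(ii) asserts non-comparison-freeness of the single function $\fmap(a^n \mapsto a^{n\times n})$, and its proof (a pumping and counting argument tied to the exact values $(np_k+q_k)^2$) is not stated, and would have to be redone, for an arbitrary quadratic-growth unary function. So as written, the contradiction is derived for a function you have not shown to be cfp -- in your own words, it \enquote{would not bite}.

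The repair is a one-liner, in either of two ways. First, your own construction works if the morphism \emph{erases} the underlined letters instead of collapsing them: $(\underline{a}\,a^n)^n$ contains exactly $n^2$ plain occurrences of $a$, so postcomposing $\cfsquaring_{\{a\}}$ with $\underline{a} \mapsto \varepsilon$, $a \mapsto a$ gives exactly $a^n \mapsto a^{n\times n}$, which is cfp by closure under composition (\Cref{thm:composition-squaring}). Second, and this is the paper's intended \enquote{simplification of \Cref{ex:squaring}}: $a^n \mapsto a^{n\times n}$ is $w \mapsto w^{|w|}$ on the unary alphabet, and $w \mapsto w^{|w|} = \CbS(f,(g_i)_{i \in \{\#\}})$ where $f : w \mapsto \#^{|w|}$ is a monoid morphism (hence regular) and $g_{\#} = \mathrm{id}$, so it is cfp of rank at most $1$; note that the hypothesis $|\Sigma| \geq 2$ in \Cref{thm:cf-not-hdt0l}(ii) is needed only for the non-HDT0L part, not for comparison-freeness. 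With either fix substituted for your squaring-plus-relabeling construction, your argument is complete and coincides with the paper's.
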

\begin{remark}
  \label{rem:cfp-list}
  Contrast with \Cref{prop:fmap-hdt0l}. The discussion that follows that
  proposition lends some significance to the above corollary: the latter rules
  out the obvious conjectures for a characterization of cfp functions in the
  style of regular/polynomial list functions.
\end{remark}

As for item (i) of \Cref{thm:polyreg-not-cf}, it concerns a function
whose domain consists of words over a unary alphabet, i.e., up to isomorphism,
a \emph{sequence}. This motivates the study of such sequences, which is the subject of the next section.

\section{Comparison-free polyregular sequences}
\label{sec:sequences}

From now on, we identify $\naturalN$ with the set of words $\{a\}^*$ and freely
speak, for instance, of cfp sequences $\naturalN \to \Gamma^*$ instead of cfp
functions $\{a\}^* \to \Gamma^*$. It turns out that cfp sequences admit a
characterization as finite combinations of what we call \emph{poly-pumping
  sequences}.

\begin{definition}
  A \emph{poly-pumping sequence} is a function of the form $\interp{e} : \naturalN \to \Sigma^*$ where
  \begin{itemize}
  \item $e$ is a \emph{polynomial word expression} generated by $e ~
    \mathrel{::=} ~ w \bnfalt e \cdot e' \bnfalt e^*$ where $w \in \Sigma^*$;
  \item $\interp{w}(n) = w$, $\interp{e \cdot e'}(n) =
    \interp{e}(n)\interp{e'}(n)$ and $\interp{e^*}(n) =
    (\interp{e}(n))^n$.
  \end{itemize}
  The \emph{star-height} of a polynomial word expression is defined in the usual
  way.
\end{definition}

\begin{restatable}{theorem}{cfpseq}
  \label{thm:cfp-seq}
  Let $s : \naturalN \to \Sigma^*$ and $k\in\naturalN$. The sequence $s$ is
  comparison-free polyregular with $\rk(s)\leq k$ if and only if there exists $p
  > 0$ such that, for any $m < p$, there is a polynomial word expression $e$ of
  star-height at most $k+1$ such that $\forall n\in\naturalN,\; s((n+1)p+m)
  = \interp{e}(n)$.
\end{restatable}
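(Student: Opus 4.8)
The plan is to prove the two directions separately, using the inductive rank structure of cfp functions (via \Cref{prop:CbS-ind}) and leveraging the regular-function base case through the characterization of regular sequences with unary input. The key conceptual point is that cfp sequences are built from regular sequences by iterated composition by substitution, and that the output of a regular function on a unary input $a^n$ is, for $n$ in a fixed residue class modulo some period $p$, described by a single polynomial word expression of star-height $1$ (i.e.\ of the form $w_0 (w_1)^* w_2$-like shapes, but more precisely a concatenation of constant and ``pumped'' factors). This base case is essentially the known classification of regular functions on unary inputs extended from~\cite{Eugenia}, and I would state it first as a lemma: for regular $g : \{a\}^* \to \Sigma^*$ there is a period $p$ so that each residue class $m < p$ gives a star-height-$1$ expression $e_m$ with $g((n+1)p+m) = \interp{e_m}(n)$. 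The star-height bookkeeping is the crux, so I would track it carefully throughout.

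For the ``only if'' direction (cfp with $\rk(s) \le k$ implies existence of star-height $\le k+1$ expressions), I would induct on $k$. The base case $k=0$ is the regular-sequence lemma above, giving star-height $1 = 0+1$. For the inductive step, write $s = \CbS(f, (g_i)_{i \in I})$ with $f : \{a\}^* \to I^*$ regular and each $g_i$ of rank $\le k-1$. By the base lemma applied to $f$, on each residue class $m$ the output $f((n+1)p+m)$ is itself a poly-pumping sequence of letters in $I^*$, described by a star-height-$1$ expression $e^f_m$; by induction each $g_i$ on the relevant residue class is described by an expression of star-height $\le k$. The substitution $\CbS$ then ``plugs in'' the $g_i$-expressions at each letter position dictated by $e^f_m$. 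The delicate point is that a letter $i$ appearing \emph{inside a star} in $e^f_m$ gets substituted by $g_i$ evaluated at the \emph{same} argument $n$, so the resulting expression wraps a star-height-$\le k$ expression inside one more star, yielding star-height $\le k+1$; letters outside all stars contribute star-height $\le k$. I would need to refine the periods (take a common multiple of the periods of $f$ and all the $g_i$) and verify that the substitution operation on polynomial word expressions raises star-height by at most one, which is the main technical obstacle.

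For the ``if'' direction (star-height $\le k+1$ expressions of the stated form imply $s$ is cfp with $\rk(s) \le k$), I would proceed by induction on the star-height of the expression $e$, showing more precisely that a poly-pumping sequence given by a star-height-$(k+1)$ expression is cfp of rank $\le k$. The key observation is that the star operation $e \mapsto e^*$ corresponds exactly to a single composition by substitution with a regular ``prefix-counting'' or ``repetition'' function on $\{a\}^*$: namely $\interp{e^*}(n) = (\interp{e}(n))^n$ is obtained by taking the regular function $a^n \mapsto (\text{marker})^n$ and substituting each marker by $\interp{e}$, which bumps the rank by one. Concatenation $e \cdot e'$ is handled by closure of cfp functions under concatenation (\Cref{prop:cfp-conditionals}), and since the period $p$ and residue $m$ only require selecting behavior on a regular set of inputs, regular conditionals (again \Cref{prop:cfp-conditionals}) let me assemble the finitely many per-residue expressions into a single cfp sequence without increasing the rank. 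The main obstacle here is matching the rank arithmetic precisely: I must ensure that nesting of stars up to depth $k+1$ produces rank exactly $\le k$ and not more, which forces the careful choice that a \emph{single} star increment corresponds to a \emph{single} $\CbS$ step, with the innermost star-free (star-height $1$, i.e.\ the constant words under one star layer, but here star-height $1$ means one star) piece landing in the regular base case of rank $0$. I expect the bookkeeping of periods (refining $p$ to a common period across the recursion) and the exact star-height-to-rank correspondence to be where most of the care is needed.
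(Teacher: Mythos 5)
Your proposal follows essentially the same route as the paper: the base case is the classification of regular sequences as ultimately periodic combinations of pumping sequences (which the paper proves from scratch via idempotents in the copyless-SST transition monoid $\transmonocl_{R,\varnothing} \wr Q$ rather than leaning on~\cite{Eugenia}, where it is only sketched — and this base case is where most of the paper's effort goes, not the star-height bookkeeping), and the inductive step is exactly the paper's substitution operation on polynomial word expressions mirroring $\CbS$, with the same period synchronization. Your explicit treatment of the converse direction — one star level realized as a single $\CbS$ with the regular repetition function $a^n \mapsto i^n$, concatenation via \Cref{prop:cfp-conditionals}, and regular conditionals plus rank-preserving precomposition with re-indexing to reassemble the residue classes — is correct and spells out what the paper treats as straightforward.
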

In short, the cfp sequences are exactly the \emph{ultimately periodic
  combinations} of poly-pumping sequences.
Our proof strategy is an induction on $k$.

The base case $k=0$ says that regular sequences are ultimately periodic
combinations of \emph{pumping sequences} $n \mapsto
u_0 (v_1)^n \dots (v_l)^n u_l$.
An essentially equivalent result is stated with a proof sketch using 2DFTs
in~\cite[p.~90]{Eugenia}; we propose an alternative proof using
copyless SSTs.
(Non-deterministic two-way transducers (2NFTs) taking unary inputs have also
been studied~\cite{Guillon}; furthermore, the notion of \enquote{$k$-iterative
  language} that appears in a pumping lemma for general 2NFTs~\cite{Pumping2NFT}
is related to the shape of the above pumping sequences.)

To make the inductive step go through, it is enough to synchronize the
periods of the different poly-pumping sequences involved and to observe that
$\CbS(\interp{e},(\interp{e'_i})_{i \in I})$ is realized by an expression
obtained by substituting the $e'_i$ for $i$ in $e$.

Coming back to \Cref{thm:polyreg-not-cf}, we show that $a^n \mapsto
ba^{n-1}b\ldots bab$ is not comparison-free polyregular by proving that its
subsequences are \emph{not} poly-pumping: for every poly-pumping sequence
$s : \naturalN \to \{a,b\}^*$, there is a uniform bound on the number of distinct
contiguous subwords of the shape $baa\ldots ab$ occuring in each $s(n)$ for
$n\in\naturalN$.
Another consequence of \Cref{thm:cfp-seq} that we establish by induction over
expressions contrasts with \Cref{cor:fmap-cfpolyreg}:
\begin{restatable}{corollary}{cfpmapseq}
\label{cor:cfp-map-seq}
If $f : \Gamma^* \to \Sigma^*$ and $s : \naturalN \to (\Gamma \cup \{\#\})^*$ are cfp, so is $\fmap(f) \circ s$.
\end{restatable}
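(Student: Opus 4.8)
The plan is to go through the characterization of cfp sequences by poly-pumping expressions (\Cref{thm:cfp-seq}). First I would use that theorem to reduce the claim to expressions: since $s$ is cfp, there is a period $p>0$ such that on each residue class $N\equiv m\pmod p$ (with $N\ge p$) we have $s(N)=\interp{e_m}(n)$ for $n=(N-m)/p-1$ and a fixed polynomial word expression $e_m$ over $\Gamma\cup\{\#\}$, the finitely many inputs $N<p$ forming a finite table. Because cfp functions are closed under regular conditionals and concatenation (\Cref{prop:cfp-conditionals}) and under composition (\Cref{thm:composition-squaring}), and since the reindexing $a^N\mapsto a^{(N-m)/p-1}$ is a regular function on the unary alphabet, it suffices to prove that for every polynomial word expression $e$ over $\Gamma\cup\{\#\}$ the sequence $F_e := \fmap(f)\circ\interp{e}$ is cfp.

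I would establish this by a simultaneous induction over $e$ on the cfp-ness of four auxiliary quantities attached to $\interp{e}(n)=u_0\#u_1\#\dots\#u_\ell$ (with $u_i\in\Gamma^*$): the first $\#$-free block $p_e(n)=u_0$, the last $\#$-free block $s_e(n)=u_\ell$, the \emph{interior $f$-map bounded by hashes} $\hat C_e(n)=\#f(u_1)\#\dots\#f(u_{\ell-1})\#\in(\Sigma\cup\{\#\})^*$ (defined when $\ell\ge1$), and the predicate $z_e(n)=[\ell\ge1]$ telling whether $\interp{e}(n)$ contains a $\#$. A direct induction shows $z_e$ is an \emph{ultimately periodic} (hence regular) predicate on $\naturalN$: it is constant for a base word, an OR for a concatenation, and $z_{(e')^*}(n)=(n\ge1)\wedge z_{e'}(n)$ for a star. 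The target is then recovered by a regular conditional on $z_e$: one has $F_e(n)=f(p_e(n))$ when $z_e(n)=0$, and $F_e(n)=f(p_e(n))\,\hat C_e(n)\,f(s_e(n))$ when $z_e(n)=1$, where $f\circ p_e$ and $f\circ s_e$ are cfp by closure under composition.

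For the base case and for concatenation the inductive formulas are routine case analyses on $z_{e_1},z_{e_2}$, using only closure under regular conditionals, concatenation, and $f\circ(\text{cfp})$. The one genuinely new feature is the merged boundary block $u_\ell v_0=s_{e_1}(n)\,p_{e_2}(n)$ created at a junction when both factors contain a $\#$, which forces the rule $\hat C_{e_1\cdot e_2}=\hat C_{e_1}\cdot f(s_{e_1}p_{e_2})\cdot\hat C_{e_2}$; here $f$ is applied to a \emph{concatenation} of two cfp sequences, which is exactly why the end blocks $p_e,s_e$ must be carried explicitly. The hard case is the star $e=(e')^*$, handled by a regular conditional on $z_{e'}$. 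When $\interp{e'}(n)$ is $\#$-free, $\interp{e}(n)=\interp{e'}(n)^n$ is $\#$-free and $F_e(n)=f(\interp{e'}(n)^n)$ reduces to $f$ of a cfp sequence. When $\interp{e'}(n)$ contains a $\#$, unfolding $\interp{e'}(n)^n$ into blocks produces $n-1$ merged junction blocks $u_\ell u_0$, yielding
\[ \hat C_{(e')^*}(n) \;=\; \hat C_{e'}(n)\cdot\big(f(s_{e'}(n)p_{e'}(n))\cdot \hat C_{e'}(n)\big)^{\,n-1}, \]
with $p_{(e')^*}=p_{e'}$, $s_{(e')^*}=s_{e'}$, and the $n=0$ point absorbed into the finite table.

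I expect the main obstacle to be exactly this star rule, and specifically the fact that the exponent is $n-1$ rather than a fixed constant. This is resolved by an auxiliary lemma: \emph{if $W:\naturalN\to\Delta^*$ is cfp then so are $n\mapsto W(n)^n$ and $n\mapsto W(n)^{n-1}$}. Its proof again uses \Cref{thm:cfp-seq}: on a residue class $N=(n+1)q+m$ one has $W(N)=\interp{w}(n)$, and then $\interp{w}(n)^{N}=\interp{(w^q)^*\cdot w^{q+m}}(n)$ (and similarly with exponent $N-1$, using $q+m-1\ge0$ since $q>0$), an expression whose star-height has grown by exactly one; reassembling over residues gives a cfp sequence. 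Applying this lemma to $W(n)=f(s_{e'}(n)p_{e'}(n))\cdot\hat C_{e'}(n)$ — which is cfp by the induction hypothesis together with closure under composition and concatenation — shows $\hat C_{(e')^*}$ is cfp and closes the induction. The remaining steps, namely reassembling the residue classes of $s$ together with the finitely many exceptional inputs into a single cfp function, are immediate from \Cref{prop:cfp-conditionals}.
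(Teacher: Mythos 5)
Your proposal is correct and follows essentially the same route as the paper's: a reduction via \Cref{thm:cfp-seq} to proving that $\fmap(f)\circ\interp{e}$ is cfp for every polynomial word expression $e$, then an induction on $e$ whose crux is that unfolding a star creates merged junction blocks $s_{e'}(n)\,p_{e'}(n)$ (the paper's $h_{\mathrm{r}}\cdot h_{\mathrm{l}}$) to which $f$ must be applied, repeated $n-1$ (in the paper's shifted indexing, $n+1$) times. The only differences are organizational: the paper obtains your auxiliary components $p_{e'}$, $s_{e'}$, $\hat{C}_{e'}$ by post-composing the cfp sequences $\interp{e'}$ and $\fmap(f)\circ\interp{e'}$ with regular block-extraction functions (plus shifts by $\Succ$, rather than your $z_e$-conditionals and finite tables), and your auxiliary lemma that $n\mapsto W(n)^{n-1}$ is cfp --- which you prove by a detour through \Cref{thm:cfp-seq} and star-height bookkeeping --- is immediate from closure under $\CbS$, since it equals $\CbS(a^n\mapsto i^{\max(n-1,0)},(W))$ with a regular (indeed sequential) outer function, which is how the paper implicitly absorbs the repetition.
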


\section{Further topics}
\label{sec:perspectives}



\subparagraph{Functional programming}

We mentioned in the introduction a forthcoming characterization of cfp functions
using Church-encoded strings in a $\lambda$-calculus with linear types, in the
vein of our previous results~\cite{aperiodic,freeadditives}. Meanwhile,
\Cref{cor:fmap-cfpolyreg} could be understood as negative result in the search
for another kind of functional programming characterization
(cf.~\Cref{rem:cfp-list}).

It is also worth noting that the copying discipline of layered SSTs is very
similar to what happens in the \emph{parsimonious
  $\lambda$-calculus}~\cite{Mazza}: a datum of type $\oc\tau$ cannot be
duplicated into two copies of the same type $\oc\tau$, but it may yield an
arbitrary number of copies of type $\tau$ without the modality `$\oc$'. Since
the function classes defined following the methodology
of~\cite{aperiodic,freeadditives} are automatically closed under composition,
\Cref{thm:polyreg-layered} leads us to conjecture that polyregular functions can
be characterized in a variant of the parsimonious $\lambda$-calculus.

\subparagraph{First-order interpretations}

As we already said, regular and polyregular functions both admit logical
characterizations using Monadic Second-Order
Logic~\cite{EngelfrietHoogeboom,polyregularMSO}. The basic conceit behind these
definitions is that a string $w$ may be regarded as a finite model
$\mathfrak{M}(w)$ over a signature containing the order relation $\le$ on
positions and predicates encoding their labeling.

The classes obtained by replacing MSO with first-order logic (FO) are to
(poly)regular functions what star-free languages are to regular languages,
see~\cite{AperiodicSST,polyregular}. We expect that in the same way, replacing
regular functions (i.e.\ MSO transductions) by FO transductions in
\Cref{def:cfpolyreg-cbs} and \Cref{thm:composition-squaring} results in the same
class in both cases, which would then be the natural FO counterpart of cfp
functions. Furthermore, we believe it can be defined logically. Given a finite
model $\mathfrak{U} = (U, R, \ldots)$, we write $\mathfrak{U}^k$ for the
$k^{\text{th}}$ power $(U^k, R_1, \ldots, R_k, \ldots)$ where $R_i(x_1, \ldots,
x_{m})$ of arity $m$ is defined as $R(\pi_i(x_1), \ldots, \pi_i(x_{m}))$ for $1
\le i \le k$.
\begin{conjecture}
\label{conj:fo-interpretations}
A function $f : \Gamma^* \to \Sigma^*$ is \enquote{FO comparison-free
  polyregular} if and only if there exists $k \in \naturalN$ and a
one-dimensional FO interpretation $\varphi$ such that for every
$w \in \Gamma^*$ with $|w| \ge 2$,
there is an isomorphism of structures $\mathfrak{M}(f(w)) \simeq \varphi\left(\mathfrak{M}(w)^k\right)$.
\end{conjecture}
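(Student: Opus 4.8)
The plan is to route the whole equivalence through the $\cfpow$-characterization of \Cref{thm:characterization-rank} and through the observation that the power structure $\mathfrak{M}(w)^k$ is nothing but a logical incarnation of comparison-freeness: the signature $(U^k,R_1,\ldots,R_k,\ldots)$ carries only \emph{per-coordinate} atoms, so no formula can relate coordinate $i$ to coordinate $j$ (there is no diagonal predicate, and quantifying over tuples does not help, since every atom stays inside one coordinate). Before anything else I would pin down the meaning of \enquote{FO comparison-free polyregular} by replacing every occurrence of \enquote{regular} with \enquote{FO transduction} (aperiodic regular function) in \Cref{def:cfpolyreg-cbs} and in \Cref{thm:composition-squaring}, and check that the two resulting FO-variants coincide. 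This is a routine rerun of the composition proof behind \Cref{thm:composition-squaring}, using closure of FO transductions under composition in place of that of regular functions, together with the fact that the \enquote{shape + labels} bookkeeping is insensitive to aperiodicity. Throughout, the degenerate inputs with $|w| < 2$ are handled separately, matching the hypothesis of the conjecture.

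The key reduction is an FO refinement of \Cref{thm:characterization-rank}\,(\ref{thm-item:rank-pow}): a function is FO-cfp of rank at most $k$ if and only if $f = g \circ \cfpow^{(k+1)}_\Gamma$ for some \emph{FO} transduction $g$. Granting this, the inclusion \enquote{FO-cfp $\subseteq$ interpretations} is short. First I would exhibit $\cfpow^{(k)}_\Gamma$ as a one-dimensional FO interpretation on $\mathfrak{M}(w)^k$: its enumeration of tuples is lexicographic, and both $x <_{\mathrm{lex}} y$ and \enquote{coordinate $i$ of $x$ carries letter $a$} are expressible from the per-coordinate orders and per-coordinate label predicates alone, with the level annotations of $\cfpow$ selecting the intended tuple length. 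This interpretation is comparison-free precisely because no cross-coordinate atom exists. Since one-dimensional FO interpretations are closed under composition (substitution of formulas), $g \circ \cfpow^{(k+1)}_\Gamma$ is again a one-dimensional FO interpretation on $\mathfrak{M}(w)^{k+1}$, up to the usual offset between the $\cfpow$-index and the rank (cf.\ \Cref{cor:cfpebble}).

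For the converse, \enquote{interpretations $\subseteq$ FO-cfp}, the goal is to factor a given one-dimensional FO interpretation $\varphi$ on $\mathfrak{M}(w)^k$ as $g \circ \cfpow^{(k)}_\Gamma$ with $g$ an FO transduction, and then invoke the FO-refined \Cref{thm:characterization-rank}. The heart of this step is a \emph{universality} statement for $\cfpow$: the string $\cfpow^{(k)}_\Gamma(w)$ already lists each relevant tuple exactly once, in lexicographic order, with every coordinate's letter locally available, so that the domain, order, and label formulas of $\varphi$ can be pulled back along this listing. A Feferman–Vaught-style decomposition of FO over the power $\mathfrak{M}(w)^k$ into Boolean combinations of per-coordinate FO conditions is what guarantees that these pullbacks remain expressible; the post-processing $g$ then filters by the domain formula, relabels, and \emph{reorders} the tuples to realize $\varphi$'s output order.

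I expect the main obstacle to be twofold, and concentrated in the passage from MSO to FO. The first difficulty is the FO descent of \Cref{thm:characterization-rank} itself: its proof rests on finite-monoid factorizations, producing triples, and a Ramsey argument (\Cref{lem:producing-triples}, \Cref{prop:ramsey}, \Cref{lem:dichotomy}, \Cref{lem:inlining}, together with the rank induction of \Cref{prop:CbS-ind}), and one must verify that every regular function it manufactures is in fact \emph{aperiodic} when the data are, so that FO-definability is preserved; this is not automatic. The second, which I regard as the principal hurdle, is the faithful FO realization in the converse: recovering the tuple coordinates and implementing $\varphi$'s output order must be carried out in FO on the \emph{concrete} string $\cfpow^{(k)}_\Gamma(w)$, whose nested layout is a priori only MSO-navigable. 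Making coordinate recovery and the reordering first-order may force either a more star-free-friendly variant of $\cfpow$ carrying extra FO-readable annotations, or a direct argument that any within-coordinate FO-definable linear order refines into finitely many lexicographic orders chosen by FO-definable (hence aperiodic) guards. For this reason I would first prove the MSO-level analogue of the statement by the same plan — where both difficulties evaporate — and then isolate exactly the aperiodicity obstructions that remain in the FO case.
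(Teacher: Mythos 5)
The statement you are attempting is \Cref{conj:fo-interpretations}, which the paper states as a \emph{conjecture}: it offers no proof, so there is no paper argument to compare yours against, and your text must be judged on its own as a purported proof. Judged that way, it is a research plan rather than a proof, and its gaps are not peripheral: the two \enquote{main obstacles} you yourself identify --- the FO descent of \Cref{thm:characterization-rank} (i.e.\ verifying that every regular function manufactured by the producing-triples, Ramsey and inlining machinery is aperiodic when the data are) and the faithful FO realization of coordinate recovery and output reordering on the concrete string $\cfpow^{(k)}_\Gamma(w)$ --- are precisely the mathematical content of the conjecture, not verifications to be deferred. Even the left-hand side of the equivalence is conjectural in the paper: it \emph{expects}, but does not prove, that replacing regular functions by FO transductions in \Cref{def:cfpolyreg-cbs} and in \Cref{thm:composition-squaring} yields the same class, whereas you dismiss this as \enquote{a routine rerun} of the composition proof; aperiodicity preservation through the shape-plus-labels construction of \Cref{sec:proof-composition} is exactly the kind of property that needs a genuine argument and can fail.

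More seriously, your fallback strategy --- first prove the MSO-level analogue \enquote{where both difficulties evaporate,} then isolate the remaining aperiodicity obstructions --- runs directly against the paper's own closing remark: as communicated by Boja\'nczyk, the naive MSO transposition of this conjecture \emph{fails}, because the direct product, generalized to Henkin structures, does not preserve standard second-order models. Concretely, a monadic second-order variable over the product domain $U^k$ ranges over sets of $k$-tuples, i.e.\ over $k$-ary relations on $U$, so cross-coordinate relations such as coordinate equality become definable; this destroys exactly the \enquote{no atom relates coordinate $i$ to coordinate $j$} property on which your entire plan rests. The MSO statement is therefore not an easier stepping stone but (in its naive form) false, and any correct proof of the FO conjecture must exploit FO-specific tools --- the Feferman--Vaught-style decomposition you mention is the right kind of ingredient --- in a way that cannot be factored through an MSO intermediate.
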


On an intuitive level, this seems to capture the inability to compare the
positions of two heads of comparison-free pebble transducers. However, as
mentioned to us by M.~Bojańczyk, the naive transposition of this conjecture to
MSO fails because the direct product, generalized to Henkin structures,
does not preserve \emph{standard} second-order models.

\subparagraph{Integer sequences}

Recall from Remarks~\ref{rem:unary-square-hdt0l} and~\ref{rem:not-cfp-unary}
that for unary outputs, polyregular and layered HDT0L transductions coincide,
but comparison-free polyregular functions form a strictly smaller class (those
results come from~\cite{Gaetan}). If we also restrict to unary inputs -- in
other words, if we consider sequences $\naturalN\to\naturalN$ -- then we are
fairly confident at this stage that the three classes collapse to a single one,
and that this can be shown by routine methods:
\begin{claim}
  The classes of polyregular, comparison-free polyregular and layered HDT0L
  functions coincide on sequences of natural numbers.
\end{claim}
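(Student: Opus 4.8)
The plan is to prove that on sequences $\naturalN \to \naturalN$ (i.e.\ functions $\{a\}^* \to \{a\}^*$), the three classes---polyregular, comparison-free polyregular, and layered HDT0L---all coincide. Since we already have the strict inclusions $\text{cfp} \subsetneq \text{polyreg}$ and $\text{layered HDT0L} \subsetneq \text{polyreg}$ in general (and the former class is incomparable with full HDT0L), the content of the claim is that \emph{restricting inputs to be unary, in addition to the outputs}, causes the separating examples to disappear. The key structural fact to exploit is \Cref{thm:cfp-seq}: a cfp sequence is an ultimately periodic combination of poly-pumping sequences $\interp{e}$. With a unary output alphabet $\Sigma = \{a\}$, concatenation is commutative, so every polynomial word expression $e$ collapses to counting the number of $a$'s it produces, and $\interp{e}(n)$ becomes a nonnegative-integer-valued polynomial in $n$.

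\medskip

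First I would fix $\Sigma = \{a\}$ and observe that under the identification $\{a\}^* \cong \naturalN$, the semantics of a polynomial word expression becomes purely numerical: writing $\ell(e)(n) = |\interp{e}(n)|$, the clauses of the semantics give $\ell(w) = |w|$ (a constant), $\ell(e \cdot e') = \ell(e) + \ell(e')$, and $\ell(e^*)(n) = n \cdot \ell(e)(n)$. A straightforward induction on the expression then shows that $\ell(e)$ is a polynomial in $n$ with nonnegative integer values, whose degree equals the star-height of $e$. Consequently, by \Cref{thm:cfp-seq}, the cfp sequences $\naturalN \to \naturalN$ are exactly the sequences that are \emph{ultimately periodic combinations of integer polynomials}: there exists $p > 0$ and, for each residue $m < p$, a polynomial $P_m \in \bbQ[n]$ taking nonnegative integer values on $\naturalN$ with $s((n+1)p + m) = P_m(n)$.

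\medskip

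Next I would show that the two larger classes, restricted to unary inputs and outputs, give exactly this same set of ultimately-periodic-polynomial sequences. For the lower bound, every such piecewise-polynomial sequence is manifestly cfp (indeed this is just the easy direction of the above reformulation of \Cref{thm:cfp-seq}), hence also polyregular and layered HDT0L by the inclusions already established. For the upper bound, I would use \Cref{rem:rational-series}: HDT0L transductions with unary output are exactly the $\naturalN$-rational series, and a polyregular (equivalently, polynomially bounded) $\naturalN$-rational series over a \emph{unary} input alphabet is a polynomially bounded sequence $\naturalN \to \naturalN$ definable by a single matrix power $n \mapsto X^T \Phi(a)^n Y$. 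By Schützenberger's theorem (the polynomial-growth case, as recalled in \Cref{rem:rational-series}), such a series admits a block-triangular representation with finite diagonal blocks; over a unary input this forces the entries of $\Phi(a)^n$, and hence $s(n)$, to be \emph{eventually piecewise-polynomial} in $n$ (the finite diagonal blocks contribute ultimately periodic behaviour, the triangular coupling integrates this into polynomials of $n$). This exhibits any polyregular unary sequence as an ultimately periodic combination of integer polynomials, closing the loop.

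\medskip

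The main obstacle I expect is the upper-bound step, namely extracting the precise \enquote{ultimately periodic combination of integer-valued polynomials} normal form from a polynomially bounded $\naturalN$-rational series over a one-letter alphabet. This is exactly the point where one must invoke Schützenberger's theorem on polynomially bounded $\bbZ$-rational series (the \enquote{most difficult result in the area}, per \Cref{rem:rational-series}) and carefully track that over a unary input the finite diagonal monoids produce genuine ultimate periodicity of the coefficients rather than merely bounded coefficients. The rest---the commutative-monoid collapse of polynomial word expressions to integer polynomials, and the trivial membership of piecewise polynomials in all three classes---is the \enquote{routine} part the claim promises. A cleaner alternative that avoids Schützenberger would be to appeal directly to the well-known characterization of $\naturalN$-rational series over one letter as ultimately periodic combinations of integer polynomials (e.g.\ via linear recurrences with eventually-polynomial solutions when growth is polynomial), combined with \Cref{rem:unary-square-hdt0l} and \Cref{rem:not-cfp-unary} to transfer between the three unary classes; I would favour this route to keep the proof genuinely \enquote{routine} as asserted.
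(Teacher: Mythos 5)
A preliminary remark: the paper never actually proves this statement. It is left as an unproven Claim in \Cref{sec:perspectives}, accompanied only by the assertion that it \enquote{can be shown by routine methods} and the hint that cfp integer sequences are described by specializing \Cref{thm:cfp-seq}. So there is no official proof to compare yours against, and I can only assess your argument on its own terms. Its architecture (specialize \Cref{thm:cfp-seq} for cfp, use the theory of $\naturalN$-rational series for layered HDT0L, transfer to polyregular via \Cref{rem:unary-square-hdt0l}) is indeed the kind of routine argument the authors gesture at, and your closing \enquote{cleaner alternative} is essentially the right skeleton. However, as written there are genuine gaps. The first is in your reformulation of \Cref{thm:cfp-seq}: for unary output the length function of a polynomial word expression satisfies $\ell(w)=|w|$, $\ell(e\cdot e')=\ell(e)+\ell(e')$ and $\ell(e^*)(n)=n\,\ell(e)(n)$, so what you actually get is the closure of the nonnegative constants under addition and multiplication by $n$, i.e.\ polynomials with \emph{nonnegative integer coefficients} --- not, as you assert, all integer-valued nonnegative polynomials in $\bbQ[X]$. (Also, the degree equals the star-height only up to $\leq$; consider $(\varepsilon)^*$.) This mismatch matters exactly where you need it: the rational-series side of your argument produces integer-valued quasi-polynomials such as $n(n+1)/2 \notin \naturalN[X]$, and concluding that these are cfp requires a bridging lemma. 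The lemma is true and routine --- the coefficient of $n^i$ in $P(pn+m)$ is $p^i P^{(i)}(m)/i!$, so choosing $p$ divisible by a common denominator of the coefficients of $P$ and restricting to shifts $m$ large enough that all derivatives of $P$ are positive puts every arithmetic subsequence in $\naturalN[X]$ --- but it must be stated and proved; without it, your step \enquote{closing the loop} is unjustified.

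The second gap is your sentence claiming that piecewise-polynomial sequences are \enquote{also polyregular and layered HDT0L by the inclusions already established}: no inclusion of cfp into (layered) HDT0L is established anywhere in the paper --- on the contrary, \Cref{thm:cf-not-hdt0l} shows that cfp $\not\subseteq$ HDT0L in general. For unary outputs the inclusion does hold, but only via Douéneau-Tabot's theorem quoted in \Cref{rem:unary-square-hdt0l} (or a direct, easy construction of a layered SST computing a given quasi-polynomial). Relatedly, your primary upper-bound route quietly assumes that a polyregular unary sequence already \emph{is} an $\naturalN$-rational series, i.e.\ is HDT0L; that assumption is again precisely Douéneau-Tabot's theorem, which only your fallback paragraph invokes. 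The correct organization is a single cycle: polyregular $\subseteq$ HDT0L (\Cref{rem:unary-square-hdt0l}) $\subseteq$ layered HDT0L (polynomial growth plus the theorem of \cite{Marble}) $\subseteq$ ultimately periodic combinations of integer-valued polynomials (Schützenberger, or the classical theory of one-letter $\naturalN$-rational series) $\subseteq$ cfp (your reformulation of \Cref{thm:cfp-seq} together with the bridging lemma above) $\subseteq$ polyregular. Once arranged this way, your separate \enquote{lower bound} step becomes redundant and its faulty justification can simply be dropped.
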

Note that we already have a description of cfp integer sequences by specializing
\Cref{thm:cfp-seq}.

\subparagraph*{Membership and equivalence}

We presented comparison-free polyregular functions as a strict subclass of
polyregular functions. This leads to a natural \emph{membership problem}, for
which partial results were recently obtained by Douéneau-Tabot~\cite{Gaetan}:

\begin{problem}
Is there an algorithm taking as input a (code for a) pebble transducer
which decides whether the corresponding function $\Sigma^* \to \Gamma^*$
is comparison-free or not?
\end{problem}

There are many similar problems of interest on the frontier between
comparison-free and general polyregular functions. We hope that investigating
such issues
may also lead to machine/syntax-free characterizations of the containment
between the two classes.

Finally, a major open problem on polyregular functions is the \emph{equivalence
  problem}:
\begin{problem}
  Is there an algorithm taking as input two pebble transducers which decides
  whether they compute the same function?
\end{problem}

Interestingly, a positive answer is known for HDT0L transductions. There is an
short proof using Hilbert's basis theorem~\cite{Honkala}, which is now
understood to be an example of a general approach using polynomial grammars (see
e.g.~\cite{PolynomialAutomata,BojanczykPolynomialGrammars}). One could hope that
a restriction to comparison-free pebble transducers also puts the equivalence
problem within reach of known tools. Unfortunately, the extended polynomial
grammars that would serve as the natural target for a reduction from 2-CFPT
equivalence already have an undecidable zeroness problem (this was shown
recently by Schmude~\cite{SchmudePolynomialGrammars}). This does \emph{not}
extend, however, to an undecidability proof for the CFPT equivalence problem, so the latter is still open.

\bibliography{bi}

\begin{thebibliography}{10}

\bibitem{SST}
Rajeev Alur and Pavol {\v{C}}ern{\'{y}}.
\newblock Expressiveness of streaming string transducers.
\newblock In Kamal Lodaya and Meena Mahajan, editors, {\em {IARCS} Annual
  Conference on Foundations of Software Technology and Theoretical Computer
  Science, {FSTTCS} 2010, December 15-18, 2010, Chennai, India}, volume~8 of
  {\em LIPIcs}, pages 1--12. Schloss Dagstuhl - Leibniz-Zentrum für
  Informatik, 2010.
\newblock \href {https://doi.org/10.4230/LIPIcs.FSTTCS.2010.1}
  {\path{doi:10.4230/LIPIcs.FSTTCS.2010.1}}.

\bibitem{PolynomialAutomata}
Michael Benedikt, Timothy Duff, Aditya Sharad, and James Worrell.
\newblock Polynomial automata: {Zeroness} and applications.
\newblock In {\em 2017 32nd {Annual} {ACM}/{IEEE} {Symposium} on {Logic} in
  {Computer} {Science} ({LICS})}, pages 1--12, Reykjavik, Iceland, June 2017.
  IEEE.
\newblock \href {https://doi.org/10.1109/LICS.2017.8005101}
  {\path{doi:10.1109/LICS.2017.8005101}}.

\bibitem{BerstelReutenauer}
Jean Berstel and Christophe Reutenauer.
\newblock {\em {Noncommutative Rational Series with Applications}}, volume 137
  of {\em Encyclopedia of Mathematics and its Applications}.
\newblock {Cambridge University Press}, October 2010.

\bibitem{polyregular}
Mikołaj Bojańczyk.
\newblock Polyregular functions, 2018.
\newblock \href {http://arxiv.org/abs/1810.08760} {\path{arXiv:1810.08760}}.

\bibitem{BojanczykPolynomialGrammars}
Mikołaj Bojańczyk.
\newblock The {Hilbert} method for transducer equivalence.
\newblock {\em {ACM} {SIGLOG} News}, 6(1):5--17, 2019.
\newblock \href {https://doi.org/10.1145/3313909.3313911}
  {\path{doi:10.1145/3313909.3313911}}.

\bibitem{PolyregularGrowth}
Mikołaj Bojańczyk.
\newblock On the growth rate of polyregular functions, 2022.
\newblock \href {http://arxiv.org/abs/2212.11631} {\path{arXiv:2212.11631}}.

\bibitem{Daviaud}
Mikołaj Bojańczyk, Laure Daviaud, and Shankara~Narayanan Krishna.
\newblock Regular and {First}-{Order} {List} {Functions}.
\newblock In {\em Proceedings of the 33rd {Annual} {ACM}/{IEEE} {Symposium} on
  {Logic} in {Computer} {Science} - {LICS} '18}, pages 125--134, Oxford, United
  Kingdom, 2018. ACM Press.
\newblock \href {https://doi.org/10.1145/3209108.3209163}
  {\path{doi:10.1145/3209108.3209163}}.

\bibitem{FOTree}
Mikołaj Bojańczyk and Amina Doumane.
\newblock First-order tree-to-tree functions.
\newblock In Holger Hermanns, Lijun Zhang, Naoki Kobayashi, and Dale Miller,
  editors, {\em {LICS} '20: 35th Annual {ACM/IEEE} Symposium on Logic in
  Computer Science, Saarbrücken, Germany (online conference), July 8-11,
  2020}, pages 252--265. {ACM}, 2020.
\newblock \href {https://doi.org/10.1145/3373718.3394785}
  {\path{doi:10.1145/3373718.3394785}}.

\bibitem{polyregularMSO}
Mikołaj Bojańczyk, Sandra Kiefer, and Nathan Lhote.
\newblock {String-to-String Interpretations With Polynomial-Size Output}.
\newblock In Christel Baier, Ioannis Chatzigiannakis, Paola Flocchini, and
  Stefano Leonardi, editors, {\em 46th International Colloquium on Automata,
  Languages, and Programming (ICALP 2019)}, volume 132 of {\em Leibniz
  International Proceedings in Informatics (LIPIcs)}, pages 106:1--106:14.
  Schloss Dagstuhl--Leibniz-Zentrum fuer Informatik, 2019.
\newblock \href {https://doi.org/10.4230/LIPIcs.ICALP.2019.106}
  {\path{doi:10.4230/LIPIcs.ICALP.2019.106}}.

\bibitem{PolyRecSeq}
Michaël Cadilhac, Filip Mazowiecki, Charles Paperman, Michał Pilipczuk, and
  Géraud Sénizergues.
\newblock On polynomial recursive sequences.
\newblock {\em Theory of Computing Systems}, June 2021.
\newblock \href {https://doi.org/10.1007/s00224-021-10046-9}
  {\path{doi:10.1007/s00224-021-10046-9}}.

\bibitem{Eugenia}
Christian Choffrut.
\newblock Sequences of words defined by two-way transducers.
\newblock {\em Theoretical Computer Science}, 658:85--96, 2017.
\newblock \href {https://doi.org/10.1016/j.tcs.2016.05.004}
  {\path{doi:10.1016/j.tcs.2016.05.004}}.

\bibitem{AperiodicSST}
Luc Dartois, Ismaël Jecker, and Pierre-Alain Reynier.
\newblock Aperiodic {String} {Transducers}.
\newblock {\em International Journal of Foundations of Computer Science},
  29(05):801--824, August 2018.
\newblock \href {https://doi.org/10.1142/S0129054118420054}
  {\path{doi:10.1142/S0129054118420054}}.

\bibitem{Gaetan}
Gaëtan Douéneau-Tabot.
\newblock {Pebble Transducers with Unary Output}.
\newblock In Filippo Bonchi and Simon~J. Puglisi, editors, {\em 46th
  International Symposium on Mathematical Foundations of Computer Science (MFCS
  2021)}, volume 202 of {\em Leibniz International Proceedings in Informatics
  (LIPIcs)}, pages 40:1--40:17, Dagstuhl, Germany, 2021. Schloss Dagstuhl --
  Leibniz-Zentrum f{\"u}r Informatik.
\newblock \href {https://doi.org/10.4230/LIPIcs.MFCS.2021.40}
  {\path{doi:10.4230/LIPIcs.MFCS.2021.40}}.

\bibitem{LastPebble}
Gaëtan Douéneau-Tabot.
\newblock Pebble minimization: the last theorems, 2022.
\newblock \href {http://arxiv.org/abs/2210.02426} {\path{arXiv:2210.02426}}.

\bibitem{Marble}
Gaëtan Douéneau-Tabot, Emmanuel Filiot, and Paul Gastin.
\newblock {Register Transducers Are Marble Transducers}.
\newblock In Javier Esparza and Daniel Kr{\'a}ľ, editors, {\em 45th
  International Symposium on Mathematical Foundations of Computer Science (MFCS
  2020)}, volume 170 of {\em Leibniz International Proceedings in Informatics
  (LIPIcs)}, pages 29:1--29:14, Dagstuhl, Germany, 2020. Schloss
  Dagstuhl--Leibniz-Zentrum f{\"u}r Informatik.
\newblock \href {https://doi.org/10.4230/LIPIcs.MFCS.2020.29}
  {\path{doi:10.4230/LIPIcs.MFCS.2020.29}}.

\bibitem{PebbleComposition}
Joost Engelfriet.
\newblock Two-way pebble transducers for partial functions and their
  composition.
\newblock {\em Acta Informatica}, 52(7-8):559--571, 2015.
\newblock \href {https://doi.org/10.1007/s00236-015-0224-3}
  {\path{doi:10.1007/s00236-015-0224-3}}.

\bibitem{EngelfrietHoogeboom}
Joost Engelfriet and Hendrik~Jan Hoogeboom.
\newblock {MSO} definable string transductions and two-way finite-state
  transducers.
\newblock {\em ACM Transactions on Computational Logic}, 2(2):216--254, April
  2001.
\newblock \href {https://doi.org/10.1145/371316.371512}
  {\path{doi:10.1145/371316.371512}}.

\bibitem{InvisiblePebbles}
Joost Engelfriet, Hendrik~Jan Hoogeboom, and Bart Samwel.
\newblock {XML} navigation and transformation by tree-walking automata and
  transducers with visible and invisible pebbles.
\newblock {\em Theoretical Computer Science}, 850:40--97, January 2021.
\newblock \href {https://doi.org/10.1016/j.tcs.2020.10.030}
  {\path{doi:10.1016/j.tcs.2020.10.030}}.

\bibitem{PebbleString}
Joost Engelfriet and Sebastian Maneth.
\newblock Two-way finite state transducers with nested pebbles.
\newblock In Krzysztof Diks and Wojciech Rytter, editors, {\em Mathematical
  Foundations of Computer Science 2002, 27th International Symposium, {MFCS}
  2002, Warsaw, Poland, August 26-30, 2002, Proceedings}, volume 2420 of {\em
  Lecture Notes in Computer Science}, pages 234--244. Springer, 2002.
\newblock \href {https://doi.org/10.1007/3-540-45687-2_19}
  {\path{doi:10.1007/3-540-45687-2_19}}.

\bibitem{ERS}
Joost Engelfriet, Grzegorz Rozenberg, and Giora Slutzki.
\newblock Tree transducers, {L} systems, and two-way machines.
\newblock {\em Journal of Computer and System Sciences}, 20(2):150--202, 1980.
\newblock \href {https://doi.org/10.1016/0022-0000(80)90058-6}
  {\path{doi:10.1016/0022-0000(80)90058-6}}.

\bibitem{Macro}
Joost Engelfriet and Heiko Vogler.
\newblock Macro tree transducers.
\newblock {\em Journal of Computer and System Sciences}, 31(1):71--146, 1985.
\newblock \href {https://doi.org/10.1016/0022-0000(85)90066-2}
  {\path{doi:10.1016/0022-0000(85)90066-2}}.

\bibitem{FerteMarinSenizergues}
Julien Ferté, Nathalie Marin, and Géraud Sénizergues.
\newblock Word-{Mappings} of {Level} 2.
\newblock {\em Theory of Computing Systems}, 54(1):111--148, January 2014.
\newblock \href {https://doi.org/10.1007/s00224-013-9489-5}
  {\path{doi:10.1007/s00224-013-9489-5}}.

\bibitem{siglog}
Emmanuel Filiot and Pierre-Alain Reynier.
\newblock Transducers, {Logic} and {Algebra} for {Functions} of {Finite}
  {Words}.
\newblock {\em ACM SIGLOG News}, 3(3):4--19, August 2016.
\newblock \href {https://doi.org/10.1145/2984450.2984453}
  {\path{doi:10.1145/2984450.2984453}}.

\bibitem{FiliotReynier}
Emmanuel Filiot and Pierre{-}Alain Reynier.
\newblock Copyful streaming string transducers.
\newblock {\em Fundamenta Informaticae}, 178(1-2):59--76, January 2021.
\newblock \href {https://doi.org/10.3233/FI-2021-1998}
  {\path{doi:10.3233/FI-2021-1998}}.

\bibitem{Guillon}
Bruno Guillon.
\newblock Input- or output-unary sweeping transducers are weaker than their
  2-way counterparts.
\newblock {\em {RAIRO} -- {Theoretical} {Informatics} and {Applications}},
  50(4):275--294, 2016.
\newblock \href {https://doi.org/10.1051/ita/2016028}
  {\path{doi:10.1051/ita/2016028}}.

\bibitem{Honkala}
Juha Honkala.
\newblock A short solution for the {HDT0L} sequence equivalence problem.
\newblock {\em Theoretical Computer Science}, 244(1-2):267--270, 2000.
\newblock \href {https://doi.org/10.1016/S0304-3975(00)00158-4}
  {\path{doi:10.1016/S0304-3975(00)00158-4}}.

\bibitem{PolyregularGrowthBis}
Sandra Kiefer, Lê Thành~Dũng Nguy\~{\^e}n, and Cécilia Pradic.
\newblock Revisiting the growth of polyregular functions: output languages,
  weighted automata and unary inputs, 2023.
\newblock \href {http://arxiv.org/abs/2301.09234} {\path{arXiv:2301.09234}}.

\bibitem{PebbleMinimization}
Nathan Lhote.
\newblock Pebble minimization of polyregular functions.
\newblock In Holger Hermanns, Lijun Zhang, Naoki Kobayashi, and Dale Miller,
  editors, {\em {LICS} '20: 35th Annual {ACM/IEEE} Symposium on Logic in
  Computer Science, Saarbrücken, Germany, July 8-11, 2020}, pages 703--712.
  {ACM}, 2020.
\newblock \href {https://doi.org/10.1145/3373718.3394804}
  {\path{doi:10.1145/3373718.3394804}}.

\bibitem{Lindenmayer}
Aristid Lindenmayer.
\newblock Mathematical models for cellular interactions in development {II}.
  {Simple} and branching filaments with two-sided inputs.
\newblock {\em Journal of Theoretical Biology}, 18(3):300--315, March 1968.
\newblock \href {https://doi.org/10.1016/0022-5193(68)90080-5}
  {\path{doi:10.1016/0022-5193(68)90080-5}}.

\bibitem{Mazza}
Damiano Mazza.
\newblock Simple {Parsimonious} {Types} and {Logarithmic} {Space}.
\newblock In {\em 24th {EACSL} {Annual} {Conference} on {Computer} {Science}
  {Logic} ({CSL} 2015)}, pages 24--40, 2015.
\newblock \href {https://doi.org/10.4230/LIPIcs.CSL.2015.24}
  {\path{doi:10.4230/LIPIcs.CSL.2015.24}}.

\bibitem{Pebble}
Tova Milo, Dan Suciu, and Victor Vianu.
\newblock Typechecking for {XML} transformers.
\newblock {\em Journal of Computer and System Sciences}, 66(1):66--97, 2003.
\newblock Journal version of a PODS 2000 paper.
\newblock \href {https://doi.org/10.1016/S0022-0000(02)00030-2}
  {\path{doi:10.1016/S0022-0000(02)00030-2}}.

\bibitem{MuschollPuppis}
Anca Muscholl and Gabriele Puppis.
\newblock The {Many} {Facets} of {String} {Transducers}.
\newblock In Rolf Niedermeier and Christophe Paul, editors, {\em 36th
  {International} {Symposium} on {Theoretical} {Aspects} of {Computer}
  {Science} ({STACS} 2019)}, volume 126 of {\em Leibniz {International}
  {Proceedings} in {Informatics} ({LIPIcs})}, pages 2:1--2:21. Schloss
  Dagstuhl–Leibniz-Zentrum fuer Informatik, 2019.
\newblock \href {https://doi.org/10.4230/LIPIcs.STACS.2019.2}
  {\path{doi:10.4230/LIPIcs.STACS.2019.2}}.

\bibitem{titoPhD}
Lê Thành~Dũng Nguy\~{\^e}n.
\newblock {\em Implicit automata in linear logic and categorical transducer
  theory}.
\newblock PhD thesis, Université Paris XIII (Sorbonne Paris Nord), December
  2021.
\newblock URL: \url{https://nguyentito.eu/thesis.pdf}.

\bibitem{freeadditives}
Lê Thành D\~ung Nguy{\~{ê}}n, Camille Noûs, and Cécilia Pradic.
\newblock Implicit automata in typed {\(\lambda\)}-calculi {II:} streaming
  transducers vs categorical semantics, 2020.
\newblock \href {http://arxiv.org/abs/2008.01050} {\path{arXiv:2008.01050}}.

\bibitem{aperiodic}
Lê Thành D\~ung Nguy{\~{ê}}n and Cécilia Pradic.
\newblock Implicit automata in typed $\lambda$-calculi {I}: aperiodicity in a
  non-commutative logic.
\newblock In Artur Czumaj, Anuj Dawar, and Emanuela Merelli, editors, {\em 47th
  International Colloquium on Automata, Languages, and Programming, {ICALP}
  2020, July 8-11, 2020, Saarbrücken, Germany (Virtual Conference)}, volume
  168 of {\em LIPIcs}, pages 135:1--135:20. Schloss Dagstuhl - Leibniz-Zentrum
  für Informatik, 2020.
\newblock \href {https://doi.org/10.4230/LIPIcs.ICALP.2020.135}
  {\path{doi:10.4230/LIPIcs.ICALP.2020.135}}.

\bibitem{Reutenauer}
Christophe Reutenauer.
\newblock Sur les séries associées à certains systèmes de {Lindenmayer}.
\newblock {\em Theoretical Computer Science}, 9:363--375, 1979.
\newblock \href {https://doi.org/10.1016/0304-3975(79)90036-7}
  {\path{doi:10.1016/0304-3975(79)90036-7}}.

\bibitem{Sakarovitch}
Jacques Sakarovitch.
\newblock {\em Elements of Automata Theory}.
\newblock Cambridge University Press, 2009.
\newblock Translated by Reuben Thomas.
\newblock \href {https://doi.org/10.1017/CBO9781139195218}
  {\path{doi:10.1017/CBO9781139195218}}.

\bibitem{SchmudePolynomialGrammars}
Janusz Schmude.
\newblock On polynomial grammars extended with substitution, 2021.
\newblock \href {http://arxiv.org/abs/2102.08705} {\path{arXiv:2102.08705}}.

\bibitem{Pumping2NFT}
Tim Smith.
\newblock A pumping lemma for two-way finite transducers.
\newblock In Erzs{\'{e}}bet Csuhaj{-}Varj{\'{u}}, Martin Dietzfelbinger, and
  Zolt{\'{a}}n {\'{E}}sik, editors, {\em Mathematical Foundations of Computer
  Science 2014 - 39th International Symposium, {MFCS} 2014, Budapest, Hungary,
  August 25-29, 2014. Proceedings, Part {I}}, volume 8634 of {\em Lecture Notes
  in Computer Science}, pages 523--534. Springer, 2014.
\newblock \href {https://doi.org/10.1007/978-3-662-44522-8_44}
  {\path{doi:10.1007/978-3-662-44522-8_44}}.

\bibitem{Senizergues}
Géraud Sénizergues.
\newblock Sequences of level 1, 2, 3, ..., \emph{k} , ..
\newblock In Volker Diekert, Mikhail~V. Volkov, and Andrei Voronkov, editors,
  {\em Computer Science - Theory and Applications, Second International
  Symposium on Computer Science in Russia, {CSR} 2007, Ekaterinburg, Russia,
  September 3-7, 2007, Proceedings}, volume 4649 of {\em Lecture Notes in
  Computer Science}, pages 24--32. Springer, 2007.
\newblock \href {https://doi.org/10.1007/978-3-540-74510-5_6}
  {\path{doi:10.1007/978-3-540-74510-5_6}}.

\end{thebibliography}

\newpage

\appendix

\tableofcontents

\section{Details for \Cref{rem:hdt0l-single-state}}
\label{sec:details-hdt0l-single-state}

We recall the \enquote{natural} translation of HDT0L systems into single-state
SSTs, which is relevant to some proofs in \Cref{sec:complements}. Let
$(\Gamma,\Sigma,\Delta,d,(h_c)_{c\in\Gamma},h')$ be a HDT0L system.
It is equivalent to the SST specified by the following data:
\begin{itemize}
\item a singleton set of states: $Q = \{q\}$;
\item the working alphabet as the set of registers: $R = \Delta$ (minor
  technicality: if $\Delta \cap \Sigma \neq \varnothing$, one should take $R$ to
  be a copy of $\Delta$ that is disjoint from $\Sigma$);
\item $h_c \in \Hom(\Delta^*,\Delta^*) \cong (\Delta \to \Delta^*) \subseteq
  (\Delta \to (\Sigma \cup \Delta)^*)$ as the register assignment associated to
  an input letter $c\in \Gamma$ -- in other words, the transition function is
  $\delta : (q,c) \mapsto (q,(h_c)_{\restriction \Delta})$;
\item $(h'(r))_{r \in \Delta} \in (\Sigma^*)^R$ as the initial register values;
\item $F : q \mapsto d$ as the final output function ($d \in \Delta^* \subseteq
  (\Sigma \cup \Delta)^*$).
\end{itemize}
The cases of the transition and output functions involve a codomain extension
from $\Delta^*$ to $(\Sigma \cup \Delta)^*$. This reflects the intuition that a
HDT0L system is the same thing as a single-state SST that \enquote{cannot access
  the output alphabet} (except in the initial register contents).

To prove the equivalence, the key observation is that $h_c$ is turned into
$\delta(-,c)$ by a \emph{morphism} from $\Hom(\Delta^*,\Delta^*)$ to
$\transmono_{\Delta,\varnothing} \wr \{q\} \subset \transmono_{\Delta,\Sigma}
\wr \{q\}$, using the notations from \Cref{sec:transition-monoids}.
We leave the details to the reader.

\section{Proofs for \S\ref{sec:complements}
  (HDT0L systems, SSTs \& polyregular functions)}

\subsection{Proof of \Cref{thm:layered-hdt0l-equiv}}

\layeredhdtOlequiv*

\begin{proof}[Proof of $(\Rightarrow)$]
  The translation from SSTs to HDT0L systems given by
  \cite[Lemma~3.5]{FiliotReynier} turns out to work. It is also formulated in
  terms of \enquote{simple SSTs} (isomorphic to HDT0L systems, cf.\
  \Cref{rem:hdt0l-single-state}) in \cite[\S5.1]{Marble}, where the authors
  remark that \enquote{this construction does not preserve copylessness nor
    $k$-layeredness}: indeed, what we show is that it increments the number of
  layers by one! For the sake of clarity, we give an alternative presentation
  that decomposes it into two steps.

  Let $\Gamma$ be the input alphabet and $\Sigma$ be the output alphabet. Let
  $\cT$ be a SST with a \emph{$k$-layered} set of register variables $R = R_0
  \sqcup \dots \sqcup R_k$. First, we build a \emph{$(k+1)$-layered} SST $\cT'$
  that computes the same function, with the set of registers
  \[ R' = \underline\Sigma \cup R = R'_0 \sqcup \dots \sqcup R'_{k+1} \qquad
    R'_0 = \underline\Sigma \qquad \forall i \in \{1,\ldots,k+1\},\, R'_i =
    R_{i-1} \]
  assuming $\underline\Sigma \cap R = \varnothing$, and whose register
  assignments are \emph{without fresh letters}: the range of every $\alpha' : R'
  \to (\Sigma \cup R')^*$ is included in $R'^*$, which allows us to write
  $\alpha' : R' \to R'^*$. This already brings us closer to the definition of
  HDT0L systems, since $(R \to R^*) \cong \Hom(R^*,R^*)$. Similarly, we will
  ensure that the range of the output function of $\cT'$ is included in $R'$.

  Let $\mathtt{underline}_\Sigma \in \Hom((\Sigma \cup R)^*, (\underline{\Sigma}
  \cup R)^*)$ be defined in the expected way, and note that its codomain is
  equal to $R'^*$. We specify $\cT'$ as follows (and leave it to the reader to
  check that this works):
  \begin{itemize}
  \item the state space $Q$, initial state and state transitions are the same as
    those of $\cT$;
  \item the initial value of $r' \in R'$ is the same as for $\cT$ if $r' \in
    R$, or the single letter $c$ if $r' = \underline{c} \in \underline{\Sigma}$;
  \item every assignment $\alpha : R \to (\Sigma \cup R)$ that appears in some
    transition of $\cT$ becomes, in $\cT'$,
    \[ \alpha' : R'^* \to R'^* \qquad \alpha' : \underline{c} \in
      \underline{\Sigma} \mapsto \underline{c} \qquad \alpha' : r \in R \mapsto
      \mathtt{underline}_\Sigma(\alpha(r)) \]
  \item its output function is $F' = \mathtt{underline}_\Sigma \circ F$
    where $F : Q \to (\Sigma \cup R)^*$ is the output function of $\cT$.
  \end{itemize}
  Thus, the idea is to store a copy of $c \in \Sigma$ in the register
  $\underline{c}$. Since this register may feed in a copyful way all other
  registers (in a SST, there are no restrictions on the use of output alphabet
  letters), it must sit at the lowest layer, hence $R'_0 = \underline\Sigma$ and
  the resulting offset of one layer.

  Next, we turn $\cT'$ into an equivalent HDT0L system with $(k+1)$-layered
  working alphabet
  \[ \Delta = R' \times Q = \Delta_0 \sqcup \ldots \sqcup \Delta_{k+1} \qquad
    \forall i \in \{0,\ldots,k+1\},\,\Delta_i = R'_i \times Q \]

  For $q \in Q$, let $\mathtt{pair}_q \in \Hom(R'^*, \Delta^*)$ be such that
  $\mathtt{pair}_q(r') = (r',q)$ for $r' \in R'$.

  Let $Q = \{q^{(1)}, \ldots, q^{(n)}\}$ be the states of $\cT'$ (which are also
  those of $\cT$), with $q^{(1)}$ being its initial state\footnote{Except for
    that, this enumeration of $Q$ is arbitrary. We write $q^{(i)}$ instead of
    $q_i$ to avoid confusion with the run of an automaton.}. Using the fact that
  $\cT'$ is without fresh letters, let $F' : Q \to R'^*$ be its final output
  function. The initial word of our HDT0L system is then
  \[ d = \mathtt{pair}_{q^{(1)}}\left(F'\left(q^{(1)}\right)\right) \cdot \ldots \cdot
    \mathtt{pair}_{q^{(n)}}\left(F'\left(q^{(n)}\right)\right) \in \Delta^* \]
  From the initial register values $(u_{I,r'})_{r' \in R'} \in (\Sigma^*)^{R'}$
  of $\cT'$, we define the final morphism:
  \[ h' \in \Hom(\Delta^*, \Sigma^*) \qquad \forall r' \in R',\quad
    \left[h'\left(r',q^{(1)}\right) = u_{I,r'}\quad \text{and}\quad \forall q \neq q^{(1)},
      h'(r',q) = \varepsilon \right]\]
  Finally, let $\deltast' : Q \to Q$ and $\deltareg' : Q \to (R'
  \to R'^*)$ be the components of the transition function of $\cT'$.
  The morphisms $h_c \in \Hom(\Delta^*, \Delta^*)$ for $c \in \Gamma$ send
  $(r',q) \in \Delta$ to
  \[ h_c(r',q) = \mathtt{pair}_{q^{(i_1)}}(\deltareg'(q^{(i_1)},c)(r')) \cdot
    \ldots \cdot \mathtt{pair}_{q^{(i_m)}}(\deltareg'(q^{(i_m)},c)(r')) \]
  where $i_1 < \ldots < i_m$ and $\{q^{(i_1)},\ldots,q^{(i_m)}\} = \{q^{(?)} \in
  Q \mid \deltast'(q^{(?)},c) = q \}$.

  Checking that this HDT0L system computes the right function is a matter of
  mechanical verification, that has already been carried out
  in~\cite{FiliotReynier}. To wrap up the proof, we must justify that it
  is $(k+1)$-layered.
  To do so, let us fix a letter $c \in \Gamma$ and two layer indices
  $i,j \in \{0,\ldots,k+1\}$, and count the number $N_{r',q}$ of occurrences of
  $(r',q) \in \Delta_i$ among all the $h_c(\tilde{r}',\tilde{q})$ for
  $(\tilde{r}',\tilde{q}) \in \Delta_j$. The letter $(r',q)$ can only appear in
  $h_c(\tilde{r}',\tilde{q})$ when $\tilde{q} = \delta(q,c)$, and in that case,
  its occurrences (if any) are in the substring
  $\mathtt{pair}_q(\deltareg'(q,c)(\tilde{r}'))$. So $N_{r',q}$ counts the
  occurrences of $r \in R'_i$ among the $\deltareg'(q,c)(\tilde{r}')$ for
  $\tilde{r}' \in R'_j$. Since $\cT'$ is a $(k+1)$-layered SST, we are done.
\end{proof}

\begin{proof}[Proof of $(\Leftarrow)$]
  The translation from HDT0L systems to single-state SSTs mentioned in
  \Cref{rem:hdt0l-single-state} (see \Cref{sec:details-hdt0l-single-state}) is
  not enough: starting from a $(k+1)$-layered HDT0L system, it gives us a
  $(k+1)$-layered SST. But we can bring this down to $k$ layers by adding
  states.

  Let $(\Gamma,\Sigma,\Delta,d,(h_c)_{c\in\Gamma},h')$ be a HDT0L system (with
  $d \in \Delta^*$, $h_c \in \Hom(\Delta^*,\Delta^*)$ for $c \in \Gamma$, and $h
  \in \Hom(\Delta^*,\Sigma^*)$). Suppose that it is $(k+1)$-layered with $\Delta
  = \Delta_0 \sqcup \dots \sqcup \Delta_{k+1}$. This entails that $h_c(\Delta_0)
  \subseteq \Delta_0^*$, and furthermore that $(h_c)_{\restriction\Delta_0} :
  \Delta_0 \to \Delta_0^*$ satisfies a copylessness condition, that may
  succinctly be written as $(h_c)_{\restriction\Delta_0} \in
  \transmonocl_{\Delta_0,\varnothing}$ (cf.\ \Cref{def:transmonocl}).


  We define a $k$-layered SST with:
  \begin{itemize}
  \item $\transmonocl_{\Delta_0,\varnothing}$ as the set of states (finite by
    \Cref{prop:transmonocl-finite}), with the monoid identity as its initial
    state;
  \item the set of registers $R = \Delta \setminus \Delta_0 = \Delta_1 \sqcup
    \dots \sqcup \Delta_{k+1}$, whose $i$-th layer is the $(i+1)$-th layer of
    the original HDT0L system ($0 \leq i \leq k$);
  \item the initial register contents $(h'(r))_{r \in R}$ -- recall that $h'$ is
    the final morphism;
  \item the transition function $(\alpha,c) \mapsto (\alpha \bullet
    (h_c)_{\restriction\Delta_0},\, (h'_{\restriction \Delta_0^*} \circ \alpha)^\odot
    \circ (h_c)_{\restriction R})$ where $(-)^\odot$ extends functions $\Delta_0
    \to \Sigma^*$ into morphisms in $\Hom((\Delta \cup \Sigma)^*, (R \cup
    \Sigma)^*)$ that map each letter in $R \cup \Sigma$ to itself (since $\Delta
    = \Delta_0 \sqcup R$, the domain of these morphisms is $(\Delta_0 \sqcup R
    \sqcup \Sigma)^*$);
  \item the final output function $\alpha \mapsto (h'_{\restriction \Delta_0^*} \circ
    \alpha)^\odot(d)$.
  \end{itemize}
  The layering condition for this SST is inherited is a direct consequence of
  the layering of the original HDT0L system, and one can check the functions
  computed by the two are the same.
\end{proof}

\subsection{Proof of \Cref{cor:regular-single-state-1-layered}}

\regularsinglestateonelayered*

Any regular function is definable by some copyless SST, i.e.\ 0-layered SST. By
\Cref{thm:layered-hdt0l-equiv}, it can be turned into a 1-layered HDT0L system.
The latter can be translated to a single-state SST by the construction of
\Cref{sec:details-hdt0l-single-state}. As can readily be seen from the
definitions, this construction preserves the 1-layered property.

\subsection{Proof of~\Cref{prop:single-state-copyless-weak}}

\singlestatecopylessweak*

Consider any single-state copyless SST computing some $g : \{a,b,c\}^* \to
\{a,b\}^*$ with a set of registers $R$. We wish to show $g$ does not coincide
with the function computed by the sequential transducer of \Cref{fig:mealy}. Let
$\omega \in (\{a,b\} \cup R)^*$ be the image of the single state by the output
function, and for $x \in \{a,b,c\}$, let $\alpha_x : R \to (\{a,b\} \cup R)^*$
be the copyless assignment performed by the SST when it reads $x$ (that is,
using the notations of \Cref{def:sst}, $\omega = F(q)$ and $\alpha_x =
\deltareg(q,x)$ with $Q = \{q\}$). Let $\vec{u}$ be the initial register
contents. Then
\[ \forall x \in \{a,b,c\}^*,\, \forall n \in \naturalN,\; g(x \cdot c^n) = 
  \omega^\dagger \circ (\alpha_c^\dagger)^n \circ \alpha_x^\dagger(\vec{u})
\]
Any register assignment $\beta : R \to (\{a,b,c\} \cup R)^*$ admits a unique
extension into a monoid morphism $\beta^\square \in \Hom((\{a,b,c\} \cup
R)^*,(\{a,b,c\} \cup R)^*)$ that maps every letter in $\{a,b,c\}$ to itself. Let
$\omega_n = \left(\alpha_c^\square\right)^n(\omega)$ (so that $\omega_0 = \omega$). One can
check that, for
all $n \in \naturalN$:
\begin{itemize}
\item $\omega_n^\dagger = \omega^\dagger \circ (\alpha_c^\dagger)^n$;
\item since $\alpha_c$ is copyless, $|\omega_n|_r \leq |\omega|_r$ for all $r
  \in R$, writing $|w|_x$ for the number of occurrences of $x$ in $w \in
  \Sigma^*$ for $x \in \Sigma$.
\end{itemize}
Let $(v_{x,r})_{r \in R} = \alpha_x^\dagger(\vec{u})$ for $x \in \{a,b,c\}$;
that is, $v_{x,r}$ the value stored in the register $r \in R$ after the SST has
read the single letter $x$. We can rewrite the above equation as
\[ \forall x \in \{a,b,c\}^*,\, \forall n \in \naturalN,\; g(x \cdot c^n) = 
  \omega_n^\dagger((v_{x,r})_{r \in R})
\]
and derive a numerical (in)equality
\[ \forall x \in \{a,b,c\}^*,\, \forall n \in \naturalN,\; |g(x \cdot c^n)|_a =
  |\omega_n|_a + \sum_{r \in R} |\omega_n|_r |v_{x,r}|_a
  \underset{n \to +\infty}{=} |\omega_n|_a + O(1)
\]
using the fact that $|\omega_n|_r$, as a non-negative quantity lower than the
constant $|\omega|_r$, is $O(1)$.

From this, it follows that as $n$ increases, the difference between $|g(a\cdot
c^n)|_a$ and $|g(b\cdot c^n)|_a$ stays bounded. This property distinguishes $g$
from the $f : \{a,b,c\}^* \to \{a,b\}^*$ computed by the transducer given in
\Cref{fig:mealy}, since
\[ \forall n\in\naturalN,\quad \left|f(a \cdot c^n)\right|_a =
  \left|a^{n+1}\right|_a = n+1 \quad\text{and}\quad \left|f(b \cdot
    c^n)\right|_a = \left|b^{n+1}\right|_a = 0 \]

\subsection{Proof of \Cref{prop:polyreg-reg-squaring}}

\polyregregsquaring*

It is stated in the introduction to~\cite{polyregular} that all regular
functions are polyregular. One way to see this is discussed in
\Cref{sec:cfpebble}: the characterization by pebble transducers given
in~\cite{polyregular} generalizes the classical definition of regular functions
using two-way finite state transducers. This takes care of one direction of the
equivalence; for the converse, observe that:
\begin{itemize}
\item sequential functions are regular, as already mentioned;
\item since the SST of \Cref{ex:iterated-reverse} is copyless, the iterated
  reverse function is regular.
\end{itemize}

\subsection{Proof of \Cref{thm:polyreg-layered}}

\polyreglayered*

\begin{proof}[Proof of (\ref{thm-item:polyreg-layered-polyreg}) $\Rightarrow$
  (\ref{thm-item:polyreg-layered-1-layered})]
  Thanks to \Cref{prop:polyreg-reg-squaring}, we know that any polyregular
  functions can be written as a composition of a sequence of functions, each of
  which is either regular or equal to $\mathtt{squaring}_\Gamma$ for some finite
  alphabet $\Gamma$. It suffices to show that each function in the sequence can
  in turn be expressed as a composition of single-state 1-layered SSTs.
  
  We decompose $\mathtt{squaring}_\Gamma$ as
  \[ \qquad \mathtt{1234} \mapsto
    \mathtt{\underline{4}321\underline{3}21\underline{2}1\underline{1}} \mapsto
    \mathtt{\underline{1}2341\underline{2}3412\underline{3}4123\underline{4}}
  \]
  The first step is performed by the SST of \Cref{ex:prefixes}, which has a
  single state and, as mentioned in \Cref{sec:polynomial-growth}, is 1-layered. The
  second step can be implemented using a SST with a single state $q$ (that we
  omit below for readability), two registers $X$ (at layer 0) and $Y$ (at layer
  1) with empty initial values, an output function $F(q) = Y$, and
  \[ \forall c \in \Gamma, \quad
    \delta(c) = (X \mapsto X,\, Y \mapsto cY) \quad\text{and}\quad
    \delta(\underline{c}) = (X \mapsto cX,\, Y \mapsto \underline{c}XY) \]
  As for regular functions, \Cref{cor:regular-single-state-1-layered} takes care
  of them.
\end{proof}

\begin{proof}[Proof of (\ref{thm-item:polyreg-layered-1-layered}) $\Rightarrow$
  (\ref{thm-item:polyreg-layered-k-layered})]
  Immediate by definition.
\end{proof}

\begin{proof}[Proof of (\ref{thm-item:polyreg-layered-k-layered}) $\Rightarrow$
  (\ref{thm-item:polyreg-layered-polyreg})]
  All functions computed by $k$-layered SSTs are polyregular; this applies in
  particular to single-state 1-layered SSTs. Therefore, their composition is
  also polyregular (according to \Cref{def:polyreg}, polyregular functions are
  closed under composition).
\end{proof}

\subsection{Proof of \Cref{prop:fmap-hdt0l}}

\fmaphdtOl*

Let $(\Gamma,\Sigma,\Delta,d,(h_c)_{c\in\Gamma},h')$ be a HDT0L system computing
$f : \Gamma^* \to \Sigma^*$. We define below a HDT0L system that computes
$\fmap(f) : (\Gamma\cup\{\#\})^* \to (\Sigma\cup\{\#\})^*$.
\begin{itemize}
\item The intermediate alphabet is $\widehat\Delta = \Delta \cup \Sigma \cup
  \{\#,X\}$, assuming w.l.o.g.\ that $\#\notin\Delta\cup\Sigma$, where
  $X\notin\Delta\cup\Sigma\cup\{\#\}$ is an arbitrarily chosen fresh letter.
\item The starting word is $Xd \in \widehat\Delta^*$.
\item For $c \in \Gamma$, we extend $h_c$ into $\widehat{h}_c \in
  \Hom(\Delta^*,\Delta^*)$ by setting $\widehat{h}_c(x) = x$ for
  $x\in\Sigma\cup\{\#,X\}$. Since the input alphabet is now $\Gamma\cup\{\#\}$,
  we also define the morphism $\widehat{h}_\#$ as the extension of $h'$ (using
  $\Sigma \subset \widehat\Delta$) such that $\widehat{h}_\#(X) = Xd\#$ and
  $\widehat{h}_\#(x) = x$ for $x\in\Sigma\cup\{\#\}$.
\item The final morphism $\widehat{h}'$ extends $h'$ with $\widehat{h}'(X) =
  \varepsilon$ and $\widehat{h}'(x) = x$ for $x\in\Sigma\cup\{\#\}$
\end{itemize}
This shows that HDT0L transductions are closed under $\fmap$.

We now prove that for any $k \in \naturalN_{\geq 1}$, this closure property
holds for $k$-layered HDT0L transductions (so, in particular, for regular
functions by taking $k=1$). Suppose that $f$ is computed by a $k$-layered HDT0L
system with intermediate alphabet $\Delta$ and initial word $d \in \Delta^*$.
One can build a $k$-layered HDT0L system which computes the same function $f$
and such that \emph{the initial word contains at most one occurrence of each
  letter}; the idea is to replace $\Delta$ and $d = d_1 \dots d_n$ by
$\Delta\times\{1,\dots,n\}$ and $(d_1,1),\dots,(d_n,n)$ where $n=|d|$, and to
adapt the morphisms accordingly. Applying the above construction then results in
a $k$-layered HDT0L system that computes $\fmap(f)$; note that if we did not have
this property for the inital word, we would get a $(k+1)$-layering instead.

\section{Proofs for \S\ref{sec:cbs} (composition by substitution)}

\subsection{Proof of \Cref{thm:polyreg-cbs}}

\polyregcbs*

We start by briefly recalling the definition of \emph{polynomial list functions}
from~\cite[Section~4]{polyregular}. The explanation is geared towards a reader
familiar with the simply typed $\lambda$-calculus, which this system extends.
The $\lambda$-terms defining polynomial list functions are generated by the
grammar of simply typed $\lambda$-terms enriched with constants, whose meaning
can be specified by extending the $\beta$-rule. For instance, given a
\emph{finite} set $S$ and $a \in S$, every element of $S$ can be used as a
constant, another allowed constant is $\mathtt{is}_a^S$ and we have
\[ \mathtt{is}^S_a\;b =_\beta \true\quad\text{if $a=b$}\qquad
  \mathtt{is}_a\;b =_\beta \false\quad\text{if $b \in S \setminus \{a\}$}\]
The grammar of simple types and the typing rules are also extended accordingly.
For instance, any finite set $S$ induces a type also written $S$, such
that every element $a \in S$ corresponds to a term $a : S$ of this type.
There are also operations expressing the cartesian product ($\times$) and
disjoint union ($+$) of two types; and, for any type $\tau$, there is a type
$\tau^*$ of lists whose elements are in $\tau$. So the constant
$\mathtt{is}^S_a$ receives the type
\[ \mathtt{is}_a^S : S \to
  \{\true\}+\{\false\}\qquad\text{for any finite set $S$} \]
See~\cite[Section~4]{polyregular} for the other primitive operations that are
added to the simply typed $\lambda$-calculus; we make use of $\mathtt{is}$,
$\mathtt{case}$, $\mathtt{map}$ and $\mathtt{concat}$ here. Bojańczyk's result
is that if $\Gamma$ and $\Sigma$ are finite sets, then the polynomial list
functions of type $\Gamma^* \to \Sigma^*$ correspond exactly the polyregular
functions.

\begin{lemma}
  Let $I = \{i_1, \ldots, i_{|I|}\}$. Then the function $\mathtt{match}^{I,\tau}
  : I \to \tau \to \ldots \to \tau \to \tau$ which returns its $(k+1)$-th
  argument when its 1st argument is $i_k$ is a polynomial list function.
\end{lemma}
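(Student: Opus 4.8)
The plan is to express $\mathtt{match}^{I,\tau}$ as a cascade of if-then-else tests, one per element of $I$, built from the two primitives $\mathtt{is}$ and $\mathtt{case}$. First I would assemble a conditional from them: for a scrutinee $b : \{\true\}+\{\false\}$ and two terms $s,t : \tau$, set $\mathtt{ite}(b,s,t) = \mathtt{case}\;b\;(\lambda z.\,s)\;(\lambda z.\,t)$, where the two abstractions simply discard the witness $z$ of the relevant summand. This is a well-typed term of type $\tau$, and the crucial typing fact is that $\mathtt{is}^I_{i_k} : I \to \{\true\}+\{\false\}$, so that $\mathtt{is}^I_{i_k}\,x$ is a legal argument to $\mathtt{case}$.

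Next I would define the term itself. Writing the first argument as $x$ and the $|I|$ arguments of type $\tau$ as $y_1,\dots,y_{|I|}$, I set
\[ \mathtt{match}^{I,\tau} \;=\; \lambda x.\,\lambda y_1.\cdots\lambda y_{|I|}.\; t_1, \]
where the bodies $t_k : \tau$ are given by downward recursion: $t_{|I|} = y_{|I|}$ and, for $1 \le k < |I|$, $t_k = \mathtt{ite}(\mathtt{is}^I_{i_k}\,x,\; y_k,\; t_{k+1})$. Each $t_k$ has type $\tau$ by induction, so the whole term has the announced type $I \to \tau \to \cdots \to \tau \to \tau$, and it uses only the constants $\mathtt{is}$ and $\mathtt{case}$ together with operations already available in the polynomial list functions.

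Finally I would check correctness by $\beta$-reduction. Suppose the first argument is $i_m$. Since $i_1,\dots,i_{|I|}$ is an enumeration of the distinct elements of $I$, for every $k < m$ we have $\mathtt{is}^I_{i_k}\,i_m =_\beta \false$, so $t_k$ reduces to $t_{k+1}$; hence $t_1$ reduces to $t_m$. If $m < |I|$ then $\mathtt{is}^I_{i_m}\,i_m =_\beta \true$ and $t_m$ reduces to $y_m$, while if $m = |I|$ we have $t_m = y_{|I|} = y_m$ outright. In both cases the result is the $(m+1)$-th argument $y_m$, as required.

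The construction presents no real obstacle; the only point deserving care is that I deliberately omit a test against the last element $i_{|I|}$ and let it be handled by the default branch $t_{|I|} = y_{|I|}$. This is legitimate precisely because the first argument is assumed to range over $I$, so that once the tests against $i_1,\dots,i_{|I|-1}$ have all failed the input must equal $i_{|I|}$; the behaviour of $\mathtt{match}^{I,\tau}$ on arguments outside $I$ is irrelevant to the statement.
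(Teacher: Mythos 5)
Your proposal is correct and matches the paper's proof, which is given only as a one-line idea: induction on $|I|$ using $\mathtt{is}^I_i$ and $\mathtt{case}^{\{\true\},\{\false\},\tau}$ — your downward-recursive cascade of $\mathtt{ite}$ tests is exactly the unfolding of that induction. The details you supply (encoding if-then-else via $\mathtt{case}$ with discarded witnesses, and letting the last element fall through to the default branch) are sound elaborations of the same construction.
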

\begin{proof}[Proof idea]
  By induction on $|I|$, using $\mathtt{is}^I_i$ ($i \in I$) and
  $\mathtt{case}^{\{\true\},\{\false\},\tau}$.
\end{proof}

\begin{proof}[Proof of closure by CbS]
  Let $f : \Gamma^* \to I^*$, and for $i \in I$, $g_i : \Gamma^* \to \Sigma^*$
  be polyregular functions. Assuming that $f$ and $g_i$ ($i \in I$) are defined
  by polynomial list functions of the same name, the $\lambda$-term
  \[ \lambda w.\; \mathtt{concat}^\Sigma\;(\mathtt{map}^{I,\Sigma^*}\;(\lambda
    i.\; \mathtt{match}^{I,\Sigma^*}\; i\; (g_{i_1}\; w)\;
    \ldots\;(g_{i_{|I|}}\;w))\;(f\;w)) \]
  computes $\CbS(f,(g_i)_{i\in I})$.
\end{proof}

\subsection{Proof of \Cref{prop:CbS-ind}}

\CbSind*

This is equivalent to claiming that the smallest class $\cC$ of functions such that
\begin{itemize}
\item every regular function is in $\cC$,
\item and $\CbS(f, (g_i)_{i \in I}) \in \cC$ for any regular $f : \Gamma^* \to
  I^*$ and any $(g_i : \Gamma^* \to \Sigma^*)_{i \in I} \in \cC^I$,
\end{itemize}
contains all comparison-free polyregular functions. It suffices to show that
$\cC$ is closed under composition by substitution, which can be done by
induction using the equation
\[\CbS(\CbS(f, (g_i)_i), (h_j)_j) = \CbS(f, (\CbS(g_i, (h_j)_j))_i)\]
The same equation explains the inequality on the rank that we claim in the
proposition.

\subsection{Proof of \Cref{prop:cfp-conditionals}}

\cfpconditionals*

\subparagraph*{Closure under regular conditionals}
We first observe that the particular case where $f$ and $g$
are both regular ($\rk(f)=\rk(g)=0$) already appears in the
literature~\cite{SST}. We shall use it in further appendices,
so let us state it as a stand-alone lemma. 
\begin{lemma}[{\cite[Proposition~2]{SST}}]
  \label{lem:regular-conditionals}
  Let $f, g : \Gamma^* \to \Sigma^*$ be regular functions and $L \subseteq
  \Gamma^*$ be a regular language. The function that coincides with $f$ on $L$
  and with $g$ on $\Gamma^* \setminus L$ is regular.
\end{lemma}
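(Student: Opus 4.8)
The plan is to give a direct product construction on copyless SSTs: run the machines for $f$ and $g$ in parallel, simultaneously track membership in $L$ with a DFA, and defer the choice between the two outputs to the very end. Since our definition of regular functions is computability by copyless SSTs, I would fix copyless SSTs $\cT_f = (Q_f, q_{0,f}, R_f, \delta_f, \vec{u}_{I,f}, F_f)$ and $\cT_g = (Q_g, q_{0,g}, R_g, \delta_g, \vec{u}_{I,g}, F_g)$ computing $f$ and $g$, renaming registers so that $R_f \cap R_g = \varnothing$, together with a DFA $(Q_L, q_{0,L}, \delta_L, F_L)$ recognizing $L$. Here $\delta_{f,\rmst}$ and $\delta_{f,\rmreg}$ denote the state and register components of $\delta_f$, and likewise for $g$.

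I then build an SST $\cT$ with state set $Q_f \times Q_g \times Q_L$, initial state $(q_{0,f}, q_{0,g}, q_{0,L})$, register set $R_f \sqcup R_g$, and initial values inherited componentwise from $\vec{u}_{I,f}$ and $\vec{u}_{I,g}$. On reading $c$ in a state $(p,q,s)$, the transition moves to $(\delta_{f,\rmst}(p,c),\, \delta_{g,\rmst}(q,c),\, \delta_L(s,c))$ and applies $\delta_{f,\rmreg}(p,c)$ on the $R_f$-block and $\delta_{g,\rmreg}(q,c)$ on the $R_g$-block---two assignments over disjoint register sets. The output function is defined by case analysis on the DFA component: $F((p,q,s)) = F_f(p)$ if $s \in F_L$, and $F_g(q)$ otherwise; both candidates lie in $(\Sigma \cup R_f \sqcup R_g)^*$, so this is well-typed.

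Two things remain to check. For correctness, a routine induction on the length of the consumed prefix shows that the $R_f$-registers of $\cT$ always hold the same values as $\cT_f$'s registers, the $R_g$-registers those of $\cT_g$, and the DFA component faithfully records whether the prefix lies in $L$; reading off the output function then yields $f(w)$ when $w \in L$ and $g(w)$ otherwise, which is exactly the desired conditional. For copylessness---the only delicate point---I would use that in this paper copylessness constrains the transition assignments only, not the output function. Since $\delta_{f,\rmreg}(p,c)$ is copyless over $R_f$, $\delta_{g,\rmreg}(q,c)$ is copyless over $R_g$, and the two register blocks are disjoint, every register of $R_f \sqcup R_g$ occurs at most once across the combined assignment; hence $\cT$ is a genuine copyless SST and computes a regular function.

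The construction involves no real computation, so the only conceptual point worth flagging is why both machines must be simulated over the whole input despite only one of their outputs ever being used: membership of $w$ in $L$ is not settled until the input is exhausted, so the conditional cannot be resolved until output time. This deferred choice is exactly what the copyless SST model supports, since its final output function may freely inspect the state to select between $F_f$ and $F_g$, and it is the reason SSTs are a more convenient model than, say, two-way transducers for establishing this particular closure property.
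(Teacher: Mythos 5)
Your construction is correct, and it is exactly the standard argument behind this result: the paper gives no proof of its own here, simply citing Alur and {\v{C}}ern{\'{y}}~\cite{SST}, whose Proposition~2 is established by the same parallel simulation -- run both copyless SSTs on disjoint register blocks alongside a DFA for $L$, and defer the choice between $F_f$ and $F_g$ to the final output function. The one delicate point, preservation of copylessness, goes through precisely as you argue, since the two register blocks are disjoint and the paper's copylessness condition constrains only the assignments in the transition function, not the final output function.
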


With this in hand, let us turn to the general case where $f, g : \Gamma^* \to \Sigma^*$
are cfp. It means that we have regular functions $f'$ and $g'$,
as well as families of functions $(f''_i)_{i \in I}$ and $(g''_j)_{j \in J}$
such that
\[ f ~~=~~\CbS(f', (f''_i)_{i \in I}) \qquad\text{and}\qquad g ~~=~~\CbS(g', (g''_j)_{j \in J})\]
(if $f$ (or $g$) is of rank $0$, we can introduce a spurious $\CbS$ by taking $f'$
(resp. $g'$) to be a constant function outputting a single letter over the singleton
alphabet). Assume without loss of generality that $I \cap J = \varnothing$.
Using \Cref{lem:regular-conditionals} applied to $f',g' : \Gamma^* \to (I \cup J)^*$,
there is a function $h' : \Gamma^* \to (I \cup J)^*$ coinciding with $f'$ over $L$
and $g'$ over $L \setminus \Gamma^*$. Setting $(h''_k)_{k \in I \cup J}$ to be the
family of functions such that $h''_i = f''_i$ for $i \in I$ and $h''_j = g''_j$ for
$j \in J$, we obtain a cfp function $h = \CbS(h',(h''_k)_{k \in I \cup J})$
corresponding to the desired conditional.

\subparagraph*{Closure under concatenation}
Similarly, first observe that the result holds for regular functions.
Using SSTs, this can be shown using a product construction.

Then, taking regular functions $f'$ and $g'$, as well as families of
functions $(f''_i)_{i \in I}$ and $(g''_j)_{j \in J}$ so that
$f =\CbS(f', (f''_i)_{i \in I})$ and $g =\CbS(g', (g''_j)_{j \in J})$
with $I$ and $J$ disjoint as above, one check that the pointwise concatenation
$f \cdot g$ is equal to $\CbS(f' \cdot g',(h''_k)_{k \in I \cup J})$, which is
cfp since $f' \cdot g'$ is regular.

\section{Proofs for \S\ref{sec:cfpebble}
  (comparison-free pebble transducers)}

\subsection{Proof of \Cref{prop:cfpebble-easy}}

\cfpebbleeasy*

First note that any $k$-CFPT can be transformed into
an equivalent $k$-CFPT whose transition functions
$ \delta :  Q \times (\Gamma \cup \{\leftmarker,\rightmarker\})^p \to Q \times
    (\naturalN^p \to \Stack{k}) \times \Sigma^*$
are such that, for every input $(q,\vec b)$, we have either $\pi_3(\delta(q, \vec b)) = \varepsilon$ (in which case we call $\delta(q,\vec b)$ a \emph{silent transition}) or
$\pi_3(\delta(q,\vec b)) \in \Sigma$ and $\pi_2(\delta(q,\vec b))$ is the identity.
So, without loss of generality, suppose that we have a $k$-CFPT $\cT_f$ implementing $f$ is of this shape,
with state space $Q_f$ and transition function $\delta_f$
Similarly, we may assume without loss of generality that the current height of the stack
is tracked by the state of CFPTs if we allow multiple final states; assume that we have
such height-tracking $l$-CFPT and that we have $l$-CFPTs $\cT_i$ implementing $g_i$ with state spaces $Q_i$
and transition functions $\delta_i$.

We combine these CFPTs into a single $k+l$ CFPT $\cT'$ with state space
\[Q' ~~~=~~~ Q_f ~~ \sqcup ~~ Q_f \times \bigsqcup_{i \in I} Q_i\]
The initial and final states are those of $\cT_f$.
The high-level idea is that $\cT'$ behaves as $\cT_f$ until it produces an output $i \in I$; in such
a case it ``performs a call'' to $\cT_i$ that might spawn additional heads to perform its computations.
At the end of the execution of $\cT_i$, we return the control to $\cT_f$.
Formally speaking, the transition function $\delta'$ of $\cT'$ behaves as follows:
\begin{itemize}
\item $\delta'(q,\vec b) = \delta_f(q,\vec b)$ if $q \in Q_f$ and $\delta_f(q,\vec b)$ is \emph{silent}.
\item otherwise we take, we have $\pi_3(\delta_f(q, \vec b)) = i$ for some $i \in I$.
Calling $r_i$ the initial state of $\cT_i$, we set
$\pi_1(\delta'(q,\vec b)) = (q, r_i)$ and $\pi_2(\delta'(q,\vec b))$ corresponds to push a new pebble onto the stack.
We make $\delta'(q, \vec b)$ silent in such a case.
\item $\delta'((q,r), \vec b \vec b')$ then corresponds to $\delta_i(r, \vec b')$ if we are not in the situation
where the stack height is $1$ and the stack update function is pop.
\item otherwise we take $\pi_1(\delta'((q,r), \vec b b')) = \pi_1(\delta_f(q,\vec b))$,
$\pi_2(\delta'(q,n+1, \vec b b'))$ to be a pop action and $\pi_3(\delta'((q,r), \vec b b')) = \pi_3(\delta_i((q,r),\vec b b'))$.
\end{itemize}

\subsection{Proof of \Cref{thm:cfpebble-hard}}

\cfpebblehard*

Assume we have $f : \Gamma^* \to \Sigma^*$ computed by a $k$-CFPT $\cT$ with state space $Q$ and
transition function $\delta$ that we assume to be disjoint from $\Sigma$.
For each $q \in Q$, we describe a $k-1$ CFPT $\cT_q$ with the same state space,
initial state $q$ and transition function $\delta_q$ such that,
for every $b' \in \naturalN$, $\vec b \in \Stack{l}$ for $l \le k-1$ and $q' \in Q$,
$\delta(q',b'\vec b)$ and $\delta_q(q',\vec b)$ coincide on the first and last component;
on the second component, we require they also coincide up to the difference in stack size.
If we fix $r \in Q$, by~\cite[Theorem~4.7]{Pebble}, the language consisting of those $w \in \Gamma^*$
such that $\cT_q$ halts on $r$ is regular. Since regular languages are closed under intersection,
for any map $\gamma \in Q^Q$, the language $L_\gamma \subseteq \Gamma^*$ of those words $w$
such that $\cT_q$ halts on $\gamma(q)$ is regular.

Now fix $\gamma \in Q^Q$ and let us describe a $1$-CFPT transducer $\cT_\gamma$
intended to implement the restriction of a function $h : \Gamma^* \to (\Sigma \cup Q)^*$
to $L_\gamma$. $\cT_\gamma$ has the same state space and initial state as $\cT$,
but has a transition function $\delta_\gamma$ defined by
\[
\delta_\gamma(q,b) = \left\{
\begin{array}{ll}
\delta(q,b) & \text{if $\pi_2(\delta(q,b))$ is not a push} \\
(\gamma(r), (p \mapsto p), r) & \text{otherwise, for $r = \pi_1(\delta(q,b))$} \\
\end{array}\right.
\]
Since $\Gamma^* = \bigcup_{\gamma \in Q^Q} L_\gamma$,
by applying repeatedly~\Cref{lem:regular-conditionals}, this
determines the regular function $h : \Gamma^* \to (\Sigma \cup Q)^*$.
We can then check that $f = \CbS(h, (g_i)_{i \in \Sigma \cup Q})$ where
$g_a$ is the constant function outputting the one-letter word $a$ for $a \in \Sigma$
(which can certainly be implemented by a $1$-CFPT)
and $g_q$ is the function $\Gamma^* \to \Sigma^*$ implemented by the $(k-1)$-CFPT
$\cT_q$.


\subsection{Proof of \Cref{cor:cfpebble}}

\corcfpebble*

The proof goes by induction over $k \in \naturalN$.
By \Cref{thm:reg-2dft}, the result holds for $k = 0$ since 2DFTs
characterize regular functions; let us detail each direction of the inductive case $k>0$:
\begin{itemize}
\item for the left-to-right inclusion, assume we are given a $(k+1)$-CFPT computing $f$
and apply \Cref{thm:cfpebble-hard} to obtain $h$ and $g_i$s such that $f = \CbS(h,(g_i)_{i \in I})$
with $h$ regular and the $g_i$s computable by $k$-CFPTs. The induction hypothesis implies
that the $g_i$s have rank $< k$, and thus $f$ has rank $\le k$.
\item conversely, if $f$ has rank $k$, it can be written as $\CbS(h,(g_i)_{i \in I})$ with
$h$ regular and the $g_i$s with rank $< k$; the induction hypothesis implies that the
$g_i$s can be computed by $k$-CFPTs. By \Cref{thm:reg-2dft}, $h$ is computable by a
$1$-CFPT, so by \Cref{prop:cfpebble-easy}, $f$ is computed by a $(k+1)$-CFPT
\end{itemize}

\section{Closure under composition}
\label{sec:proof-composition}

This section is dedicated to establishing the following theorem, which
constitutes most of the work that goes into proving
\Cref{thm:composition-squaring}.
\begin{theorem}
  \label{thm:composition}
  Comparison-free polyregular functions are closed under composition.
\end{theorem}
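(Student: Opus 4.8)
\emph{Reduction to a regular outer function.} The plan is to prove, more generally, that $f \circ g$ is cfp whenever both $f$ and $g$ are cfp, by induction on $\rk(f)$. The inductive step rests on the identity
\[ \CbS(f',(f''_i)_{i\in I}) \circ g ~=~ \CbS(f' \circ g,\ (f''_i \circ g)_{i \in I}), \]
which is immediate from \Cref{def:cbs}: both sides send $w$ to $f''_{i_1}(g(w))\cdots f''_{i_k}(g(w))$ where $i_1\cdots i_k = f'(g(w))$. So if $f = \CbS(f',(f''_i)_i)$ with $f'$ regular and each $\rk(f''_i) < \rk(f)$ (\Cref{prop:CbS-ind}), then $f\circ g = \CbS(f'\circ g, (f''_i\circ g)_i)$, where each $f''_i \circ g$ is cfp by the induction hypothesis; closure under $\CbS$ (\Cref{def:cfpolyreg-cbs}) finishes the step, \emph{provided} we have settled the base case where $f$ is regular. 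The whole theorem thus reduces to the claim: for every regular $f$ and every cfp $g$, the composite $f\circ g$ is cfp.

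\emph{The base case.} I would establish this claim by a second induction, on $\rk(g)$. When $g$ is regular, $f \circ g$ is regular, hence cfp, since regular functions are closed under composition. When $\rk(g) = k+1$, write $g = \CbS(g',(g''_j)_{j\in J})$ with $g'$ regular and $\rk(g''_j)\le k$, so that $g(w) = g''_{j_1}(w)\cdots g''_{j_m}(w)$ with $j_1\cdots j_m = g'(w)$. The goal is to feed this concatenation to $f$, exploiting the key point that for a \emph{fixed} input $w$ the factor $g''_j(w)$ depends only on the symbol $j$, not on its position, so all occurrences of a given $j$ contribute the same string.

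\emph{Shape/label decomposition of $f$ on the concatenation.} Fix a copyless SST computing $f$ and let $\varphi : \Sigma^* \to \transmonocl_{R,\Sigma}\wr Q$ be its transition homomorphism (\Cref{sec:transition-monoids}), with shape homomorphism $\psi = \tterase_\Sigma \circ \varphi$ into the \emph{finite} monoid $N = \transmonocl_{R,\varnothing}\wr Q$ (\Cref{prop:transmonocl-finite} and \Cref{prop:erase-morphism}). For each $j$, the map $w \mapsto \psi(g''_j(w))$ takes values in the finite set $N$ and, since cfp functions are polyregular and polyregular functions preserve regular languages by preimage, each of its fibres is regular. Hence the shape assignment $\sigma_w : j \mapsto \psi(g''_j(w))$ ranges over the finite set $N^J$ with regular fibres $\{w : \sigma_w = \sigma\}$, and I would split the argument into these finitely many cases via regular conditionals (\Cref{prop:cfp-conditionals}). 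Within a fixed case $\sigma$, every factor $g''_{j_\ell}(w)$ has the \emph{known} shape $\sigma(j_\ell)$, so the composite shape $\sigma(j_1)\bullet\cdots\bullet\sigma(j_m)$ and the entire routing of $f$'s output become a function of the symbol word $g'(w)$ alone. Concretely, I would emulate $f$ at the level of shapes by a copyless SST $\widehat{\cT}$ reading $g'(w) \in J^*$: on letter $j$ it applies the copyless register-reshuffling of $\sigma(j)$ and emits, at each of the boundedly many label-slots of each register, a \emph{token} $(j,p,r,s)$ recording the incoming state $p$ (tracked in $\widehat{\cT}$'s control), the register $r$ and the slot index $s$; the literal output letters of $f$ (coming from the final output function and the initial register contents) are emitted verbatim. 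Because each $\sigma(j)$ lies in $\transmonocl_{R,\varnothing}\wr Q$, the machine $\widehat{\cT}$ is a genuine copyless SST, so $h := \widehat{\cT}\circ g'$ is regular.

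\emph{Filling the tokens and conclusion.} By the shape/label decomposition (\Cref{prop:shape-plus-labels}) together with the fact that the labels of a product $\beta\bullet\beta'$ are a copyless recombination of those of $\beta$ and $\beta'$, the $\Sigma$-material that a token $(j,p,r,s)$ stands for is exactly the $s$-th label block routed into register $r$ when $f$'s SST is run on $g''_j(w)$ starting from state $p$. That block is produced by a \emph{regular} transduction $\lambda_{p,r,s} : \Sigma^* \to \Sigma^*$ — namely $f$'s SST started in $p$, with empty initial registers and final output function set to $r$, reading off the $s$-th gap — so $\lambda_{p,r,s}\circ g''_j$ is cfp by the induction hypothesis on $\rk(g)$ (regular composed with the rank-$\le k$ function $g''_j$). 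Consequently, in case $\sigma$ we have $f\circ g = \CbS\big(h,\ (\lambda_{p,r,s}\circ g''_j)_{(j,p,r,s)}\big)$, a $\CbS$ of a regular function with cfp functions, hence cfp; recombining the finitely many cases $\sigma$ by regular conditionals (\Cref{prop:cfp-conditionals}) shows that $f\circ g$ is cfp. The main obstacle is precisely this core construction: proving that $f$'s output on the concatenation genuinely factors into a shape-determined regular template whose slots are filled by regular transductions of the individual factors $g''_j(w)$. This is where \Cref{prop:shape-plus-labels} and the copyless recombination of labels do the real work, and where the position-independence of the factors is essential.
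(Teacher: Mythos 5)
Your proposal is correct and takes essentially the same route as the paper's own proof: the same $\CbS$-composition identity reducing everything to a regular outer function, the same inner induction on the rank of the inner function with the regular outer function universally quantified inside the induction (which you exploit correctly when applying the hypothesis to $\lambda_{p,r,s}\circ g''_j$), and the same finite case split on shape vectors followed by the placeholder-token SST construction whose slots are filled by label-extraction transductions. The one imprecision is your parenthetical description of $\lambda_{p,r,s}$ — an SST started in $p$ with empty initial registers and final output $r$ would produce the \emph{concatenation} of all label blocks of $r$, not the $s$-th one — but the genuine construction needed here (shapes tracked in the control state, one register per label slot, updated via the copyless recombination of labels) is exactly the paper's \Cref{lem:shape-plus-labels}, whose ingredients you correctly identify as the crux, so this is a slip of phrasing rather than a missing idea.
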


First, by induction on the rank of the left-hand side of the composition, we can
reduce to the case where that side is a mere regular function, using the
straightforward identity
\[ \CbS(f, (g_i)_{i \in I}) \circ h =
   \CbS(f \circ h, (g_i \circ h)_{i \in I}) \]
We then treat this case by another induction, this time on the rank of the
right-hand side. The base case is handled by invoking the closure under
composition of regular functions.
Therefore, what remains is the following inductive case.
\begin{lemma}
  \label{lem:composition-inductive-case}
  Let $f : \Gamma^* \to I^*$ be a regular function and let $(g_i)_{i \in I}$ be
  a family of comparison-free polyregular functions $\Gamma^* \to \Sigma^*$.
  Suppose that for all regular $h : \Sigma^* \to \Delta^*$ and all $i \in I$,
  the composite $h \circ g_i$ is comparison-free polyregular.

  Then, for all regular $h : \Sigma^* \to \Delta^*$, $h \circ \CbS(f,(g_i)_{i \in
    I})$ is comparison-free polyregular.
\end{lemma}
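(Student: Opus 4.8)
The plan is to exhibit $h \circ \CbS(f,(g_i)_{i \in I})$ itself as a composition by substitution $\CbS(P, (Q_t)_{t \in T})$ in which $P$ is regular and every $Q_t$ is comparison-free polyregular; this suffices, since the cfp class is closed under $\CbS$ by definition. The starting observation is that in $\CbS(f,(g_i)_{i\in I})(w) = g_{i_1}(w)\cdots g_{i_k}(w)$ (where $i_1\cdots i_k = f(w)$), the content of the $j$-th block $g_{i_j}(w)$ depends only on its \emph{letter} $i_j$ and on $w$, not on its position $j$. Fix a copyless SST computing $h$, with registers $R$ and transition morphism $\varphi : \Sigma^* \to \transmonocl_{R,\varnothing}\wr Q$ to the finite monoid of \Cref{rem:substitution-transition-monoid}. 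Reading one block $g_{i_j}(w)$ from a state $q$ performs a copyless assignment $\beta_j \in \transmonocl_{R,\Delta}$ whose shape $\tterase_\Delta(\beta_j)$ and successor state are read off from the finite datum $\mu_j = \varphi(g_{i_j}(w))$. Moreover, for each $i \in I$ the map $w \mapsto \varphi(g_i(w))$ is finite-valued with \emph{regular fibres}, because $g_i$ is polyregular and hence preserves regular languages under preimage.

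First I would package this finite global information into a regular function $\tilde f : \Gamma^* \to (I \times (\transmonocl_{R,\varnothing}\wr Q))^*$ sending $w$ to $(i_1,\mu_1)\cdots(i_k,\mu_k)$ with $\mu_j = \varphi(g_{i_j}(w))$. This is regular: partition $\Gamma^*$ according to the finite tuple $(\varphi(g_i(w)))_{i\in I}$ (finitely many regular languages), and on each part apply $f$ followed by the appropriate letter-to-letter relabelling, combining the cases with \Cref{lem:regular-conditionals}.

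The heart of the argument, and the main obstacle, is the \emph{assembly order}: the register shuffling of $h$ may scatter the boundedly many labels coming from one block across distant places of the output, interleaved with labels of other blocks, while the blocks are indexed by the unbounded range $1,\dots,k$. To handle this I would build a \enquote{symbolic} copyless SST $h^\sharp$ over input alphabet $I \times (\transmonocl_{R,\varnothing}\wr Q)$ performing exactly the same register operations as $h$ (using the shapes read from the $\mu_j$), but writing, in place of each concrete label of $\beta_j$, a \emph{descriptor letter} $(i_j,s,q)$ recording the block's letter, the label slot $s$, and the current state $q$; the constant factors coming from $\vec u_I$ and from the output function $F$ are emitted as finitely many fixed descriptor letters. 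Since $h^\sharp$ reuses $h$'s shapes it is again copyless, hence regular, and $P = h^\sharp \circ \tilde f : \Gamma^* \to T^*$ is regular by closure of regular functions under composition. By construction, $P(w)$ is the sequence of descriptors whose positions mirror, one for one, the labels and constants assembled by $h$ on $g_{i_1}(w)\cdots g_{i_k}(w)$.

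It then remains to define the substitution functions and check the identity $h\circ\CbS(f,(g_i)_{i\in I}) = \CbS(P,(Q_t)_{t\in T})$. For a constant descriptor, $Q_t$ is the corresponding constant function (regular, hence cfp). For a descriptor $t = (i,s,q)$, I would set $Q_t = \mathrm{ext}_{s,q} \circ g_i$, where $\mathrm{ext}_{s,q} : \Sigma^* \to \Delta^*$ is the regular function that simulates $h$ from state $q$ and outputs the slot-$s$ label of the resulting assignment; this $Q_t$ is cfp by the inductive hypothesis applied to the regular $\mathrm{ext}_{s,q}$ and to $g_i$. The key consistency point is that two blocks sharing the same triple $(i,s,q)$ also share the same content $g_i(w)$ and the same starting state, so they legitimately receive the same substituted value; thus substituting every descriptor of $P(w)$ recovers exactly the output $F(q_k)^\dagger\big((\beta_1\bullet\cdots\bullet\beta_k)^\dagger(\vec u_I)\big)$ of $h$. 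As $\CbS$ of a regular function with cfp functions is cfp, this establishes the lemma.
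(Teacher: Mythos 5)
Your proposal follows, at the level of architecture, exactly the paper's proof: you rewrite $h \circ \CbS(f,(g_i)_{i \in I})$ as $\CbS(P,(Q_t)_{t\in T})$ where $P$ is regular and each $Q_t$ is (extraction function) $\circ\, g_i$, hence cfp by the lemma's hypothesis. Your $\tilde f$ and $h^\sharp$ correspond to the paper's gluing of the functions $h''_{\vec\mu}\circ f$ into $H$ and to its placeholder SSTs $\cT_{\vec\mu}$ (your descriptors $(i,s,q)$ are its letters $\lambda^{q,i,r}_j$), and your justification of regularity of $\tilde f$ -- finiteness of $\transmonocl_{R,\varnothing} \wr Q$, preservation of regular languages by preimage under the polyregular $g_i$, and \Cref{lem:regular-conditionals} -- is the paper's as well. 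The consistency point you highlight (two blocks with the same descriptor legitimately receive the same substituted value) is also handled identically.

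There is, however, one genuine gap: you simply assert that $\mathrm{ext}_{s,q}$, \enquote{the regular function that simulates $h$ from state $q$ and outputs the slot-$s$ label of the resulting assignment}, is regular. This claim is precisely the paper's \Cref{lem:shape-plus-labels}, and its proof is the technical core of the whole composition argument, not a routine verification. The difficulty is that a transducer computing $\mathrm{ext}_{s,q}$ must itself be a \emph{copyless} SST: it keeps the shape of the composed assignment $\psi_\delta(s')(q)$ in its state (possible by \Cref{prop:transmonocl-finite}) and the labels in its own registers (via the bijection of \Cref{prop:shape-plus-labels}), and the point that has to be checked is that when an assignment $\beta$ is extended by a one-letter transition $\beta'$, the labels of $\beta\bullet\beta'$ arise from the labels of $\beta$ and the constants of $\beta'$ by a \emph{copyless} recombination -- this uses copylessness of $\beta'$ to guarantee each old label is consumed at most once. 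Skipping this, a naive simulation only yields a copyful SST, i.e.\ an HDT0L transduction (\Cref{thm:equiv-hdt0l}) rather than a regular function, and then the inductive hypothesis -- which applies only to \emph{regular} functions post-composed with $g_i$ -- cannot be invoked on $Q_t$. So your proof is the paper's proof minus its hardest step; to complete it you would need to supply the shape-plus-labels construction and the copyless-recombination argument.
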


Our proof of the above lemma and related subclaims rely on the properties of
transition monoids introduced in \Cref{sec:transition-monoids} and on the
combinatorics of register transitions discussed in the paragraphs following
\Cref{prop:shape-plus-labels}.

\begin{lemma}
  \label{lem:shape-plus-labels}
  Let $\delta$ be the transition function of some copyless SST $\Sigma^* \to
  \Delta^*$ whose sets of states and registers are $Q$ and $R$ respectively, so
  that $\delta(-,c) \in \transmonocl_{R,\Delta} \wr Q$ for $c \in \Sigma$.
  Let
  \[ \qquad \psi_\delta \in \Hom(\Sigma^*,\, \transmonocl_{R,\Delta} \wr Q)
    \quad\text{such that}\quad \forall c \in \Sigma,\; \psi_\delta(c) =
    \delta(-,c) \]
  and $\varphi_\delta = \tterase_\Delta \circ \psi_\delta$ as in
  \Cref{rem:substitution-transition-monoid}, $q \in Q$, $r \in R$, $\alpha
  \in \transmonocl_{R,\varnothing}$ and $j \in \{0,\dots,|\alpha(r)|\}$. Then
  the following function $\Sigma^* \to \Delta^*$, defined thanks to
  \Cref{prop:shape-plus-labels}, is regular:
  \[ s \quad\mapsto\quad
    \begin{cases}
      w_j \quad\text{where}\quad \pi_2(\psi_\delta(s)(q))(r) = w_0 r'_1 w_1 \dots
      r'_n w'_n  &\text{if}\ \pi_2(\varphi_\delta(s)(q)) = \alpha\\
      \varepsilon &\text{otherwise}
    \end{cases}
  \]
  (recall that $\pi_2 : Q \times M \to M$ is the second projection and $M \wr Q
  = Q \to Q \times M$).
\end{lemma}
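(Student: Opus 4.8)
The plan is to build a copyless SST that computes the stated function directly, using the \emph{finite} monoid $\transmonocl_{R,\varnothing} \wr Q$ to control its behaviour through the state. First I would observe that, by \Cref{prop:transmonocl-finite} together with the finiteness of $Q$, the monoid $\transmonocl_{R,\varnothing} \wr Q$ is finite, so $\varphi_\delta$ is a morphism into a finite monoid; in particular the predicate ``the shape $\pi_2(\varphi_\delta(s)(q))$ equals $\alpha$'' is a regular property of~$s$. This already disposes of the case distinction in the statement: the ``otherwise'' branch will be handled by letting the output function return $\varepsilon$ whenever the state does not certify that the shape at $q$ is $\alpha$.

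The heart of the construction is to maintain, incrementally and copylessly, all the labels of $\psi_\delta(s)$. I would take as states the elements of $\transmonocl_{R,\varnothing} \wr Q$, the reachable state after reading $s$ being $\varphi_\delta(s)$, and as registers a finite family indexed by triples $(q,r,j)$ with $q \in Q$, $r \in R$ and $j \in \{0,\dots,|R|\}$ (recall from the proof of \Cref{prop:transmonocl-finite} that $|\alpha(r)| \leq |R|$ for every shape $\alpha$), all initialized to $\varepsilon$ with the identity as initial state. The invariant to maintain is that, when the state is $\varphi_\delta(s)$, register $(q,r,j)$ holds the $j$-th label $w_j$ in the decomposition $\pi_2(\psi_\delta(s)(q))(r) = w_0 r'_1 w_1 \dots r'_n w_n$ afforded by \Cref{prop:shape-plus-labels}, for $0 \leq j \leq |\alpha(r)|$ (the remaining registers staying empty). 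The output function then reads off register $(q,r,j)$ when the state records shape $\alpha$ at $q$, and returns $\varepsilon$ otherwise.

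The update of states is just multiplication in the finite monoid. For the labels, the key computation is the effect of composing $\psi_\delta(s)$ with the single-letter transition $\delta(-,c) = \psi_\delta(c)$ on the right. Unwinding \Cref{def:wreath}, the assignment attached to output state $q$ in $\psi_\delta(sc)$ is $\beta_q \bullet \gamma_{q'}$, where $\beta_q = \pi_2(\psi_\delta(s)(q))$ is the old assignment for the \emph{same} output state $q$ and $\gamma_{q'} = \pi_2(\psi_\delta(c)(q'))$ is a constant-label assignment taken at the state $q' = \pi_1(\psi_\delta(s)(q))$. Since $\beta_q \bullet \gamma_{q'} = \beta_q^\odot \circ \gamma_{q'}$ substitutes the values $\beta_q(t)$ into the template $\gamma_{q'}$, the new labels are concatenations of constants coming from $\gamma_{q'}$ with old labels coming from $\beta_q$, which is exactly the kind of copyless recombination discussed after \Cref{prop:shape-plus-labels}. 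Because this recombination is determined by the current state and the letter $c$, it is a bona fide SST register assignment.

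The step I expect to require the most care is checking copylessness of the whole construction, since \Cref{sec:composition} explicitly warns that the monoid multiplication of $\transmonocl_{R,\Delta} \wr Q$ does \emph{not} in general respect copylessness. The point that saves us here is the observation above that the new assignment for output state $q$ is built from the old assignment for the very same $q$: the label registers $(q,\cdot,\cdot)$ are updated only from label registers $(q,\cdot,\cdot)$, so the updates for distinct output states draw on disjoint sets of old registers. Within a fixed $q$, copylessness of $\gamma_{q'}$ ensures that each old label of $\beta_q$ is reused at most once. Combining these two facts yields global copylessness, and a straightforward induction on $|s|$ confirms the invariant and hence the correctness of the SST.
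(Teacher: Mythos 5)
Your proof is correct, and its core is the same as the paper's: store the \enquote{shape} in a finite state space (via \Cref{prop:transmonocl-finite}), store the labels in registers, and observe that right multiplication by a single-letter transition $\delta(-,c)$ --- whose labels are constants --- recombines the stored labels copylessly. The one genuine difference is scope: the paper's SST tracks only the single value $\psi_\delta(s)(q)$ for the $q$ fixed in the statement, with state space $Q \times \transmonocl_{R,\varnothing}$ and a register set that varies with the current shape, whereas you maintain the entire wreath-product element $\psi_\delta(s)$, with states in $\transmonocl_{R,\varnothing} \wr Q$ and a fixed register set indexed by $Q \times R \times \{0,\dots,|R|\}$. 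Your variant is therefore the one that has to confront the paper's warning (made in \Cref{sec:composition}) that the multiplication of $\transmonocl_{R,\Delta} \wr Q$ is not copyless on labels, and you defuse it with exactly the right observation: in $\mu \bullet \psi_\delta(c)$ the assignment attached to output state $q$ is $\pi_2(\mu(q)) \bullet \deltareg(\pi_1(\mu(q)),c)$, so the registers $(q,\cdot,\cdot)$ are rewritten using only registers $(q,\cdot,\cdot)$; the duplication inherent to the wreath product only ever affects the right-hand factor's labels, which here are constant strings and may be copied freely. The paper sidesteps this issue structurally by fixing $q$ from the outset; what your version buys is a slightly stronger invariant (all assignments maintained at once) and no need for the state-dependent-register device the paper invokes, at the price of this extra disjointness argument --- which you supply correctly.
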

\begin{proof}
  We consider during this proof that the names $q$, $r$, $\alpha$ and $j$
  introduced in the above statement are \emph{not in scope}, so that we can use
  those variable names for generic elements of $Q$, $R$,
  $\transmonocl_{R,\varnothing}$ and $\naturalN$ instead. Those data will be
  given other names when we need them.

  We build a copyless SST whose set of states is $Q \times
  \transmonocl_{R,\varnothing}$. This is made possible by the finiteness of
  $\transmonocl_{R,\varnothing}$ (\Cref{prop:transmonocl-finite}). As for the
  set of registers, we would like it to \emph{vary depending on the current
    state} for the sake of conceptual clarity, i.e.\ to have a family of finite
  sets indexed by $Q \times \transmonocl_{R,\varnothing}$; when the SST moves
  from state $(q,\alpha)$ to $(q',\alpha')$, it would perform a register
  assignment from $R_{q,\alpha}$ to $R_{q',\alpha'}$ (described by a map
  $R_{q',\alpha'} \to (\Delta \cup R_{q,\alpha})^*$). Such devices have been
  called \emph{state-dependent memory copyless SSTs} in~\cite{freeadditives},
  and they are clearly equivalent in expressive power to usual copyless SSTs.

  The idea is that we want the configuration (current state plus register
  contents) of our new SST, after reading $s = s_1 \dots s_n$, to faithfully
  represent
  \[ \psi_\delta(s)(q_0) =(\delta(-,s_1) \bullet \dots \bullet
    \delta(-,s_n))(q_0) \in Q \times \transmonocl_{R,\Delta} \]
  where $\delta$ and $\psi_\delta$ are given in the lemma statement, and $q_0$
  is the given state that was called $q$ in that statement. Following
  \Cref{prop:shape-plus-labels}, since we already have the \enquote{shape}
  stored in the second component $\transmonocl_{\Delta,\varnothing}$ of the set
  $Q \times \transmonocl_{\Delta,\varnothing}$ of new states, it makes sense to
  use the register to store the \enquote{labels}, hence $R_{q,\alpha} =
  R_{\alpha}$ with
  \[ R_{\alpha} = \left\{\left(r,j\right) \;\middle|\;
      r \in R,\; j \in \{0,\ldots,|\alpha(r)|\} \right\}
      \quad\text{so that}\quad (\Delta^*)^{R_{\alpha}} \cong
    \prod_{r \in R} (\Delta^*)^{|\alpha(r)| + 1} \]
  The configurations of our SST are thus in bijection with $Q \times
  \transmonocl_{R,\Delta}$ via \Cref{prop:shape-plus-labels}, and we would like
  the transition performed when reading $c \in \Sigma$ to correspond through
  this bijection to (using the notations of \Cref{def:sst})
  \[ (q,\beta) \in Q \times \transmonocl_{R,\Delta} \quad\mapsto\quad
    (\deltast(q),\, \beta \bullet \deltareg(q))\]
  For a fixed $\beta' \in \transmonocl_{R,\Delta}$, let us consider the right
  multiplication $\beta \mapsto \beta \bullet \beta'$ in
  $\transmonocl_{R,\Delta}$. Since $\tterase_\Delta :
  \transmonocl_{R,\Delta} \to \transmonocl_{R,\varnothing}$ is a morphism, the
  \enquote{shape} of $\beta \bullet \beta'$ can be obtained from the
  \enquote{shape} of $\beta$ by multiplying by $\alpha' =
  \tterase_\Delta(\beta')$. The important point is to show that we can
  obtain the new labels from the old ones by a \emph{copyless assignment} --
  formally speaking, that for any $\alpha \in \transmonocl_{R,\Delta}$ there
  exists a copyless
  \[ \gamma_{\alpha,\beta'} : R_{\alpha \bullet \alpha'} \to (\Delta \cup
    R_{\alpha})^* \]
  such that for any $\beta \in \transmonocl_{R,\Delta}$ such that
  $\tterase_\Delta(\beta) = \alpha$, which therefore corresponds to
  \[ \left(\alpha,\vec\ell\right)\quad\text{for some}\quad
    \vec\ell \in (\Delta^*)^{R_{\alpha}} \cong \prod_{r \in R}
    (\Delta^*)^{|\alpha(r)|+1} \]
  the shape-label pair that corresponds to $\beta \bullet \beta'$ is $(\alpha
  \bullet \alpha',\, \gamma_{\alpha,\beta'}^\dagger(\vec\ell))$ (cf.\
  \Cref{def:register-assignment}).

  Our next task is to analyze the composite assignment $\beta \bullet \beta'$ in
  order to derive a $\gamma_{\alpha,\beta'}$ that works. Let $r'' \in R$. First,
  if $\alpha'(r'') = r'_1 \dots r'_n \in R^*$, then
  \[ \beta'(r'') = w'_0 r'_1 w'_1 \dots r'_n w'_n \quad\text{for some}\quad
    w'_0,\ldots,w'_n \in \Delta^* \]
  and by applying the unique morphism $\beta^\odot \in \Hom((\Delta \cup
  R)^*,(\Delta \cup R)^*)$ that extends $\beta$ and sends letters of $\Delta$ to
  themselves, we have
  \[ (\beta\bullet\beta')(r'') = \beta^\odot(\beta'(r)) = w'_0
    \cdot\beta(r'_1)\cdot w'_1\cdot \ldots \cdot\beta(r'_n)\cdot w'_n \]
  Let us decompose further, for $i \in \{1,\ldots,n\}$:
  \[ \beta(r'_i) = w_{i,0} r_{i,1} w_{i,1} \dots w_{i,n_i} r_{n_i}
    \quad\text{for some}\quad w_{i,0},\ldots,w_{i,n_i} \in \Delta^*\]
  By plugging this into the previous equation, we have $(\beta\bullet\beta')(r'')
  = w_0 r_1 w_1 \dots r_m w_m$ where
  \[ \{r_1,\dots,r_m\} = \bigcup_{i=1}^n \{r_{i,1},\dots,r_{i,n_i}\} \]
  Furthermore, \emph{each $w_k$ for $k \in \{0,\dots,m\}$ is a concatenation of
    some $w'_i$ and some $w_{i,j}$}, and from the formal expression of $w_k$
  depending on these $w'_i$ and $w_{i,j}$ -- which only depends on the shape
  $\alpha$ and $\alpha'$ -- we can derive a definition of
  $\gamma_{\alpha,\beta'}(r'',k)$. For instance,
  \[ w_{42} = w_{3,2} w'_3 w_{4,0} \qquad\rightsquigarrow\qquad \gamma(r'',42) =
    (r'_3, 2) \cdot w'_3 \cdot (r'_4, 0) \in (\Delta \cup
    R_{\alpha\bullet\alpha'})^* \]
  Observe that this does not refer to the $w_{i,j}$; therefore,
  $\gamma_{\alpha,\beta'}$ does not depend on $\beta$, as required. One can
  check that defined this way, $\gamma_{\alpha,\beta'}$ is indeed a copyless
  assignment and that the desired property of $\gamma_{\alpha,\beta'}^\dagger$
  holds.

  What we have just seen is the heart of the proof. We leave it to the reader to
  finish the construction of the copyless SST.
\end{proof}


With this done, we can move on to proving \Cref{lem:composition-inductive-case},
which suffices to finish the proof of \Cref{thm:composition}.

\begin{proof}[Proof of \Cref{lem:composition-inductive-case}]
  Let $w \in \Gamma^*$ be an input string. In the composition, we feed to a
  copyless SST $\cT_h$ that computes $h$ the word $\CbS(f,(g_i)_{i \in I})(w) =
  g_{i_1}(w) \ldots g_{i_k}(w)$ where $f(w) = i_1 \ldots i_k$. A first idea is
  therefore to tweak $\cT_h$ into a new copyless SST that takes $I^*$ as input
  and which executes, when it reads $i \in I$, the transition of $\cT_h$ induced
  by $g_i(w)$. If we call $h'_w$ the regular function computed by this new SST,
  we would then have $h'_w(f(w)) = h \circ \CbS(f,(g_i)_{i \in I})(w)$. The
  issue is of course that $h'_w$ depends on the input $w$.

  More precisely, the data that $h'_w$ depends on is the family of transitions
  \[ \qquad(\psi_\delta \circ g_i(w))_{i \in I} \in
    (\transmonocl_{R,\Delta} \wr Q)^I\qquad\text{(see
      \Cref{lem:shape-plus-labels} for $\psi_\delta$)} \]
  where $Q$, $R$ and $\delta$ are respectively the set of states, the set of
  registers and the transition function of $\cT_h$. We will be able to
  disentangle this dependency by working with
  \[ (\varphi_\delta \circ g_i(w))_{i \in I} = (\tterase_\Delta \circ
    \psi_\delta \circ g_i(w))_{i \in I} \in (\transmonocl_{R,\varnothing} \wr
    Q)^I \]
  Concretely:
  \begin{claim}
   For each $\vec\mu \in (\transmonocl_{R,\varnothing} \wr Q)^I$, there exist:
  \begin{itemize}
  \item a finite alphabet $\Lambda_{\vec\mu}$ equipped with a function
    $\iota_{\vec\mu} : \Lambda_{\vec\mu} \to I$;
  \item a regular function $h''_{\vec\mu} : I^* \to (\Delta \cup
    \Lambda_{\vec\mu})^*$;
  \item and regular functions $l_\lambda : \Sigma^* \to \Delta^*$ for $\lambda
    \in \Lambda$;
  \end{itemize}
  such that for $i_1 \dots i_n \in I^*$ and $w \in \Gamma^*$, \emph{if
    $(\varphi_\delta \circ g_i(w))_{i \in I} = \vec\mu$, then}
  \[ h(g_{i_1}(w) \cdot\ldots\cdot g_{i_n}(w)) ~~=~~ \text{replace each}\
    \lambda \in \Lambda_{\vec\mu}\ \text{in}\ h''_{\vec\mu}(i_1 \dots i_n)\ \text{by}\
    l_\lambda \circ g_{\iota(\lambda)}(w)
  \]
  \end{claim}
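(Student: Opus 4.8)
The plan is to build, for the fixed family $\vec\mu = (\mu_i)_{i \in I} \in (\transmonocl_{R,\varnothing} \wr Q)^I$, a single copyless SST $\cT'_{\vec\mu}$ over the input alphabet $I$ that simulates $\cT_h$ \enquote{block by block}: reading the letter $i$ is meant to apply the composite transition $\psi_\delta(g_i(w))$ of $\cT_h$ induced by the whole block $g_i(w)$. The crucial observation is that fixing $\vec\mu$ decouples the ($w$-independent) combinatorial structure of this simulation from its ($w$-dependent) labels. Indeed, the \enquote{shape} of the composite transition reading $i$ in state $q$ is forced to be $\pi_2(\mu_i(q)) \in \transmonocl_{R,\varnothing}$, and the target state is forced to be $\pi_1(\mu_i(q))$, both independently of $w$ (since $\tterase_\Delta$ leaves the $Q$-component of a wreath-product element untouched). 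Only the constant factors, i.e.\ the labels in the sense of \Cref{prop:shape-plus-labels}, depend on $w$, and \Cref{lem:shape-plus-labels} guarantees that each such label is a \emph{regular} function of $g_i(w)$.

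Concretely, I would first record, for each $q \in Q$, $r \in R$ and $j \in \{0,\dots,|\pi_2(\mu_i(q))(r)|\}$, the regular function $\Sigma^* \to \Delta^*$ supplied by \Cref{lem:shape-plus-labels} for the shape $\alpha = \pi_2(\mu_i(q))$, which extracts the $j$-th label of $\pi_2(\psi_\delta(s)(q))(r)$ on those inputs $s$ whose shape at $q$ is exactly $\alpha$. I then take $\Lambda_{\vec\mu}$ to be the finite set of all tuples $\lambda = (i,q,r,j)$ ranging over these indices, with $\iota_{\vec\mu}(\lambda) = i$ and with $l_\lambda$ the corresponding extraction function (finiteness follows from $I$, $Q$, $R$ being finite and $|\pi_2(\mu_i(q))(r)| \le |R|$, cf.\ \Cref{prop:transmonocl-finite}). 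The SST $\cT'_{\vec\mu}$ then reuses the states, initial state, initial register values $\vec{u}_I$ and final output function $F$ of $\cT_h$ verbatim; reading $i$ in state $q$, it moves to $\pi_1(\mu_i(q))$ and performs, for each register $r'$, the assignment
\[ r' \;\longmapsto\; (i,q,r',0)\cdot \rho_1 \cdot (i,q,r',1)\cdots \rho_m \cdot (i,q,r',m), \qquad \rho_1\cdots\rho_m = \pi_2(\mu_i(q))(r'),\ m = |\pi_2(\mu_i(q))(r')|, \]
where $\rho_1,\dots,\rho_m \in R$ and the letters $(i,q,r',\cdot) \in \Lambda_{\vec\mu}$ are treated as output symbols. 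Copylessness of this assignment is inherited directly from the copylessness of the shape $\pi_2(\mu_i(q)) \in \transmonocl_{R,\varnothing}$, so $\cT'_{\vec\mu}$ is a genuine copyless SST and the function $h''_{\vec\mu} : I^* \to (\Delta \cup \Lambda_{\vec\mu})^*$ it computes is regular.

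For correctness I would proceed by induction on $n$, showing that, \emph{whenever} $(\varphi_\delta \circ g_i(w))_{i \in I} = \vec\mu$, the configuration of $\cT'_{\vec\mu}$ after reading $i_1 \dots i_t$ corresponds — through the bijection of \Cref{prop:shape-plus-labels} together with the substitution $\lambda \mapsto l_\lambda \circ g_{\iota(\lambda)}(w)$ — exactly to the configuration of $\cT_h$ after reading $g_{i_1}(w)\cdots g_{i_t}(w)$. The state component matches because $\tterase_\Delta$ preserves the $Q$-component, so the state reached by $\cT_h$ after the block $g_{i_t}(w)$ from $q_t$ is precisely $\pi_1(\mu_{i_t}(q_t))$; the shape component matches by construction; and the labels match because the hypothesis on $\vec\mu$ ensures that the shape of $\psi_\delta(g_{i_t}(w))(q_t)$ is the $\alpha$ under which \Cref{lem:shape-plus-labels} certifies that $l_{(i_t,q_t,r,j)}(g_{i_t}(w))$ is the $j$-th label. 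Applying $F$ to the final configuration and carrying out the substitution then yields $h(g_{i_1}(w)\cdots g_{i_n}(w))$, which is the identity claimed.

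The only delicate part will be the bookkeeping around \Cref{prop:shape-plus-labels}: matching the symbolic labels produced by $\cT'_{\vec\mu}$ with the concrete labels of the run of $\cT_h$ in the induction step, and checking that the assignment displayed above is literally the copyless recombination described in the paragraphs following \Cref{prop:shape-plus-labels}. The conceptual content, namely that fixing $\vec\mu$ renders both $\Lambda_{\vec\mu}$ and $h''_{\vec\mu}$ independent of $w$ while \Cref{lem:shape-plus-labels} absorbs all the $w$-dependence into the regular functions $l_\lambda$, is exactly what makes the statement go through.
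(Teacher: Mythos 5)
Your proposal is correct and follows essentially the same route as the paper's own proof: the same finite placeholder alphabet $\Lambda_{\vec\mu}$ indexed by tuples $(q,i,r,j)$, the same SST over $I$ that reuses the states, registers, initial values and output function of $\cT_h$ and interleaves placeholder letters with the shape $\pi_2(\mu_i(q))(r)$ in each transition, the same use of \Cref{lem:shape-plus-labels} to define the extraction functions $l_\lambda$, and the same substitution-homomorphism invariant ($\lambda \mapsto l_\lambda \circ g_{\iota(\lambda)}(w)$) proved by induction on the input length. A minor point in your favour: you instantiate \Cref{lem:shape-plus-labels} at the state $q$ appearing in the tuple, which is exactly what the invariant requires, whereas the paper's text says $q_0$ at that spot, which reads like a typo.
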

  \begin{claimproof}
  \Cref{prop:shape-plus-labels} says that every $\beta = \psi_\delta(g_i(w)) \in
  \transmonocl_{R,\Delta}$ can be decomposed into a shape $\alpha =
  \tterase_\Delta(\beta) \in \transmonocl_{R,\varnothing}$ and a finite
  family $\vec\ell$ of strings in $\Delta^*$. Each $\beta(r)$ for $r \in R$ can
  then be reconstituted as an interleaving of letters in $\alpha(r)$ with labels
  in $\vec\ell$, a process that can be decomposed into two steps:
  \begin{itemize}
  \item first, interleave the letters of $\alpha(r)$ with placeholder letters,
    taken from an alphabet disjoint from both $\Delta$ and $R$;
  \item then \emph{substitute} the labels for those letters.
  \end{itemize}
  Roughly speaking, this will allow us to manipulate an assignment with
  placeholders without knowing the labels, and then add the labels afterwards.
  
  Let $\vec\mu \in (\transmonocl_{R,\varnothing} \wr Q)^I$. We define a copyless
  SST $\cT_{\vec\mu}$ with the same sets of states and registers as $\cT_h$,
  namely $Q$ and $R$. Its initial register values and final output function are
  also the same. It computes a function $I^* \to (\Delta \cup
  \Lambda_{\vec\mu})^*$, and its transition function is
  \[ \delta_{\vec\mu} : (q,i) \mapsto \left(\pi_1 \circ \mu_i(q),\; \left(r
        \mapsto \mathtt{interleave}\left(\lambda^{q,i,r}_0 \dots
          \lambda^{q,i,r}_{|\pi_2(\mu_i(q))(r)|},\,
          \pi_2(\mu_i(q))(r)\right)\right)\right) \]
  where $\mathtt{interleave}(u_0 \dots u_n, v_1 \dots v_n) = u_0 v_1 u_1 \dots
  v_n u_n$ for letters $u_0, \dots, u_n, v_1, \dots, v_n$ over some alphabet
  (recall also that $\mu_i : Q \to Q \times \transmonocl_{R,\varnothing}$ for
  $i\in I$). Thus, we take
  \[ \Lambda_{\vec\mu} = \left\{\lambda^{q,i,r}_j \;\middle|\; q \in Q,\, i \in
      I,\, r \in R,\, j \in \{0, \dots, |\pi_2(\mu_i(q))(r)|\} \right\} \qquad
    \iota(\lambda^{q,i,r}_j) = i \]
  and $h''_{\vec\mu}$ to be the function computed by $\cT_{\vec\mu}$. (Note that
  although $\delta_{\vec\mu}$ does not involve letters from $\Delta$, the final
  output function and the initial register contents do.) Finally, given $\lambda
  = \lambda^{q,i,r}_j \in \Lambda_{\vec\mu}$, we define $l_\lambda$ to be the
  regular function provided by \Cref{lem:shape-plus-labels} for the transition
  function $\delta$ of $\cT_h$, the state $q_0$ (which is the initial state of
  both $\cT_h$ and $\cT_{\vec\mu}$), the register $r$, the assignment shape
  $\alpha = \pi_2(\mu_i(q))$ and the position $j \in \{0,\ldots,|\alpha(r)|\}$.

  Let $w \in \Gamma^*$ be such that $(\varphi_\delta \circ g_i(w))_{i \in I} =
  \vec\mu$. Consider $\chi_w \in \Hom((\Delta \cup \Lambda_{\vec\mu})^*,
  \Delta^*)$ which maps each letter of $\Delta$ to itself and each $\lambda \in
  \Lambda_{\vec\mu}$ to $l_\lambda \circ g_{\iota(\lambda)}(w)$. It lifts to a
  morphism $\widehat{\chi_w} \in \Hom(\transmonocl_{R,\Delta \cup
    \Lambda_{\vec\mu}}, \transmonocl_{R,\Delta})$, and we have
  $\widehat{\chi_w}(\delta_{\vec\mu}(-,i)) = \psi_\delta \circ g_i(w)$. This
  leads to the following invariant: the configuration of $\cT_h$ after reading
  $g_{i_1}(w) \cdot \ldots \cdot g_{i_n}(w)$ is, in a suitable sense, the
  \enquote{image by $\chi_w$} of the configuration of $\cT_{\vec\mu}$ after
  reading $i_1 \ldots i_n$. (In other words, the \enquote{image of the SST
    $\cT_{\vec\mu}$ by $\chi_w$} is the copyless SST computing $h'_w$ that we
  sketched at the very beginning of this proof of
  \Cref{lem:composition-inductive-case}.) This directly implies the property
  relating $h$, $h''_{\vec\mu}$ and $(l_\lambda)_{\lambda \in \Lambda_{\vec\mu}}$ that
  we wanted.
  \end{claimproof}
  
  Let us finish proving \Cref{lem:composition-inductive-case} using the fact we
  just proved.
  First of all, since the letters of $\Lambda_{\vec\mu}$ only serve as
  placeholders to be eventually substituted, they can be renamed at our
  convenience. That means that we can take the $\Lambda_{\vec\mu}$ to be
  \emph{disjoint} for $\vec\mu \in (\transmonocl_{R,\varnothing} \wr Q)^I$, and
  define $\Lambda$ to be their disjoint union. We also take $\iota : \Lambda \to
  I$ to be the unique common extension of the $\iota_{\vec\mu}$. In the same
  spirit, we glue together the functions $h''_{\vec\mu} \circ f$ into
  \[ H : w \in \Gamma^* \mapsto h''_{(\varphi_\delta \circ g_i(w) \mid i \in
      I)}(f(w)) \in (\Delta \cup \Lambda)^* \]
  From the above equation on $h''_{\vec\mu}$, one can then deduce for \emph{all}
  $w \in \Gamma^*$ \emph{without condition} that
  \[ h(\CbS(f,(g_i)_{i \in I})(w)) = \CbS(H, (l_\lambda \circ
    g_{\iota(\lambda)})_{\lambda \in \Lambda})(w) \]
  (strictly speaking, one should have a family indexed by $\Delta \cup \Lambda$
  on the right-hand side -- to comply with that, just extend the family with
  constant functions equal to $x$ for each $x \in \Delta$).

  Using the above equation, we can rephrase our goal: we want to prove that the
  function $\CbS(H, (l_\lambda \circ g_{\iota(\lambda)})_{\lambda \in \Lambda})$
  is comparison-free polyregular. This class of functions is -- by definition --
  closed under composition by substitution, so we can reduce this to the
  following subgoals:
  \begin{itemize}
  \item $H$ is comparison-free polyregular: in fact, it is regular, because
    regular functions are closed under composition and regular conditionals
    (\Cref{lem:regular-conditionals}). This argument relies on the finiteness of
    the indexing set $(\transmonocl_{R,\varnothing} \wr Q)^I$ -- a consequence
    of \Cref{prop:transmonocl-finite} -- and on the regularity of the language
    $\{w \in \Gamma^* \mid (\varphi_\delta \circ g_i(w))_{i \in I} = \vec\mu \}$
    for any $\vec\mu$. The reasons for the latter are as follows:
    \begin{itemize}
    \item $\varphi_\delta$ is a morphism whose codomain
      $\transmonocl_{R,\varnothing} \wr Q$ is finite, so
      $\varphi_\delta^{-1}(\{\mu_i\})$ is regular for $i \in I$;
    \item the functions $g_i$ for $i \in I$ are assumed to be comparison-free
      polyregular, so they preserve regular languages by inverse image, as all
      polyregular functions do~\cite{polyregular};
    \item regular languages are closed under finite intersections, and $I$ is
      finite.
    \end{itemize}
  \item $l_\lambda \circ g_{\iota(\lambda)}$ is comparison-free polyregular for
    all $\lambda \in \Lambda$: because our main existence claim states that
    $l_\lambda$ is regular for all $\lambda \in \Lambda$, and one of our
    assumptions is that any $g_i$ (for $i \in I$) postcomposed with any regular
    function gives us a comparison-free polyregular function.
  \end{itemize}
\end{proof}

\section{A lower bound on growth from the rank}
\label{sec:appendix-minimization}

In this section, we prove a statement that directly implies the last claim of
\Cref{thm:characterization-rank}:
\begin{theorem}
  \label{thm:minimization}
  Let $f : \Gamma^* \to \Sigma^*$ be comparison-free polyregular of rank at
  least 1. Then there exists a sequence of inputs $(s_n)_{n \in \naturalN} \in
  (\Gamma^*)^\naturalN$ such that $|s_n| = O(n)$ and $|f(s_n)| \geq
  n^{\rk(f)+1}$.
\end{theorem}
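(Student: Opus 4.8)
The plan is to argue by induction on $k = \rk(f)$, establishing the slightly sharper statement that the growth is witnessed by pumping inside a single string: there are words $u,v,w$ with $v\neq\varepsilon$ and $|f(uv^n w)| = \Omega(n^{k+1})$. Since $|uv^n w| = O(n)$, this yields the theorem. For the inductive step, write $f = \CbS(g,(h_i)_{i\in I})$ with $g : \Gamma^* \to I^*$ regular and $\max_{i}\rk(h_i) = k-1$, which is possible by \Cref{prop:CbS-ind} because $\rk(f)=k$. By \Cref{def:cbs}, every input $s$ satisfies the crucial inequality $|f(s)| \geq |g(s)|_{i}\cdot|h_{i}(s)|$ for each $i\in I$ (each occurrence of $i$ in $g(s)$ contributes a copy of $h_i(s)$ to the concatenation). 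So it suffices to produce a single linear-size family $s_n$ on which, for one well-chosen letter $i^*$, the factor $|g(s_n)|_{i^*}$ is $\Omega(n)$ while $|h_{i^*}(s_n)|$ is $\Omega(n^k)$.

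The choice of $i^*$ is dictated by \Cref{lem:inlining}: taking its contrapositive with $J = \{i : \rk(h_i) = k-1\}$ (resp.\ $J=\{i : |h_i(\Gamma^*)|=\infty\}$ in the base case $k=1$), the hypothesis $\rk(\CbS(g,(h_i)_{i\in I})) = k > k-1$ forces $\sup_s |g(s)|_J = \infty$, hence some $i^*\in J$ has $|g(s)|_{i^*}$ unbounded with $\rk(h_{i^*}) = k-1$. Feeding this unboundedness into \Cref{lem:dichotomy} with $r=1$ and $\Pi = \{i^*\}$, the set of strings admitting a producing $1$-split for $(g,i^*)$ is non-empty, so there is a $1$-split $(a,p,b)$ according to $\nu_g$ with $(\nu_g(a),\nu_g(p),\nu_g(b))\in P(g,i^*)$; then \Cref{lem:producing-triples} gives $|g(a\,p^n\,b)|_{i^*} \geq n$. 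In parallel, the induction hypothesis applied to $h_{i^*}$ supplies a second pumping family $u_1 v_1^n w_1$ with $|h_{i^*}(u_1 v_1^n w_1)| = \Omega(n^k)$ (in the base case $k=1$ this is merely the statement that a regular function with unbounded image grows at least linearly along a producing pump, again by \Cref{lem:producing-triples}).

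The remaining and genuinely delicate step is to \emph{merge these two pumps into one string}, pumped by a common parameter $n$, so that both estimates hold at once. The obstruction is that the family witnessing the growth of $h_{i^*}$ need not make $g$ output many copies of $i^*$, and conversely. I would resolve this by applying \Cref{prop:ramsey} to the morphism $\nu_g$ (refined if necessary) in order to replace $p$ and a suitable power of $v_1$ by $\nu_g$-idempotent factors and to fit them, with compatible boundary conditions, as the two middle factors of a single $2$-split $s_n = a'\,p^n\,c\,v_1^n\,b'$ according to $\nu_g$. Since a $2$-split is a pair of $1$-splits, \Cref{lem:producing-triples} applies to each middle factor separately: pumping $p^n$ contributes $\geq n$ occurrences of $i^*$ to $g$, while pumping $v_1^n$ can only keep this count fixed or raise it (a producing triple never decreases a letter count), so $|g(s_n)|_{i^*} = \Omega(n)$; symmetrically, the factor $v_1^n$ drives $|h_{i^*}(s_n)| = \Omega(n^k)$. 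The main technical obstacle is the dual monotonicity needed here, namely that inserting the extra region $p^n$ does not destroy the growth of $h_{i^*}$ along $v_1^n$; I expect to secure it by strengthening the induction hypothesis to carry precisely such a clause -- that pumping an additional $\nu_g$-idempotent factor never decreases the output size -- proved at each level through the producing-triples machinery of \Cref{lem:producing-triples}. Multiplying the two bounds via $|f(s_n)| \geq |g(s_n)|_{i^*}\cdot|h_{i^*}(s_n)|$ then gives $|f(s_n)| = \Omega(n^{k+1})$ with $|s_n| = O(n)$, completing the induction.
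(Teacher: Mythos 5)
Your toolkit is the right one --- induction on rank via \Cref{prop:CbS-ind}, the inequality $|f(s)| \geq |g(s)|_{i}\cdot|h_i(s)|$, the contrapositive of \Cref{lem:inlining} to find $i^*$ with $|g(\cdot)|_{i^*}$ unbounded, producing triples and monotonicity --- but the strengthened induction hypothesis you chose is false, and this is fatal. Already at rank $1$ there need not exist a single pump $uv^nw$ with $|f(uv^nw)| = \Omega(n^{2})$: take the function of item (iii) of \Cref{thm:cf-not-hdt0l}, $a^m\#w \mapsto (w\#)^m$ (realized as a $\CbS$ of two sequential functions). Its output length is essentially (number of leading $a$'s) $\times$ (length of the part after the first $\#$, plus one), and pumping one factor $v$ can make at most one of these two quantities grow: if $v$ lies in the leading $a$-block the suffix after the first $\#$ is fixed, while if $v$ contains or follows a non-$a$ letter the leading-$a$ count is bounded by $|uv|$. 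So every single-factor pump gives $O(n)$, not $\Omega(n^2)$. In general one needs $\rk(f)+1$ \emph{disjoint} factors pumped simultaneously --- which is exactly what the paper's \Cref{lem:dichotomy-redux} delivers, and what your own inductive step tacitly concedes: from a one-factor hypothesis it produces a string with \emph{two} pumped factors, so the statement proved does not match the statement assumed and the induction does not close.

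The second gap is the merging step, which you rightly flag as the delicate point but do not actually resolve. Your hypothesis for $h_{i^*}$ is existential: it yields growth along some family $u_1v_1^nw_1$, while the producing $1$-split for $(g,i^*)$ lives in a \emph{different} string; \Cref{prop:ramsey} extracts idempotent factors \emph{within} one given string but cannot splice factors of two unrelated strings into one, and the monotonicity clause you propose (essentially \Cref{lem:cf-triples}) only guarantees that pumping additional $1$-split factors of a \emph{given} input does not decrease the output --- it cannot guarantee that $v_1$ still drives $\Omega(n^k)$ growth of $h_{i^*}$ inside the spliced string $a'p^nc\,v_1^nb'$. The paper escapes this chicken-and-egg problem by making the induction hypothesis \emph{universal over a language}: \Cref{lem:dichotomy-redux} constructs a non-empty \emph{regular} language $\widehat{L}$ (via \Cref{lem:dichotomy} and \Cref{lem:inlining}) outside of which the rank strictly drops, and such that \emph{every} $s \in \widehat{L}$ admits $\rk(f)+1$ many $r$-splits with the joint pumping property; the construction arranges (through the modified function $g'$ there) that any $s$ in the top-level language automatically lies in the languages provided by the inductive hypothesis for $h_{i^*}$, so all splits, at all levels of the induction, are splits of one and the same string $s$. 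One then picks pairwise non-overlapping representatives, one per split, by a greedy interval argument, and pumps them jointly using \Cref{prop:monotone-multipumping} and \Cref{lem:cf-triples} exactly where you invoke them. Reformulating your induction in this universally-quantified, language-based form is not a local repair; it is the missing core of the proof.
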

Let us start by proving the lemmas stated in the main text which this theorem
depends on.

\subsection{Proofs for the lemmas in \Cref{sec:minimization}}

Recall that the notion of $r$-split has been defined in \Cref{def:r-split}.

\producingtriples*
\begin{proof}[Proof idea]
  We reuse an idea from~\cite{PebbleMinimization}, but instead of using
  transition monoids of two-way transducers, we rely on monoids of copyless
  register assignments. We shall use the notations introduced in
  \Cref{sec:transition-monoids} for these monoids and the operations they
  support.
  
  Let $R$ and $\Sigma$ be finite alphabets. First, we factor $\tterase_\Sigma :
  \transmonocl_{R,\Sigma} \to \transmonocl_{R,\varnothing}$ into two surjective
  morphisms $\transmonocl_{R,\Sigma} \to \transmonozo_{R,\Sigma} \to
  \transmonocl_{R,\varnothing}$, going through a new monoid which keeps some
  information about the letters of $\Sigma$ but is still \emph{finite}. To do
  so, we define an equivalence relation on register assignments as follows: for
  $\alpha,\beta \in \transmonocl_{R,\Sigma}$, we say that $\alpha \sim \beta$
  when
  \begin{itemize}
  \item $\tterase_\Sigma(\alpha) = \tterase_\Sigma(\beta)$;
  \item for each $r \in R$, the sets of letters \emph{from $\Sigma$} that appear
    in $\alpha(r)$ and $\beta(r)$ are equal.
  \end{itemize}
  One can show that $\sim$ is a congruence, so we may form the quotient monoid
  $\transmonozo_{R,\Sigma} = \transmonocl_{R,\Sigma}/\!\sim$. Thanks to the first
  clause in the definition of $\sim$, the morphism $\tterase_\Sigma$ factors
  through the canonical projection. The quotient is finite since each
  equivalence class has a representative $\alpha$ such that $|\alpha(r)| \leq
  |R|+|\Sigma|$ for all $r \in R$: essentially, $\sim$ only takes into account
  the presence or absence of each letter in $\Sigma$, not their multiplicity
  (hence the notation \enquote{01}).
  
  Next, let $f : \Gamma^* \to \Sigma^*$ be computed by some copyless SST $(Q,
  q_0, R, \delta, \vec{u}_I, F)$. We take $\prodmono(f) =
  \transmonozo_{R,\Sigma} \wr Q$ and define $\nu_f$ as a composition $\Gamma^*
  \to \transmonocl_{R,\Sigma} \wr Q \to \prodmono(f)$ where the first morphism
  -- which we may call $\psi_\delta$, as in \Cref{lem:shape-plus-labels} -- maps
  $c \in \Gamma$ to $\delta(-,c)$ and the second morphism is the canonical
  projection.

  What we need to show now is that, given a \emph{1-split} $(u,v,w) \in
  (\Gamma^*)^3$ with respect to $\nu_f$, the comparison between $|f(uvw)|$ and
  $|f(uw)|$ depends only on $\nu_f(x)$ for $x\in\{u,v,w\}$.
  
  Let $q'$ and $\vec{u}'_I$ be the state and register values of the SST after
  reading $u$; to be more formal, $\psi_\delta(u)(q_0) = (q', \alpha)$ and
  $\alpha^\dagger(\vec{u}_I) = \vec{u}'_I$. Note that $q'$ is also the first
  component of the pair $\nu_f(u)(q_0)$; since $\nu_f(uv) = \nu_f(u)$ (by
  definition of 1-split), the SST reaches the state $q'$ after reading $uv$ as
  well: $\psi_\delta(q',v) = (q',\beta)$ for some $\beta \in
  \transmonocl_{R,\Sigma}$.

  Let $\psi_\delta(w)(q') = (q'', \gamma)$. Then
  \[ \qquad f(uvw) = F(q'')^\dagger \circ
    (\beta\bullet\gamma)^\dagger(\vec{u}'_I) \qquad f(uw) = F(q'')^\dagger \circ
    \gamma^\dagger(\vec{u}'_I)\]
  Since $\nu_f(vw) = \nu_f(w)$, we have $\tterase_\Sigma(\beta\bullet\gamma) =
  \tterase_\Sigma(\gamma)$. Therefore, $\omega =
  (\beta\bullet\gamma)^\odot(F(q''))$ and $\omega' = \gamma^\odot(F(q''))$
  have the same letters from $R$ with the same multiplicities (and appearing in
  the same order, although this does not matter for us here): $|\omega|_r =
  |\omega'|_r$ for all $r \in R$. This is why the two sums over $R$ cancel out
  in the following computation (writing $\vec{u}'_I = (u'_r)_{r \in R}$):
  \begin{align*}
    \forall c \in \Sigma,\quad
    |f(uvw)|_c - |f(uw)|_c &= |\omega^\dagger(\vec{u}'_I)|_c -
                             |(\omega')^\dagger(\vec{u}'_I)|_c \\
                           &= |\omega|_c + \sum_{r \in R} |\omega|_r \cdot |u'_r|_c -
                             |\omega'|_c - \sum_{r \in R}|\omega'|_r \cdot |u'_r|_c\\
                           &= |\omega|_c - |\omega'|_c
  \end{align*}
  From now on, let $c \in \Sigma$. From the definition of $\beta^\odot$, we have
  \[ |\omega|_c = |\beta^\odot(\omega')|_c = |\omega'|_c + \sum_{r\in R}
    |\omega'|_r \cdot |\beta(r)|_c \]
  So we get the dichotomy of the lemma statement:
  \begin{itemize}
  \item if there exists some $r \in R$ such that $|\omega'|_r > 0$ and
    $|\beta(r)|_c > 0$, then $|f(uvw)| > |f(uw)|$;
  \item otherwise, $|f(uvw)| = |f(uw)|$.
  \end{itemize}
  For each $r \in R$, the condition $|\omega'|_r > 0$ can be checked from $q'$
  and $\nu_f(w)$; in turn, $q'$ depends only on $\nu_f(u)$. As for $|\beta(r)|_c
  > 0$, since it is a condition on the presence or not of a certain letter from
  $\Sigma$ in $\beta(r)$, without considering its precise multiplicity, it
  depends only on $\nu_f(v)$: this is the information that $\nu_f$ was designed
  to encode. This gives us the definition of the set of producing triples
  $P(f,c)$.

  There remains a final claim to prove in the lemma statement, concerning
  $|f(uv^n w)|_c$ when $(u,v,w) \in P(f,c)$ and $n \in \naturalN$. Using
  $\mu(uv) = \mu(u)$, one can show that for all $m \in \naturalN$, the triple
  $(uv^m,v,w)$ is also a producing 1-split. So we have
  \[ |f(uv^n w)|_c > |f(uv^{n-1}w)|_c > \dots > |f(uw)|_c \] and since all
  elements of this sequence are natural numbers, $|f(uv^n w)|_c \geq n$.
\end{proof}

\noindent
Next, we prove \Cref{prop:ramsey} before \Cref{lem:dichotomy}, following the
order of logical dependency.

\framsey*
\begin{proof}
  By the finite Ramsey theorem for pairs, there exists $R \in \naturalN$ such
  that every complete undirected graph with at least $R$ vertices whose edges
  are colored using $|M|$ colors contains a monochromatic clique with $r+3$
  vertices. We take $N=R-1$.

  Let $s = uvw \in \Gamma^*$ with $|v| \geq N$. Let us write $s[i\dots j]$ for
  the substring of $s$ between two positions $i,j \in \{0,\dots,|s|\}$. Those
  indices are considered as positions \emph{in-between} letters, so, for
  instance, $s = s[0\dots |s|]$, while $s[(i-1)\dots i]$ is the $i$-th letter of
  $s$; note also that $s[i\dots j] \cdot s[j \dots k] = s[i\dots k]$. In
  particular, we have $v = s[|u| \dots |uv|]$.

  Consider the following coloring of the complete graph over $V =
  \{|u|,\dots,|uv|\}$: the edge $(i,j) \in V^2$ with $i < j$ is given the color
  $\varphi(s[i\dots j])$. Since $|V| \geq N+1 = R$, there exists a monochromatic
  clique $\{i_0,\dots,i_{r+2}\} \subseteq V$ with $i_0 < \dots < i_{r+2}$.

  We now define $u' = s[0\dots i_1]$ and $w' = s[i_{r+1} \dots |s|]$, which
  ensures that $u$ is a prefix of $u'$ and $w$ is a suffix of $w'$ since $i_1$
  and $i_{r+1}$ are positions in $v$. For $m \in \{1,\dots,r\}$, we also take
  $v'_m = s[i_m \dots i_{m+1}] \neq \varepsilon$ (because $|v'_m| = i_{m+1} -
  i_m \geq 1$). Then $s = u' v'_1 \dots v'_r w'$, and
  \begin{align*}
    \varphi(u' v'_1 \dots v'_m) &= \varphi(s[0\dots i_{m+1}]) =
                               \varphi(s[0\dots{}i_0]) \varphi(s[i_0 \dots i_{m+1}])\\
                             &= \varphi(s[0\dots{}i_0]) \varphi(s[i_0 \dots i_1])
                               \quad \text{by monochromaticity}\\
                             &= \varphi(s[0\dots{}i_1]) = \varphi(u')
  \end{align*}
  and similarly, $\varphi(v'_m \dots v'_1 w') = \varphi(w')$. Thus, by
  definition, we have an $r$-split of $s$.
\end{proof}

\dichotomy*
\begin{proof}
  $L(f,\Pi,\varphi,r)$ can be recognized by a non-deterministic automaton that
  guesses an adequate $r$-split and computes
  $\varphi(u),\varphi(v_1),\dots,\varphi(v_r),\varphi(w)$. The hard part is
  showing that $|f(-)|_\Pi$ is bounded on the complement of this language.

  By the previous proposition, there exists some $N \in
  \naturalN$ such that any string $s \in \Gamma^*$ of length at least $N$ admits
  an $r$-split according to $\varphi$. Thanks to the existence of $\pi$, it is
  also an $r$-split according to $\nu_f$. So if this long string is in
  $\Gamma^* \setminus L(f,\Pi,\varphi,r)$, then it is of the form $s = u v_1 \dots
  v_r w$ where, for some $i \in \{1,\dots,r\}$, $(u v_1 \dots v_{i-1}, v_i,
  v_{i+1} \dots v_r w)$ is \emph{not} producing. Therefore, $|f(u v_1 \dots
  v_{i-1} v_{i+1} \dots v_r w)|_\Pi = |f(u v_1 \dots v_r w)|_\Pi$. The important
  part is that the argument in the left-hand side is strictly shorter (the
  definition of $r$-split contains $v_i \neq \varepsilon$). Furthermore, we
  claim that $s' = u v_1 \dots v_{i-1} v_{i+1} \dots v_r w \in \Gamma^*
  \setminus L(f,\Pi,\varphi,r)$. Once this is established, a strong induction on
  the length suffices to show that $|f(-)|_\Pi$ restricted to $\Gamma^*
  \setminus L(f,\Pi,\varphi,r)$ reaches its maximum at some string of length
  smaller than $N$, and thus to conclude the proof.

  It remains to show that $s' \notin L(f,\Pi,\varphi,r)$. If this were false, then
  by definition we would have a producing $r$-split $s' = u' v'_1 \dots v'_r
  w'$. Assuming this, we will lift this split to a producing $r$-split of $s$ in
  order to contradict $s \notin L(f,\Pi,\varphi,r)$. We give notations to the
  components of our non-producing triple: $\hat{u} = u v_1 \dots v_{i-1}$,
  $\hat{v} = v_i$, $\hat{w} = v_{i+1} \dots v_r w$.

  Suppose that for some $j \in \{1,\dots,r\}$ and $x \in \Gamma^*$, we have
  $\hat{u} = u' v'_1 \dots v'_{j-1} x$ and $|x| \leq |v'_j|$. Then there must
  exist a unique $y \in \Gamma^*$ such that $v'_j = xy$ and $\hat{w} = y
  v'_{j+1} \dots v'_r w'$. What we want to show now is that $(u', v'_1, \dots,
  v'_{j-1}, x\hat{v}y, v'_{j+1}, \dots, v'_r, w')$ is a producing $r$-split of
  $s$.
  \begin{itemize}
  \item First, the concatenation of this sequence of length $r+2$ is indeed
    equal to $\hat{u}\hat{v}\hat{w} = s$.
  \item Next, we have $\varphi(u' v'_1 \dots v_{j-1} (x\hat{v}y)) =
    \varphi(\hat{u}\hat{v}y) = \varphi(\hat{u}\hat{v})\varphi(y) =
    \varphi(\hat{u})\varphi(y)$ since $(\hat{u},\hat{v},\hat{w})$ is a 1-split
    of $s$, and $\varphi(\hat{u})\varphi(y) = \varphi(\hat{u}y) = \varphi(u'
    v'_1 \dots v'_{j})$. For $k \geq j$, by multiplying by $\varphi(v'_{j+1}
    \dots v'_k)$ on the right, we get $\varphi(u' v'_1 \dots v'_{j-1}
    (x\hat{v}y) v'_{j+1} \dots v'_k) = \varphi(u' v'_1 \dots v'_k)$. Similarly,
    for $k \leq j$, we have $\varphi(v'_k \dots v'_{j-1} (x\hat{v}y) v'_{j+1}
    \dots v'_r w') = \varphi(v'_k \dots v'_r w')$.
  \item Combining the above with the fact that $(u',v'_1,\dots,v'_r,w')$ is an
    $r$-split of $s'$ gives us directly from the definitions that our new
    $(r+2)$-tuple with $x\hat{v}y$ is an $r$-split of $s$.
  \item Finally, we must check that it is producing.
    \begin{itemize}
    \item Let $k \leq j-1$. We must show that $(u' v'_1 \dots v'_{k-1}, v_k,
      v'_{k+1} \dots v'_{j-1} (x\hat{v}y) v_{j+1} \dots v'_r w')$ is producing
      with respect to $(f,\Pi)$. We have seen previously that the componentwise
      image by $\varphi$ of this triple is the same as the one for $(u' v'_1
      \dots v'_{k-1}, v'_k, v'_{k+1} \dots v'_r w')$. The latter is producing
      (since it comes from an $r$-split chosen to be producing), and therefore
      so is the former, because thanks to $\nu_f = \pi \circ \varphi$, the
      image by $\varphi$ suffices to determine whether a triple is producing.
    \item The case $k \geq j+1$ is symmetrical.
    \item The remaining case is $(u' v'_1 \dots v'_{j-1}, x\hat{v}y, v'_{j+1}
      \dots v'_r w')$. It would be convenient if $x\hat{v}y$ and $v'_j$ had the
      same image by $\varphi$, but this is not guaranteed. Instead, we come back
      to the dichotomy concerning what happens when we remove the substring
      $x\hat{v}y$ in $s$. This can be done in two steps: first remove $\hat{v}$
      in $s$, which gives us $s'$, then remove $xy = v'_j$ from $s'$, resulting
      in $s'' = u' v'_1 \dots v'_{j-1} v'_{j+1} \dots v'_r w'$. Using the fact
      that the $r$-split of $s'$ is producing while $(\hat{u},\hat{v},\hat{w})$
      is not, we have $|f(s)|_\Pi = |f(s')|_\Pi > |f(s'')|_\Pi$. This means that
      the 1-split containing $x\hat{v}y$ must be producing.
    \end{itemize}
  \end{itemize}
  If the $(j,x)$ chosen previously does not exist, then either $u'$ is a prefix
  of $\hat{u}$ or $w'$ is a suffix of $\hat{w}$. In those cases, there is an
  analogous lifting procedure, and its proof of correctness is simpler; we leave
  this to the reader.
\end{proof}
\begin{remark}
  \label{rem:dichotomy}
  It is not clear whether the above reasoning can be made to work if we require
  idempotency in the definition of $r$-split.
  An analogous argument is made in the first paragraph of the proof of the original
  Dichotomy Lemma in~\cite{PebbleMinimization}, but we were unable to check that
  $s' \notin L(f,\Pi,\varphi,r)$ when forcing the central elements of producing
  triples to be idempotent. Thankfully it does not seem to be required to carry out
  further arguments leading to a proof of \Cref{thm:minimization}.
\end{remark}

\inlining*
\begin{proof}
  We write $f = \CbS(g,(h_i)_{i \in I})$.
  
  \proofsubparagraph{First, let us consider the case $k=0$.}
  For convenience, we assume
  w.l.o.g.\ that $I \cap \Sigma = \varnothing$. Let $N = \sup\{|g(s)|_J \mid s
  \in \Gamma^*\}$ -- in the degenerate case $J = \varnothing$, this leads to $N
  = 0$ -- and $\iota_n(s)$ be the $n$-th letter of $J$ in $g(s)$ if it exists,
  or else $\varepsilon$. Then we use the equation $g(s) =
  \rho_0(s)\iota_1(s)\rho_1(s) \dots \iota_N(s)\rho_N(s)$ to define uniquely
  $\rho_0,\dots,\rho_N : \Gamma^* \to (I \setminus J)^*$. One can build for each
  $n \in \{0,\dots,N\}$ a sequential transducer whose composition with $g$
  yields $\rho_n$; therefore, since $g$ is regular, so is $\rho_n$. We define
  $\psi_n(s)$ next as $h_{\iota_n(s)}(s)$ when $\iota_n(s) \in J$, and
  $\varepsilon$ otherwise. For any $n \in \{1,\dots,N\}$, since the languages
  $g^{-1}(((I \setminus J)^* J)^{n-1} (I\setminus J)^* i I^*)$ are regular for all $i \in
  J$, this defines $\psi_n$ as a combination of $\{s \mapsto \varepsilon\} \cup
  \{h_i \mid i \in J\}$ by regular conditionals, so $\psi_n$ is regular.
  Finally, we set $f'(s) = \rho_0(s)\psi_1(s)\rho_1(s) \dots \psi_N(s) \rho_N(s)
  \in (\Sigma \cup I \setminus J)^*$; the function $f'$ thus defined is regular
  by closure under concatenation (use a product construction on copyless SSTs).
  Observe that $f'(s)$ is obtained by substituting each occurrence of a letter
  $i\in J$ in $g(s)$ by $h_i(s)$ (thus, it is equal to $g(s)$ when $J =
  \varnothing$, and to $f(s)$ when $J = I$).

  What remains to do is to substitute the letters of $I \setminus J$ to get $f$.
  To do so, let us define $L_{\vec{w}} = \{ s \in \Gamma^* \mid \forall i \in I
  \setminus J,\; h_i(s) = w_i \}$ for $\vec{w}=(w_i)_{i\in I\setminus J} \in
  \prod_{i \in I\setminus J} h_i(\Gamma^*)$. The function $f$ coincides on
  $L_{\vec{w}}$ with $f'$ postcomposed with the morphism that replaces each $i
  \in I \setminus J$ by $w_i$; this is regular by closure under composition.
  Furthermore, the factors of $\prod_{i \in I\setminus J} h_i(\Gamma^*)$ are
  finite by definition of $J$, and $I \setminus J$ itself is a subset of the
  finite alphabet $I$. So there are finitely many $L_{\vec{w}}$, and they
  partition $\Sigma^*$; they are also all regular, as finite intersections of
  preimages of singletons by regular functions. Therefore, $f$ is obtained by
  combining regular functions by a regular conditional, so it is regular, i.e.\
  $\rk(f) = 0$ as we wanted.
  
  \proofsubparagraph{This being done, let us move on to the case $k \geq 1$.}
  For $i \in J$, let $h_i = \CbS(g'_i, (h'_{i,x})_{x \in X_i})$ where all the
  $g'_i$ are regular and the $h'_{i,x}$ are of rank at most $k-1$, choosing the
  $X_i$ to be pairwise disjoint as well as disjoint from $I$. Let $f'(s)$ be
  obtained from $g(s)$ by substituting each occurrence of a letter $i \in J$ by
  $g'_i(s)$. For the same reasons as those exposed in the first paragraph of the
  case $k=0$, this defines a regular function $f'$. By taking its composition by
  substitution with the disjoint union of the families $(h_i)_{i \in I \setminus
    J}$ and $(h'_{i,x})_{x \in X_i}$ for $i \in J$, we recover $f$. Since the
  functions involved in this union family are all of rank at most $k-1$ (by
  definition of $J$), this means that $\rk(f) \leq k$.
\end{proof}

\subsection{Wrapping up the proof of \Cref{thm:minimization}}

\begin{lemma}
  \label{lem:cf-triples}
  Let $f : \Gamma^* \to \Sigma^*$ be comparison-free polyregular. There exists a
  morphism to a \emph{finite} monoid $\nu'_{f} : \Gamma^* \to \prodmonocf(f)$ such that,
  for any \emph{1-split} according to $\nu'_f$ composed of $u,v,w \in \Gamma^*$
  and any $c \in \Sigma^*$, the sequence $(|f(uv^nw)|_c)_{n\in\naturalN}$ is
  non-decreasing.
\end{lemma}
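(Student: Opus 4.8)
The plan is to proceed by induction on $\rk(f)$, using the decomposition provided by \Cref{prop:CbS-ind}, with \Cref{lem:producing-triples} supplying the base case and the regular building block of the inductive step.

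For the base case $\rk(f) = 0$, the function $f$ is regular, so I would simply set $\prodmonocf(f) := \prodmono(f)$ and $\nu'_f := \nu_f$, taking the data from \Cref{lem:producing-triples}. The key observation is that pumping preserves the 1-split property: if $(u,v,w)$ is a 1-split according to $\nu_f$, so that $\nu_f(u)\nu_f(v) = \nu_f(u)$, then $\nu_f(uv^m) = \nu_f(u)$ for every $m$, and hence $(uv^m, v, w)$ is again a 1-split according to $\nu_f$. Applying the dichotomy of \Cref{lem:producing-triples} to each of these shifted 1-splits gives $|f(uv^{m+1}w)|_c \geq |f(uv^m w)|_c$ for all $m \in \naturalN$ (equality in the non-producing case, strict increase in the producing one), so the sequence $(|f(uv^n w)|_c)_n$ is non-decreasing, as desired.

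For the inductive step I would write $f = \CbS(g, (h_i)_{i \in I})$ with $g : \Gamma^* \to I^*$ regular and each $h_i$ of rank strictly below $\rk(f)$. Taking $\nu_g$ from \Cref{lem:producing-triples} and the morphisms $\nu'_{h_i}$ from the induction hypothesis, I would define $\prodmonocf(f) := \prodmono(g) \times \prod_{i \in I} \prodmonocf(h_i)$, which is finite as a finite product of finite monoids, together with the tupling morphism $\nu'_f := s \mapsto (\nu_g(s), (\nu'_{h_i}(s))_{i \in I})$. Since equality in a product monoid is componentwise, a 1-split according to $\nu'_f$ is simultaneously a 1-split according to $\nu_g$ and according to each $\nu'_{h_i}$. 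The conceptual heart of the argument is then the pointwise identity
\[ |f(s)|_c ~=~ \sum_{i \in I} |g(s)|_i \cdot |h_i(s)|_c, \]
which is immediate from the definition of $\CbS$ (the $i$-labelled positions of $g(s)$ each contribute a copy of $h_i(s)$ to the output). Writing $s_n = uv^n w$ for a fixed 1-split $(u,v,w)$ according to $\nu'_f$: the base-case argument applied to the regular function $g$ shows each $(|g(s_n)|_i)_n$ is non-decreasing, while the induction hypothesis applied to $h_i$ shows each $(|h_i(s_n)|_c)_n$ is non-decreasing. As all these quantities are non-negative integers, every product $|g(s_n)|_i \cdot |h_i(s_n)|_c$ is non-decreasing in $n$, hence so is their finite sum $|f(s_n)|_c$, closing the induction.

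The argument is largely routine, so I do not expect a genuine obstacle; the two points requiring care are (i) verifying that $(uv^m, v, w)$ remains a 1-split, so that \Cref{lem:producing-triples} upgrades from a single-step comparison to full monotonicity along the whole pumping sequence, and (ii) the bookkeeping of the product-monoid construction, ensuring that one 1-split condition for $\nu'_f$ really delivers all the component 1-split conditions at once. The counting formula for $|f(s)|_c$ is the substantive idea but is an immediate consequence of the definition of composition by substitution.
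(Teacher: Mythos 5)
Your proposal is correct and follows essentially the same route as the paper: the paper's (very terse) proof is exactly an induction on $\rk(f)$ via \Cref{lem:producing-triples}, taking $\prodmonocf(f) = \prodmono(g) \times \prod_{i \in I} \prodmonocf(h_i)$ for $f = \CbS(g,(h_i)_{i\in I})$, and your write-up supplies the details it leaves implicit (the pumping-preserves-1-split observation, which the paper itself uses at the end of the proof of \Cref{lem:producing-triples}, and the counting identity $|f(s)|_c = \sum_i |g(s)|_i \cdot |h_i(s)|_c$). No gaps.
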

\begin{proof}
  By straightforward induction on $\rk(f)$, using \Cref{lem:producing-triples}:
  for $f = \CbS(g,(h_i)_{i \in I})$ where $g$ is regular and the $h_i$ are
  comparison-free, we take $\prodmonocf(f) = \prodmono(g) \times
  \displaystyle\prod_{i \in I} \prodmonocf(h_i)$.
\end{proof}

\begin{lemma}
  \label{lem:dichotomy-redux}
  Let $f : \Gamma^* \to \Sigma^*$ be a comparison-free polyregular function. Let
  $\varphi : \Gamma^* \to M$ be a morphism to a finite monoid and let $r \geq 1$. Then there exists a regular language
  $\widehat{L}(f,\varphi,r) \subseteq \Gamma^*$ such that:
  \begin{itemize}
  \item the function which maps $\widehat{L}(f,\varphi,r)$ to $\varepsilon$ and
    coincides with $f$ on $\Gamma^* \setminus \widehat{L}(f,\varphi,r)$
    \begin{itemize}
    \item is regular and takes finitely many values if $\rk(f) = 0$ i.e.\ $f$ is
      regular;
    \item is comparison-free polyregular with rank strictly lower than $\rk(f)$
      otherwise;
    \end{itemize}
  \item for any $s \in \widehat{L}(f,\varphi,r)$, there exist $k = \rk(f)+1$
    $r$-splits according to $\varphi$ -- let us write them as $s = u^{(m)}
    v^{(m)}_1 \dots v^{(m)}_r w^{(m)}$ for $m \in \{1,\dots,k\}$ -- such that,
    for any factorization $s = \alpha_0 \beta_1 \alpha_1 \dots \beta_{k}
    \alpha_{k}$ where, for some \emph{permutation} $\sigma$ of $\{1,\dots,k\}$,
    each $\beta_m$ coincides with some $v^{(\sigma(m))}_l$ (in the sense that
    their positions as substrings of $s$ are equal), we have
    \[ \forall n \in \naturalN,\;
      |f(\alpha_0 \beta^n_1 \alpha_1 \dots \beta^n_{k} \alpha_k)| \geq n^k \]
    (note that in general, such factorizations $s = \alpha_0 \beta_1 \alpha_1
    \dots \beta_{k} \alpha_{k}$ might not exist, for instance when $r = 1$ and
    all the substrings $v^{(m)}_1$ overlap)
  \end{itemize}
\end{lemma}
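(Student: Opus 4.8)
The plan is to prove the statement by induction on $K := \rk(f)$, after first replacing the given morphism $\varphi$ by a refinement $\varphi'$ to a finite monoid that factors through $\varphi$ and through all the producing-triple morphisms I shall need; since any $r$-split according to $\varphi'$ is a fortiori one according to $\varphi$, it is harmless to produce the claimed splits with respect to $\varphi'$. For the base case $K=0$ ($f$ regular), I would take $\varphi'$ refining both $\varphi$ and $\nu_f$, set $\widehat{L}(f,\varphi,r) = L(f,\Sigma,\varphi',r)$ (the producing $r$-split language of \Cref{lem:dichotomy} with $\Pi=\Sigma$), and read off the first bullet directly: the modified function is regular by \Cref{lem:regular-conditionals} and takes finitely many values because $|f(-)|$ is bounded off $L(f,\Sigma,\varphi',r)$. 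The single ($k=\rk(f)+1=1$) split is producing, so for any choice of one of its middle factors the corresponding triple is producing for some $(f,c)$ and the last clause of \Cref{lem:producing-triples} gives $|f(\alpha_0\beta_1^n\alpha_1)|\ge n$.

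For the inductive step I write $f=\CbS(g,(h_i)_{i\in I})$ with $g$ regular and $\rk(h_i)\le K-1$, and let $J$ be the high-rank index set ($\{i:\rk(h_i)=K-1\}$ if $K\ge 2$, and $\{i:|h_i(\Gamma^*)|=\infty\}$ if $K=1$). I fix $\varphi'$ refining $\varphi$, $\nu_g$ and every $\nu'_{h_i}$ (from \Cref{lem:cf-triples}) for $i\in J$, and apply the induction hypothesis to each $h_i$ ($i\in J$) with this same $\varphi'$, obtaining regular languages $\widehat{L}_i := \widehat{L}(h_i,\varphi',r)$. I then set $\widehat{L}(f,\varphi,r) = \bigcup_{i\in J}\left(L(g,\{i\},\varphi',r)\cap\widehat{L}_i\right)$, which is regular by \Cref{lem:dichotomy}, the induction hypothesis and closure of regular languages.

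For the first bullet (rank drop off $\widehat{L}$) I partition $\Gamma^*\setminus\widehat{L}$ according to $A(s):=\{i\in J:s\in L(g,\{i\},\varphi',r)\}$, which is a regular condition. On the piece where $A(s)=A$: for $i\in A$ we have $s\notin\widehat{L}_i$ (as $s\notin\widehat{L}$), so $h_i$ agrees there with the function of rank $<\rk(h_i)$ (or finite range, if $h_i$ is regular) given by the induction hypothesis's first bullet; for $i\in J\setminus A$ we have $s\notin L(g,\{i\},\varphi',r)$, so $|g(s)|_i$ is bounded by \Cref{lem:dichotomy}. Replacing $g$ by a regular function equal to $g$ on this piece but producing boundedly many letters of $J\setminus A$ everywhere, and replacing each $h_i$ ($i\in A$) by its lowered-rank version, \Cref{lem:inlining} shows $f$ coincides on the piece with a function of rank $\le K-1$. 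As there are finitely many $A$, gluing by regular conditionals (\Cref{prop:cfp-conditionals}) yields the required function of rank $\le K-1$.

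The real work is the second bullet. Given $s\in\widehat{L}$, pick $i\in J$ with $s\in L(g,\{i\},\varphi',r)\cap\widehat{L}_i$; take split $\#1$ to be a producing $r$-split of $s$ for $(g,i)$ and splits $\#2,\dots,\#(K+1)$ to be the $K=\rk(h_i)+1$ splits furnished by the induction hypothesis for $h_i$. In any admissible factorization $s=\alpha_0\beta_1\dots\beta_{K+1}\alpha_{K+1}$ one factor $\beta_{m_0}$ comes from split $\#1$ and the other $K$ from the $h_i$-splits; write $s^{(n)}$ for the pumped input. The crucial point — and the main obstacle — is that all factors are pumped \emph{simultaneously}, whereas \Cref{lem:producing-triples}, \Cref{lem:cf-triples} and the induction hypothesis each control only one batch. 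I would dispatch this with the elementary observation that, for factors taken from $r$-splits according to $\varphi'$, the prefix/suffix conditions force $\varphi'$ of any prefix or suffix of $s$ to be unchanged when the factors it contains are pumped (a short left-to-right computation from $\varphi'(uv_1\dots v_\ell)=\varphi'(u)$). Consequently, writing $s^{(n)}=P\,\beta_{m_0}^{\,n}\,S$, the triple $(P,\beta_{m_0},S)$ keeps the $\nu_g$-image of the original context of $\beta_{m_0}$, hence stays a producing $1$-split for $(g,i)$ and $|g(s^{(n)})|_i\ge n$ by \Cref{lem:producing-triples}; and pumping only the $K$ $h_i$-factors gives $|h_i|\ge n^{K}$ by the induction hypothesis, after which additionally pumping $\beta_{m_0}$ cannot decrease $|h_i|$, since $\varphi'$ refines $\nu'_{h_i}$ so that factor's context is a $1$-split according to $\nu'_{h_i}$ and \Cref{lem:cf-triples} makes $m\mapsto|h_i(\dots\beta_{m_0}^m\dots)|$ non-decreasing. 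Finally, $f(s^{(n)})$ contains one copy of $h_i(s^{(n)})$ per occurrence of $i$ in $g(s^{(n)})$, whence $|f(s^{(n)})|\ge|g(s^{(n)})|_i\cdot|h_i(s^{(n)})|\ge n\cdot n^{K}=n^{K+1}$, as required.
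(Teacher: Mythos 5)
Your proof is correct, and it shares the paper's overall architecture: induction on $\rk(f)$, base case via the language $L(f,\Sigma,\cdot,r)$ of \Cref{lem:dichotomy}, inductive case pairing one producing $r$-split for $(g,i)$ with the $\rk(h_i)+1$ splits given by the induction hypothesis, the first bullet handled through \Cref{lem:inlining} plus regular conditionals, and the final bound obtained from $|f(s')| \geq |g(s')|_i \cdot |h_i(s')| \geq n \cdot n^{\rk(h_i)+1}$. Where you genuinely differ is in the bookkeeping of the inductive step, and your variant is tangibly cleaner. The paper couples \enquote{$s$ admits a producing split for $g$} with \enquote{$s$ lies in the inductive language for $h_i$} by introducing a relabelled regular function $g'$ and setting $\widehat{L}(f,\varphi,r)=L(g',J,\chi,r)$; this forces the morphisms to be built in two stages ($\psi_i$ before $g'$, then $\chi$ containing $\nu_{g'}$), which is precisely the source of the \enquote{subtlety} the paper flags: the producing split is only known to be a split according to $\nu_{g'}$, not according to $\nu_g$, so the closing appeal to \Cref{prop:monotone-multipumping} for $g$ is not fully backed up as written; moreover, in $L(g',J,\chi,r)$ different triples of the split may be producing for different letters of $J$, so the paper's assertion that a single $i \in J$ satisfies $s \in L(g',i,\chi,r)$ does not literally follow from the definition. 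Your choices --- one uniform refinement $\varphi'$ of $\varphi$, $\nu_g$ and all the $\nu'_{h_i}$, fixed before anything else, together with $\widehat{L}=\bigcup_{i\in J}\bigl(L(g,\{i\},\varphi',r)\cap\widehat{L}_i\bigr)$ --- remove both difficulties at once: the witnessing $i$ exists by construction, $g'$ and the $g'$-to-$g$ transfer disappear, and since every split in sight is a split according to $\varphi'$, your pumping-invariance observation applies simultaneously to $\nu_g$ and to $\nu'_{h_i}$, so all three pumping arguments (producing triple for $g$ via \Cref{lem:producing-triples}, induction hypothesis for $h_i$, and \Cref{lem:cf-triples} monotonicity for the remaining factor) go through without caveats. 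The only cost is a slightly heavier first bullet, where you partition the complement by $A(s)$ and glue with \Cref{prop:cfp-conditionals}, whereas the paper needs a single application of \Cref{lem:inlining}.
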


\begin{proof}
  We proceed by induction on $\rk(f)$.

  \proofsubparagraph{Base case: $\rk(f)=0$.} In this case, $f$ is regular. Let
  $\psi : \Gamma^* \to M \times \prodmono(f)$ be the monoid morphism obtained by pairing
  $\varphi$ (given in the lemma statement) with $\nu_f$ (given in
  \Cref{lem:dichotomy}). Then, using \Cref{lem:dichotomy}, one can see
  that taking $\widehat{L}(f,\varphi,r) = L(f,\Sigma,\psi,r)$ works.

  \proofsubparagraph{Inductive case: $\rk(f) \geq 1$.} In this case, $f =
  \CbS(g, (h_i)_{i \in I})$ for some regular $g : \Gamma^* \to I^*$ and some
  comparison-free polyregular $h_i : \Gamma^* \to \Sigma^*$ with $\rk(h_i) \leq
  \rk(f) - 1$ for all $i \in I$. Let $\varphi$ and $r$ be as given in the lemma
  statement. Let $J$ be defined as in \Cref{lem:inlining}:
  \[ J = \begin{cases}
      \{i \in I \mid \rk(h_i) = \rk(f)-1 \} & \text{when}\ \rk(f) \geq 2\\
      \{i \in I \mid |h_i(\Gamma^*)| = \infty \} & \text{when}\ \rk(f) = 1
    \end{cases}\]
  For $i \in J$, let $\psi_i : \Gamma^* \to M \times \prodmono(g) \times \prodmonocf(h_i)$ be
  obtained by combining $\varphi$ with the morphisms given by
  Lemmas~\ref{lem:producing-triples} and~\ref{lem:cf-triples}. We shall consider
  the regular languages $\widehat{L}(h_i,\psi_i,r)$ provided by the inductive
  hypothesis.

  Let us take a copy $\underline{J} = \{\underline{i} \mid i \in J\}$ of $J$
  such that $\underline{J} \cap I = \varnothing$. We define the regular function
  $g' : \Gamma^* \to (I \cup \underline{J})^*$ as follows: for any input $s \in
  \Gamma^*$, to build the output $g'(s)$, we start from $g(s)$ and then, for
  each $i \in J$ \emph{such that $s \notin \widehat{L}(h_i,\psi_i,r)$}, we
  replace all the occurrences of $i$ by $\underline{i}$. For $i \in J$, we also
  define $h_{\underline{i}}$ to be the function which maps
  $\widehat{L}(h_i,\psi_i,r)$ to $\varepsilon$ and coincides with $h_i$ on
  $\Gamma^* \setminus \widehat{L}(h_i,\psi_i,r)$. By construction, $f = \CbS(g',
  (h_i)_{i\in{}I\cup\underline{J}})$.

  Note that $g'$ is regular: it is indeed generated from $g$ using regular
  conditionals and postcomposition by letter-to-letter morphisms. We can
  therefore build a morphism
  \[ \chi : \Gamma^* \to M \times \prodmono(g') \times \prod_{i \in J} \prodmonocf(h_i) \]
  in the expected way, and define the language provided by the lemma statement
  as
  \[ \widehat{L}(f,\varphi,r) = L(g',J,\chi,r) \]
  According to \Cref{lem:dichotomy}, it is indeed a regular language.
  Concerning the first item of the lemma statement, the function that it
  considers can be expressed as
  \[ \CbS(g'',(h_i)_{i\in{}I\cup\underline{J}}) \qquad\text{where}\qquad
    g'' : s \mapsto
    \begin{cases}
      \varepsilon & \text{when}\ s \in L(g',J,\chi,r)\\
      g'(s) & \text{otherwise}
    \end{cases}\]
  We want to show that $\rk(\CbS(g'',(h_i)_{i\in{}I\cup\underline{J}})) \leq
  \rk(f) - 1$. The shape of this statement fits with the conclusion of
  \Cref{lem:inlining}, so we just have to check the corresponding assumptions.
  \begin{itemize}
  \item $g''$ is regular, by closure of regular functions under regular
    conditionals.
  \item for $i \in I\cup\underline{J}$, the function $h_i$ is comparison-free
    polyregular of rank at most $\rk(f)-1$:
    \begin{itemize}
    \item for $i \in I$, this was required in our choice of expression for $f =
      \CbS(g,(h_i)_{i \in I})$ (and such a choice was possible by definition of
      rank);
    \item for $i = \underline{j} \in \underline{J}$, we get this by applying the
      first item of the inductive hypothesis to $h_j$ (indeed, the function
      introduced by this item is none other than $h_{\underline{j}} = h_i$).
    \end{itemize}
  \item We also get that, \emph{with the same $J$ as before,}
    \[ J = \begin{cases}
        \{i \in I \cup \underline{J} \mid \rk(h_i) = \rk(f)-1 \} & \text{when}\
        \rk(f) \geq 2\ \text{i.e.}\ \forall i \in J,\; \rk(h_i) \geq 1\\
        \{i \in I \cup \underline{J} \mid |h_i(\Gamma^*)| = \infty \} &
        \text{when}\ \rk(f) = 1\ \text{i.e.}\ \forall i \in J,\; \rk(h_i) = 0
      \end{cases}\]
    using again the first item of the inductive hypothesis to handle the case of
    indices in $\underline{J}$.
  \item Finally, by definition of $g''$ and by \Cref{lem:dichotomy}, using the
    convention $\sup\varnothing=0$,
    \[ \sup_{s \in \Gamma^*} |g''(s)|_J = \sup\{|g'(s)|_J \mid s \in \Gamma^*
      \setminus L(g',J,\chi,r) \} < \infty \]
  \end{itemize}

  Let us now check the second item concerning splits and factorizations. Let $s
  \in \widehat{L}(f,\varphi,r)$. By definition, there exists $i \in J$ such that
  $s \in L(g',i,\chi,r)$. In particular, $|g'(s)|_i \geq 1$, which entails that
  $s \in \widehat{L}(h_i,\psi_i,r)$ by definition of $g'$. The inductive
  hypothesis gives us a family of $r$-splits $s = u^{(m)} v^{(m)}_1 \dots
  v^{(m)}_r w^{(m)}$ according to $\psi$ for $m \in \{1,\dots,k-1\}$ -- recall
  that $\rk(h_i)+1 = \rk(f) = k-1$. We complete it by taking $(u^{(k)},
  v^{(k)}_1, \dots, v^{(k)}_r, w^{(k)})$ to be a producing $r$-split of $s$ with
  respect to $(g',i,\chi)$, whose existence is guaranteed by definition of
  $L(g',i,\chi,r)$. Since $\varphi$ factors through both $\psi_i$ and $\chi$ by
  construction, this indeed gives us a family of $k$ $r$-splits according to
  $\varphi$.
  
  Now, let $s = \alpha_0 \beta_1 \alpha_1 \dots \beta_{k} \alpha_{k}$ be a
  factorization and $\sigma$ be a permutation of $\{1,\dots,k\}$ such each
  $\beta_m$ coincides with some $v^{(\sigma(m))}_l$ for some $l$. Note that from
  the original expression of $f$ as a composition by substitution, we have
  \[ \forall s' \in \Gamma^*,\quad |f(s')| \geq |g(s')|_i \cdot |h_i(s')| \]
  Therefore, our desired inequality will follow once we prove the ones below:
  \[ \forall n \in \naturalN,\quad |g(\alpha_0 \beta^n_1 \alpha_1 \dots \beta^n_{k}
    \alpha_{k})|_i \geq n \quad\text{and}\quad |h_i(\alpha_0 \beta^n_1
    \alpha_1 \dots \beta^n_{k} \alpha_{k})| \geq n^{k-1} \]
  To illustrate the idea, we assume $\sigma(k) = k$, so that $\beta_k =
  v^{(k)}_l$ for some $l$, and we invite the reader to convince themself that
  this is merely a matter of notational convenience for the rest of the proof.

  Let us start with $h_i$. Since $\nu'_{h_i}$ factors through $\chi$, the triple
  \[ \qquad (\alpha_0 \beta_1 \dots \beta_{k-1} \alpha_{k-1}, \beta_k, \alpha_k)
    = (u^{(k)} v^{(k)}_1 \dots v^{(k)}_{l-1}, v^{(k)}_l, v^{(k)}_{l+1} \dots
    v^{(k)}_r w^{(k)})\]
  is a 1-split according to $\nu'_{h_i}$. Using the fact that $\nu'_{h_i}$
  factors through $\psi_i$, one can show that $(\alpha_0 \beta^n_1 \dots
  \beta^n_{k-1} \alpha_{k-1}, \beta_k, \alpha_k)$ is still a 1-split according
  to $\nu'_{h_i}$. Therefore, for $n \in \naturalN$,
  \[ |h_i(\alpha_0 \beta^n_1 \dots \beta^n_{k-1} \alpha_{k-1} \beta^n_{k}
    \alpha_{k})| \geq |h_i(\alpha_0 \beta^n_1 \dots \beta^n_{k-1} \alpha_{k-1}
    \beta_k \alpha_{k})| \geq n^{k-1}\]
  where $\beta_k$ is \emph{not} raised to the $n$-th power in the middle; the
  left inequality comes from \Cref{lem:cf-triples}, while the right inequality
  is part of the induction hypothesis applied to $h_i$.

  The case of $g$ requires an additional step. We know that $(\alpha_0
  \beta_1 \dots \beta_{k-1} \alpha_{k-1}, \beta_k, \alpha_k)$ is a producing
  triple with respect to $(g',i,\chi)$; therefore, by
  \Cref{lem:producing-triples},
  \[ \forall n \in \naturalN,\; |g'(\alpha_0 \beta_1 \dots \beta_{k-1}
    \alpha_{k-1} \beta^n_{k} \alpha_{k})|_i \geq n \]
  To replace $g'$ by $g$ in the above inequality, recall that by definition of
  $g'$, since $i \in J$,
  \[ \forall s' \in \Gamma^*,\; \left( |g'(s')|_i \neq 0 \implies |g'(s')|_i =
      |g(s')|_i \right) \]
  One can then conclude by \Cref{prop:monotone-multipumping} below, taking
  $l=k-1$. There is a subtlety
  here: our definitions ensure that $\nu_g$ factors through $\psi_i$, but this
  might not be the case for $\nu_{g'}$ (because $\psi_i$ had to be defined
  before $g'$). So for this final step, we must work with the function $g$,
  whereas to leverage the producing triple, we had to use $g'$.
\end{proof}

The following proposition, which we used at the end of the above proof, will
also be useful to prove \Cref{thm:polyreg-not-cf}.

\begin{proposition}
  \label{prop:monotone-multipumping}
  Let $g : \Gamma^* \to \Sigma^*$ be a regular function and $s =
  \alpha_0\beta_1\alpha_1 \dots \beta_l \alpha_l \in \Gamma^*$ such that every
  triple $(\alpha_0\beta_1 \dots \alpha_m, \beta_{m+1}, \alpha_{m+1}\beta_{m+2}
  \dots \alpha_l)$ is a 1-split according to $\nu_g$. Then for every $c \in
  \Sigma$, the function
  \[ (n_1,\dots,n_l) \mapsto |g(\alpha_0\beta_1^{n_1}\alpha_1 \dots
    \beta_l^{n_l} \alpha_l)|_c \]
  is monotone according to the product partial order on $\naturalN^l$.
\end{proposition}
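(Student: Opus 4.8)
The plan is to reduce the statement to a one-dimensional claim and then invoke \Cref{lem:producing-triples}. Since the product partial order on $\naturalN^l$ is generated by unit increments in the individual coordinates, it suffices to fix a coordinate $m \in \{1,\dots,l\}$ together with all exponents $n_j$ for $j \neq m$, and to prove that the map $n_m \mapsto |g(\alpha_0\beta_1^{n_1}\alpha_1 \dots \beta_l^{n_l}\alpha_l)|_c$ is non-decreasing. Chaining such single-coordinate increases along any comparable pair then yields monotonicity for the full product order.

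First I would isolate the $m$-th factor by writing the pumped word as $U\beta_m^{n_m}W$, where $U = \alpha_0\beta_1^{n_1}\alpha_1\dots\beta_{m-1}^{n_{m-1}}\alpha_{m-1}$ and $W = \alpha_m\beta_{m+1}^{n_{m+1}}\dots\beta_l^{n_l}\alpha_l$ depend only on the frozen exponents. The central step is to check that $(U,\beta_m,W)$ is still a 1-split according to $\nu_g$. For this, abbreviate $P_j = \alpha_0\beta_1\dots\alpha_{j-1}$ and $S_j = \alpha_j\beta_{j+1}\dots\alpha_l$, so that the hypothesis says each $(P_j,\beta_j,S_j)$ is a 1-split, i.e.\ $\nu_g(P_j)\nu_g(\beta_j) = \nu_g(P_j)$ and $\nu_g(\beta_j)\nu_g(S_j) = \nu_g(S_j)$. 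Multiplying the first identity on the right repeatedly gives $\nu_g(P_j)\nu_g(\beta_j)^{k} = \nu_g(P_j)$ for every $k \in \naturalN$. A straightforward induction on $m$ using this, together with $P_{m+1} = P_m\beta_m\alpha_m$, then shows $\nu_g(U) = \nu_g(P_m)$, and symmetrically $\nu_g(W) = \nu_g(S_m)$: the point is that pumping the $\beta_j$ inside the prefix (resp.\ suffix) is invisible to $\nu_g$. Granting this, the 1-split equations for $(U,\beta_m,W)$ follow at once, since $\nu_g(U)\nu_g(\beta_m) = \nu_g(P_m)\nu_g(\beta_m) = \nu_g(P_m) = \nu_g(U)$, and dually for the suffix.

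Once $(U,\beta_m,W)$ is known to be a 1-split, \Cref{lem:producing-triples} applies directly. If its triple is not producing for $(g,c)$, then $|g(U\beta_m^{n_m}W)|_c$ is constant in $n_m$; if it is producing, I would use the observation made at the end of the proof of \Cref{lem:producing-triples} that $(U\beta_m^{n},\beta_m,W)$ is again a producing 1-split for every $n$, so that $|g(U\beta_m^{n+1}W)|_c > |g(U\beta_m^{n}W)|_c$ for all $n \geq 0$. In either case the sequence is non-decreasing, which establishes monotonicity in the $m$-th coordinate and hence, by the reduction above, in the full product order.

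I expect the main obstacle to be the bookkeeping in the central step: verifying cleanly that $\nu_g(U) = \nu_g(P_m)$ and $\nu_g(W) = \nu_g(S_m)$ regardless of the frozen exponents. This is precisely where the \emph{boundary} form of \Cref{def:r-split} pays off, since the 1-split conditions are exactly the equations $\nu_g(P_m)\nu_g(\beta_m)^k = \nu_g(P_m)$ that make the surrounding pumping invisible to $\nu_g$; everything beyond this reduces to a direct appeal to \Cref{lem:producing-triples}.
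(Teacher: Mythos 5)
Your proposal is correct and takes essentially the same approach as the paper's proof: in both, the key observation is that pumping the remaining factors is invisible to $\nu_g$ (thanks to the boundary conditions in the definition of a 1-split), so the triple surrounding $\beta_m$ is still a 1-split, after which the producing/non-producing dichotomy of \Cref{lem:producing-triples} gives coordinatewise (hence product-order) monotonicity. The only differences are cosmetic: you spell out the unit-increment reduction and the induction that the paper's terse proof idea leaves implicit, and you correctly point to \Cref{lem:producing-triples} where the paper's text cites \Cref{lem:dichotomy} (evidently a slip).
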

\begin{proof}[Proof idea]
  In order to apply \Cref{lem:dichotomy}, the key observation is that the triple
  \[ (\alpha_0\beta^{n_1}_1 \dots \alpha_m \beta_{m+1}^{n_{m+1}}, \beta_{m+1},
    \alpha_{m+1}\beta_{m+2}^{n_{m+2}} \dots \alpha_l) \]
  is also a 1-split. This is because we have, by definition of 1-split,
  $\nu_g(\alpha_0\beta^{n_1}_1) = \nu_g(\alpha_0\beta_1)$, then
  $\nu_g(\alpha_0\beta_1\alpha_1\beta_2^{n_2}) =
  \nu_g(\alpha_0\beta_1\alpha_1\beta_2)$, etc., and similarly on the right side.
\end{proof}

After having established \Cref{lem:dichotomy-redux}, we can use it to finally
wrap up this section.

\begin{proof}[Proof of \Cref{thm:minimization}]
  We apply \Cref{lem:dichotomy-redux} to get a language
  $\widehat{L}(f,\varphi,\rk(f)+1)$ where $\varphi$ does not matter (take for
  instance the morphism from $\Gamma^*$ to the trivial monoid). It must be
  non-empty (or else we would have the contradiction $\rk(f) < \rk(f)$), so we
  can choose an arbitrary element $s \in \widehat{L}(f,\varphi,\rk(f)+1)$.
  
  Let $k = \rk(f)+1$. \Cref{lem:dichotomy-redux} gives us $k$ factorizations $s
  = u^{(m)} v^{(m)}_1 \dots v^{(m)}_k w^{(m)}$ satisfying certain properties.
  Note that $k$ plays two roles here that were distinct in the lemma. We claim
  that thanks to this, there exists a factorization $s = \alpha_0 \beta_1
  \alpha_1 \dots \beta_{k} \alpha_k$ as described in \Cref{lem:dichotomy-redux}.
  This entails that setting $s_n = \alpha_0 \beta^n_1 \alpha_1 \dots \beta^n_{k}
  \alpha_k$ proves the theorem.

  Our task is therefore to select one element in each of the $k$ sets
  $\{v^{(m)}_l \mid l \in \{1,\dots,k\}\}$ of substrings of $s$ for $m \in
  \{1,\dots,k\}$, such that the selected substrings are pairwise
  non-overlapping. There is a strategy for this which is similar to the
  classical greedy algorithm for computing a maximum independent set in an
  interval graph. We take $\beta_1$ to be the substring of $s$ among the
  $v^{(m)}_1$ whose right endpoint is leftmost. One can check that $\beta_1$
  cannot overlap with any $v^{(m)}_l$ for $l \geq 2$. Thus, by discarding the
  set to which $\beta_1$ belongs, as well as each $v^{(m)}_1$ in the other sets,
  we reduce the remainder of the task to our original goal with $k$ being
  decremented by 1. At this stage, an induction suffices to conclude the proof.
\end{proof}

\section{Proofs of Theorems~\ref{thm:composition-squaring}
  and~\ref{thm:characterization-rank}}

Now that we have shown that cfp functions are closed under composition and that
their asymptotic growth are tightly linked to their ranks, we have the essential
ingredients to prove Theorems~\ref{thm:composition-squaring}
and~\ref{thm:characterization-rank}.
There are a couple of preliminary lemmas helpful for both that we first prove here.

\begin{lemma}
\label{lem:app-cfp-cfpow}
  For any comparison-free polyregular function $f : \Gamma^* \to \Sigma^*$ and
  $k \geq \rk(f)$, there exists a regular function $f' : (\{0,\dots,\rk(f)\}
  \times \Gamma)^* \to \Sigma^*$ such that $f =
  f'\circ\cfpow^{(k+1)}_\Gamma$.
\end{lemma}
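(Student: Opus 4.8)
The plan is to prove, by induction on $k$, the following uniform reformulation: whenever $\rk(f)\le k$ there is a regular function $f'_k : (\{0,\dots,k\}\times\Gamma)^* \to \Sigma^*$ with $f = f'_k\circ\cfpow^{(k+1)}_\Gamma$. The lemma is then the tight instance $k=\rk(f)$, and for a general bound $k\ge\rk(f)$ one post-composes with the regular \emph{level-lowering} map $\rho$ described below. First I would record two elementary regular operations on the $\cfpow$ hierarchy. Since $\cfpow^{(0)}_\Gamma$ is constant equal to $\varepsilon$, the relabeling morphism $\tau : c \mapsto (0,c)$ satisfies $\cfpow^{(1)}_\Gamma = \tau$, and its inverse $\tau^{-1}$ is also a morphism. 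More generally, as $\cfpow^{(n+1)}_\Gamma(w) = (n,w_1)\,\cfpow^{(n)}_\Gamma(w)\cdots (n,w_{|w|})\,\cfpow^{(n)}_\Gamma(w)$, the factor of $\cfpow^{(n+1)}_\Gamma(w)$ lying strictly between its first two letters tagged by $n$ is exactly $\cfpow^{(n)}_\Gamma(w)$; extracting this factor (delimited by the level-$n$ markers) is regular, and iterating it yields, for every $j\le n+1$, a regular map $\rho$ with $\rho\circ\cfpow^{(n+1)}_\Gamma = \cfpow^{(j)}_\Gamma$. The base case $k=0$ is then immediate: $f$ is regular, and $f'_0 := f\circ\tau^{-1}$ works, since $f'_0\circ\cfpow^{(1)}_\Gamma = f\circ\tau^{-1}\circ\tau = f$, which is regular by closure of regular functions under composition.

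For the inductive step I would write $f = \CbS(g,(h_i)_{i\in I})$ with $g:\Gamma^*\to I^*$ regular and each $h_i$ of rank $\le k$ (the subcase $\rk(f)\le k$ is disposed of by applying the induction hypothesis at bound $k$ and post-composing with $\rho$). By the induction hypothesis applied to each $h_i$ at bound $k$, there are regular $\hat h_i : (\{0,\dots,k\}\times\Gamma)^*\to\Sigma^*$ with $h_i = \hat h_i\circ\cfpow^{(k+1)}_\Gamma$. The crucial structural fact is that $\cfpow^{(k+2)}_\Gamma(w)$ consists of $|w|$ consecutive \emph{blocks} $B_j = (k+1,w_j)\,\cfpow^{(k+1)}_\Gamma(w)$, each containing a \emph{complete} copy of $\cfpow^{(k+1)}_\Gamma(w)$; hence $h_i(w)=\hat h_i(\cfpow^{(k+1)}_\Gamma(w))$ can be recovered, for \emph{every} $i$, from any single block. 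I would then build $f'_{k+1}$ as a two-way transducer (legitimate by \Cref{thm:reg-2dft}) using a \enquote{skeleton $+$ dive} strategy: reading only the level-$(k+1)$ markers it simulates the 2DFT for $g$ on the skeleton $w_1\cdots w_{|w|}$, and whenever that simulation would output a letter $i_t$ of $g(w)=i_1\cdots i_\ell$ while positioned at block $B_j$, it \emph{dives} into the copy of $\cfpow^{(k+1)}_\Gamma(w)$ sitting in $B_j$, runs $\hat h_{i_t}$ there to emit $h_{i_t}(w)$, then returns to the marker of $B_j$ to resume. This produces exactly $h_{i_1}(w)\cdots h_{i_\ell}(w)=f(w)$.

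The main work, and the only place I expect real difficulty, is verifying that this composite is genuinely regular via the standard bounded-visits criterion for two-way transducers: each marker is visited at most $|Q_g|$ times by the $g$-simulation, each such visit triggers at most one dive, each dive runs one of the finitely many $\hat h_i$ (which visits every interior position of its block a bounded number of times), and the marker-to-marker skips are likewise bounded; together these bound the number of visits to every position, ruling out infinite loops and keeping the output linear in $|\cfpow^{(k+2)}_\Gamma(w)|$. The size bookkeeping is consistent: $|f(w)| = O(|w|^{k+2})$ matches $|\cfpow^{(k+2)}_\Gamma(w)| = \Theta(|w|^{k+2})$, and although $\ell=|g(w)|$ may exceed the number $|w|$ of blocks, reusing each block for up to $|Q_g|$ dives is precisely what the two-way model allows. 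The remaining details—navigating block boundaries through the level-$(k+1)$ markers, simulating the end markers $\leftmarker,\rightmarker$ required by $g$ and by the $\hat h_i$, and the edge cases $|w|\le 1$—are routine.

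Finally, taking $k=\rk(f)$ gives $f'$ with domain $(\{0,\dots,\rk(f)\}\times\Gamma)^*$ and $f = f'\circ\cfpow^{(\rk(f)+1)}_\Gamma$, which is the lemma in its sharpest form; for an arbitrary bound $k\ge\rk(f)$ one writes $f = (f'\circ\rho)\circ\cfpow^{(k+1)}_\Gamma$ using the level-lowering map $\rho$ with $\rho\circ\cfpow^{(k+1)}_\Gamma = \cfpow^{(\rk(f)+1)}_\Gamma$, the composite $f'\circ\rho$ again being regular by closure of regular functions under composition.
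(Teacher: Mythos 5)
Your proof is correct and follows essentially the same route as the paper's: the heart of both arguments is the \enquote{skeleton $+$ dive} two-way transducer that simulates the 2DFT for $g$ on the top-level tagged letters of $\cfpow$ and, whenever $g$ would emit a letter $i$, runs the 2DFT for the regular function $\hat h_i$ (supplied by the induction hypothesis) on the adjacent embedded copy of the lower-level $\cfpow$, returning control afterwards. The remaining differences are purely organizational --- you induct on $k$ and absorb the slack $k \geq \rk(f)$ via a regular level-lowering map, while the paper inducts on $\rk(f)$ with $k$ quantified inside the hypothesis and absorbs the slack with an erasing morphism in the base case --- and both write-ups leave the same routine end-marker and block-boundary details to the reader.
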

\begin{proof}
  By induction on $\rk(f)$ (with an inductive hypothesis that quantifies over
  $k$).

  \proofsubparagraph{Base case ($f$ regular).} 

  Consider the unique
  $\varphi\in\Hom((\{0,\dots,\rk(f)\}\times\Gamma)^*,\Gamma^*)$ such that for
  every $c\in\Gamma$, $\varphi(k,c)=c$ and $\varphi(m,c)=\varepsilon$ when $m <
  k$. Since regular functions are closed under composition, $f'=f\circ\varphi$
  is regular, and the desired equation follows from the fact that
  $\varphi\circ\cfpow^{(k+1)}_\Gamma = \mathrm{id}_{\Gamma^*}$.

  \proofsubparagraph{Inductive case.} 

  Let $f = \CbS(g,(h_i)_{i \in I})$ with $g : \Gamma^* \to I^*$ regular and $h_i
  : \Gamma^* \to \Sigma^*$ cfp such that $\rk(h_i) \leq \rk(f)-1$ for all $i \in
  I$.
  Using the inductive hypothesis, we know that
  $h_i = h'_i \circ \cfpow^{(k)}_\Gamma$ for some family of regular functions
  $(h'_i)_{i \in I}$. Thus, let us assume we are given 2DFTs $\cT$ and
  $(\cT_i')_{i \in I}$ corresponding to $g$ and the family $(h'_i)_{i \in I}$.
  Without loss of generality, let us assume further that $\cT$ always output
  at most one letter at each transition, never outputs a letter upon reading $\rightmarker$,
  and that the $\cT_i$ always terminate on the marker $\leftmarker$ by a
  transition that does not move the reading head.
  With these assumptions, let us describe informally a 2DFT $\cT''$
  corresponding to the function $f'$ such that
  $\CbS(g,(h'_i \circ \cfpow^{(k)}_\Gamma)_{i \in I}) =
   f' \circ \CbS(g,(h'_i)_{i \in I}) \circ \cfpow^{(k+1)}_\Gamma$.

  Assuming that the state space of $\cT$ is $Q$ and the state space of $\cT'_i$
  is $Q'_i$, with $Q$ and the $Q'_i$s all pairwise disjoint, we take the state space
  of $\cT''$ to be
  \[ Q'' ~~=~~ Q \times \{{\sf L},{\sf R},{\sf S}\} \times \left(\{ \bullet\} \sqcup
               \bigcup_{i \in I} Q_i' \times \{{\sf L}, {\sf R}, {\sf S}\}\right) \]
  with initial state $(q_0,{\sf R}, \bullet)$, if $q_0$ is the initial state of $\cT$
  and final states the triples $(q_f,{\sf M}, \bullet)$ such that $q_f$ is a final
  state of $\cT$. To guide intuitions, the elements ${\sf L}$, ${\sf R}$ and ${\sf S}$
  should be respectively read as \enquote{left}, \enquote{right} and \enquote{stay}.
  With this in mind, the high-level description of computations carried out
  by $\cT''$ over words $\mathtt{pow}^{(k)}_\Gamma$ is as follows.
  \begin{itemize}
  \item When in a state $(q_0,{\sf M}, \bullet)$, $\cT''$ essentially
        acts as $\cT$ on letters of the shape $(k,a)$ or end-markers and
        ignores letters $(l,a)$ for $l < k$; the central component ${\sf M}$ then
        determines whether to seek the next relevant position to the left or to
        the right when reading such an irrelevant letter. This continues up until
        upon reading a letter $(k,a)$ in state $(q, {\sf M},\bullet)$ such that
        $\cT$ would ouput $i$ when reading $a$ in $q$, $\cT''$ moves into the state
        $(r, {\sf M'}, (q'_{0,i}, {\sf R}))$ where $q'_{0,i}$ is the initial
        state of $\cT_i'$ and $(r,{\sf M'})$ is determined by the transition in $\cT$.
  \item When in a state $(q,{\sf M}, (q_i', {\sf M'}))$ for $q_i' \in Q_i'$,
        $\cT$ behaves exactly as $\cT_i$ as long as the current transition does not reach the
        final state, treating letters outside of its input alphabets
        as end markers; this is possible because of the component ${\sf M'}$ of the state,
        that we use to keep track of the last move of the reading head.
        Meanwhile the components $q \in Q$ and ${\sf M}$ are untouched.
        When a final transition is taken, by our assumption we return control to $\cT$ by
        going to state $(q,{\sf M},\bullet)$ and moving in the direction prescribed by ${\sf M}$
        (recall that, by assumption, we are moving away from (or staying in) the position at
        which $\cT_i$ started running).
   \end{itemize}
   We leave formalizing this definition and checking that $\cT''$ has a desirable behaviour
   to the reader.
\end{proof}

\begin{lemma}
\label{lem:app-cfpow-cfsq}
  For every $k \in \naturalN$, $\cfpow^{(k)}_{\Gamma}$ is equal to a
  a composition of sequential functions and squaring functions $\cfsquaring_\Delta$.
\end{lemma}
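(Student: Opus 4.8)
The plan is to prove the statement by induction on $k$, with an explicit recursive identity expressing $\cfpow^{(k+1)}_\Gamma$ in terms of $\cfpow^{(k)}_\Gamma$, a single squaring function, and a single sequential function. Writing $\Delta_k = \{0,\dots,k-1\}\times\Gamma$ for the output alphabet of $\cfpow^{(k)}_\Gamma$, I would first dispatch the base cases directly: $\cfpow^{(0)}_\Gamma$ is the constant function $w \mapsto \varepsilon$, which is (trivially) sequential, and $\cfpow^{(1)}_\Gamma : w \mapsto (0,w_1)\cdots(0,w_{|w|})$ is the letter-to-letter morphism $c \mapsto (0,c)$, hence sequential as well.

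For the inductive step (for $k \geq 1$), the identity I would establish is
\[ \cfpow^{(k+1)}_\Gamma ~=~ \sigma_k \circ \cfsquaring_{\Delta_k} \circ \cfpow^{(k)}_\Gamma, \]
where $\sigma_k$ is a fixed sequential function described below. The key combinatorial observation is that, by unfolding $\cfpow^{(k)}_\Gamma(w) = (k-1,w_1)\,\cfpow^{(k-1)}_\Gamma(w)\cdots(k-1,w_{|w|})\,\cfpow^{(k-1)}_\Gamma(w)$, the letters of $x := \cfpow^{(k)}_\Gamma(w)$ whose first coordinate equals $k-1$ are exactly the $|w|$ \emph{top-level markers} $(k-1,w_1),\dots,(k-1,w_{|w|})$, occurring in this order, since all letters of $\cfpow^{(k-1)}_\Gamma(w)$ have first coordinate at most $k-2$. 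Now $\cfsquaring_{\Delta_k}(x) = \underline{x_1}\,x\,\underline{x_2}\,x\cdots\underline{x_{|x|}}\,x$ consists of one block $\underline{x_j}\,x$ per position $j$ of $x$, and each block carries a full copy of $x = \cfpow^{(k)}_\Gamma(w)$.

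The sequential function $\sigma_k$ then scans this word from left to right and keeps, block by block, exactly the blocks whose underlined head $\underline{x_j}$ has first coordinate $k-1$. Concretely, $\sigma_k$ has two states, \emph{keep} and \emph{discard}: on reading an underlined letter $\underline{(l,c)}$ it switches to \emph{keep} and outputs $(k,c)$ if $l = k-1$, and switches to \emph{discard} and outputs $\varepsilon$ otherwise; on reading a non-underlined letter $(l,c)$ it outputs $(l,c)$ verbatim in state \emph{keep} and $\varepsilon$ in state \emph{discard}. Since every letter of each copy of $x$ is non-underlined, the state persists across a whole copy, so a kept block contributes $(k,w_i)\,\cfpow^{(k)}_\Gamma(w)$ while a discarded block contributes nothing. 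Summing over the $|w|$ kept blocks yields $\prod_{i=1}^{|w|}(k,w_i)\,\cfpow^{(k)}_\Gamma(w) = \cfpow^{(k+1)}_\Gamma(w)$, as required; plugging in the inductive decomposition of $\cfpow^{(k)}_\Gamma$ then exhibits $\cfpow^{(k+1)}_\Gamma$ as a composition of sequential and squaring functions.

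I expect the delicate point to be the bookkeeping in the inductive step rather than a genuine obstacle: one must check that the top-level markers are detectable by a finite-state, letter-local test (this is exactly the condition ``first coordinate $= k-1$'') and that the degenerate case $k=0$ is excluded from the recursion — indeed the identity fails for $k=0$ because $\cfpow^{(0)}_\Gamma(w) = \varepsilon$ discards all information about $w$, which is precisely why $\cfpow^{(0)}_\Gamma$ and $\cfpow^{(1)}_\Gamma$ must be treated as separate base cases. A useful sanity check to record is the growth rate: $\cfsquaring_{\Delta_k}\circ\cfpow^{(k)}_\Gamma$ has degree $2k$, and the pruning performed by the linearly bounded $\sigma_k$ brings this back down to degree $k+1$, consistent with there being $|w|$ kept blocks, each of length $1 + |\cfpow^{(k)}_\Gamma(w)| = O(|w|^k)$.
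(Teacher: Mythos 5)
Your proof is correct and takes essentially the same approach as the paper: the paper also proceeds by induction on $k$, writing $\cfpow^{(k+2)}_\Gamma = f \circ \cfsquaring_{\{0,\dots,k\}\times\Gamma} \circ \cfpow^{(k+1)}_\Gamma$ where $f$ is a two-state sequential transducer that copies the blocks headed by an underlined top-level marker (relabelling that marker one level up) and erases the blocks headed by any other underlined letter. Your keep/discard automaton $\sigma_k$ is exactly that transducer, just with the induction index shifted by one, and your correctness argument (top-level markers are recognizable by their first coordinate, the state persists across each non-underlined copy) matches the paper's.
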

\begin{proof}
  We proceed by induction over $k$. The cases of $k = 0, 1, 2$ are immediate
  as $\cfpow^{(k)}_\Gamma$ then corresponds, up to isomorphism of output alphabet,
  to a constant function, the identity and $\cfsquaring_{\Gamma}$ respectively, so we
  focus on the inductive step. To achieve the desired result, it suffices to show
  that there exists a sequential function
  \[f : ((\{0, \ldots, k\} \times \Gamma) \cup \underline{(\{0, \ldots, k\} \times \Gamma)})^* \to
       (\{0, \ldots, k+1\} \times \Gamma)^*\]
  such that $\cfpow_{\Gamma}^{(k+1)} =
  f \circ
  \cfsquaring_{\{0, \ldots, k\} \times \Gamma}
  \circ \cfpow_{\Gamma}^{(k+1)}$.
  In fact, the sequential transducer pictured below computes such an $f$:
  \begin{center}
  {
    \begin{tikzpicture}[scale=1.25]
    \node (init) at (1,2) {};
    \node (outa) at (1,-0.5) {};
    \node (outb) at (6,-0.5) {};
    \node[circle,draw,thick,minimum size=20pt] (qa) at (1,1) {$o$};
    \node[circle,draw,thick,minimum size=20pt] (qb) at (6,1) {$i$};
    
    \draw[->] (init) -- (qa);
    \draw[->] (qa) edge [loop left] node
     {\footnotesize $\begin{array}{c@{\;}c@{\;}c}
                      \underline{(k,a)}&|&(k+1,a) \\
                                  (k,a)&|&(k,a)   \\
                                  (m,a)&|&(m,a)
                      \end{array}$} ();
    \draw[->] (qa) edge  [bend left] node[above] {\footnotesize $\underline{(m,a)}\; |\; \varepsilon$} (qb);
    \draw[->] (qb) edge [loop right] node
     {\footnotesize $\begin{array}{c@{\;}c@{\;}c}
                      \underline{(m,a)}&|&\varepsilon \\
                                  (k,a)&|&\varepsilon   \\
                                  (m,a)&|&\varepsilon
                      \end{array}$} ();
    \draw[->] (qb) edge [bend left] node[below] {\footnotesize $\underline{(k,a)}\;|\;(k+1,a)$} (qa);
    \draw[->] (qa) edge node[left] {$\varepsilon$} (outa);
    \draw[->] (qb) edge node[right] {$\varepsilon$}(outb);
  \end{tikzpicture}
  }
\end{center}
where $m$ designates any element of $\{0, \ldots, k-1\}$.
\end{proof}

Now we turn to the proofs of our main theorems.

\compositionsquaring*

\begin{proof}[Proof of \Cref{thm:composition-squaring}]
  The direct implication is obtained by combining the two lemmas above:
  every cfp function can be written as a composition $f \circ \cfpow^{(k)}_\Gamma$
  for some $k \in \naturalN$ and $f$ regular by \Cref{lem:app-cfp-cfpow},
  and \Cref{lem:app-cfpow-cfsq} guarantees that in turn, $\cfpow^{(k)}_\Gamma$
  is a composition of sequential (and a fortiori regular) functions and squarings.
  Conversely, that cfp functions are closed under composition is proven 
  in \Cref{sec:proof-composition}, which is enough to conclude as regular
  functions and $\cfsquaring_\Gamma$ are cfp.
\end{proof}

\characterizationrank*

\begin{proof}[Proof of \Cref{thm:characterization-rank}]
  We prove the circle of implications (\ref{thm-item:rank-def}) $\Rightarrow$
  (\ref{thm-item:rank-pow}) $\Rightarrow$ (\ref{thm-item:rank-growth})
  $\Rightarrow$ (\ref{thm-item:rank-def}). (The claim after this equivalence has
  already been established in \Cref{thm:minimization}.)

  The first implication (\ref{thm-item:rank-def}) $\Rightarrow$ (\ref{thm-item:rank-pow}) corresponds exactly to \Cref{lem:app-cfp-cfpow} we just proved.

  The implication (\ref{thm-item:rank-pow}) $\Rightarrow$ (\ref{thm-item:rank-growth}) is also relatively easy: $\cfpow^{(k)}_\Gamma$
  is cfp (this is a consequence of \Cref{lem:app-cfpow-cfsq} and \Cref{thm:composition-squaring},
  although $\cfpow^{(k)}_\Gamma$ can also be shown to fit
  \Cref{def:cfpolyreg-cbs} in a more elementary way) and
  so is $f \circ \cfpow^{(k+1)}_\Gamma$ by \Cref{thm:composition-squaring} for $f$ regular.
  Furthermore, $|\cfpow^{(k+1)}_\Gamma(w)| = O(|w|^{k+1})$ and, since $f$ is regular,
  $|f(u)| = O(|u|)$, so we have, as expected, $|(f \circ \cfpow^{(k+1)}_\Gamma)(w)| = O(|w|^{k+1})$.

  The final implication (\ref{thm-item:rank-growth}) $\Rightarrow$ (\ref{thm-item:rank-def}) is technically the hardest as it relies on
  \Cref{thm:minimization}.
  Let $f : \Gamma^* \to \Sigma^*$ be cfp and $k\in\naturalN$ such that
  $|f(w)|=O(|w|^{k+1})$. If $f$ is regular, then $\rk(f)=0 \leq k$. Otherwise,
  by \Cref{thm:minimization}, there exists a sequence $(w_n)_{n\in \naturalN}$
  of inputs such that $|w_n|=O(n)$ and $|f(w_n)| \geq n^{\rk(f)+1}$. So
  $n^{\rk(f)+1} = O(n^{k+1})$, hence $\rk(f)\leq k$.
\end{proof}

\section{Comparison-free polyregular sequences}

\subsection{Proof of \Cref{thm:cfp-seq}}

\cfpseq*

As announced, we prove \Cref{thm:cfp-seq} inductively on the rank of the sequence under consideration.
The bulk of the reasoning is concentrated in the base case, stating that regular
sequences are exactly the ultimately periodic combinations of pumping sequences.

\begin{lemma}
\label{lem:reg-seq}
A sequence of words $s : \naturalN \to \Sigma^*$ is regular if and only if there is
$m > 0$ such that for every $k < m$, there are words $u_0, \ldots, v_l, v_1, \ldots, v_l$
such that for every $n \in \naturalN$, we have
\[\forall n \in \naturalN, \; s((n+1)m + k) = u_0(v_1)^n\ldots (v_l)^nu_l\]
\end{lemma}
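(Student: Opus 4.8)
The statement is an equivalence, so I would prove each direction separately, with the bulk of the work in the \enquote{only if} direction.

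For the \enquote{if} direction I would proceed by assembling regular functions. First I would observe that a single pumping sequence $n \mapsto u_0 v_1^n u_1 \cdots v_l^n u_l$ is regular, witnessed by an explicit single-state copyless SST over $\{a\}$ with registers $X_1,\dots,X_l$, initial values $\varepsilon$, update $X_i \mapsto X_i v_i$ on reading $a$, and output function $u_0 X_1 u_1 \cdots X_l u_l$; after reading $a^n$ this outputs exactly the desired word. To account for the offset in $s((n+1)m+k)$ I would precompose with the regular \enquote{integer division} sequence $a^N \mapsto a^{\lfloor N/m\rfloor - 1}$, itself computed by a copyless SST that appends one $a$ to its single register once every $m$ input letters. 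Gluing the finitely many pieces --- one per residue $k < m$ on the regular language $L_k = \{a^N : N \ge m,\ N \equiv k \pmod m\}$, together with the finitely many exceptional singletons $\{a^N\}$ for $N < m$ --- then finishes this direction, using closure of regular functions under composition and, iterated, under regular conditionals (\Cref{lem:regular-conditionals}).

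For the \enquote{only if} direction I would fix a copyless SST computing $s$ and exploit the unary input through the transition-monoid machinery of \Cref{sec:transition-monoids}. Let $\bar\tau = \tterase_\Sigma(\delta(-,a))$ be the shape of reading a single $a$, an element of the \emph{finite} monoid $\transmonocl_{R,\varnothing} \wr Q$ (\Cref{prop:transmonocl-finite}). By finiteness I would choose $m \ge 1$ which is simultaneously a multiple of the period of the state sequence, larger than its pre-period, and such that $\bar e := \bar\tau^m$ is idempotent. Then for each residue $k$ the state reached after $a^{nm+k}$ is a fixed $q^{(k)}$ for all $n \ge 1$, and reading a further block of $m$ letters from $q^{(k)}$ back to $q^{(k)}$ always applies one and the same assignment $\zeta \in \transmonocl_{R,\Sigma}$; unwinding the wreath multiplication of \Cref{def:wreath} at the $\bar e$-fixed point $q^{(k)}$ shows that $\bar\zeta = \tterase_\Sigma(\zeta)$ is idempotent in $\transmonocl_{R,\varnothing}$.

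The crux, which I expect to be the main obstacle, is a structure lemma for idempotent copyless shapes: if $\bar\zeta$ is copyless and idempotent, then each register is either \emph{transient} with $\bar\zeta(r) = \varepsilon$ or \emph{recurrent} with $\bar\zeta(r)$ containing $r$ exactly once and otherwise only transient registers. I would prove this from copylessness --- each register occurs at most once across all images, so the \enquote{is used by} relation has in-degree at most one --- combined with the idempotency identity $\bar\zeta^\odot(\bar\zeta(r)) = \bar\zeta(r)$. Lifting to $\zeta$ with its $\Sigma^*$-labels via \Cref{prop:shape-plus-labels}, transient registers become constant after one block, while each recurrent $r$ obeys a two-sided linear recurrence $\vec{w}_{n+1}(r) = L_r\, \vec{w}_n(r)\, R_r$ with $L_r, R_r \in \Sigma^*$ constant, whose solution $L_r^{\,n} \cdot \mathrm{const} \cdot R_r^{\,n}$ is a pumping word. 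Substituting these register values into the fixed output word $F(q^{(k)})$ yields a pumping expression for $s(nm+k)$, since pumping sequences are closed under concatenation (so even a copyful reuse of a register in $F$ is harmless).

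Finally I would handle the index bookkeeping: the clean recurrence only takes effect after the first one or two blocks, because the initial block reads the offset $a^k$ from a not-yet-stabilized state and need not apply $\zeta$. As the lemma forbids exceptions above the period, I would absorb these finitely many transient values by enlarging $m$ to a multiple $M = cm$; writing a residue $k' < M$ as $k' = jm + k$, every input counted in class $k'$ then has block index at least $c$, which for $c$ past the stabilization threshold lands entirely in the pumping regime, and a pumping sequence precomposed with the affine reindexing $n \mapsto cn + \mathrm{const}$ is again pumping. Only the finitely many inputs $a^N$ with $N < M$ remain as unconstrained exceptions, exactly as the statement permits, which completes the argument.
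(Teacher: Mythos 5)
Your proposal is correct and takes essentially the same route as the paper's own proof: fix a copyless SST computing the sequence, pass to the finite monoid $\transmonocl_{R,\varnothing} \wr Q$ of assignment shapes, take an idempotent power $m$, establish the transient/recurrent dichotomy for registers under an idempotent copyless shape, and derive the two-sided pumping form $L_r^n \cdot \mathrm{const} \cdot R_r^n$, which the final output function recombines into a pumping expression. You are in fact slightly more thorough than the paper on two minor points: you spell out the \enquote{if} direction (which the paper dismisses as straightforward), and your modulus-enlargement trick explicitly handles the stabilization/indexing subtlety (the clean recurrence only holds after the first block) that the paper's write-up glosses over.
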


\begin{proof}
The \enquote{if} direction is straightforward, so we only prove the
\enquote{only if} part of the statement.
To keep notations harmonized, let us work with $f : \{a\}^* \to \Sigma^*$
such that $s(n) = f(a^n)$ for every $n \in \naturalN$ and fix
a copyless SST computing $f : \{a\}^* \to \Sigma^*$ whose set of
states, set of registers and transition function we call $Q$, $R$ and $\delta$
respectively. We use the monoid $\transmonocl_{R,\varnothing} \wr Q$ introduced
in \Cref{sec:transition-monoids}, which contains $\mu =
\tterase_\Sigma(\delta(-,a))$. Since $\transmonocl_{R,\varnothing} \wr Q$
is finite (\Cref{prop:transmonocl-finite}), there is an exponent $m \in
\naturalN \setminus \{0\}$ such that $\mu^{\bullet m} = \mu \bullet \ldots
(\text{$m$ times}) \ldots \bullet \mu$ is idempotent, i.e.\ $\mu^{\bullet m} =
\mu^{\bullet 2m}$. This $m$ is the one put forth in the lemma statement.

Let us fix $k < m$. Let $(q,\alpha) = \mu^{\bullet m}(q_0)$ where $q_0$ is
the initial state of the SST. We have $\mu^{\bullet (m + k)}
\bullet \mu^{\bullet m} = \mu^{\bullet (2m + k)} = \mu^{\bullet k} \bullet
\mu^{\bullet 2m} = \mu^{\bullet k}\bullet\mu^{\bullet m} = \mu^{\bullet (m + k)}$
as usual. Therefore, $\mu^{\bullet m}(q) = (q,\beta)$ with
$\alpha\bullet\beta=\alpha$ and $\beta\bullet\beta=\beta$ (the latter is because
of $\mu^{\bullet 2m} = \mu^{\bullet m}$). Thus, $q$ is the state reached by the
SST after reading $a^{m(n+1)+k}$ for any $n \in \naturalN$. We also have
$(\delta(-,a))^{\bullet m}(q) = (q,\gamma)$ with $\gamma \in
\transmonocl_{R,\Sigma}$ and $\tterase_\Sigma(\gamma) = \beta$.

Given $r \in R$, we distinguish two cases.
\begin{itemize}
\item First, suppose that $\beta(r) = \varepsilon$ or equivalently that
  $\gamma(r) \in \Sigma^*$ (in general, the codomain of $\gamma$ is $(\Sigma
  \cup R)^*$). When the SST is in state $q$ and reads $a^m$, it executes the
  assignment $\gamma$; when $\beta(r) = \varepsilon$, the new value of the
  register $r$ is this $\gamma(r) \in \Sigma^*$ which does not depend on the old
  value of any register. Therefore, for all $n \in \naturalN$, the content of
  the register $r$ after having read $a^{m(n+1)+k}$ (starting from the initial
  configuration) is the constant $\gamma(r)$.
\item We now treat the case where $\beta(r)$ is non-empty. By definition,
  $\beta\bullet\beta = \beta^* \circ \beta$ where $\beta^* \in \Hom(R^*,R^*)$
  extends $\beta : R \to R^*$. Since we know, as a consequence of the
  idempotency of $\mu^{\bullet m}$, that $\beta\bullet\beta = \beta$,
  we have $\beta^*(\beta(r)) = \beta(r) \neq \varepsilon$.

  Let us study in general the situation $\beta^*(\rho) = \beta(r) \neq
  \varepsilon$ for $\rho \in R^*$. A first observation is that the letters in
  $\beta(r)$ cannot be found in any other $\beta(r')$ for $r' \in R \setminus
  \{r\}$ because $\beta$ is copyless, so $\rho \notin (R \setminus \{r\})^*$. We
  therefore have $n \geq 1$ occurrences of $r$ in $\rho$, so $\rho = \rho_0 r
  \dots r \rho_n$ with $\rho_0,\dots,\rho_n \notin (R \setminus \{r\})^*$. By
  coming back to $\beta^*(\rho) = \beta(r)$, into which we plug this expression
  for $\rho$, and using the fact that $\beta(r)$ has non-zero length, we can see
  that $n=1$ and $\beta^*(\rho_0) = \beta^*(\rho_1) = \varepsilon$.

  Let us apply this to $\rho = \beta(r) = \tterase_\Sigma(\gamma)(r)$ and
  lift the result to $\gamma(r)$:
  \[ \gamma(r) = u_r r v_r \quad\text{for some}\quad u_r, v_r \in (\Sigma \cup
    \beta^{-1}(\{\varepsilon\}))^*\]
  In the previous case ($\beta(r') = \varepsilon$ for $r' \in R$), we saw that
  $\gamma(\beta^{-1}(\{\varepsilon\})) \subseteq \Sigma^*$. Therefore
  $\gamma^\odot(u_r), \gamma^\odot(v_r) \in \Sigma^*$, where $\gamma^\odot \in
  \Hom((\Sigma \cup R)^*, (\Sigma \cup R)^*)$ extends $\gamma : R \to (\Sigma
  \cup R)^*$ by being the identity on $\Sigma$. Since $\Sigma^*$ is fixed by
  $\gamma^\odot$, when we iterate, we obtain
  \[ \gamma^{\bullet(n+1)}(r) = (\gamma^\odot)^n \circ \gamma(r) =
    (\gamma^\odot(u_r))^n \cdot u_r r v_r \cdot (\gamma^\odot(v_r))^n  \]
  
\end{itemize}
Now, let $F$ be the final output function of the SST that computes $f$, and
$\vec{w}_{m+k}$ be the register values after it has read a prefix $a^{m+k}$. Then
after reading $a^{m(n+1)+k}$, the new register values are
$(\gamma^{\bullet(n+1)})^\dagger (\vec{w}_{m+k})$. More precisely, the register
$r$ contains:
\begin{itemize}
\item $\gamma(r) \in \Sigma^*$ if $\beta(r) = \varepsilon$;
\item $(\gamma^\odot(u_r))^n \cdot ((u_r r v_r)^\dagger(\vec{w}_{m+k})) \cdot
  (\gamma^\odot(v_r))^n$ otherwise.
\end{itemize}
These values are combined by $F(q)^\dagger$ -- where $q$ is the recurrent state
we have been working with all along, and $F$ is the final output function -- to
produce the output $f(a^{m(n+1)+k})$. This yields the desired shape: an
interleaved concatenation of finitely many factors that are either constant,
$(\gamma^\odot(u_r))^n$ or $(\gamma^\odot(v_r))^n$ for some $r \in R$.
\end{proof}

\begin{proof}[Proof of \Cref{thm:cfp-seq}]
We proceed by induction on the rank of the sequence $s : \naturalN \to \Sigma^*$ under consideration.
If the rank of $s$ is $0$, it is regular and we apply \Cref{lem:reg-seq} and the desired polynomial word
expression is of the shape $u_0 \cdot (v_1)^* \ldots (v_l)^* \cdot u_l$.

If the rank of $s$ is $k+1$, thanks to the induction hypothesis and the base case above,
it can be written as $\CbS(\interp{e}, (\interp{e'_i})_{i \in I})$ where $e$
is an expression over the alphabet $I$ with star-height at most one and
the $e'_i$s expressions over $\Sigma$ with star-height at most $k$.
Without loss of generality, we may assume that that terminal nodes
of polynomial word expressions are words of length at most one.
For such an expression over alphabet $I$, one may define inductively
the following \emph{substitution operation} to obtain an expression
of $\Sigma^*$:
\[
\begin{array}{l@{\;}c@{\;}l!\qquad l@{\;}c@{\;}l}
j[(e'_i)_{i\in I}] &=& e'_j & \varepsilon[(e'_i)_{i\in I}] &=& \varepsilon \\
(f \cdot f')[(e'_i)_{i\in I}] &=&
f[(e'_i)_{i\in I}]
\cdot 
f'[(e'_i)_{i\in I}]
&
f^*[(e'_i)_{i\in I}] &=&
(f[(e'_i)_{i\in I}])^*
\end{array}
\]
One can then check by induction on the structure of $e$ that
$\interp{e[(e'_i)_{i \in I}]} = \CbS(\interp{e}, (\interp{e'_i})_{i \in I})$
and that $e[(e'_i)_{i \in I}]$ has star-height bounded by $k+1$.
\end{proof}

\subsection{Proof of \Cref{cor:cfp-map-seq}}

We finally show that cfp sequences are closed by post-composing
with functions $\fmap(f)$ for $f$ cfp.

\cfpmapseq*

We first prove the result
for poly-pumping sequences.

\begin{lemma}
\label{lem:map-pumpseq}
If $\interp{e} : \naturalN \to (\Gamma \cup \{\#\})^*$ is a poly-pumping sequence
and $f : \Gamma^* \to \Sigma^*$ is comparison-free polyregular, then $\fmap(f) \circ \interp{e}$ is a cfp sequence.
\end{lemma}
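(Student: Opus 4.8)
The plan is to proceed by structural induction on the polynomial word expression $e$, identifying $\naturalN$ with $\{a\}^*$ so that every sequence below is viewed as a cfp \emph{function}. The difficulty is that the maximal $\#$-free blocks of $\interp{e}(n)$ — the blocks on which $f$ acts — do not respect the syntax of $e$: at a concatenation node, the last block produced by the left factor merges with the first block of the right factor. To cope with this I would strengthen the induction hypothesis and track, for each $e$, three auxiliary sequences: the \emph{head} $H_e(n)$ (the maximal $\#$-free prefix of $\interp{e}(n)$) and \emph{tail} $T_e(n)$ (its maximal $\#$-free suffix), both valued in $\Gamma^*$, together with the \emph{processed core} $C_e(n) \in (\Sigma \cup \{\#\})^*$, defined so that, writing $\interp{e}(n) = b_0 \# \dots \# b_m$, one has $C_e(n) = \#\, f(b_1)\, \#\, \dots\, \#\, f(b_{m-1})\, \#$ when $m \geq 1$ and $C_e(n) = \varepsilon$ otherwise. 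The gain from this bookkeeping is the identity $\fmap(f)(\interp{e}(n)) = f(H_e(n))\cdot C_e(n)\cdot f(T_e(n))$ when $\interp{e}(n)$ contains a $\#$, and $\fmap(f)(\interp{e}(n)) = f(H_e(n))$ otherwise. The inductive claim is that $H_e, T_e$ are cfp sequences over $\Gamma$ and $C_e$ is a cfp sequence over $\Sigma\cup\{\#\}$; granting this, the lemma follows by closure of cfp functions under composition (\Cref{thm:composition}, which makes $f\circ H_e$ and $f\circ T_e$ cfp) and under concatenation (\Cref{prop:cfp-conditionals}).

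A preliminary observation makes the case analysis static: for $n \geq 1$, whether $\interp{e}(n)$ contains a $\#$ depends only on whether $e$ syntactically contains a literal with a $\#$, hence is independent of $n$; only the single value $n = 0$ may differ, which I would patch at the very end using closure under regular conditionals (\Cref{prop:cfp-conditionals}) over the regular language $\{\varepsilon\}$. The base case $e = w$ is immediate, since $H_e, T_e, C_e$ are then constant words. For $e = e_1 \cdot e_2$ I would split on the (static) presence of $\#$ in $e_1$ and $e_2$. The only interesting subcase is when both contain a $\#$: then $H_e = H_{e_1}$, $T_e = T_{e_2}$, the tail of $e_1$ merges with the head of $e_2$ into one block, and $C_e(n) = C_{e_1}(n)\cdot f\big(T_{e_1}(n)\,H_{e_2}(n)\big)\cdot C_{e_2}(n)$. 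Here $n \mapsto T_{e_1}(n) H_{e_2}(n)$ is cfp over $\Gamma$ by the induction hypothesis and concatenation closure, so its postcomposition by $f$ is cfp, and $C_e$ is a product of three cfp sequences. The remaining subcases only relocate or erase blocks and are handled identically.

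The crux is the star $e = e_1^*$, where $\interp{e}(n) = (\interp{e_1}(n))^n$. If $\interp{e_1}(n)$ has no $\#$ then neither does $\interp{e}(n)$, so $C_e = \varepsilon$ and $H_e(n) = T_e(n) = (H_{e_1}(n))^n$, which is cfp, being $\CbS$ of the regular template $a^n \mapsto x^n$ with the single substitution $x \mapsto H_{e_1}$. If $\interp{e_1}(n)$ does contain a $\#$, the $n$ copies glue along $n-1$ junction blocks $T_{e_1}(n) H_{e_1}(n)$, yielding $H_e = H_{e_1}$, $T_e = T_{e_1}$, and
\[ C_e(n) = \big(C_{e_1}(n)\cdot f(T_{e_1}(n)H_{e_1}(n))\big)^{n-1}\cdot C_{e_1}(n). \]
To see that this is cfp despite the coupling between the repetition count $n-1$ and the arguments frozen at $n$, I would realize it as a single composition by substitution: take the regular (indeed sequential) template $f_0 : a^n \mapsto x\,(y\,x)^{n-1}$ over $\{x,y\}$ and form $\CbS(f_0, (x \mapsto C_{e_1},\ y \mapsto f\circ(T_{e_1}\cdot H_{e_1})))$, which evaluates at $a^n$ to exactly the displayed word. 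Since $f_0$ is regular, $C_{e_1}$ is cfp by the induction hypothesis, and $f\circ(T_{e_1}\cdot H_{e_1})$ is cfp, closure of cfp functions under $\CbS$ (\Cref{def:cfpolyreg-cbs}) gives that $C_e$ is cfp.

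I expect the main obstacle to be precisely this star case: keeping the block-merging bookkeeping consistent (in particular the delimiting $\#$'s of $C_e$, so that both the concatenation junctions and the outer boundaries line up) and, above all, expressing the $(n-1)$-fold repetition with inner arguments taken at $n$ as a genuine composition by substitution rather than an iterated poly-pumping star. The regular template $f_0$ is what decouples these two roles of $n$; everything else reduces to the closure properties already available (\Cref{thm:composition}, \Cref{prop:cfp-conditionals}, and the $\CbS$-closure of \Cref{def:cfpolyreg-cbs}). Assembling $\fmap(f)\circ\interp{e}$ from $f\circ H_e$, $C_e$, and $f\circ T_e$ via concatenation closure, then patching the value at $n=0$ by a regular conditional, completes the proof; \Cref{cor:cfp-map-seq} follows by applying this lemma to each poly-pumping component in the decomposition of a general cfp sequence provided by \Cref{thm:cfp-seq}.
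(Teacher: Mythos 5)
Your proposal is correct and follows essentially the same route as the paper's proof: a structural induction on the expression $e$ that decomposes $\interp{e}(n)$ into a maximal $\#$-free head, a core, and a maximal $\#$-free tail, and handles the star case by expressing the $n$-fold repetition (with arguments frozen at $n$) as a composition by substitution with a regular outer template, concluding via closure under composition, concatenation, and regular conditionals. The differences are only bookkeeping — you track an $f$-processed core $C_e$ and patch $n=0$ by a regular conditional, where the paper keeps an unprocessed middle $h_{\mathrm{c}}$ extracted by a regular stripping function and shifts indices with $\Succ \circ \Succ$ — and your variant even handles uniformly the edge case where $\interp{e_1}(n)$ contains a single $\#$, which the paper's decomposition $h_{\mathrm{l}} \cdot \# \cdot h_{\mathrm{c}} \cdot \# \cdot h_{\mathrm{r}}$ glosses over.
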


For the rest of this subsection, we write $\Succ$ for the successor function $n \mapsto n+1$ over $\naturalN$.
We will use the fact that $s$ is a cfp sequence iff $s \circ \Succ$ also is.
\begin{proof}
We first note that if the separator $\#$ does not occur at any leaf of $e$, then
the result is immediate as we would have $\fmap(f) \circ \interp{e} = f \circ \interp{e}$.
We thus focus on the cases when it does occur, and proceed inductively over $e$.
\begin{itemize}
\item If $e = w \in (\Gamma \cup \{\#\})^*$, then $\fmap(f) \circ \interp{e}$ is
a constant sequence, which is obviously cfp.
\item If $e = (e')^*$, with $\#$ occuring in $e'$, let $h_{\mathrm{l}},h_{\mathrm{r}} : \naturalN \to
  \Gamma^*$ and $h_{\mathrm{c}} : \naturalN \to (\Gamma\sqcup\{\#\})^*$ be the sequences
  such that
  \[\interp{e'} \circ \Succ \circ \Succ = h_{\mathrm{l}} \cdot \# \cdot
  h_{\mathrm{c}} \cdot \# \cdot h_{\mathrm{r}}\]
  with $h_{\mathrm{l}}(n)$ being the largest $\#$-free prefix of
  $\interp{e'}(n+2)$ and $h_{\mathrm{r}}(n)$ the largest $\#$-free suffix of
  $\interp{e'}(n+2)$.
There is a regular function
\begin{align*}
  f' : \quad\qquad (\Gamma \sqcup \{\#\})^* \qquad
  &~~\to~~ (\Gamma \sqcup \{\#\})^*\\
  w_0\#w_1\#\ldots w_{n-1}\#w_n
  &~~\mapsto~~ w_1\# \ldots w_{n-1} \quad
    (w_0,\ldots,w_n \in (\Gamma \sqcup \underline{\Gamma})^*)
\end{align*}
stripping away the first and last component of its input, so that it satisfies
\[f' \circ \fmap(f) \circ \interp{e'} \circ \Succ \circ \Succ ~~ = ~~ \fmap(f) \circ h_{\mathrm{c}}\]
By the inductive hypothesis, we know that $\fmap(f) \circ \interp{e'}$ is
comparison-free polyregular.
We may therefore conclude by composition (cf.\ \Cref{thm:composition-squaring})
that $\fmap(f) \circ h_{\mathrm{c}}$ is cfp. One can check
analogously that $h_{\mathrm{l}}$ and $h_{\mathrm{r}}$ are also cfp.
Then observe that
\[
\begin{array}{llcl}
&(\interp{e} \circ \Succ \circ \Succ)(n) & =& (h_{\mathrm{l}} \cdot \# \cdot h_{\mathrm{c}} \cdot \# \cdot h_{\mathrm{r}})(n)^{n+2} \\
& &=& (h_{\mathrm{l}} \cdot (\# \cdot h_{\mathrm{c}} \cdot \# \cdot h_{\mathrm{r}} \cdot h_{\mathrm{l}})^{n+1} \cdot \# \cdot h_{\mathrm{c}} \cdot \# \cdot h_{\mathrm{r}})(n) \\
\end{array}
\]
which means that we have
\[
\fmap(f) \circ \interp{e} \circ \Succ \circ \Succ =
{\small \left\{ \begin{array}{c}
(f \circ h_{\mathrm{l}}) \\ \cdot \\
(\# \cdot (\fmap(f) \circ h_{\mathrm{c}}) \cdot \# \cdot (f \circ (h_{\mathrm{r}} \cdot h_{\mathrm{l}})))^{*} \\ \cdot \\
\# \cdot (\fmap(f) \circ h_{\mathrm{c}}) \cdot \# \cdot (f \circ (h_{\mathrm{r}} \cdot h_{\mathrm{l}})) \\ \cdot \\
\# \cdot (\fmap(f) \circ h_{\mathrm{c}})  \cdot \# \cdot (f \circ h_{\mathrm{r}})
\end{array} \right.}
\]
Thanks again to the closure under composition, each component of this expression is cfp, so $\fmap(f) \circ \interp{e} \circ \Succ \circ \Succ$ is also cfp.
Hence, so is $\fmap(f) \circ \interp{e}$.
\item The last case where $e = e' \cdot e''$ is handled similarly after a case analysis determining whether $\#$ occurs only in $e'$, $e''$ or in both; we leave it to the reader. \qedhere
\end{itemize}
\end{proof}
\begin{proof}[Proof of \Cref{cor:cfp-map-seq}]
Suppose we are given $f : \Gamma^* \to \Sigma^*$ and
$s : \naturalN \to (\Gamma \cup \{\#\})^*$ cfp.
By \Cref{thm:cfp-seq}, $s$ is an ultimately periodic combination
of poly-pumping sequences, so that there are $m > 0$ and some expressions
$e_0, \ldots, e_{m-1}$ such that $s(m(n+1) + k) = \interp{e_k}(n)$ for every $k < m$.
By \Cref{lem:map-pumpseq}, every $\fmap(f) \circ \interp{e_k}$ is
cfp. The set $L_k = \{ m(n+1) + k \mid n \in \naturalN\}$
is semi-linear, i.e., corresponds to a regular language, and there are regular sequences
$r_k: \naturalN \to \naturalN$ such that $r_k(m(n+1)+k) = n$.
Further, $\naturalN = \{ n \mid n < m\} \cup \bigcup_{k < m} L_k$, so we may use the regular
conditional provided by \Cref{prop:cfp-conditionals} to show that the combination
of the $\fmap(f) \circ \interp{e_k} \circ r_k$ and the first $m$ values of $\fmap(f) \circ s$,
which corresponds exactly to $\fmap(f) \circ s$, is indeed cfp.
\end{proof}

\section{Separation results}

\subsection{Proof of \Cref{thm:cf-not-hdt0l}}

\cfnothdtOl*

\subparagraph{These examples are comparison-free.}

We have seen in \Cref{ex:squaring} that $w \mapsto w^{|w|}$ is a comparison-free
polyregular function. For the other examples:
\begin{itemize}
\item $(a^n \mapsto (a^n b)^{n+1}) = \CbS((a^n \mapsto a^{n+1}),\, (a^n \mapsto
  a^n b)_{i \in \{a\}})$ is obtained as a composition by substitution of
  \emph{sequential functions}, i.e.\ functions computed by sequential
  transducers (cf.\ \Cref{sec:reg}), which are in particular regular;
\item for an alphabet $\Sigma$ with $a,\# \in \Sigma$, there exist sequential
  functions $f : \Sigma^* \to \{a\}^*$ and $g : \Sigma^* \to \Sigma^*$ such that
  $f(a^n\#w) = a^n$ and $g(a^n\#w) = w\#$ for $n \in \naturalN$ and $w \in
  \Sigma^*$, so that $\CbS(f,(g)_{i \in \{a\}})(a^n\#w) = (w\#)^n$.
\end{itemize}

\subparagraph{(i) is not HDT0L.}

Let us fix a HDT0L system $(\{a\},\{a,b\},\Delta,d,(h)_{i\in\{a\}},h')$ and
show that it does not compute $a^n \mapsto (a^n b)^{n+1}$. Let
$\mathtt{letters}(w)$ be the set of letters occurring in the string $w$ at least
once. By the infinite pigeonhole principle, there exists an infinite $X
\subseteq \naturalN$ such that $\mathtt{letters}(h^n(d))$ has
the same value $\Delta'$ for all $n \in X$. Let us do a case analysis:
\begin{itemize}
\item Suppose first that for some $r \in \Delta'$ and some $m \in \naturalN$,
  the letter $b$ appears twice in $h' \circ h^m(r)$; in other words,
  that the latter contains a factor $b a^k b$ for some $k \in \naturalN$.
  Then for all $n \in X$, $h' \circ h^{m+n}(d) \in \Sigma^* b a^k b
  \Sigma^*$. Since $X$ is infinite, this holds for some $n$ such that $m+n >
  k$, so that this word -- i.e.\ the output of the HDT0L system for $a^{m+n}$
  -- is different from $(a^{m+n}b)^{m+n+1} \notin \Sigma^* b a^k b \Sigma^*$.
\item Otherwise, for all $r \in \Delta'$ (that includes the degenerate case
  $\Delta' = \varnothing$) and all $m \in \naturalN$, there is at most one
  occurrence of $b$ in $h' \circ h^m(r)$. Then for all $m \in
  \naturalN$, the length of $h^{\min(X)}(d)$ bounds the number of
  occurrences of $b$ in $h' \circ h^{m + \min(X)}(d)$, and this bound
  is independent of $m$. On the contrary, in the sequence $((a^{n}b)^{n+1})_{n
    \geq m + \min(X)}$, the number of occurrences of $b$ is unbounded.
\end{itemize}

\subparagraph{(ii) is not HDT0L.}

The second counterexample, namely $w \mapsto w^{|w|}$, reduces to the first one:
indeed, $(a^{n}b)^{n+1} = (a^{n}b)^{|a^{n}b|}$ for all $n \in \naturalN$, which
can also be expressed as
\[ (w \mapsto w^{|w|}) \circ (u \in \{a\}^* \mapsto ub) = (a^n \mapsto
  (a^nb)^{n+1}) \]
Suppose for the sake of contradiction that there is a HDT0L system
$(\Sigma,\Sigma,\Delta,d,(h_c)_{c\in\Sigma},h')$ that computes $w \mapsto
w^{|w|}$ with $|\Sigma| \geq 2$; we may assume without loss of generality that
$a,b \in \Sigma$. Then $(\{a\},\{a,b\},\Delta,h_b(d),(h_a)_{c\in\{a\}},h')$
computes $a^n \mapsto (a^{n}b)^{n+1}$.

\subparagraph{(iii) is not HDT0L.}

(This is claimed without proof in \cite[Section~6]{Marble}.)

Let $\Sigma \supseteq \{a,\#\}$ be an alphabet and let
$(\Sigma,\Sigma,\Delta,d,(h_c)_{c\in\Sigma},h')$ be a HDT0L system. We reuse
a similar argument to our treatment of the counterexample \textbf{(i)}.
Let the sets $\Delta' \subseteq
\Delta$ and $X \subseteq \naturalN$ with $X$ infinite be such that
$\mathtt{letters}(h_a^n(d)) = \Delta'$ for all $n \in X$.
\begin{itemize}
\item Suppose first that for some $r \in \Delta'$ and some $m \in \naturalN$,
  the string $h' \circ h_a^m \circ h_\# (r)$; contains a factor $\# \cdot a^k
  \cdot \#$ for some $k \in \naturalN$. Then for all $n \in X$, the given HDT0L
  system maps $a^m\#a^{n}$ to a string in $\Sigma^* \cdot \# \cdot a^k \cdot \#
  \cdot \Sigma^*$. For $n > k$, this language does not contain $(a^{n}\#)^m$;
  such a $n \in X$ exists because $X$ is infinite.
\item Otherwise, for any $m \in \naturalN$, since $\#$ occurs at most once in
  $h' \circ h_a^m \circ h_\# (r)$ for $r \in \Delta'$, the output of the HDT0L
  system has at most $|h^{\min(X)}(d)|$ occurrences of $\#$ on input
  $a^m\#a^{\min(X)}$. Therefore, for large enough $m$, this output is different
  from $(a^{\min(X)}\#)^m$.
\end{itemize}

\subsection{Proof of \Cref{thm:polyreg-not-cf}}

Let us recall the theorem.

\polyregnotcf*

(i) and (ii) are proven separately.

\subsubsection{Proof of \Cref{thm:polyreg-not-cf} item (i)}

As mentioned in the body of the paper, this is proven by showing
that the lengths of blocks of $baa\ldots aab$, or equivalently, maximal
blocks of $a$ in the output of a given regular sequence is determined
by a finite number of polynomial expressions.
Let us formalize this notion.

\begin{definition}
\label{def:poly-uniformity}
Let $\Sigma$ be a finite alphabet and $c \in \Sigma$.
Call $\beta_c : \Sigma^* \to \powerset(\naturalN)$ the
function assigning to a word $w$ the set of lengths of its maximal factors
lying in $\{c\}^*$ (including $\varepsilon$):
 \[\beta_c(w) ~~ = ~~ \{ k \in \naturalN \;\mid\; w \in (\Sigma^* \setminus
   (\Sigma^* \cdot c)) \cdot c^k \cdot (\Sigma^* \setminus (c \cdot \Sigma^*))
   \}\]
 We say that a sequence $s : \naturalN \to \Sigma^*$ is \emph{poly-uniform} if for
 every $c \in \Sigma$ there exists a \emph{finite} set of polynomials $A_{s,c}
 \subseteq \bbQ[X]$ such that, for every $n \in \naturalN$,
 \[\beta_c(s(n)) ~\subseteq ~A_{s,c}(n) = \{ P(n) \mid P \in A_{s,c} \}\]
\end{definition}

\begin{lemma}
\label{lem:comparisonfree-polyuniform}
Every comparison-free polyregular sequence $f : \naturalN \to \Sigma^*$ is
poly-uniform.
\end{lemma}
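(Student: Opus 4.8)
The plan is to reduce to poly-pumping sequences via \Cref{thm:cfp-seq} and then prove poly-uniformity for a single poly-pumping sequence $\interp{e}$ by structural induction on the polynomial word expression $e$. As a preliminary I would record that $n \mapsto |\interp{e}(n)|$ is a polynomial $L_e \in \naturalN[X]$, which is immediate by induction with $L_w = |w|$, $L_{e \cdot e'} = L_e + L_{e'}$ and $L_{e^*} = X \cdot L_e$. The core of the argument is a strengthened invariant: for every expression $e$ and every letter $c$ I would build, by induction on $e$, three \emph{finite} sets of polynomials $A_c(e), \mathrm{Pre}_c(e), \mathrm{Suf}_c(e) \subseteq \naturalN[X]$ such that, writing $\mathrm{pre}_c(w)$ and $\mathrm{suf}_c(w)$ for the lengths of the maximal $c$-prefix and $c$-suffix runs of $w$, one has for all $n \in \naturalN$: $\beta_c(\interp{e}(n)) \subseteq A_c(e)(n)$, $\mathrm{pre}_c(\interp{e}(n)) \in \mathrm{Pre}_c(e)(n)$ and $\mathrm{suf}_c(\interp{e}(n)) \in \mathrm{Suf}_c(e)(n)$. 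The prefix/suffix data is needed precisely to control the $c$-runs created by merging at the junctions of concatenated factors.

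In the base case $e = w$ the three sets are finite sets of integer constants read off from $w$. For $e = e_1 \cdot e_2$, every maximal $c$-run of $\interp{e}(n)$ is either interior to one factor — hence accounted for in $A_c(e_1)$ or $A_c(e_2)$ — or is the single run straddling the junction, of length $\mathrm{suf}_c(\interp{e_1}(n)) + \mathrm{pre}_c(\interp{e_2}(n))$; so I take $A_c(e_1 \cdot e_2) = A_c(e_1) \cup A_c(e_2) \cup \{S + P \mid S \in \mathrm{Suf}_c(e_1),\ P \in \mathrm{Pre}_c(e_2)\}$, and update the prefix data by $\mathrm{Pre}_c(e_1 \cdot e_2) = \mathrm{Pre}_c(e_1) \cup \{L_{e_1} + P \mid P \in \mathrm{Pre}_c(e_2)\}$ (and symmetrically for $\mathrm{Suf}_c$) to cover the case where a whole factor is a power of $c$. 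Since I only need the containments $\subseteq$ and $\in$, I may freely take unions over the two cases \enquote{the factor is / is not a power of $c$} instead of deciding which occurs at a given $n$; this keeps the entire induction threshold-free, valid for all $n \geq 0$.

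The one genuinely non-routine step is the star $\interp{e^*}(n) = (\interp{e_1}(n))^n$. Writing $u = \interp{e_1}(n)$, the maximal $c$-runs of $u^n$ are those interior to a single copy (lengths already in $A_c(e_1)(n)$) together with the runs created at the $n-1$ internal junctions between consecutive copies. The crucial observation is that \emph{all these junctions are identical}, so they contribute the single length $\mathrm{suf}_c(u) + \mathrm{pre}_c(u)$ rather than $n$ distinct values; the sole exception is when $u$ is itself a power of $c$, in which case all copies fuse into one run of length $n \cdot |u| = L_{e^*}(n)$. Hence $A_c(e^*) = \{0, L_{e^*}\} \cup A_c(e_1) \cup \{S + P \mid S \in \mathrm{Suf}_c(e_1),\ P \in \mathrm{Pre}_c(e_1)\}$, with $\mathrm{Pre}_c(e^*)$ and $\mathrm{Suf}_c(e^*)$ obtained by adjoining $0$ and $L_{e^*}$ to the corresponding sets for $e_1$. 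This \enquote{all junctions are equal} phenomenon is exactly what keeps the number of distinct block lengths bounded and each one polynomial; I expect it to be the main point to get right.

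Finally I would assemble the pieces. Given a cfp sequence $s$, \Cref{thm:cfp-seq} yields $p > 0$ and, for each residue $m < p$, an expression $e_m$ with $s((n+1)p + m) = \interp{e_m}(n)$ for all $n$. The reindexing $n \mapsto N = (n+1)p + m$ is affine, so substituting $n = (N - m - p)/p$ turns each polynomial of $A_c(e_m)$ into a polynomial in $N$ with rational coefficients — this is precisely why the definition of poly-uniformity ranges over $\bbQ[X]$ rather than $\naturalN[X]$. Taking $A_{s,c}$ to be the union over $m < p$ of these reindexed finite sets, together with one constant polynomial for each block length occurring among the finitely many exceptional indices $N \in \{0, \dots, p-1\}$ (those not of the form $(n+1)p + m$), produces a finite set of polynomials with $\beta_c(s(N)) \subseteq A_{s,c}(N)$ for every $N$, again using that poly-uniformity only demands containment. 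The reduction through \Cref{thm:cfp-seq} and this final reindexing are routine; the structural induction, and within it the star step, is the only delicate part.
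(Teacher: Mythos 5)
Your proof is correct, and at the top level it follows the same route as the paper: reduce to poly-pumping sequences via \Cref{thm:cfp-seq} (after checking that ultimately periodic combinations of poly-uniform sequences are poly-uniform, with the same affine reindexing that explains the need for $\bbQ[X]$), then a structural induction on expressions producing finite sets of polynomials covering $\beta_c$. Where you genuinely diverge is inside the induction. The paper keeps a minimal invariant, namely $\beta_c(\interp{e}(n)) \subseteq A_{e,c}(n)$ together with $0 \in A_{e,c}$, and uses the clauses $A_{w,c} = \beta_c(w)\cup\{0\}$, $A_{e\cdot e',c} = \{P+Q \mid (P,Q)\in A_{e,c}\times A_{e',c}\}$ and $A_{e^*,c} = A_{e,c}\cup\{XP \mid P\in A_{e,c}\}$. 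Its concatenation clause is sound without any prefix/suffix bookkeeping, because a maximal $c$-prefix run and a maximal $c$-suffix run (including those of length $0$) are themselves elements of the respective $\beta_c$ sets, so a straddling block is automatically one of the sums $P+Q$. But the star clause, as printed, misses exactly the junction blocks that you isolate: for $e = (cac)^*$ one has $2 \in \beta_c(\interp{e}(n))$ for all $n \ge 2$, while the recursion gives $A_{(cac)^*,c}(n) = \{0,1,n\}$, so the claimed containment fails for $n \ge 3$. The lightest repair is to close the star clause under pairwise sums as well (a junction block $\mathrm{suf}_c + \mathrm{pre}_c$ is a sum of two members of $\beta_c(\interp{e}(n))$); your alternative is to thread the prefix/suffix data $\mathrm{Pre}_c$, $\mathrm{Suf}_c$ and the length polynomial $L_e$ through the induction, together with the observations that all $n-1$ internal junctions of $(\interp{e_1}(n))^n$ are identical and that the case $\interp{e_1}(n)\in\{c\}^*$ is absorbed by $L_{e^*}$. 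Both repairs work; yours costs a heavier invariant but makes the one delicate point of the whole argument fully explicit, and as written it is the airtight version of the proof.
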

\begin{proof}
First, observe that any ultimately periodic combination of poly-uniform sequence is poly-uniform.
Indeed, assume that we have such a sequence $s$ and $m > 0$
so that $n \mapsto s(m(n+1) + k)$ is poly-uniform for every $k$, and finite sets $A_{k,c} \subseteq \bbQ[X]$
so that $\beta_c(s(m(n+1)+k)) \subseteq A_{k,c}(n)$. Then we have
\[ A_{s,c} ~~=~~ \bigcup_{l < m} \left\{ P\left({\small\frac{X-l}{m}}\right) \mid P \in A_{k,c} \right\} \cup \beta_c(s(l))\]
witnessing that $s$ is poly-uniform.

Hence, by \Cref{thm:cfp-seq}, it suffices to show that poly-pumping sequences
are all poly-uniform. We proceed by induction over polynomial word expressions $e$,
defining suitable finite sets of polynomials $A_{e,c}$
for $c \in \Sigma$ such that $\beta_c(\interp{e}(n)) \subseteq A_{e,c}(n)$ and $0 \in A_{e,c}$:
\[
\begin{array}{lcl!\qquad lcl}
A_{e \cdot e',c} &~=~& \{ P + Q ~ \mid ~ (P,Q) \in A_{e,c} \times A_{e',c}\}
&
A_{w,c} &~=~& \beta_c(w) ~\cup~ \{0\}
\\
A_{e^*,c} &~=~& A_{e,c} ~\cup~ \{ XP ~ \mid ~ P \in A_{e,c}\}
\end{array}
\]
\end{proof}

We can now conclude the proof of the first item of \Cref{thm:polyreg-not-cf}
by observing that the function $f : a^n \mapsto ba^{n-1}b\ldots bab$ does \emph{not}
correspond to a poly-uniform sequence: $\beta_c(f(a^n)) = \{0, \ldots, n-1\}$
is unbounded, and thus cannot be covered by a finite set of functions, let alone polynomials
in $\bbQ[X]$.

\subsection{Proof of \Cref{thm:polyreg-not-cf} item (ii)}

Suppose for the sake of contradiction that $f = \fmap(a^n\mapsto a^{n\times n})$
is comparison-free. Using \Cref{thm:characterization-rank}, it must then have
rank 1 since $|f(w)| = O(|w|^2)$. Thus, we may write $f=\CbS(g,(h_i)_{i\in I})$
where $g : \{a,\#\}^* \to I^*$ and all the $h_i : \{a,\#\}^* \to \{a,\#\}^*$ are
regular.

For each $J \subseteq I$ and $k \in \{0,\dots,|I|\}$ (though the definition
would make sense for $k \in \naturalN$), let $\rho_{J,k} : \{a^*\} \to (I
\setminus J)^*$ be uniquely defined by the condition
\[ \qquad \forall w \in \{a,\#\}^*,\; \rho_{J,k}(w) =
  \begin{cases}
    s & \text{when}\quad g(w) \in ((I \setminus J)^* J)^k \cdot s
    \cdot (\{\varepsilon\} \cup JI^*) \\
    \varepsilon & \text{when}\quad |g(w)|_J < k
  \end{cases}
\]
(recall from \Cref{sec:appendix-minimization} the notation $|\cdot|_J$). To put
it plainly, $\rho_{J,k}(w)$ is the $k$-th block of letters from $I \setminus J$
that appears in $g(w)$ (the block may be the empty string if there are
consecutive letters from $J$), or the empty string if this $k$-th block does not
exist. The function $\rho_{J,k}$ is regular because it is the composition of a
sequential function with $g$. 

We reuse some tools from \Cref{sec:appendix-minimization}, especially the notion
of producing 1-split from \Cref{lem:dichotomy}. There is a unique sensible way
to combine the morphisms $\nu_{f'} : \{a,\#\}^* \to \prodmono(f')$ given by this
lemma into a morphism
\[ \varphi : \{a,\#\}^* \to \prod_{f' \in \mathcal{F}} \prodmono(f')
\quad\text{for}\quad
\mathcal{F} = \{g\} \cup \{h_i \mid i \in I\} \cup \{\rho_{J,k} \mid J
\subseteq I,\, k \in \{0,\dots,|I|\}\}\]
Note that the codomain above is a finite monoid: this allows us to apply
\Cref{prop:ramsey} to this morphism $\varphi$ and $r=1$, which gives us some
$N\in\naturalN$. Let $s = a^N (\#a^N)^{|I|}$. For each $m \in \{0,\dots,|I|\}$,
we apply the proposition to the factorization $s = u_k v_k w_k $ with $u_k =
(a^N\#)^k$, $v_k = a^N$ and $w_k = (\#a^N)^{|I|-k}$ to get a 1-split $s = u'_k
v'_k w'_k$ according to $\varphi$ where $u_k$ is a prefix of $u'_k$ and $w_k$ is
a suffix of $w'_k$. Let $p_k = |v_k| \neq 0$ and $q_k = N - |v_k|$.

For $f' \in \mathcal{F}$ (the finite set of functions introduced above), we then
define
\[ \qquad \widetilde{f'} : (n_0,\dots,n_{|I|}) \in \naturalN^{|I|+1} \mapsto
  f'\left(a^{n_0p_0+q_0}\#\dots\#a^{n_{|I|}p_{|I|}+q_{|I|}}\right) \]

Thanks \Cref{prop:monotone-multipumping} and to the 1-split conditions that we
made sure to get previously, we see that for each letter $c$ in the codomain of
$f'$ (either $\{a,\#\}$ or $I$), $|\widetilde{f'}|_c : \naturalN^{|I|+1} \to
\naturalN$ is monotone for the product partial order. Since $f=\fmap(a \mapsto
a^{n \times n})=\CbS(g,(h_i)_{i\in I})$,
\[ \forall x \in \naturalN^{|I|+1},\;\sum_{i \in I} \len{\widetilde{g}(x)}_i \cdot
  \len{\widetilde{h}_i(x)}_\# = \len{\widetilde{f}(x)}_\# = |I| \]
\begin{align*}
  \text{where}\qquad \widetilde{f} : (n_0,\dots,n_{|I|}) \in \naturalN^{|I|+1}
  &\mapsto f\left(a^{n_0p_0+q_0}\#\dots\#a^{n_{|I|}p_{|I|}+q_{|I|}}\right)\\
  &= a^{(n_0p_0+q_0)^2}\#\dots\#a^{(n_{|I|}p_{|I|}+q_{|I|})^2}
\end{align*}
Since $|\widetilde{g}|_i$ and $|\widetilde{h}_i|_\#$ are monotone for all
$i\in{}I$, and $\naturalN^{|I|+1}$ admits a minimum $(0,\dots,0)$, the fact that
the above sum is constant means that, for each $i \in I$,
\begin{itemize}
\item either one of $|\widetilde{g}|_i$ and $|\widetilde{h}_i|_\#$ is
  constant equal to 0,
\item or both are non-zero constant.
\end{itemize}
Let $J^\# \subseteq I$ be the set of indices that fit the second case. We claim
that for $i \in J^\#$, the constant value taken by $|\widetilde{h}_i|_\#$ must
be 1. If this were not the case, then for all $n \in \naturalN$, there would be
a substring of the form $\#a\dots a\#$ in $|\widetilde{h}_i(n,\dots,n)|_\#$, and
since $f=\CbS(g,(h_i)_{i\in I})$ and $|\widetilde{g}(n,\dots,n)|_i \neq 0$, it
would also be a substring of $\widetilde{f}(n,\dots,n)$ with length at most
$|\widetilde{h}_i(n,\dots,n)|_\# = O(n)$ (since $h_i$ is regular). This is
impossible: for $k \in \{0,\dots,|I|\}$, the $k$-th substring of this form in
$\widetilde{f}(n,\dots,n)$ has length $(np_k+q_k)^2+2 = \Theta(n^2)$.

Combining this with the above equation for $|\widetilde{f}|_\#$, we see that
$|\widetilde{g}|_{J^\#}$ is the constant function equal to $|I|$. Let us
abbreviate $\rho_k = \rho_{J^\#,k} \in \mathcal{F}$ (recall that we defined it
at the beginning of this proof) for $k \in \{0,\dots,|I|\}$; then
\[ \forall x \in \naturalN^{|I|+1},\; \exists! \iota_1(x),\dots,\iota_{|I|}(x)
  \in J^\# : \widetilde{g}(x) = \widetilde{\rho}_0(x) \iota_1(x)
  \widetilde{\rho}_1(x) \dots \iota_{|I|}(x) \widetilde{\rho}_{|I|}(x) \]
Using this, we define $h'_k(x) = \widetilde{h}_{\iota_k(x)}(x)$ for $x \in
\naturalN^{|I|+1}$ and $k \in \{1,\dots,|I|\}$, plus two edge cases $h'_0 : x
\mapsto \varepsilon$ and $h'_{|I|+1} : x \mapsto \varepsilon$.

Let $k \in \{0,\dots,|I|\}$. Write $\vec{e}_k = (0,\dots,0,1,0,\dots,0) \in
\naturalN^{|I|+1}$ for the $k$-th vector of the canonical basis of
$\bbQ^{|I|+1}$. By looking again at the $k$-th substring of the form $\#a\dots
a\#$ in $\widetilde{f}(x)$, with $x = n\vec{e}_k$ here, we get
\[ \forall n \in \naturalN,\; (np_k+q_k)^2+2 \leq |h'_k(n\vec{e}_k)| +
  \sum_{i\in I} \len{\widetilde\rho_k(n\vec{e}_k)}_i \cdot
  \len{\widetilde{h}_i(n\vec{e}_k)} + |h'_{k+1}(n\vec{e}_k)| \]
Note that all the lengths involved in the right-hand side above are linearly
bounded in $n$ because of the regularity of the functions involved. So there
must exist $i_k \in I$ such that both $|\widetilde\rho_k(n\vec{e}_k)|_{i_k}$ and
$|\widetilde{h}_{i_k}(n\vec{e}_k)|$ are unbounded: otherwise, the whole RHS
would be $O(n)$, contradicting the $\Omega(n^2)$ lower bound induced by the
above inequality.

We thus get a finite sequence of indices $i_0, \dots, i_{|I|} \in I$. By the
pigeonhole principle, there must exist $k, l \in \{0,\dots,|I|\}$ such that $k
\neq l$ and $i_k = i_l$; we call $i$ this common value. Let $m \in \naturalN$ be
such that $\len{\widetilde\rho_k(m\vec{e}_k)}_i \geq 1$. By monotonicity
(since $\rho_k, h_i \in \mathcal{F}$):
\begin{itemize}
\item $\len{\widetilde\rho_k(m\vec{e}_k + n\vec{e}_l)}_i \geq 1$ for all
  $n\in\naturalN$;
\item $\len{\widetilde{h}_i(m\vec{e}_k + n\vec{e}_l)}$ is unbounded when
  $n\to+\infty$.
\end{itemize}
The product of those two quantities is a lower bound for the length of the
$k$-th substring of the form $\#a\dots a\#$ in
$\widetilde{f}(m\vec{e}_k+n\vec{e}_l)$, which contradicts the fact that this
length does not depend on $n$ (it is equal to $(mp_k+q_k)^2 + 2$).

\end{document}